\def\BibTeX{{\rm B\kern-.05em{\sc i\kern-.025em b}\kern-.08em
    T\kern-.1667em\lower.7ex\hbox{E}\kern-.125emX}}
\newtheorem{proposition}{\textbf{Proposition}}
\newtheorem{definition}{\textbf{Definition}}
\newtheorem{observation}{\textbf{Observation}}
\newtheorem{guideline}{\textbf{Guideline}}
\begin{document}

%
\title{Anti-Aging Scheduling in Single-Server Queues:\\
A Systematic and Comparative Study}

\author{
Zhongdong Liu,
Liang Huang,
Bin Li,
and Bo Ji
\thanks{This work was supported in part by the NSF under Grants CCF-1657162, CNS-1651947, and CNS-1717108. A preliminary version of this work was presented at IEEE INFOCOM 2020 Age of Information Workshop \cite{Liu2020:Anti}.}
\thanks{Zhongdong Liu (zhongdong@vt.edu) and Bo Ji (boji@vt.edu) are with the Department of Computer Science, Virginia Tech, Blacksburg, VA. 
Liang Huang (lianghuang@zjut.edu.cn) is with the College of Computer Science and Technology, Zhejiang
University of Technology, Hangzhou, China. 
Bin Li (binli@uri.edu) is with the Department of Electrical, Computer and Biomedical Engineering, University of Rhode Island, Kingston, Rhode Island. Bo Ji is the corresponding author.
}}
\maketitle

\begin{abstract}
The Age-of-Information (AoI) is a new performance metric recently proposed for measuring the freshness of information in information-update systems. 
In this work, we conduct a systematic and comparative study to investigate the impact of scheduling policies on the AoI performance in single-server queues and provide useful guidelines for the design of AoI-efficient scheduling policies. Specifically, we first perform extensive simulations to demonstrate that the update-size information can be leveraged for achieving a substantially improved AoI compared to non-size-based (or arrival-time-based) policies. Then, by utilizing both the update-size and arrival-time information, we propose three AoI-based policies. Observing improved AoI performance of policies that allow service preemption and that prioritize informative updates, we further propose preemptive, informative, AoI-based scheduling policies. Our simulation results show that such policies empirically achieve the best AoI performance among all the considered policies. 
However, compared to the best delay-efficient policies (such as Shortest-Remaining-Processing-Time (SRPT)), the AoI improvement is rather marginal in the settings with exogenous arrivals.
Interestingly, we also prove sample-path equivalence between some size-based policies and AoI-based policies. This provides an intuitive explanation for why some size-based policies (such as SRPT) achieve a very good AoI performance. 
\end{abstract}


\section{\uppercase{Introduction}}
Recently, the study of information freshness has received increasing attentions, especially for time-sensitive applications that require real-time information/status updates, such as road congestion alerts, stock quotes, and weather forecast. In order to measure the freshness of information, a new metric, called the \textit{Age-of-Information (AoI)} is proposed. The AoI is defined as the time elapsed since the generation of the freshest update among those that have been received by the destination~\cite{kaul2012real}.
Prior studies reveal that the AoI depends on both the inter-arrival time and the delay of the updates. Due to the dependency between the inter-arrival time and the delay, this new AoI metric exhibits very different characteristics than the traditional delay metric and is generally much harder to analyze (see, e.g., \cite{kaul2012real}).

Although it is well-known that scheduling policies play an important role in reducing the delay in single-sever queues, it remains largely unknown how exactly scheduling policies impact the AoI performance.
To that end, we aim to holistically study the impact of various aspects of scheduling policies on the AoI performance in single-server queues and provide useful guidelines for the design of scheduling policies that can achieve a small AoI. 

While much research effort has already been exerted to the design and analysis of scheduling policies aiming to reduce the AoI, almost all of these policies are only based on the arrival time of updates, such as First-Come-First-Served (FCFS) and Last-Come-First-Served (LCFS), assuming that the update-size information is unavailable.
Here, the size of an update is the amount of time required to serve the update if there were no other updates around.
In some applications, such as smart grid and  traffic monitoring, the update-size information can be obtained or fairly well estimated \cite{wu2017optimal}.
It has been shown that scheduling policies that leverage the size information can substantially reduce the delay, especially when the system load is high or when the size variability is large \cite{harchol2013performance}. 
This motivates us to investigate the AoI performance of size-based policies in a G/G/1 queue.
Note that the update-size information is ``orthogonal" to the arrival-time information, both of which could significantly impact the AoI performance.
Therefore, it is quite natural to further consider AoI-based policies that use both the update-size and arrival-time information of updates.

\begin{figure}[!t]
    \centering
    \includegraphics[scale=0.5]{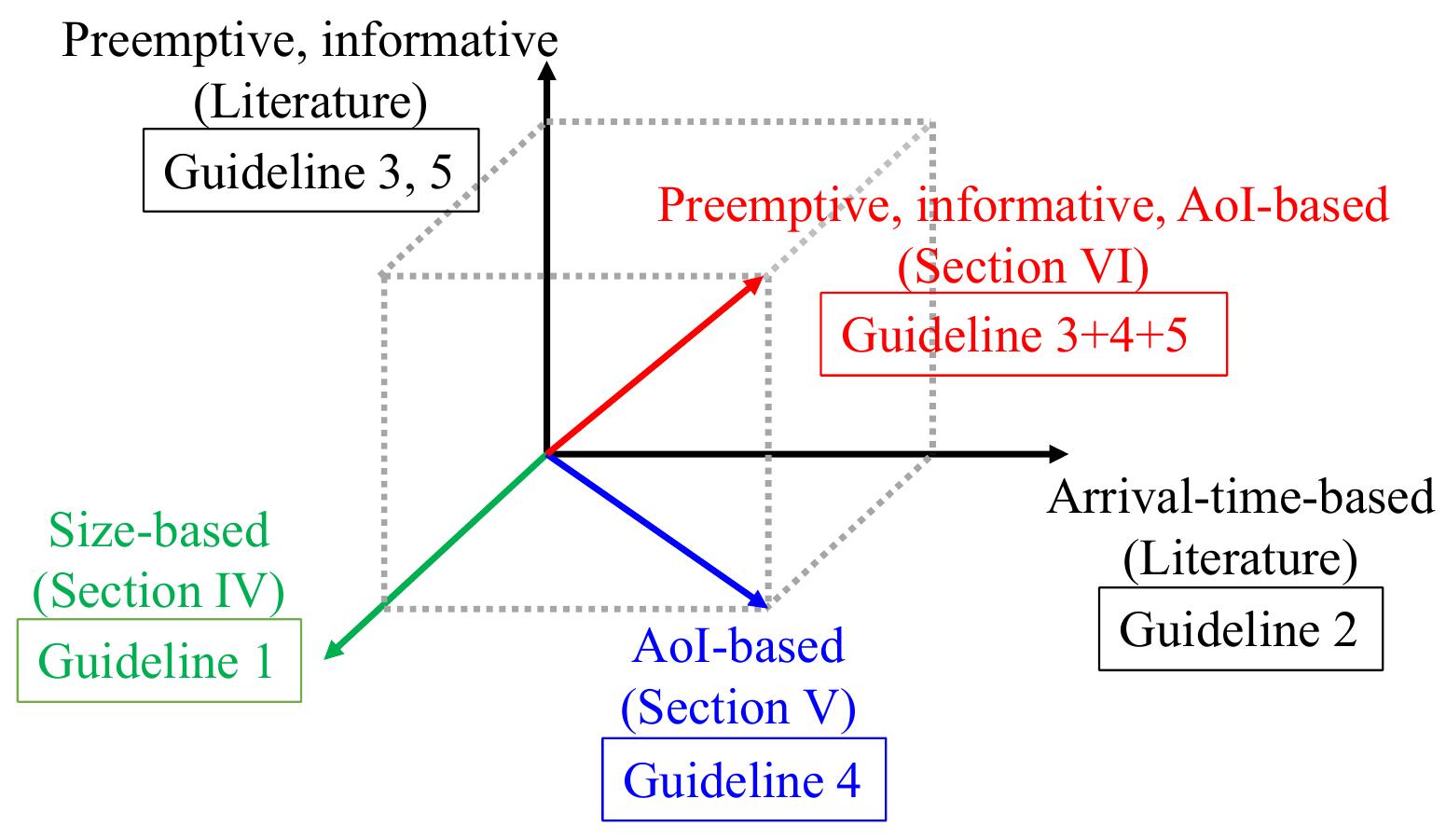}
    \caption{Our position in the design space of AoI-efficient scheduling policies for a G/G/$1$ queue}
    \label{fig:position}
\end{figure} 

\begin{table*}[!t]
\centering
\begin{tabular}{|c|c|c|}
\hline
Guideline  & Summary   & Representative policies  \\ \hline  \hline
\ref{guide:use-size-info}  & \textbf{ Prioritizing small updates}  & SJF, SJF\_P, SRPT \\ \hline
\ref{guide:arrival-time-info}  & Prioritizing recent updates  & LCFS, LCFS\_P \\ \hline
\ref{gui:preemption}   & Allowing service preemption  & PS, LCFS\_P, SJF\_P, SRPT \\ \hline
\ref{guide:use-size-arrival-info}  & \textbf{AoI-based designs} & \textbf{ADE}, \textbf{ADS}, \textbf{ADM} \\ \hline
\ref{guide:informative}  & Prioritizing informative updates  & Informative version of the above policies \\ \hline
\end{tabular}
\caption{Guidelines for the design of AoI-efficient scheduling policies for a G/G/$1$ queue}
\label{table:guidelines-summary}
\end{table*}

In addition, prior work has revealed that scheduling policies that allow service \emph{preemption} and that prioritize \emph{informative} updates (also called \emph{effective} updates, which are those that lead to a reduced AoI once delivered; see Section~\ref{subsec:infor-policies} for a formal definition) yield a good AoI performance \cite{bedewy2016optimizing,costa2014age,pappas2015age}. Intuitively, preemption prevents fresh updates from being blocked by a large and/or stale update in service; informative policies discard stale updates, which do not bring new information but may block fresh updates. To that end, we also consider AoI-based scheduling designs that both allow service preemption and prioritize informative updates.

In Fig.~\ref{fig:position}, we position our work in the literature by summarizing various design aspects of scheduling policies for a G/G/$1$ queue. Existing work mostly explores the design based on the arrival-time information along with considering service preemption and informative updates. We point out that the size-based design is an orthogonal dimension of great importance, which somehow has not received sufficient attentions yet. Unsurprisingly, designing AoI-efficient policies requires the consideration of all these dimensions.
In Table~\ref{table:guidelines-summary}, we summarize several useful guidelines for the design of AoI-efficient policies, which are also labeled in Fig.\ref{fig:position}. 
To the best of our knowledge, this is the first work that conducts a systematic and comparative study to investigate the design of AoI-efficient scheduling policies for a G/G/$1$ queue.
In the following, we summarize our key contributions along with an explanation of Fig.~\ref{fig:position} and Table~\ref{table:guidelines-summary}.

First, we investigate the AoI performance of size-based scheduling policies (i.e., the green arrow in Fig.~\ref{fig:position}), which is an orthogonal approach to the arrival-time-based design studied in most existing work.
We conduct extensive simulations to show that size-based policies that prioritize small updates significantly improve AoI performance. We also explain interesting observations from the simulation results and summarize useful guidelines (i.e., Guidelines \ref{guide:use-size-info}, \ref{guide:arrival-time-info}, and \ref{gui:preemption} in Table~\ref{table:guidelines-summary}) for the design of AoI-efficient policies.

Second, leveraging both the update-size and arrival-time information, we introduce Guideline \ref{guide:use-size-arrival-info} and propose AoI-based scheduling policies (i.e., the blue arrow in Fig.~\ref{fig:position}). These AoI-based policies attempt to optimize the AoI at a specific future time instant from three different perspectives: the AoI-Drop-Earliest (ADE) policy, which makes the AoI drop the earliest; the AoI-Drop-to-Smallest (ADS) policy, which makes the AoI drop to the smallest; the AoI-Drop-Most (ADM) policy, which makes the AoI drop the most. 
The simulation results show that such AoI-based policies indeed have a good AoI performance.

Third, we observe that informative policies can significantly improve the AoI performance compared to their non-informative counterparts, which leads to Guideline~\ref{guide:informative}. Integrating all the guidelines, we propose preemptive, informative, AoI-based policies (i.e., the red arrow in Fig.~\ref{fig:position}). The simulation results show that such policies empirically achieve the best AoI performance among all the considered policies.

Finally, we prove sample-path equivalence between some size-based policies and AoI-based policies. These results provide an intuitive explanation for why some size-based policies, such as Shortest-Remaining-Processing-Time (SRPT), achieve a very good AoI performance.

To summarize, our study reveals that among various aspects of scheduling policies we investigated, prioritizing small updates, allowing service preemption, and prioritizing informative updates play the most important role in the design of AoI-efficient scheduling policies.
However, compared to the best delay-efficient policies (such as SRPT), the AoI improvement of the preemptive, informative, and AoI-based policies is rather marginal in the settings with exogenous arrivals. Moreover, when the AoI requirement is not stringent or the update-size information is not available, some simple delay-efficient policies (such as LCFS with preemption (LCFS\_P)) are also good candidates for AoI-efficient policies.

The rest of this paper is organized as follows. We first discuss related work in Section~\ref{sec:relatedwork}. Then, we describe our system model in Section~\ref{sec:model}. In Section~\ref{sec:common-scheduling}, we evaluate the AoI performance of size-based scheduling policies. We further propose AoI-based scheduling policies in Section~\ref{sec:drop-to-point}. In addition, we evaluate the AoI performance of preemptive, informative, AoI-based policies in Section~\ref{sec:pre-infor}. Finally, we make concluding remarks in Section~\ref{sec:conclusion}.

\section{\uppercase{Related work}}
\label{sec:relatedwork}

The traditional queueing literature on single-server queues is largely focused on the delay analysis. 
In \cite{crovella1999connection}, the authors prove that all non-preemptive scheduling policies that do not make use of job	size information  have the same distribution of the number of jobs in the system.
The work of \cite{RePEc:inm:oropre:v:16:y:1968:i:3:p:687-690,smith1978new} proves that for a work-conserving queue, the SRPT policy minimizes the number of jobs in the system at any point and is therefore delay-optimal. 
The work of \cite{harchol2010queueing} derives a formula of the average delay for several common scheduling polices (which will be discussed in Section~\ref{sec:common-scheduling}). 
	
On the other hand, although the AoI research is still in a nascent stage, it has already attracted a lot of interests (see \cite{kosta2017age,sun2019age} for a survey). Here we only discuss the most relevant work, which is focused on the AoI-oriented queueing analysis. Much of existing work considers scheduling policies that are based on the arrival time (such as FCFS and LCFS). The AoI is introduced in \cite{kaul2012real}, where the authors study the average AoI in the M/M/1, M/D/1, and D/M/1 queues under the FCFS policy. In \cite{costa2016age}, the AoI performance of the FCFS policy in the M/M/$1$/$1$ and M/M/$1$/$2$ queues is studied, where new arrivals are discarded if the buffer is full. 
In \cite{9099557}, the authors study the average AoI performance of a multi-source FCFS M/G/1 queue. They derive the exact expression and three approximations of the average AoI for a special case of an M/M/$1$ queue and a general case of an M/G/$1$ queue, respectively.
The average AoI of the LCFS policy in the M/M/$1$ queue is also discussed in \cite{costa2016age}.

There has been some work that aims to reduce the AoI by making use of service preemption.  In \cite{kaul2012status}, the average AoI of LCFS in the M/M/$1$ queue with and without service preemption is analyzed. 
The work of \cite{kam2014effect} is quite similar to \cite{kaul2012status}, but it considers the average AoI in the M/M/$2$ queue. 
In \cite{najm2018status}, the average AoI for the M/G/$1$/$1$ preemptive system with a multi-stream updates source is derived.  
The age-optimality of the preemptive LCFS (LCFS\_P) policy is proved in \cite{bedewy2016optimizing}, where the service times are exponentially distributed.

In addition to taking advantage of service preemption, some of the prior studies also consider the strategy of prioritizing informative updates for reducing the AoI.
The work of \cite{costa2014age,pappas2015age} reveals that the AoI performance can be improved by prioritizing informative updates and discarding non-informative policies when making scheduling decisions. 
In \cite{inoue2018general}, the authors consider a G/G/$1$ queue with informative updates and derive the stationary distribution of the AoI, which is in terms of the stationary distribution of the delay and the Peak AoI (PAoI). With the AoI distribution, one can analyze the mean or higher moments of the AoI in GI/GI/$1$, M/GI/$1$, and GI/M/$1$ queues under several scheduling policies (e.g., FCFS and LCFS).

Recent research effort has also been exerted to understanding the relation between the AoI and the delay. In \cite{talak2019age}, the authors analyze the tradeoff between the AoI and the delay in a single-server M/G/$1$ system under a specific scheduling policy without knowing the service time of each individual update.  In \cite{devassy2018delay}, the violation probability of the delay and the PAoI is investigated under an additive white Gaussian noise (AWGN) channel, but the update size is assumed to be identical.

\section{\uppercase{System Model}}
\label{sec:model}
In this section, we consider a single-server queueing system and give the definitions of the Age-of-Information (AoI) and the Peak AoI (PAoI).

We model the information-update system as a G/G/$1$ queue where a single source generates updates (which contain current state of a measurement or observation of the source) with rate  $\lambda$. The updates enter the queueing system immediately after they are generated. Hence, the generation time is the same as the arrival time.
We use $S$ to denote the size of an update (i.e., the amount of time required for the update to complete service), which has a general distribution with mean $ {\mathbb{E}} \left[ S \right]{\rm{ = 1/}}\mu$. The system load is defined as $\rho  \triangleq \lambda {\rm{/}}\mu $.

\begin{figure}[!t]
	\centering
	\includegraphics[scale=0.5]{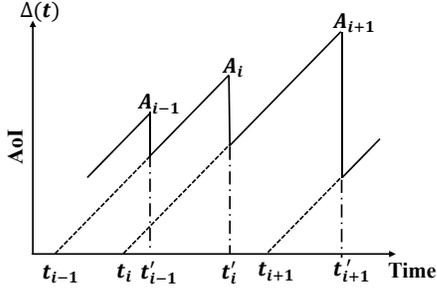}
	\caption{An example of the AoI evolution under the FCFS policy}
	\label{fig:aoi_ex}
\end{figure}

We use ${t_i}$ and ${t'_i}$ to denote the time at which the $i$-th update was generated at the source and the time at which it leaves the server, respectively. The AoI at time $t$ is then defined as  $\Delta(t) \triangleq t - U(t)$, where $U\left( t \right){\rm{ \triangleq max}}\left\{ {{t_i}:{t'_i}  \le t} \right\}$ is the generation time of the freshest update among those that have been processed by the server. An example of the AoI evolution under the FCFS policy is shown in  Fig.~\ref{fig:aoi_ex}. Then, the  average AoI can be defined as 

\begin{equation}
\Delta  = \mathop {\lim }\limits_{t  \to \infty } \dfrac{1}{t}\int_0^t  {\Delta \left( \tau \right)} d \tau.
\end{equation}

In general, the analysis of the average AoI is quite difficult since it is determined by two dependent quantities:  the inter-arrival time and the delay of updates\cite{kaul2012real}. We define the inter-arrival time between the $i$-th update and $(i-1)$-th update as ${X_i} \triangleq {t_i} - {t_{i - 1}}$ and define the delay of the $i$-th update as ${T_i} \triangleq {t^{\prime}_i} - {t_i}$. Alternatively, the Peak AoI (PAoI) is also proposed as an information freshness metric \cite{costa2014age}, which is defined as the maximum value of the AoI before it drops due to a newly delivered fresh update. Let $A_i$ be the $i$-th PAoI. From Fig.~\ref{fig:aoi_ex}, we can see $A_i = t^{\prime}_i - t_{i - 1}$. This can be rewritten as the sum of the inter-arrival time between the $i$-th update and the previous update (i.e., ${X_i}$) and the delay of the $i$-th update (i.e., ${T_i}$). Therefore, the PAoI of the $i$-th update can also be expressed as ${A_i} = {X_i} + {T_i}$, and its expectation is
$ {\mathbb{E}}[A_i] =  {\mathbb{E}}[ {X_i}] +  {\mathbb{E}}[ {T_i}]$.

\section{\uppercase{Size-based  policies}}
\label{sec:common-scheduling} 
In this section, we investigate the AoI performance of several common scheduling policies, including size-based policies and non-size-based policies, via extensive simulations.  
Note that these common scheduling policies may serve the non-informative updates (which do not lead to a reduced AoI). This is because in some applications, such as news and social network,  obsolete updates are still useful and need to be served~\cite{bedewy2016optimizing}. 
In Section~\ref{sec:pre-infor}, we will discuss the case where obsolete updates are discarded.

Following \cite{harchol2013performance}, we first give the definitions of several common scheduling policies that can be divided into four types: 
depending on whether they are size-based or not, where the size-based policies use the update-size information (which is available in some applications, such as smart grid  \cite{wu2017optimal}) for making scheduling decisions;
depending on whether they are preemptive or not. The definition of preemption is given below. 
In this paper, we do not consider the cost of preemption.

\begin{definition}
    A policy is preemptive if an update may be stopped partway through
its execution and then restarted at a later time without losing intermediary work. 
\end{definition}

\begin{figure*}[!t]
    \centering
    \subfigure[Exponential: $\mu=1$]{
		\label{fig:tradition-exp-AoI} 
		\includegraphics[width=0.322\textwidth]{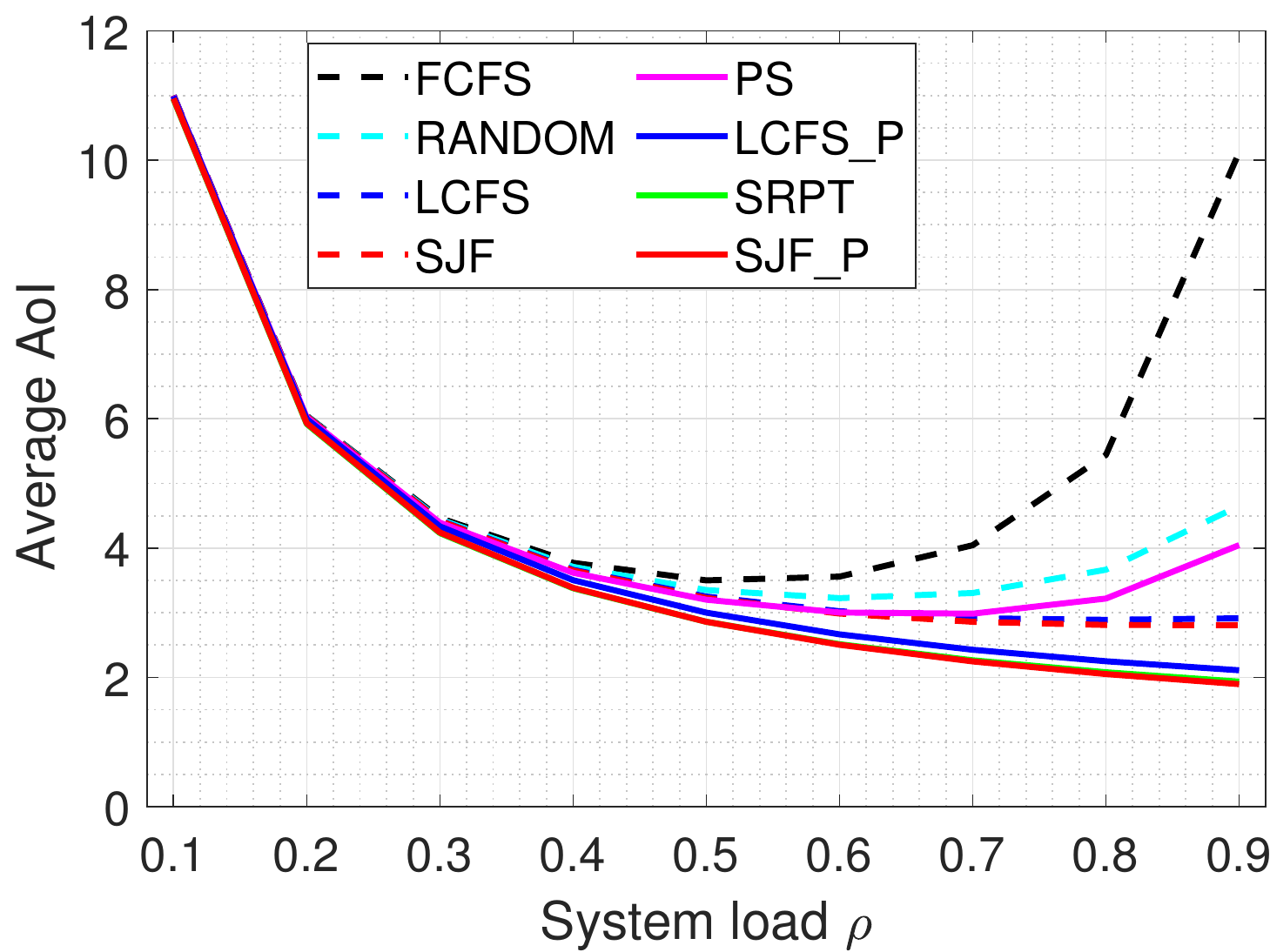}}
	\subfigure[Weibull: $\mu=1$ and  ${C^{\rm{2}}}{\rm{ = 10}}$]{
		\label{fig:tradition-wei-AoI} 
		\includegraphics[width=0.322\textwidth]{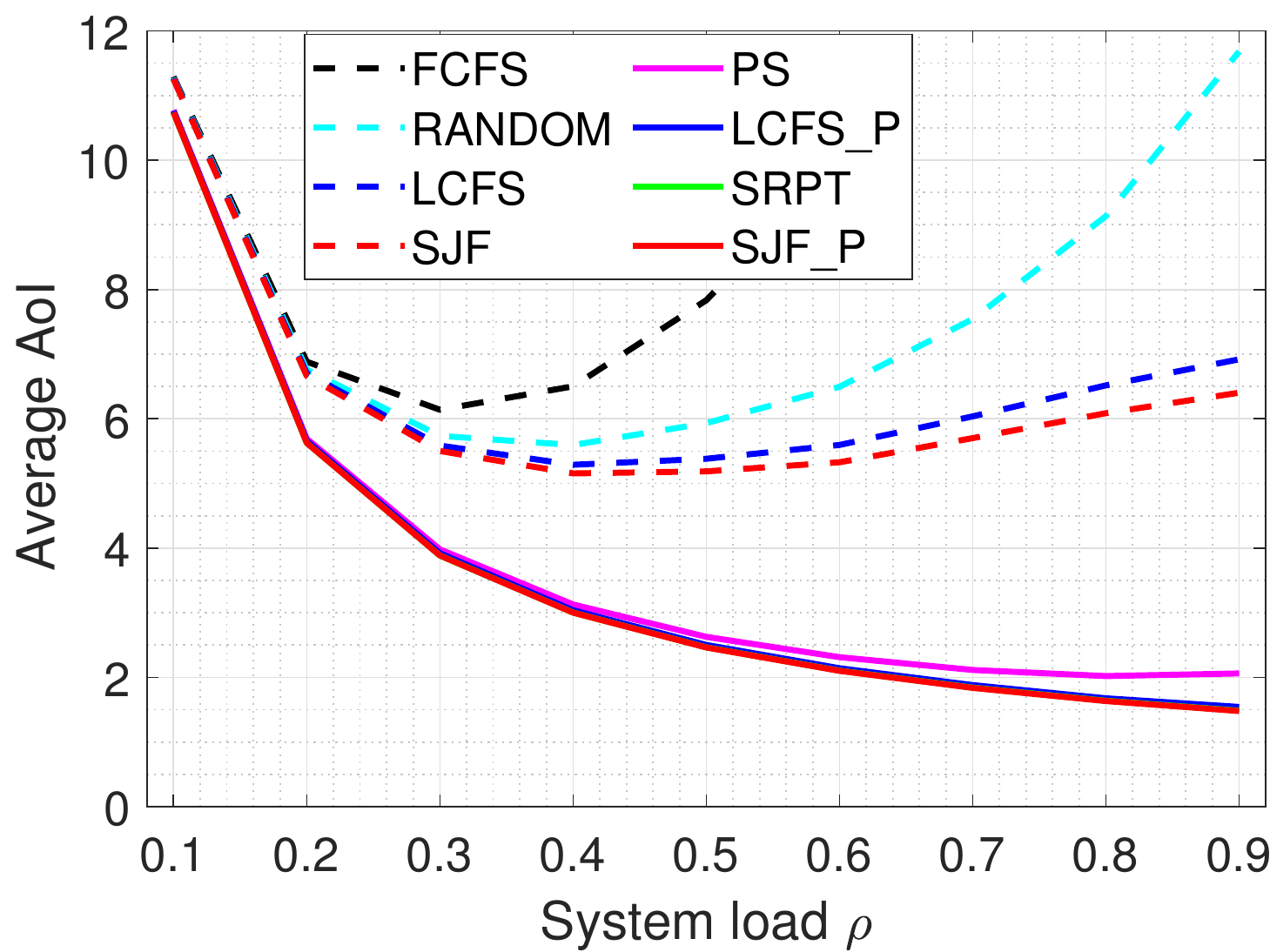}}
	\subfigure[Weibull: $\mu=1$ and $\rho {\rm{ = 0}}{\rm{.7}}$]{
		\label{fig:tradition-wei-variance-AoI} 
		\includegraphics[width=0.322\textwidth]{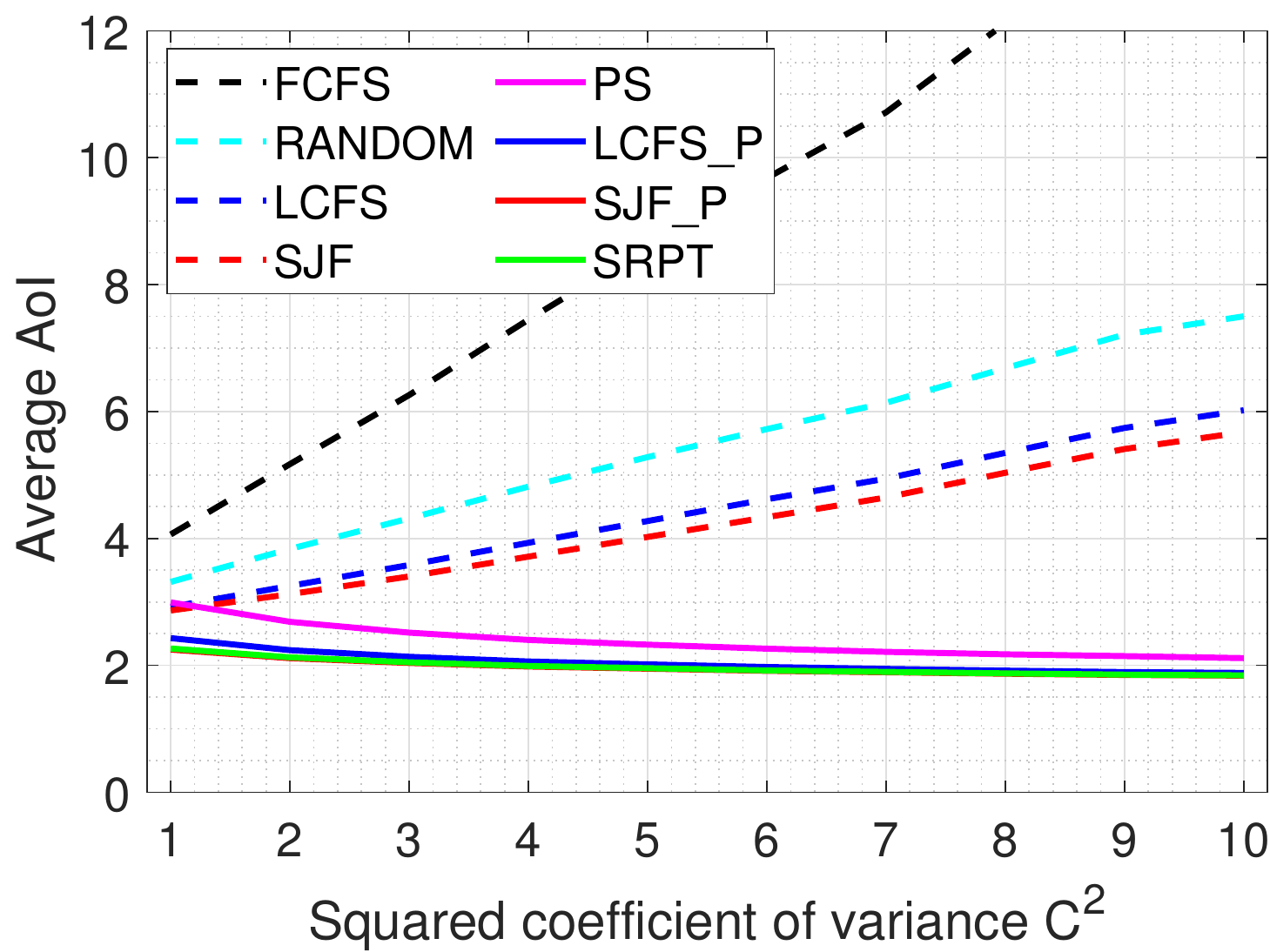}}
	\caption{Comparisons of the average AoI performance under several common scheduling policies}
	\label{fig:tradition-AoI}
\end{figure*}

\begin{figure*}[!t]
    \centering
    \subfigure[Exponential: $\mu=1$]{
		\label{fig:tradition-exp-PAoI} 
		\includegraphics[width=0.322\textwidth]{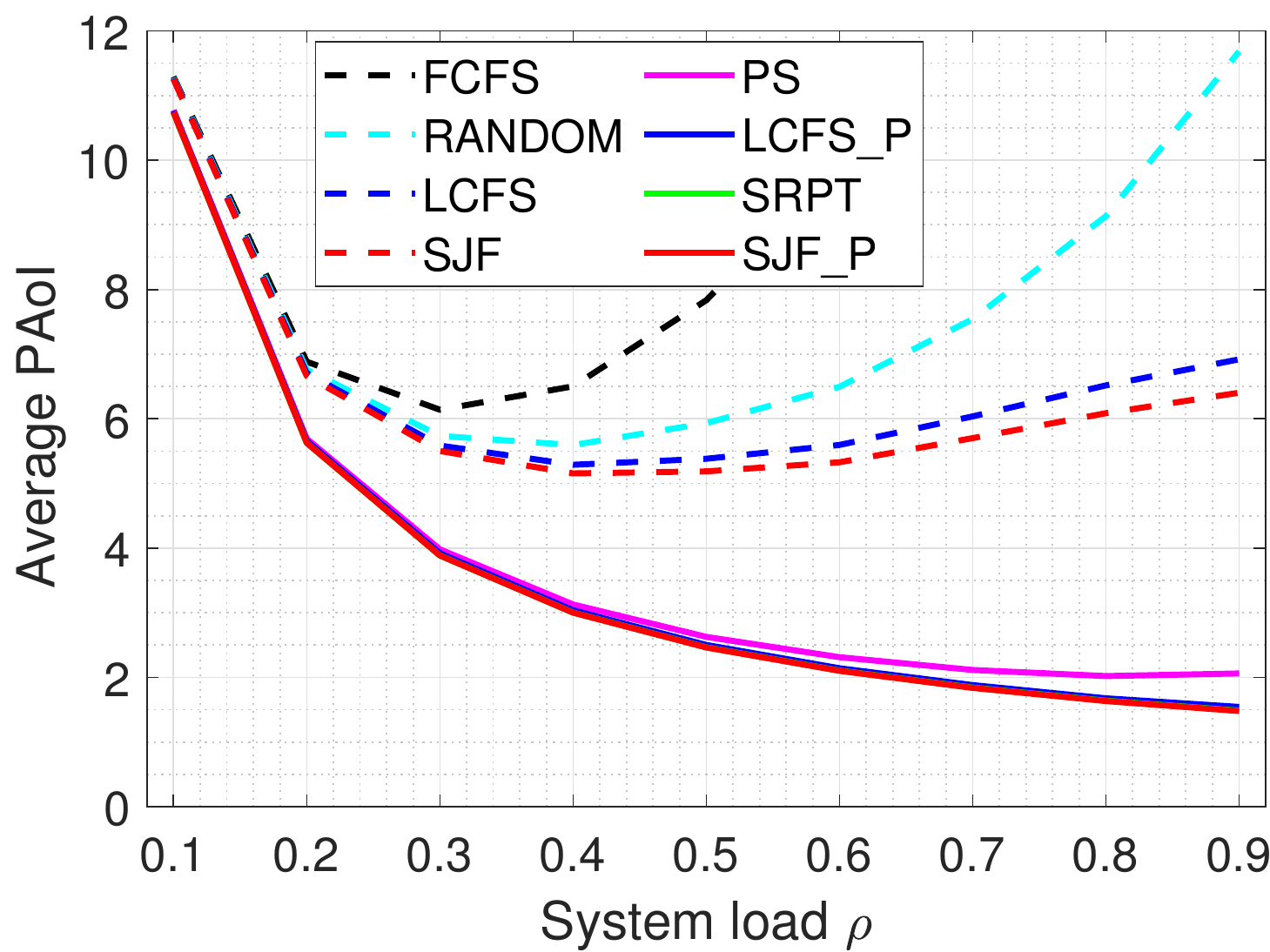}}
	\subfigure[Weibull: $\mu=1$ and ${C^{\rm{2}}}{\rm{ = 10}}$]{
		\label{fig:tradition-wei-PAoI} 
		\includegraphics[width=0.322\textwidth]{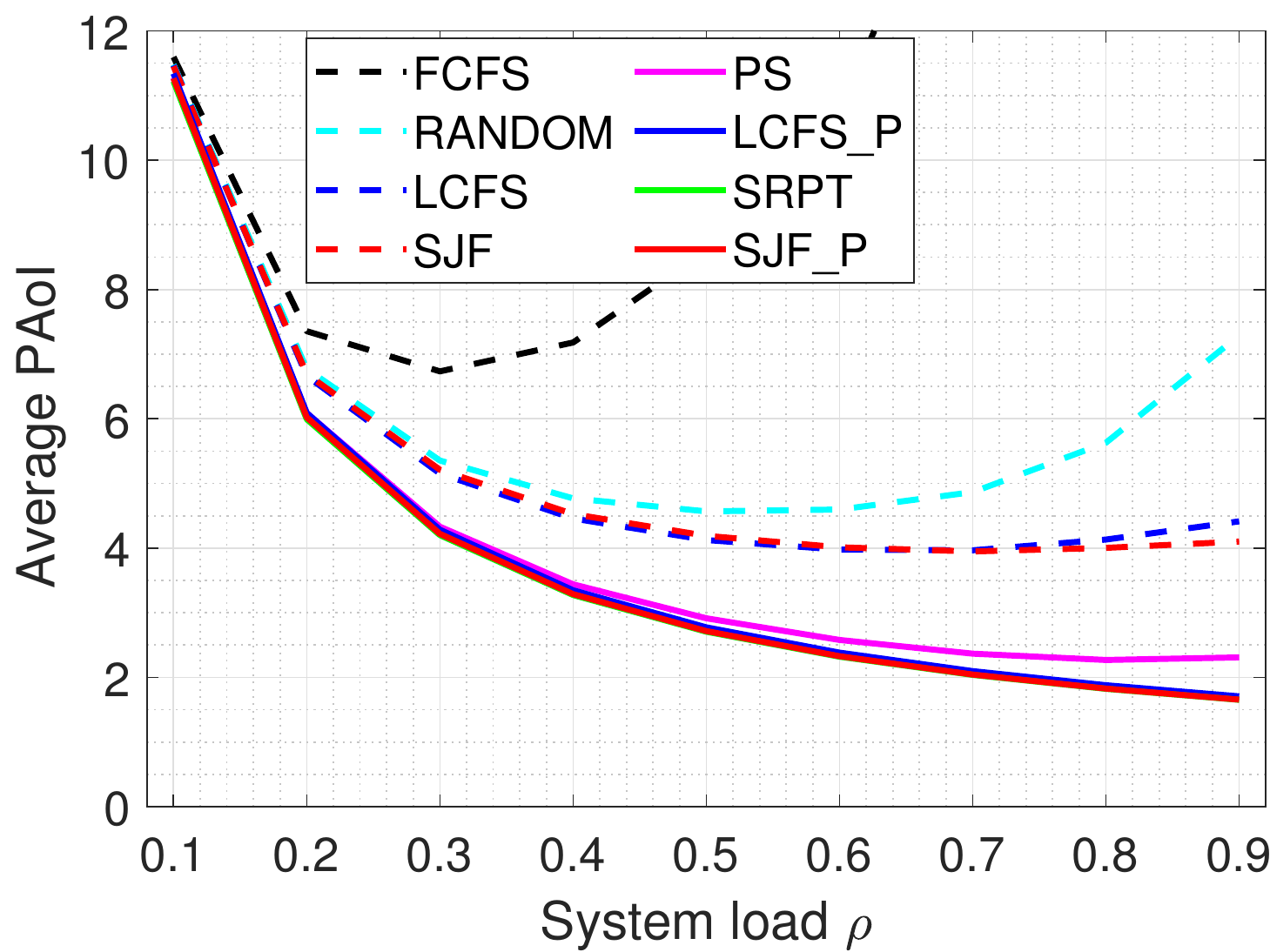}}
	\subfigure[Weibull: $\mu=1$ and $\rho {\rm{ = 0}}{\rm{.7}}$]{
		\label{fig:tradition-wei-variance-PAoI} 
		\includegraphics[width=0.322\textwidth]{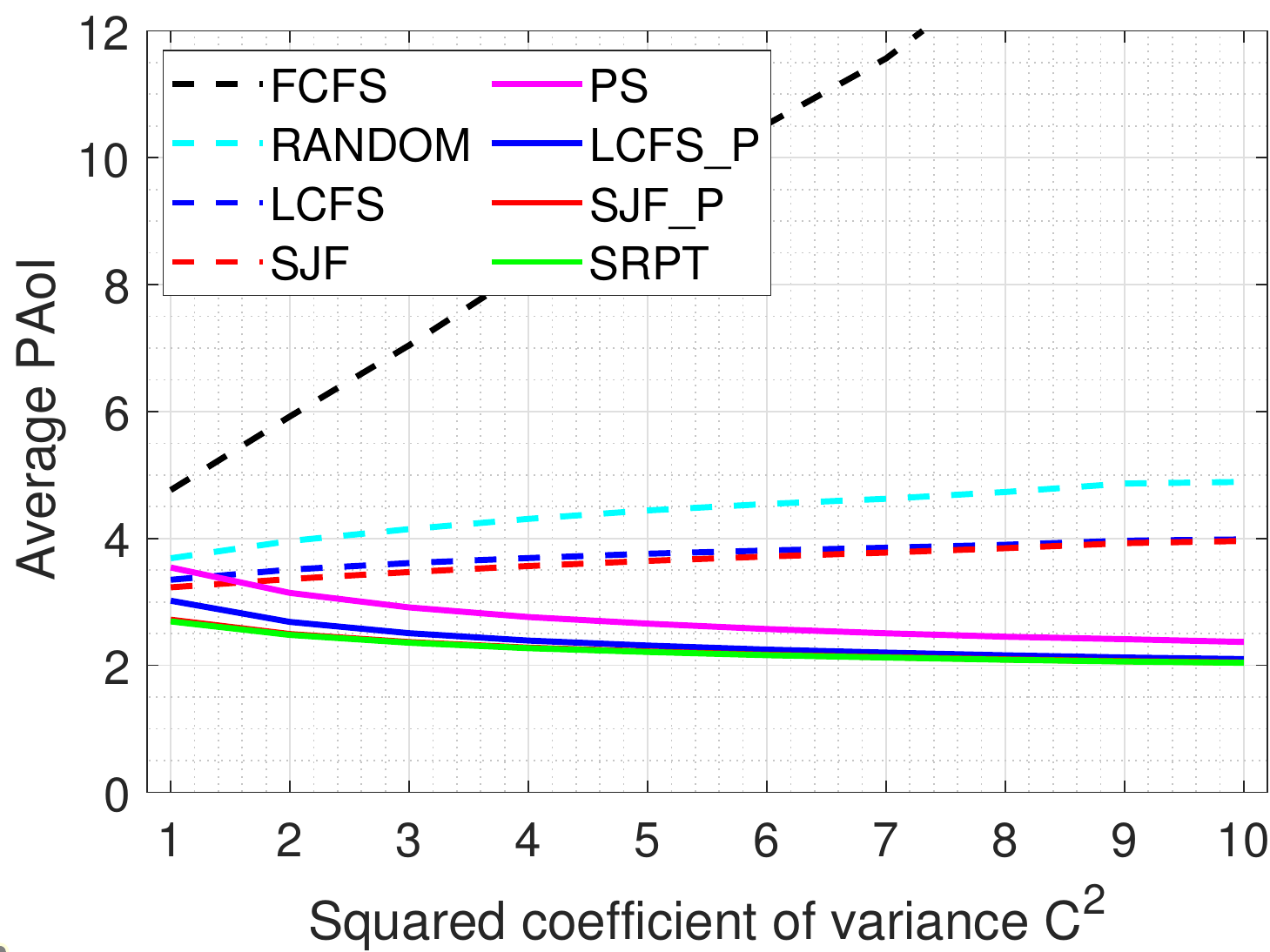}}
	\caption{Comparisons of the average PAoI performance under several common scheduling policies}
	\label{fig:tradition-PAoI}
\end{figure*}

The first type consists of policies that are non-preemptive and blind to the update size:

\begin{itemize}
	\item \emph{First-Come-First-Served (FCFS)}: When the server frees up, it chooses to serve the update that arrived first if any. 
	\item \emph{Last-Come-First-Served (LCFS)}: When the server frees up, it chooses to serve the update that arrived last if any.
	\item \emph{Random-Order-Service (RANDOM)}: When the server frees up, it randomly chooses one  update  to serve  if any.
\end{itemize}

The second type consists of policies that are non-preemptive and make scheduling decisions based on the update size:
\begin{itemize}
	\item \emph{Shortest-Job-First (SJF)}:  When the server frees up, it chooses to serve the update with the smallest size if any. 
\end{itemize}

The third type consists of policies that are preemptive and blind to the update size:
\begin{itemize}
	\item \emph{Processor-Sharing (PS)}: All the updates in the system are served simultaneously and equally (i.e., each update receives an equal fraction of the available service capacity).
	\item \emph{Preemptive Last-Come-First-Served (LCFS\_P)}: This is the preemptive version of the LCFS policy. Specifically, a preemption happens when there is a new update.
\end{itemize}

The fourth type consists of policies that are preemptive and make scheduling decisions based on the update size:
\begin{itemize}
    \item \emph{Preemptive Shortest-Job-First (SJF\_P)}: This is the preemptive version of the SJF policy. Specifically, a preemption happens when there is a new update that has the smallest size.
    \item \emph{Shortest-Remaining-Processing-Time (SRPT)}: When the server frees up, it chooses to serve the update with the smallest remaining size. In addition, a preemption happens only when there is a new update whose size is smaller than the remaining size of the update in service. 
\end{itemize}

Previous work (see, e.g., \cite[Section \uppercase\expandafter{\romannumeral7}]{harchol2013performance}) reveals that size-based policies can greatly improve the delay performance. Due to such results, we conjecture that size-based policies also achieve a better AoI performance given that the AoI is dominantly determined by the delay when the system load is high or when the size variability is large \cite{kaul2012real}. As we mentioned earlier, it is in general very difficult to obtain the exact expression of the average AoI except for some special cases (e.g., FCFS and LCFS) \cite{kaul2012real,inoue2018general}. Therefore, we attempt to investigate the AoI performance of size-based policies through extensive simulations.

In Figs.~\ref{fig:tradition-AoI} and \ref{fig:tradition-PAoI}, we present the simulation results of the average AoI and PAoI performance under the scheduling policies we introduced above, respectively. 
We conduct $50$ simulation runs and take the average values. In each simulation run, we consider a total number of ${10^5}$ updates to ensure that the steady state is reached. All the random numbers  are generated using the built-in random module in the Python standard library.
Here we assume that a single source generates updates according to a Poisson process with rate $\lambda$, and the update size is independent and identically distributed (\emph{i.i.d.}).
In Fig.~\ref{fig:tradition-exp-AoI}, we assume that the update size follows an exponential distribution with mean $1/\mu=1$. In Figs.~\ref{fig:tradition-wei-AoI}~and~\ref{fig:tradition-wei-variance-AoI}, we assume that the update size follows a Weibull distribution\footnote{The Weibull distribution is a heavy-tailed distribution with pdf $f(x;\alpha ,\beta ) = \frac{\alpha }{\beta }{(\frac{x}{\beta })^{\alpha  - 1}}{e^{ - {{(x/\beta )}^\alpha }}}$ for $x>0$, where $\alpha  > 0$ is the shape parameter and $\beta  > 0$ is the scale parameter.}
with mean $1/\mu=1$.
We define the squared coefficient of variation of the update size as ${C^{\rm{2}}} \triangleq  {\mathop{\rm Var}} \left( S \right){\rm{/}}  {\mathbb{E}} {\left[ S \right]^2}$, i.e., the variance normalized by the square of the mean \cite{harchol2013performance}. 
Hence, a larger ${C^{\rm{2}}}$ means a larger variability. In Fig.~\ref{fig:tradition-wei-AoI}, we fix ${C^{\rm{2}}=10}$ and change the value of system load $\rho$, while in Fig.~\ref{fig:tradition-wei-variance-AoI}, we fix system load $\rho=0.7$ and change the value of ${C^{\rm{2}}}$. 
Note that throughout the paper, these simulation settings are used as default settings unless otherwise specified.
In addition, the $95\% $ confidence intervals of Figs.~\ref{fig:tradition-AoI} and \ref{fig:tradition-PAoI} are also provided in Appendix~\ref{appendix:confidence-interval}.
We omit the $95\% $ confidence intervals for the other plots, as the margin of error is only a very small portion of the average (about $1\%$).

In the following, we will discuss key observations from the simulation results and propose useful guidelines for the design of AoI-efficient policies.

\begin{observation}
\label{obs:size_better}
Size-based policies achieve a better average AoI/PAoI performance than non-size-based policies in both non-preemptive and preemptive cases.
\end{observation}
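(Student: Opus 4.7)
The plan is to justify Observation~\ref{obs:size_better} as an empirical claim supported by Figs.~\ref{fig:tradition-AoI} and \ref{fig:tradition-PAoI}, augmented with analytical intuition that explains why the ordering is expected. First, I would carry out the pairwise comparisons needed to establish the claim: in the non-preemptive class, SJF against FCFS, LCFS, and RANDOM; in the preemptive class, SJF\_P and SRPT against PS and LCFS\_P. Each comparison should be made in all three regimes plotted (exponential service with varying $\rho$; Weibull service with $C^2=10$ and varying $\rho$; Weibull service with $\rho=0.7$ and varying $C^2$) for both the average AoI and the average PAoI. The $95\%$ confidence intervals tabulated in Appendix~\ref{appendix:confidence-interval} would be used to verify that the observed gaps are statistically significant, which is particularly important in the low-load regime where the curves are close together.

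Second, I would supply intuition for why the empirical ordering must hold. For the PAoI, the decomposition ${\mathbb{E}}[A_i] = {\mathbb{E}}[X_i] + {\mathbb{E}}[T_i]$ is especially useful: the inter-arrival term ${\mathbb{E}}[X_i] = 1/\lambda$ is policy-independent, so any PAoI difference is driven entirely by the mean delay. Classical results cited in Section~\ref{sec:relatedwork} (e.g., \cite{RePEc:inm:oropre:v:16:y:1968:i:3:p:687-690,smith1978new}) establish that SRPT is sample-path delay-optimal and that SJF/SJF\_P dramatically reduce mean delay at high $\rho$ or large $C^2$; under Poisson arrivals these delay savings translate one-for-one into PAoI savings, which matches the plots. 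For the average AoI, one would decompose the area under $\Delta(t)$ into trapezoidal contributions between successive delivery epochs (as in Fig.~\ref{fig:aoi_ex}) and argue that each trapezoid's area is controlled by the delay of the delivered update and by the inter-departure gap; prioritizing small updates both shortens delays and prevents a single large update from blocking many fresh deliveries, trimming the large trapezoids that dominate the time average.

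The hard part will be the average AoI side of the argument, since AoI is not a pure function of delay: a long service followed by a fresh arrival can reset the age favorably, so delay optimality does not mechanically imply AoI optimality. In particular, one cannot simply invoke SRPT's sample-path delay dominance to conclude AoI dominance, because the counterfactual comparison requires coupling the inter-arrival and delay statistics across policies. Consequently, I would not attempt to elevate Observation~\ref{obs:size_better} to a theorem in this section; instead I would frame it as a data-driven claim with heuristic backing, and defer a rigorous argument to the later sample-path equivalence results that tie specific size-based policies to AoI-based policies and thereby give a principled reason why policies such as SRPT perform so well in terms of AoI.
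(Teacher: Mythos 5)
Your approach is essentially the paper's: Observation~\ref{obs:size_better} is justified there purely empirically, by reading off Figs.~\ref{fig:tradition-AoI} and \ref{fig:tradition-PAoI} that SJF beats FCFS, RANDOM, and LCFS in the non-preemptive case and that SJF\_P and SRPT beat PS and LCFS\_P in the preemptive case (with the confidence intervals of Appendix~\ref{appendix:confidence-interval} confirming the gaps), which is exactly the set of pairwise comparisons you propose. Your added delay-based intuition is extra and harmless since you frame it as heuristic only--just note that the decomposition ${\mathbb{E}}[A_i]={\mathbb{E}}[X_i]+{\mathbb{E}}[T_i]$ is tied to every update producing a peak (as under FCFS), so the ``one-for-one'' transfer of delay savings to PAoI is not exact for the other policies.
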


In Fig.~\ref{fig:tradition-AoI}, we can see that for the non-preemptive case, SJF has a better average AoI performance than FCFS, RANDOM, and LCFS in various settings. Similarly, for the preemptive case, SJF\_P and SRPT have a better average AoI performance than PS and LCFS\_P. Similar observations can be made for the average PAoI performance in Fig.~\ref{fig:tradition-PAoI}.

\begin{observation}
\label{obs:pre_size_decreasing}
    Under preemptive, size-based policies, the average AoI/PAoI decreases as the system load increases.
\end{observation}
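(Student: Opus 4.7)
The plan is to start from the decomposition $\mathbb{E}[A_i] = \mathbb{E}[X_i] + \mathbb{E}[T_i]$ already established in the system model. With Poisson arrivals of rate $\lambda = \rho \mu$, the inter-arrival term satisfies $\mathbb{E}[X_i] = 1/(\rho \mu)$, a strictly decreasing, $\Theta(1/\rho)$ function of the load. So monotonicity of the average PAoI in $\rho$ reduces to showing that, under a preemptive size-based policy, $\mathbb{E}[T_i]$ grows more slowly than $\mathbb{E}[X_i]$ shrinks.

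For SRPT in an M/G/1 queue I would invoke the classical Schrage--Miller formula, which expresses the mean sojourn time of a size-$s$ update purely in terms of the arrival rate and workload contributed by updates of size at most $s$. Because fresh informative updates---the ones that actually trigger a PAoI drop---are overwhelmingly small under SRPT, their mean sojourn time stays bounded uniformly over a wide range of $\rho$. Pairing this bounded delay with the $\Theta(1/\rho)$ decay of $\mathbb{E}[X_i]$ gives the claimed decrease in $\mathbb{E}[A_i]$. The argument lifts to the average AoI by combining it with the standard identity relating $\Delta$ to the first two moments of the inter-arrival time and to the covariance between $X_i$ and $T_i$. For SJF\_P, whose preemption rule fires on the \emph{original} size rather than the remaining size, I would transfer the SRPT delay bound via a stochastic coupling at the cost of a constant additive term.

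The main obstacle will be the heavy-tailed, high-$C^2$, near-saturation regime: as $\rho \to 1$ the backlog of \emph{small} updates itself builds up, so even the SRPT sojourn time of small updates ceases to be uniformly bounded and strict monotonicity may fail at the extreme. A rigorous statement would therefore likely have to restrict to $\rho \le \rho^*$ for some $\rho^* < 1$ depending on the service-size distribution, or weaken ``decreases'' to ``is non-increasing'' on that range. Transferring the bound to SJF\_P in the heavy-tailed case is also delicate, as the coupling to SRPT loses tightness when many updates preempt one another in quick succession, which is the regime where the simulations in Figs.~\ref{fig:tradition-AoI} and \ref{fig:tradition-PAoI} still exhibit the monotone decrease.
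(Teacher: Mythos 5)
The statement you set out to prove is an \emph{Observation}, not a theorem: the paper supports it only with simulation evidence (Figs.~\ref{fig:tradition-AoI} and \ref{fig:tradition-PAoI}) plus a two-line intuition --- at higher load more small updates arrive, so a policy that prioritizes small updates produces more frequent AoI drops, and preemption keeps a large or stale update in service from blocking fresh ones. No analytic argument is attempted there, so your proposal is necessarily taking a different and more ambitious route; unfortunately, as a proof sketch it has genuine gaps.

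First, the decomposition $\mathbb{E}[A_i]=\mathbb{E}[X_i]+\mathbb{E}[T_i]$ is derived in the system-model section for the situation of Fig.~\ref{fig:aoi_ex}, where service is FCFS and every delivered update is informative, so every delivery creates a peak. Under SRPT or SJF\_P (with non-informative updates still served), a peak occurs only at the delivery of an informative update, and its height is that update's delay plus the gap between its generation time and the generation time of the previously freshest \emph{delivered} update --- which is not the interarrival time $X_i$. The same objection applies to the ``standard identity'' you invoke for the average AoI; it likewise presumes that every update is informative. Second, the Schrage--Miller formula controls the conditional sojourn time of a size-$s$ update, but it says nothing about which updates end up being informative or about the distribution of the generation-time gaps that enter the peaks; that is precisely where the work lies, and it is absent from the sketch (as are the details of the coupling to SJF\_P, whose preemption rule genuinely differs from SRPT's). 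Third, your own caveat --- restricting to $\rho\le\rho^{*}<1$ or weakening ``decreases'' to ``non-increasing'' --- proves less than the observation asserts: the simulations exhibit a monotone decrease up to $\rho=0.9$ even for heavy-tailed sizes with $C^{2}=10$. So what you outline would at best yield a new, weaker analytical statement rather than a proof of the observation as made; if the goal is to match the paper, the honest justification is the empirical one together with the intuition above.
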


In Figs.~\ref{fig:tradition-exp-AoI}~and~\ref{fig:tradition-wei-AoI}, we can see that under SJF, SJF\_P, and SRPT, the average AoI decreases as the system load $\rho$ increases. There are two reasons. First, when $\rho$ increases, there will be more updates with small size arriving to the queue. Therefore, size-based policies that prioritize updates with small size lead to more frequent AoI drops. Second, preemption operations prevent fresh updates from being blocked by a large or stale update in service.
Similar observations can be made for the average PAoI performance in Figs.~\ref{fig:tradition-exp-PAoI}~and~\ref{fig:tradition-wei-PAoI}.

Observations~\ref{obs:size_better}~and~\ref{obs:pre_size_decreasing} lead to the following guideline:
\begin{guideline}
\label{guide:use-size-info}
When the update-size information is available, one should prioritize updates with small size.
\end{guideline}

However, in certain application scenarios, the update-size information may not be available or is difficult to estimate. Hence, the scheduling decisions have to be made without the update-size information. In such scenarios, we make the following observations from Figs.~\ref{fig:tradition-AoI}~and~\ref{fig:tradition-PAoI}.

\begin{observation}
\label{obs:LCFS_better}
LCFS and LCFS\_P achieve the best average AoI performance among non-preemptive, non-size-based policies and preemptive, non-size-based policies, respectively. 
\end{observation}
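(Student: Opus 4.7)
The plan is to justify Observation~\ref{obs:LCFS_better} by combining the empirical evidence already shown in the plots with a structural argument about when AoI drops occur. First, I would read the comparison directly off Figs.~\ref{fig:tradition-AoI} and \ref{fig:tradition-PAoI}: in every exponential and Weibull setting, and across both axes (varying $\rho$ with $C^2$ fixed and varying $C^2$ with $\rho$ fixed), the LCFS curve lies strictly below the FCFS and RANDOM curves, and the LCFS\_P curve lies strictly below the PS curve. Since the paper already notes that the $95\%$ confidence intervals in Appendix~\ref{appendix:confidence-interval} are narrow, this pointwise dominance of the empirical means is statistically meaningful and can be stated as the content of the observation.

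Next, I would articulate the mechanism that explains the dominance. Recall that $\Delta(t) = t - U(t)$ with $U(t) = \max\{t_i : t'_i \le t\}$, so an AoI drop at a completion epoch $t'_i$ occurs only when $t_i > U(t'_i^{-})$, and the post-drop value equals $t'_i - t_i$. Under LCFS, whenever the server becomes free it always picks the update with the largest $t_i$ currently in queue; this maximizes the probability that the completion is informative and minimizes the post-drop age at that epoch. FCFS and RANDOM, in contrast, may commit the server to an arbitrarily old update, producing long stretches with no informative completion and a larger $t'_i - t_i$ when a drop finally happens. The same logic, sharpened by preemption, explains why LCFS\_P dominates PS: LCFS\_P always concentrates the full service rate on the freshest update, whereas PS splits capacity among all present updates and so delays every AoI drop, including the drop triggered by the newest update.

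To turn this intuition into a rigorous statement, I would attempt a sample-path exchange argument: given any work-conserving non-size-based schedule, I would swap the service order so that at each decision epoch the most recently arrived update is served next, and show that this swap does not increase the integrated AoI $\int_0^t \Delta(\tau)\,d\tau$ or the peak AoI sequence $\{A_i\}$. The main obstacle will be exactly the coupling between inter-arrival times and delays that the paper identifies as the core difficulty of AoI analysis: unlike classical delay-minimization arguments (e.g.\ for SRPT), swapping service order can change which updates are present at later arrivals, so the coupling is not straightforward. I would therefore first establish the exchange argument in the M/M/$1$ setting, where the memoryless service times make a pairwise swap neutral on the workload process, and then argue that the observation extends empirically to general G/G/$1$ via the simulation evidence, leaving a fully general stochastic-dominance proof as an open direction.
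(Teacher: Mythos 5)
Your justification matches the paper's own treatment: Observation~\ref{obs:LCFS_better} is not proved there either---it is read directly off Figs.~\ref{fig:tradition-AoI} and~\ref{fig:tradition-PAoI}, exactly as you do, and supported by the same intuition (all updates have the same expected service time, so serving the most recently arrived one yields the smallest AoI upon delivery), with the paper additionally noting that these observations appear in prior work, including the result of \cite{bedewy2016optimizing} that LCFS\_P is age-optimal when service times are exponential. Your third paragraph goes beyond the paper by sketching a sample-path exchange argument, but it remains a plan rather than a proof, and two remarks apply. First, the obstacle you flag is milder than you suggest: arrivals are exogenous, and for work-conserving, non-preemptive, non-size-based policies the departure epochs can be coupled to be identical across policies (service times attach to the server's busy periods rather than to particular updates, by blindness to size), so an exchange only changes \emph{which} queued update departs at each epoch, and choosing the freshest maximizes $U(t)$ pointwise; this is precisely the style of coupling the paper uses in the proof of Proposition~\ref{pro:age-optimality} for LCFS vs.\ LCFS\_I, and your M/M/$1$ exchange for the preemptive case would essentially re-derive the cited result of \cite{bedewy2016optimizing}. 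Second, if you do pursue a rigorous version, the claim ``best among non-preemptive (resp.\ preemptive), non-size-based policies'' must be argued against an arbitrary policy in the class, not only against FCFS, RANDOM, and PS as plotted; as an empirical observation (the paper's reading and yours) the restriction to the plotted representatives is fine, but the exchange argument should be stated against a generic member of the class.
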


\begin{observation}
\label{obs:LCFS_P_decreasing}
Under LCFS\_P, the average AoI/PAoI decreases as the system load increases.
\end{observation}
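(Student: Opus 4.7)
The plan is to establish the monotonicity analytically for the tractable M/M/$1$ case, where closed-form expressions are available, and then argue intuitively why the same phenomenon should persist in the broader G/G/$1$ setting underlying the simulations. The observation concerns both the average AoI and the average PAoI under LCFS\_P; I would treat the two in parallel, since the decomposition $A_i = X_i + T_i$ ties them together.

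First, I would invoke the closed-form expression for the average AoI under LCFS\_P in an M/M/$1$ queue, which is derived in \cite{kaul2012status} and takes the form
\begin{equation*}
\bar{\Delta}_{\text{LCFS\_P}}^{M/M/1} = \frac{1}{\mu}\left(1 + \frac{1}{\rho}\right).
\end{equation*}
Differentiating with respect to $\rho$ yields $-1/(\mu\rho^2) < 0$, so the average AoI is strictly decreasing in $\rho$ on $(0,\infty)$. An analogous computation for the average PAoI under M/M/$1$ LCFS\_P, using the fact that a delivered update is one whose exponential service completes before the next Poisson arrival, and invoking the memoryless property, yields the same $1/\rho$ dependence and hence the same monotonicity.

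Next, to address the empirically observed decreasing behavior for Weibull service, I would give a sample-path level argument. Under LCFS\_P, a successfully delivered update is one that is not preempted during its own service. As $\lambda$ (and therefore $\rho$) increases, two effects compete: the probability that a given update completes without being preempted, $\mathbb{E}[e^{-\lambda S}]$, decreases, while the additional preemption events effectively restart the age clock from successively fresher arrival times. A coupling argument, in which $\lambda$ is increased while the service-time sequence and the existing arrival epochs are preserved and only new arrivals are inserted, should show that at every time $t$ the generation time of the update last delivered in the higher-load system is no earlier than the corresponding quantity in the lower-load system. This yields pathwise monotonicity of $\Delta(t)$, and by ergodic averaging, monotonicity of the time-averaged AoI and, via $A_i = X_i + T_i$, of the average PAoI.

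The main obstacle is the absence of memorylessness under general service: after a preemption the remaining work of the preempted update is discarded, and the next started service is a fresh sample from $S$ whose distribution interacts non-trivially with the increased Poisson rate. Making the coupling rigorous, in particular showing that the delivered update's generation time in the higher-load system stochastically dominates its counterpart in the lower-load system uniformly in $t$ rather than only on average, is where the bulk of the technical work would lie, and is presumably why the authors state the result as a simulation-based observation rather than a theorem.
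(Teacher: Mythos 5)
Your attempt goes beyond what the paper actually does: Observation~4 is not proved in the paper at all. It is an empirical finding read off the simulation curves for LCFS\_P in Figs.~\ref{fig:tradition-AoI} and \ref{fig:tradition-PAoI} (and the confidence-interval tables), supported only by the intuitive explanation that preemption prevents fresh updates from being blocked by a large or stale update in service, and by consistency with prior work. Within that context, your M/M/1 anchor is sound: the closed form $\Delta_{\rm LCFS\_P}=1/\mu+1/\lambda=(1/\mu)(1+1/\rho)$ from \cite{kaul2012status} is indeed strictly decreasing in $\rho$, and the PAoI counterpart (which works out to $1/\lambda+1/\mu+1/(\lambda+\mu)$ for M/M/1 LCFS\_P) is decreasing as well, although you assert rather than derive it. This is a legitimately different and more analytic route than the paper's, for the exponential special case.

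The genuine gap is in the second half: the pathwise coupling you propose for general service is not merely technically hard, it is false as stated. Keep the service-time sequence fixed and insert one extra arrival of very large size just before an almost-finished update would have completed. Under LCFS\_P the newcomer preempts, and because the newest update always has priority, no delivery occurs in the higher-load path until the large update finishes; over that (arbitrarily long) interval the higher-load system has a strictly \emph{earlier} last-delivered generation time and strictly larger AoI than the lower-load system. So the claim that $U(t)$ in the higher-load path dominates at every $t$ cannot hold sample-path-wise for general service; at best one can hope for a distributional or time-average statement, and any such argument will need memorylessness or other distributional structure (exactly the property exploited in the coupling of Proposition~\ref{pro:age-optimality}), which is unavailable for the Weibull settings the observation covers. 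A minor further mismatch: you treat preempted work as discarded, whereas the paper's definition of preemption preserves intermediary work and the (non-informative) LCFS\_P may later resume the older update; this does not rescue the coupling, since the preempted update still cannot be delivered before the newer large one. As a result, your proposal establishes the observation only for M/M/1, while for the general G/G/1 setting the monotonicity remains, as in the paper, a simulation-supported claim rather than a theorem.
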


\begin{figure}[!t]
    \centering
    \includegraphics[scale=0.6]{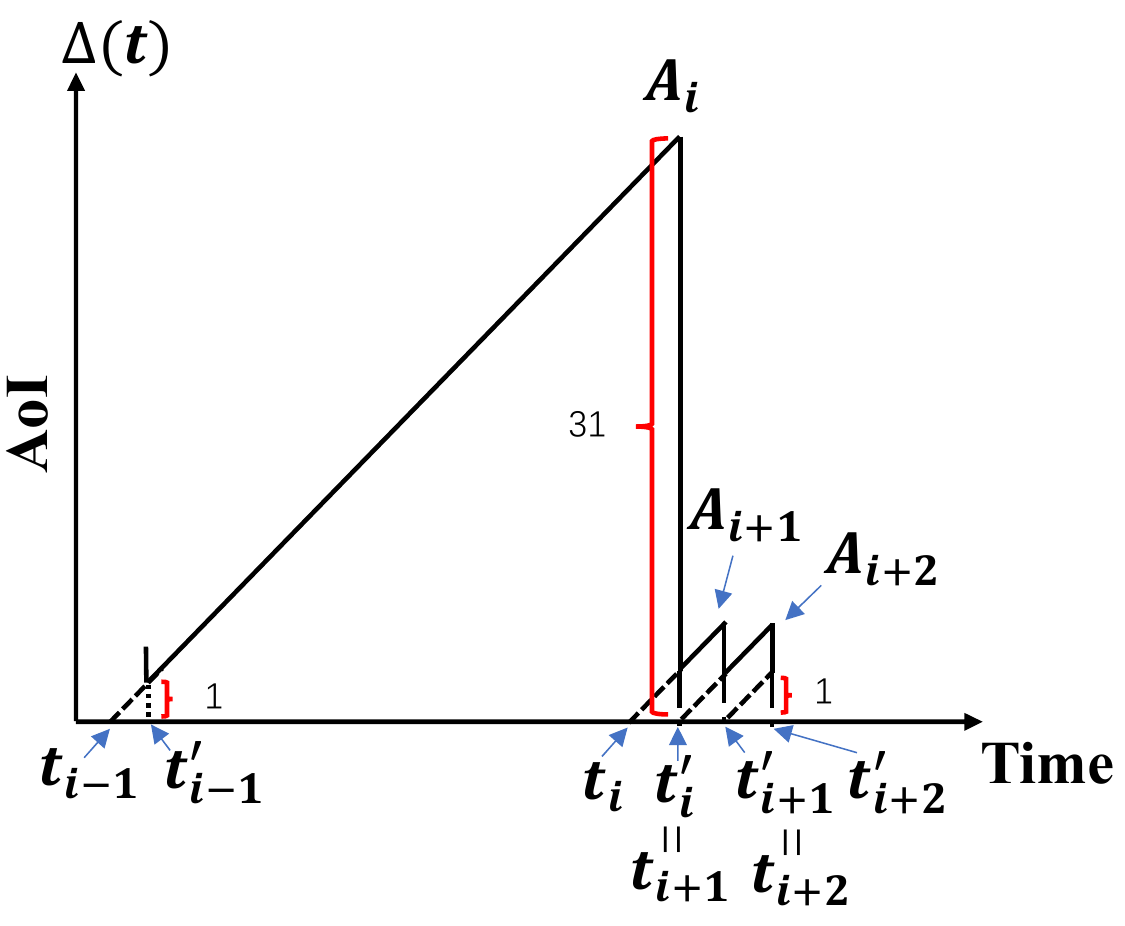}
    \caption{An example of the AoI/PAoI evolution where the interarrival time has a large variability}
    \label{fig:paoi_less_aoi}
\end{figure}

Observations~\ref{obs:LCFS_better}~and~\ref{obs:LCFS_P_decreasing} have also been made in previous work~\cite{costa2016age,yates2018age,bedewy2016optimizing}.  It is quite intuitive that when the update-size information is unavailable, one should give a higher priority to more recent updates. This is because while all the updates have the same expected service time, the most recent update arrives the last and thus leads to the smallest AoI once delivered. Therefore, Observations~\ref{obs:LCFS_better}~and~\ref{obs:LCFS_P_decreasing} lead to the following guideline:

\begin{guideline}
    When the update-size information is unavailable, one should prioritize recent updates. 
    \label{guide:arrival-time-info}
\end{guideline}

Note that Observations~\ref{obs:pre_size_decreasing}~and~\ref{obs:LCFS_P_decreasing} also suggest that under preemptive policies, the average AoI/PAoI decreases as the system load $\rho$ increases. This is because preemptions prevent fresh updates from being blocked by a large or stale update in service. In addition, we have also observed the following nice properties of preemptive policies.

\begin{observation}
\label{obs:pre-good}
Not only do preemptive policies achieve a better average AoI/PAoI performance than non-preemptive policies, but they are also less sensitive when the update-size variability changes, i.e., they are more robust.
\end{observation}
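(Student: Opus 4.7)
The plan is to support Observation~\ref{obs:pre-good} in two parts, corresponding to its two claims: (i) preemptive policies outperform non-preemptive ones in average AoI/PAoI, and (ii) preemptive policies are less sensitive to changes in the update-size variability $C^2$. Because this is an empirical observation, the justification rests on reading off Figs.~\ref{fig:tradition-AoI}~and~\ref{fig:tradition-PAoI} together with intuitive reasoning, rather than a formal derivation.

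For claim (i), I would pair each non-preemptive policy with its preemptive counterpart: LCFS vs.\ LCFS\_P, SJF vs.\ SJF\_P (and the closely related SRPT), and FCFS/RANDOM vs.\ PS. In each subfigure of Figs.~\ref{fig:tradition-AoI}~and~\ref{fig:tradition-PAoI}, the preemptive curve sits uniformly below the corresponding non-preemptive one across the entire range of $\rho$ and $C^2$. The explanation is essentially the one already given for Observation~\ref{obs:pre_size_decreasing}: under a non-preemptive policy, a freshly arriving update can be delayed behind a long, possibly stale, in-service update; under preemption, the fresh update can interrupt service, which both shortens its own delay and triggers more frequent AoI drops. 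Because PAoI equals inter-arrival time plus delay, the same mechanism improves PAoI.

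For claim (ii), I would focus on Figs.~\ref{fig:tradition-wei-variance-AoI}~and~\ref{fig:tradition-wei-variance-PAoI}, where $\rho=0.7$ is fixed and $C^2$ varies. The quantity of interest is the slope (or dynamic range) of each curve as $C^2$ grows. One reads off that FCFS, RANDOM, and SJF exhibit sharply increasing curves, while LCFS\_P, SJF\_P, and SRPT remain nearly flat; PS and LCFS lie in between. The intuition is that, for large $C^2$, a small fraction of very long services dominates the waiting experience of subsequent updates under non-preemptive policies (classical head-of-line blocking, whose magnitude scales with $\mathbb{E}[S^2]/\mathbb{E}[S]\propto(1+C^2)$). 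Preemption eliminates this blocking: an arriving update never has to wait for a long in-service outlier, so the AoI/PAoI becomes largely decoupled from the tail of the size distribution.

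The main obstacle is that Observation~\ref{obs:pre-good} is a qualitative statement summarizing simulation output rather than a theorem, so a fully rigorous ``proof'' would require fixing a specific pair of policies, a specific arrival process, and a parametric family of service distributions, and then establishing monotonicity of the average AoI/PAoI in $C^2$ for the non-preemptive policy while bounding the corresponding sensitivity for the preemptive one. Such a derivation would likely rely on a coupling argument between sample paths of the two policies and on a Pollaczek--Khinchine-type decomposition of the residual service contribution; this is well beyond what is developed in the present paper, which is why the statement is framed as an observation supported by the figures rather than as a lemma.
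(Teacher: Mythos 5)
Your justification takes essentially the same route as the paper: Observation~\ref{obs:pre-good} is supported empirically by reading Figs.~\ref{fig:tradition-AoI} and \ref{fig:tradition-PAoI} (in particular the fixed-$\rho$, varying-$C^2$ subfigures) together with the intuition that preemption prevents fresh updates from being blocked behind a large or stale update in service, which is exactly the explanation the paper gives. One small factual correction: in the PAoI plot (Fig.~\ref{fig:tradition-wei-variance-PAoI}) the paper's data and its own Observation~\ref{obs:PAoI-flat} show that RANDOM, LCFS, and SJF are themselves rather insensitive to $C^2$ --- only FCFS rises sharply there --- so your claim that RANDOM and SJF ``exhibit sharply increasing curves'' is accurate for the AoI plot (Fig.~\ref{fig:tradition-wei-variance-AoI}) but not for the PAoI one; this slip does not affect the conclusion that the preemptive policies are uniformly better and essentially flat in $C^2$.
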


In Figs.~\ref{fig:tradition-exp-AoI}~and~\ref{fig:tradition-wei-AoI}, we can see that preemptive policies (e.g., LCFS\_P, SJF\_P, and SRPT) generally have a better average AoI performance than non-preemptive ones (e.g., FCFS, RANDOM, LCFS, and SJF), especially when the system load is high. In Fig.~\ref{fig:tradition-wei-variance-AoI}, we can see that the advantage of preemptive policies becomes larger as the update-size variability (i.e., ${C^{\rm{2}}}$) increases. Moreover, the AoI performance of preemptive policies is only very slightly impacted when the update-size variability changes, while that of non-preemptive policies varies significantly. Therefore, Observations~\ref{obs:pre_size_decreasing}, \ref{obs:LCFS_P_decreasing}, and \ref{obs:pre-good} lead to the following guideline:

\begin{guideline}\label{gui:preemption}
Service preemption should be employed when it is allowed.
\end{guideline}

Note that above observations not only hold for the M/G/$1$ queue, but also can be made for the G/G/$1$ queue. More simulation results for the G/G/$1$ queue (i.e., Figs.~\ref{fig:tradition-AoI-2}-\ref{fig:all-combined-paoi}) can be found in Appendix~\ref{appendix:more_simu}. In addition, we make the following interesting observations regarding the average PAoI and AoI in a G/G/$1$ queue.

\begin{observation}
\label{obs:PAoI-smaller}
The average PAoI could be much smaller than the average AoI when the interarrival time has a large variability. 
\end{observation}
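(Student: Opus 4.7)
The plan is to make Observation~\ref{obs:PAoI-smaller} rigorous as an existence statement by exhibiting a regime in which the ratio $\mathbb{E}[\Delta]/\mathbb{E}[A]$ can be driven arbitrarily large. The intuition is that $\mathbb{E}[A] = \mathbb{E}[X] + \mathbb{E}[T]$, being essentially an arithmetic mean over updates, depends on the inter-arrival time only through its first moment, whereas $\mathbb{E}[\Delta]$ is a time-average and therefore inherits size-biasing of long inter-arrival gaps through the second moment $\mathbb{E}[X^2]$.

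First, I would invoke the standard sawtooth decomposition of the AoI sample path as in Fig.~\ref{fig:aoi_ex} (in the spirit of \cite{kaul2012real}). Let $Y_i \triangleq t'_i - t'_{i-1}$ denote the inter-delivery time. During $(t'_{i-1}, t'_i]$ the AoI grows linearly from $T_{i-1}$ up to $A_i = T_{i-1} + Y_i$, so the trapezoidal area contributed to $\int \Delta(\tau)\,d\tau$ equals $T_{i-1}\,Y_i + \tfrac{1}{2} Y_i^2$. Applying the renewal-reward theorem yields
\begin{equation*}
\mathbb{E}[\Delta] = \frac{\mathbb{E}[T_{i-1} Y_i] + \tfrac{1}{2}\,\mathbb{E}[Y_i^2]}{\mathbb{E}[Y]}.
\end{equation*}
In any stable work-conserving G/G/$1$ queue, flow balance gives $\mathbb{E}[Y] = \mathbb{E}[X]$ in steady state; combining this with $\mathbb{E}[A] = \mathbb{E}[X] + \mathbb{E}[T]$ and assuming $T_{i-1}$ is uncorrelated with $Y_i$ for clarity leads to
\begin{equation*}
\mathbb{E}[\Delta] - \mathbb{E}[A] \;=\; \frac{\mathbb{E}[X^2] - 2\mathbb{E}[X]^2}{2\mathbb{E}[X]},
\end{equation*}
which is strictly positive precisely when $C_X^2 \triangleq \mathop{\mathrm{Var}}(X)/\mathbb{E}[X]^2 > 1$.

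Second, to make the gap \emph{large}, I would instantiate $X$ with a two-point or hyper-exponential distribution whose mean is held fixed while $C_X^2 \to \infty$, and take the service time $S$ vanishingly small. Then $\mathbb{E}[T] \to 0$ and the ratio satisfies
\begin{equation*}
\frac{\mathbb{E}[\Delta]}{\mathbb{E}[A]} \;\longrightarrow\; \frac{1 + C_X^2}{2},
\end{equation*}
which can be driven arbitrarily large. This matches exactly the sample-path picture in Fig.~\ref{fig:paoi_less_aoi}: a rare but very long inter-arrival gap deposits a huge triangular area into the time-average $\mathbb{E}[\Delta]$, while contributing only a single diluted summand to the arithmetic average $\mathbb{E}[A]$.

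The main obstacle is the uncorrelatedness assumption between $T_{i-1}$ and $Y_i$, which under FCFS is only approximate because $Y_i = \max(X_i - T_{i-1},\,0) + S_i$. For the existence claim this is harmless---the light-service regime $\mathbb{E}[S] \to 0$ decorrelates the two---but a clean inequality for arbitrary load and arbitrary scheduling policy would require invoking the stationary distributional identity of \cite{inoue2018general} together with a per-policy re-derivation of the trapezoidal area decomposition (the inter-delivery and delay laws change under LCFS\_P and the AoI-based policies of Section~\ref{sec:drop-to-point}, but the same size-biasing intuition carries over).
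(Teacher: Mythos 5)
Your proposal is correct in substance, but it takes a genuinely different route from the paper. The paper does not attempt a general derivation at all: it supports Observation~\ref{obs:PAoI-smaller} by the simulation evidence and then explains the phenomenon with a single concrete sample path (Fig.~\ref{fig:paoi_less_aoi}), three updates with interarrival times $30,1,1$ and unit system times, for which the time-average AoI over the interval is $\approx 15.09$ while the average of the three peaks is $\approx 11.67$. Your argument instead works in steady state: the delivery-interval trapezoidal decomposition and renewal--reward give $\mathbb{E}[\Delta]=\bigl(\mathbb{E}[T_{i-1}Y_i]+\tfrac12\mathbb{E}[Y_i^2]\bigr)/\mathbb{E}[Y]$ against $\mathbb{E}[A]=\mathbb{E}[X]+\mathbb{E}[T]$, and in the light-service limit this yields $\mathbb{E}[\Delta]/\mathbb{E}[A]\to(1+C_X^2)/2$, so the gap can be made arbitrarily large. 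What each buys: the paper's example is elementary, self-contained, and makes the counterintuitive possibility vivid on one deterministic trajectory, which is all the (empirical) observation needs; your route identifies the mechanism quantitatively -- the AoI is a time average and hence size-biased toward long gaps through $\mathbb{E}[X^2]$, while the PAoI is an arithmetic average over updates depending on $X$ only through its mean -- and upgrades ``could be much smaller'' to ``the ratio is unbounded in $C_X^2$.'' One caution: the displayed identity $\mathbb{E}[\Delta]-\mathbb{E}[A]=\bigl(\mathbb{E}[X^2]-2\mathbb{E}[X]^2\bigr)/(2\mathbb{E}[X])$ silently replaces $\mathbb{E}[Y^2]$ by $\mathbb{E}[X^2]$ in addition to the stated decorrelation assumption; both substitutions are justified only in the $\mathbb{E}[S]\to 0$ regime you invoke afterwards (where $Y_i\to X_i$ and $T_{i-1}\to 0$), so you should present that identity as a limiting statement rather than an intermediate exact equality. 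With that caveat, your existence claim stands and is strictly stronger than the paper's illustration.
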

In Figs.~\ref{fig:tradition-wei-exp-AoI}~and~\ref{fig:tradition-wei-exp-PAoI}, we can see that the average PAoI is much smaller than the average AoI for all the common scheduling policies we considered. This is due to the interarrival time has a large variability. 
We present an example in Fig.~\ref{fig:paoi_less_aoi} to illustrate that this phenomenon comes from the large variability of the interarrival time. 
We consider three updates: the $i$-th, the $(i+1)$-st and $(i+2)$-nd updates, which are served in sequence during $({t'_{i - 1}},{t'_{i + 2}})$. 
Their interarrival times are as follows: ${t_i} - {t_{i - 1}} = 30$, ${t_{i + 1}} - {t_i} = 1$, and ${t_{i + 2}} - {t_{i + 1}} = 1$; and their system times are as follows: ${t'_i} - {t_i} = 1$, ${t'_{i + 1}} - {t_{i + 1}} = 1$, and ${t'_{i + 2}} - {t_{i + 2}} = 1$. In addition, we also assume ${t'_{i - 1}} - {t_{i - 1}} = 1$. Therefore, the average AoI and the average PAoI during $({t'_{i - 1}},{t'_{i + 2}})$ are $\frac{{{{31}^2} + {2^2} + {2^2} - 3 \times {1^2}}}{{2 \times (30 + 1 + 1)}} \approx 15.09$ and $\frac{{31 + 2 + 2}}{3} \approx 11.67$, respectively. In this case, the average PAoI is indeed smaller than the average AoI. 

The importance of Observation~\ref{obs:PAoI-smaller} can be summarized as follows. First, in certain settings (e.g., where the interarrival time  has a large variability), the AoI can actually be higher than the PAoI. This observation is counterintuitive, given that the computation of the average PAoI includes the peak values of the AoI only. Second, given that the AoI and the PAoI exhibit different relationships in different settings, an AoI-efficient scheduling policy may not necessarily achieve a desired PAoI performance, and vice versa. In other words, one must carefully study the design of AoI-efficient scheduling policies with different goals in mind (i.e., minimizing the AoI or the PAoI).


\begin{observation}
\label{obs:PAoI-flat}
The average PAoI performance of several non-preemptive policies (such as RANDOM, LCFS, and SJF) is insensitive to the update-size variability. 
\end{observation}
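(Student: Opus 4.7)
The plan is to account for the fact that the average PAoI is determined not by every delivered update but only by those updates that actually create peaks in the AoI curve---namely, updates that are delivered with a strictly newer timestamp than all previously delivered updates. Writing $\mathbb{E}[A] = \mathbb{E}[\tilde X] + \mathbb{E}[\tilde T]$, where $\tilde X$ and $\tilde T$ are the inter-arrival time and delay restricted to such peak-creating updates, the first term depends only on the arrival process and the (policy-dependent) thinning that selects peak-creating completions, not on $C^2$ directly. The analysis therefore reduces to showing that the restricted delay $\mathbb{E}[\tilde T]$ is essentially insensitive to the service-size variability once $\rho$ and $\mathbb{E}[S]$ are fixed.

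For LCFS, I would argue that a delivered update is peak-creating only if no strictly later arrival has already been served, which under non-preemptive LCFS means that the update must have been served very soon after it arrived---otherwise subsequent arrivals accumulate during its wait and are served first, displacing it from ever being peak-creating. Consequently, the conditional delay of a peak-creating LCFS update is dominated by its own service time plus at most the residual of a single in-service job, and both terms can be bounded by quantities depending on $\mathbb{E}[S]$ and $\rho$ but only weakly on $C^2$. For RANDOM a similar conclusion should follow by a symmetry argument on the waiting line, combined with the invariance of \cite{crovella1999connection} that pins down the distribution of the number in system identically to LCFS among non-preemptive, size-oblivious policies.

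For SJF the mechanism is different but the conclusion should be parallel: peak-creating completions are dominantly small-size updates, because SJF preferentially serves them and a large-size update is likely to be preceded by many short-size arrivals that render it stale before it enters service. The self-service contribution therefore shrinks as $C^2$ grows---since the short end of the size distribution concentrates near zero with $\mathbb{E}[S]$ held fixed---while the residual-service term is multiplied by a probability of arriving to a nonempty server that itself contracts for the short-job subpopulation, yielding overall flatness. Making this quantitative would use the equilibrium excess distribution of $S$ under Poisson arrivals together with a size-biased sampling of peak-creating updates.

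The main obstacle will be that the restriction to peak-creating updates introduces a nontrivial thinning of the delivery sequence that is itself coupled to the service-time distribution; controlling the joint law of $\tilde X$ and $\tilde T$ rigorously is harder than controlling either marginal in isolation. I would expect a clean argument for LCFS and RANDOM via the Crovella--Harchol invariance \cite{crovella1999connection} combined with the LCFS residual structure, but only an asymptotic or approximate statement for SJF, to be matched against the simulation evidence in Fig.~\ref{fig:tradition-wei-variance-PAoI} to confirm that the bound is tight in the regime of interest.
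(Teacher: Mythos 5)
The statement you are addressing is an empirical observation, and the paper does not prove it rigorously; it explains it by contrasting the per-peak average with the time average. Under RANDOM, LCFS and SJF a large update still produces a large AoI peak (of height comparable to its service time), but such peaks are rare relative to the total number of peaks, so the average PAoI --- a per-peak average, linear in the peak heights --- is diluted by the many small peaks, whereas the average AoI --- a time average, quadratic in the peak heights --- is dominated by them. The paper makes this concrete with a toy trajectory (one update of size $n-1$ followed by $n$ updates of size $1$), for which the average AoI is ${\cal O}(n)$ while the average PAoI is $3n/(n+1)={\cal O}(1)$, and it separately explains why FCFS remains sensitive (every update creates a peak, and a large in-service update inflates the delay of all queued updates).

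Your decomposition $\mathbb{E}[A]=\mathbb{E}[\tilde X]+\mathbb{E}[\tilde T]$ is reasonable, but the way you control the two terms has genuine gaps. First, bounding the conditional delay of a peak-creating LCFS update by ``its own service time plus at most the residual of a single in-service job'' and calling both weakly dependent on $C^2$ is incorrect as stated: the mean residual service time of the job in service is $\mathbb{E}[S](1+C^2)/2$, which grows linearly in $C^2$. What saves LCFS is not that every peak is small --- the peak spanning a huge update's service interval has height of order that service time --- but that such peaks are infrequent among all peaks; your framing, which tries to make each restricted delay insensitive to $C^2$, cannot deliver this and misses the dilution mechanism that is the heart of the paper's explanation. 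Second, the claim that $\mathbb{E}[\tilde X]$ does not depend on $C^2$ ``directly'' is unsubstantiated: the thinning to peak-creating completions is coupled to the service times (a long service stalls informative completions for its entire duration), so the generation-time gaps between consecutive peak-creating updates do scale with the large sizes; you flag this coupling as the main obstacle but leave it unresolved, and it is precisely where the argument must do its work. Third, the invariance of \cite{crovella1999connection} fixes only the distribution of the number in system across non-preemptive size-blind policies; it says nothing about which completions are peak-creating or about their delays, so it does not settle the RANDOM case. To align with the paper, replace the ``each peak is insensitive'' claim with a counting argument: admit the rare large peaks and show their frequency among peaks is small enough (one large peak amid order-$n$ small ones in the paper's example) that the per-peak average stays ${\cal O}(1)$ while the time average grows with the size variability.
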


\begin{figure}[!t]
    \centering
    \includegraphics[scale=0.6]{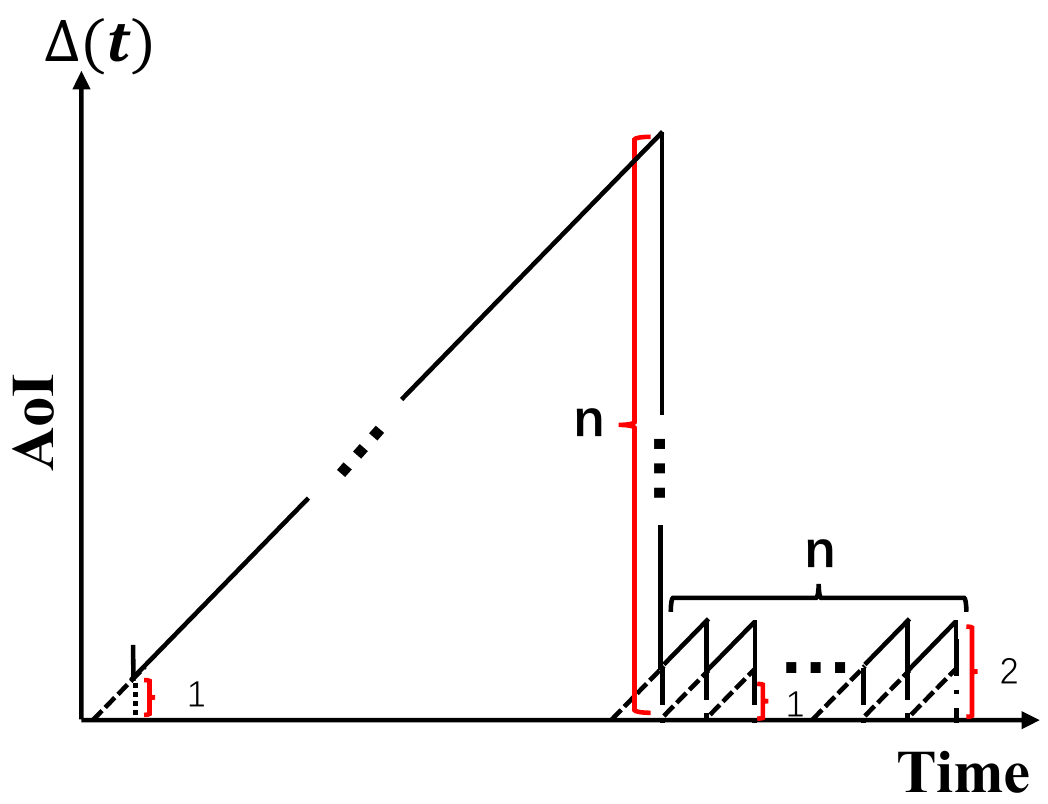}
    \caption{An example of the AoI/PAoI evolution where the service time has a large variability}
    \label{fig:obs7_explanation}
\end{figure}

In Fig.\ref{fig:tradition-wei-variance-PAoI}, we observe that while the average PAoI performance of FCFS is sensitive to the update-size variability, under several non-preemptive policies (such as RANDOM, LCFS, and SJF), the average PAoI performance is much less sensitive. An explanation for this observation is the following.  \\
\indent First, we explain why the PAoI under FCFS is still sensitive to the update-size variability. Note that a key difference between FCFS and other non-preemptive policies is that under FCFS, every update leads to an AoI drop and thus corresponds to an AoI peak. When a large update is in service, it will block all the following updates that are waiting in the queue, which results in a large delay for all such updates and thus a large PAoI corresponding to these updates. In contrast, under RANDOM, LCFS, and SJF, the impact of such a blocking issue is minimal for the updates that lead to an AoI drop. \\
\indent Next, we explain why the PAoI has a different behavior than the AoI under RANDOM, LCFS, and SJF. We first consider LCFS. In the setting we consider, there is a high chance that the newest update has a small size. Serving such small-size updates leads to a small PAoI. When the newest update has a large size, the corresponding PAoI would also be large. However, this happens less often. Therefore, the AoI trajectory would consist of a smaller percentage of large AoI peaks with many small AoI peaks in between. As the update-size variability increases, there will be fewer but larger AoI peaks. In such cases, while the average AoI is sensitive to the large AoI peaks (which comes from the large update-size variability), the average PAoI is much less sensitive. To illustrate this fact, we provide an example in Fig.~\ref{fig:obs7_explanation}, where there is a large update of size $n-1$, immediately followed by $n$ small updates of size $1$. In this case, we can compute the average AoI as $\Delta {\rm{ = }}\left[ {1 \times (\frac{{{n^2}}}{2} - \frac{{{1^2}}}{2}) + n \times \left( {\frac{{{2^2}}}{2} - \frac{{{1^2}}}{2}} \right)} \right]/\left( {\left( {n - 1} \right) + n} \right) = \frac{{{n^2} + 3n - 1}}{{4n - 2}} = {\cal O}(n)$ and compute the average PAoI as $A = \frac{{n + 2 \times n}}{{n + 1}} = \frac{{3n}}{{n + 1}}= {\mathcal{O}}(3)$.
This example shows that a larger update-size variability (i.e., a larger $n$ in this example) results in a larger average AoI but only minimally affects the average PAoI. A similar explanation also applies to SJF and RANDOM.

\begin{figure}[!t]
	\centering
	\includegraphics[scale=0.5]{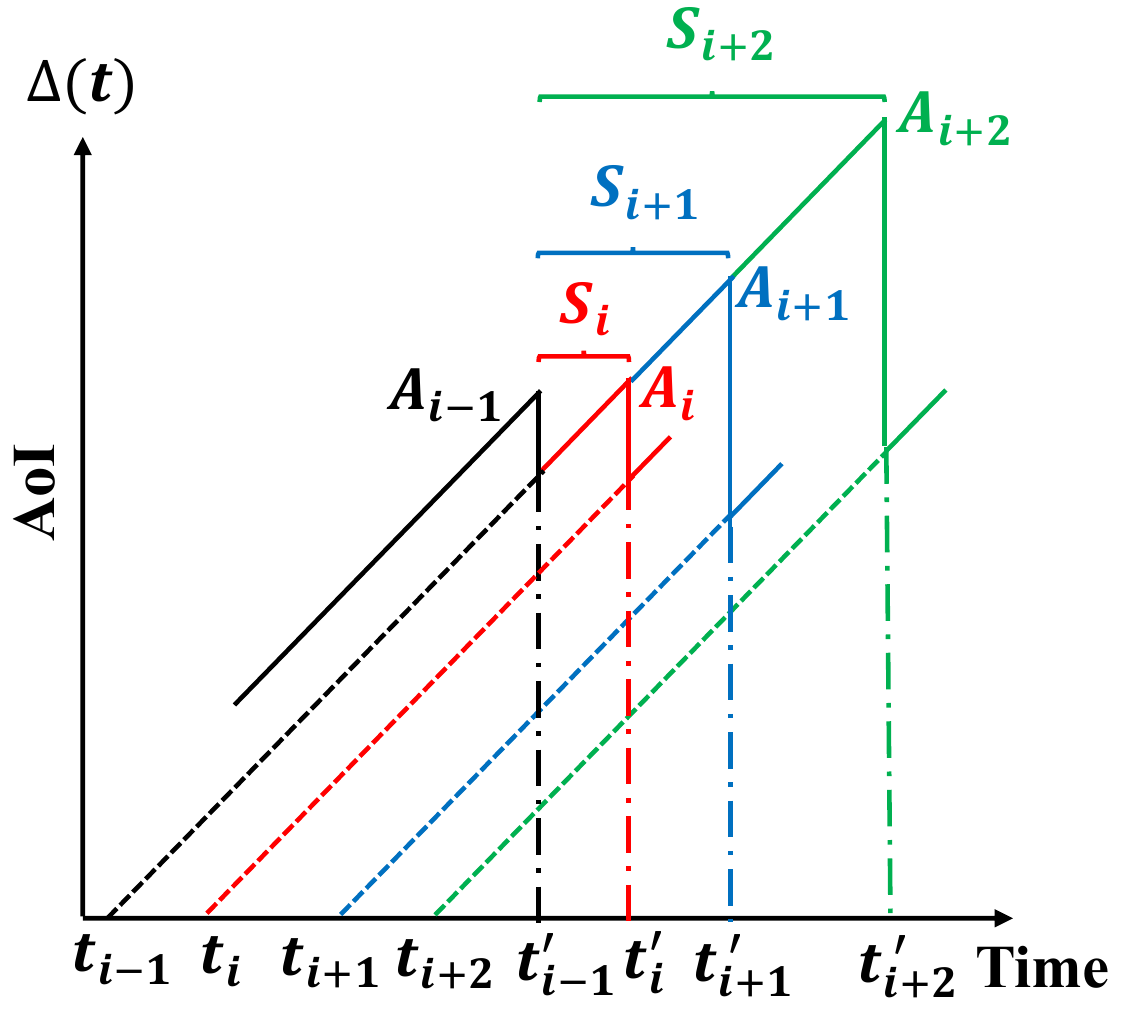}
	\caption{The AoI evolution under three AoI-based policies: ADE (red), ADS (blue), and ADM (green)}
	\label{fig:ADF-ADS-ADM}
\end{figure}

\section{\uppercase{AoI-based  policies}}
\label{sec:drop-to-point}

In Section \ref{sec:common-scheduling}, we have demonstrated that size-based policies achieve a better average AoI/PAoI performance than non-size-based policies. However, size-based policies do not utilize the arrival-time information, which also plays an important role in reducing the AoI. 
In this section, we propose three AoI-based scheduling policies, which leverage both the update-size and arrival-time information to reduce the AoI.
Our simulation results show that these AoI-based policies outperform non-AoI-based policies.

We begin with the definitions of three AoI-based policies that \emph{attempt to optimize the AoI at a specific future time instant} from three different perspectives: 

\begin{itemize}
\item \emph{AoI-Drop-Earliest (ADE)}: When the server frees up, it chooses to serve an update such that once it is delivered, the AoI drop as soon as possible.

\item \emph{AoI-Drop-to-Smallest (ADS)}: When the server frees up, it chooses to serve an update such that once it is delivered, the AoI drops to a value as small as possible.

\item \emph{AoI-Drop-Most (ADM)}: When the server frees up, it chooses to serve an update such that once it is delivered, the AoI drops as much as possible.
\end{itemize}
If all updates waiting in the queue are obsolete, then the above policies choose to serve an update with the smallest size.

Although all of these AoI-based policies are quite intuitive, they behave very differently.
In order to explain the differences of these AoI-based policies, we present an example in Fig.~\ref{fig:ADF-ADS-ADM} to show how the AoI evolves under these policies.
Suppose that when the ($i-1$)-st update is being served, three new updates (i.e., the $i$-th, ($i+1$)-st, and ($i+2$)-nd updates) arrive in sequence at times ${t_i}$, ${t_{i + 1}}$, and ${t_{i + 2}}$, respectively. The sizes of these updates satisfy ${S_i} < {S_{i + 1}} < {S_{i + 2}}$. When the server frees up after it finishes serving the ($i-1$)-st update at time $t^{\prime}_{i-1}$, ADE, ADS, and ADM choose to serve the $i$-th, ($i+1$)-st, and ($i+2$)-nd updates, respectively. This is because serving the $i$-th update leads to the earliest AoI drop at time $t^{\prime}_i$ (following the red curve), serving the ($i+1$)-st update leads to the AoI dropping to the smallest at time $t^{\prime}_{i+1}$ (following the blue curve), and serving the ($i+2$)-nd update leads to the largest AoI drop at time $t^{\prime}_{i+2}$ (following the green curve).
Clearly, ADE, ADS and ADM aim to optimize AoI at a specific future time instant (i.e., the future delivery time of chosen update) with  different myopic goals. Note that at first glance, ADS and ADM may look the same. Indeed, they would be equivalent if the events of AoI drop have happened at the same time instant. However, these two policies are different as the time instants at which the AoI drops are not necessarily the same (e.g., $t^{\prime}_{i+1}$ vs. $t^{\prime}_{i+2}$ in Fig.~\ref{fig:ADF-ADS-ADM}).
In addition, ADE and SJF may also look the same at first glance. Indeed, these two policies would make the same decision (i.e., choose the smallest update to serve) when the smallest update leads to an AoI drop. However, they make different decisions when the smallest update does not lead to an AoI drop. An example is provided in Fig.~\ref{fig:ADEvsSJF} to illustrate the key difference. In Fig.~\ref{fig:ADEvsSJF}, after the $(i-1)$-st update completes service at time $t'_{i-1}$, there are two updates waiting to be served: the $(n-2)$-nd update and the $i$-th update. Suppose that the update size and the arrival time of these two updates satisfy the following: ${S_{i - 2}} < {S_i}$ and ${t_{i - 2}} < {t_{i - 1}} < {t_i}$. Clearly, ADE chooses to serve the $i$-th update that leads to an earlier AoI drop (see Fig.~\ref{fig:ADEvsSJF_ADE}), while SJF chooses to serve the $(i-2)$-nd update that has a smaller size (see Fig.~\ref{fig:ADEvsSJF_SJF}).

Next we conduct extensive simulations to investigate the AoI performance of these AoI-based policies. 
In Fig.~\ref{fig:wei-fsm-area-variance}, we present the simulation results of the average AoI performance of the AoI-based policies compared to a representative arrival-time-based policy (i.e., LCFS) and a representative size-based-policy (i.e., SJF). All the policies considered here are non-preemptive; the preemptive cases will be discussed in Section~\ref{sec:pre-infor}.

In Fig.~\ref{fig:exp-fsm-area-ratio-aoi}, we observe that most AoI-based policies are slightly better than non-AoI-based policies, although their performances are very close. Among the AoI-based policies, ADE is the best, ADM is the worst, and ADS is in-between. This is not surprising that ADM is the worst: although ADM has the largest AoI drop, this is at the cost that it may have to wait until the AoI become large first. ADE being the best suggests that giving a higher priority to small updates (so that the AoI drops as soon as possible) is a good strategy.
In Figs.~\ref{fig:wei-fsm-area-ratio-aoi}~and~\ref{fig:fsd-area-variance-aoi}, similar observations can be made for update size following Weibull distributions.

\begin{figure}[!t]
    \centering
    \subfigure[ADE]{
    \begin{minipage}[t]{1\linewidth}
    \label{fig:ADEvsSJF_ADE}
    \centering
    \includegraphics[width=2in]{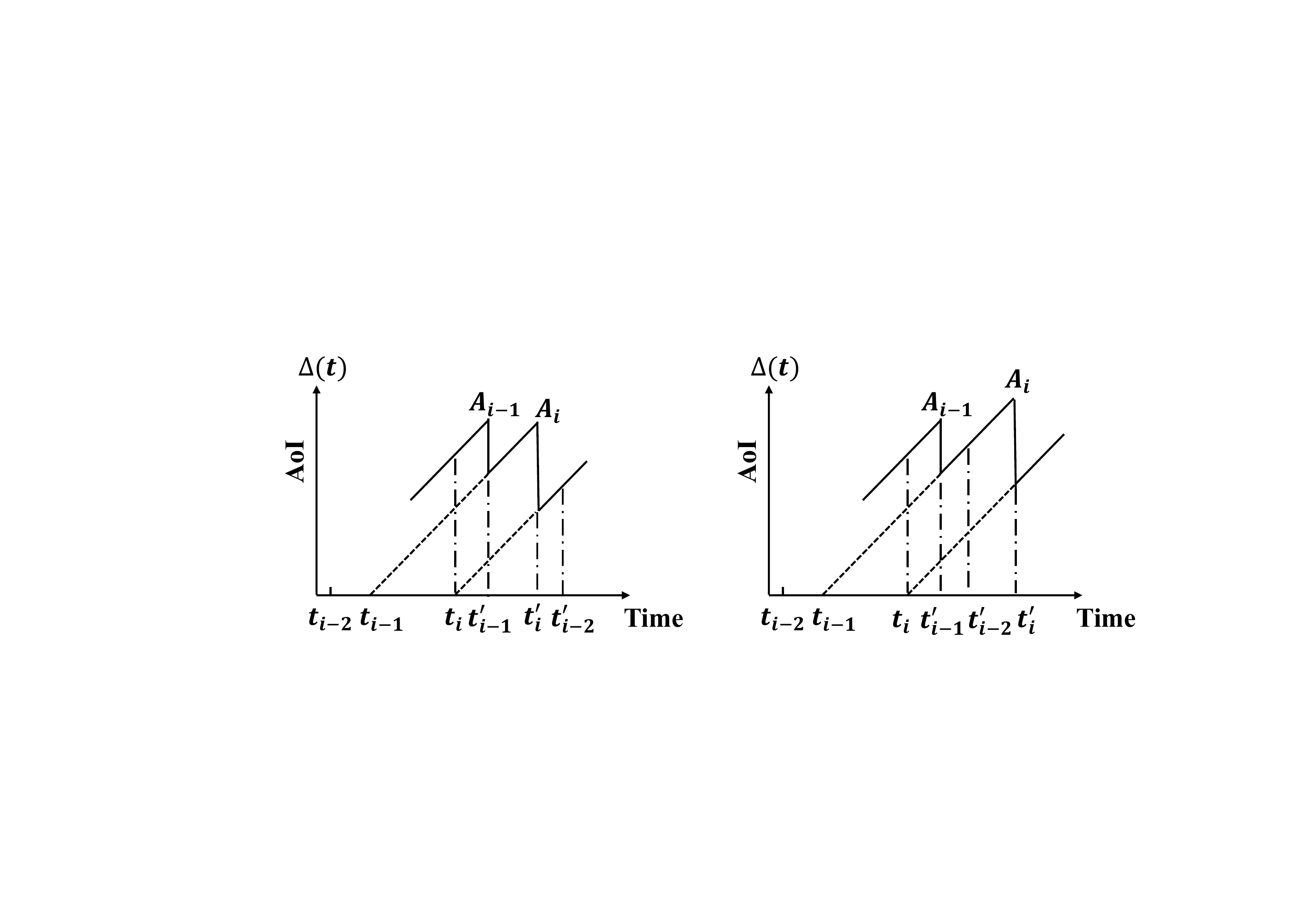}
    \end{minipage}}
    \subfigure[SJF]{
    \begin{minipage}[t]{1\linewidth}
    \label{fig:ADEvsSJF_SJF}
    \centering
    \includegraphics[width=2in]{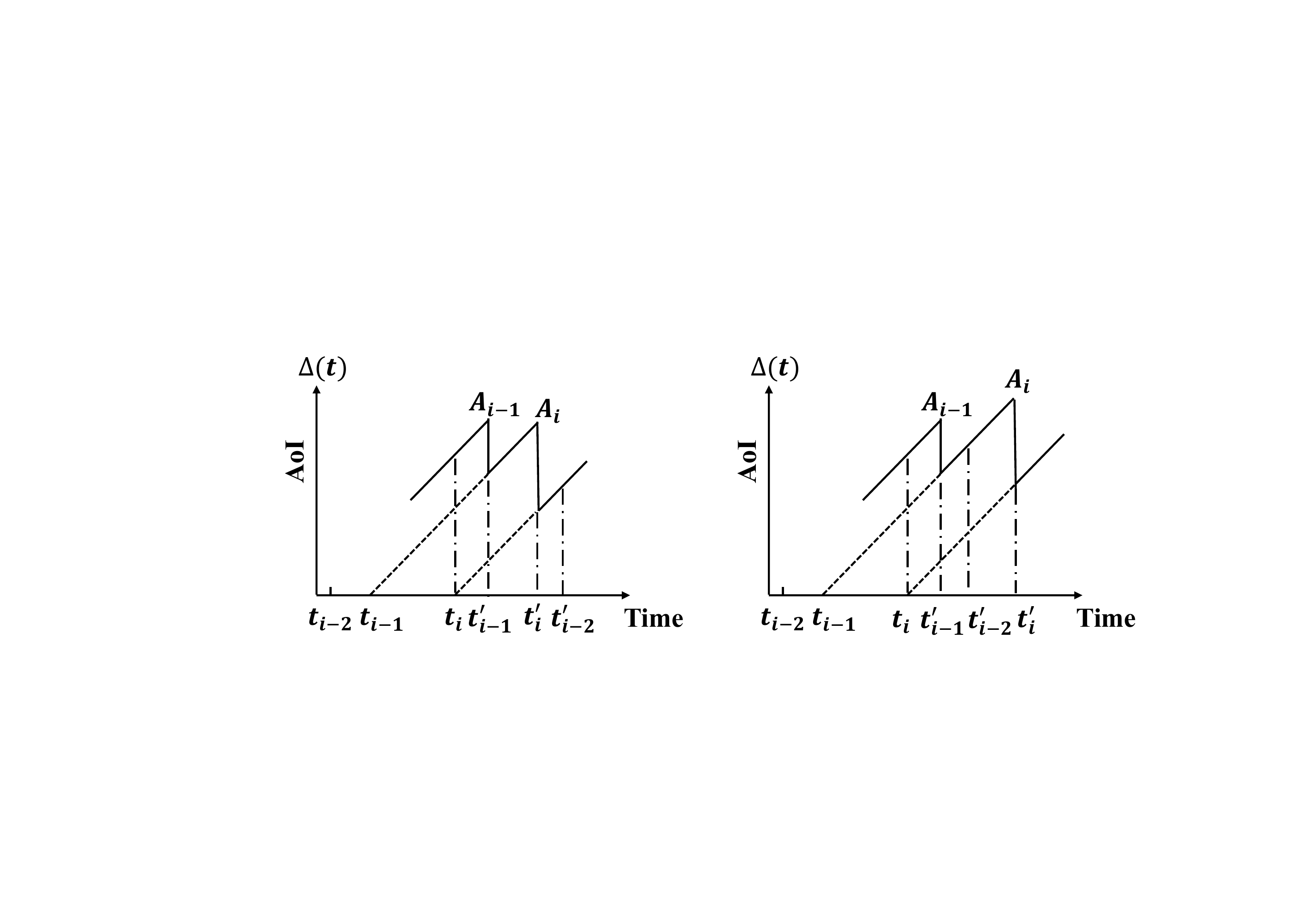}
    \end{minipage}}
    \caption{An example of the AoI evolution under ADE and SJF}
    \label{fig:ADEvsSJF}
\end{figure}

The above observations lead to the following guideline:
\begin{guideline}
	Leveraging both the update-size and arrival-time information can further improve the AoI performance. However, the benefit seems marginal.
	\label{guide:use-size-arrival-info}
\end{guideline}

\begin{figure*}[!t]
    \centering
    \subfigure[Exponential: $\mu=1$]{
		\label{fig:exp-fsm-area-ratio-aoi} 
		\includegraphics[width=0.322\textwidth]{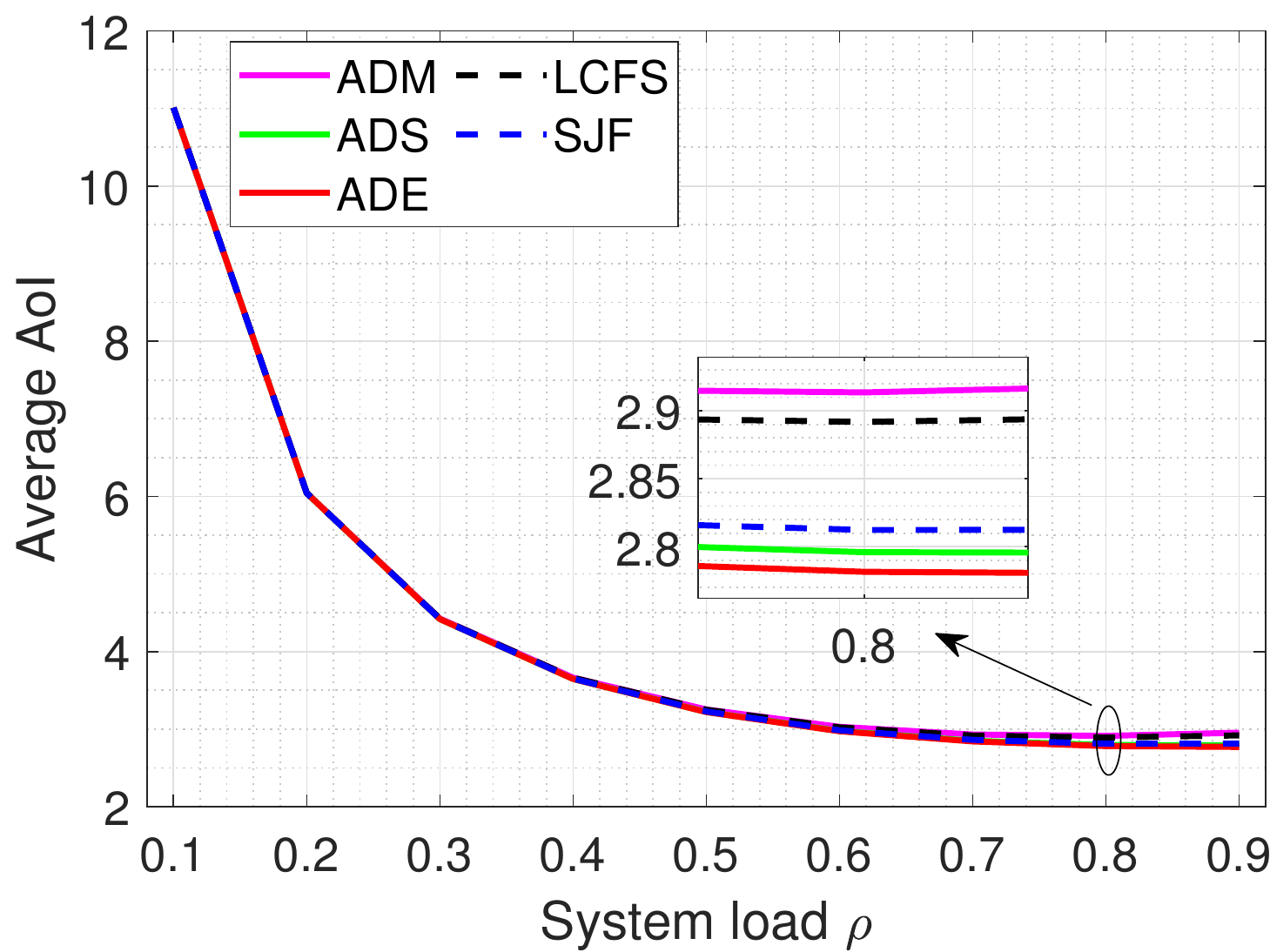}}
	\subfigure[Weibull: $\mu=1$ and ${C^{\rm{2}}}{\rm{ = 10}}$]{
		\label{fig:wei-fsm-area-ratio-aoi} 
		\includegraphics[width=0.322\textwidth]{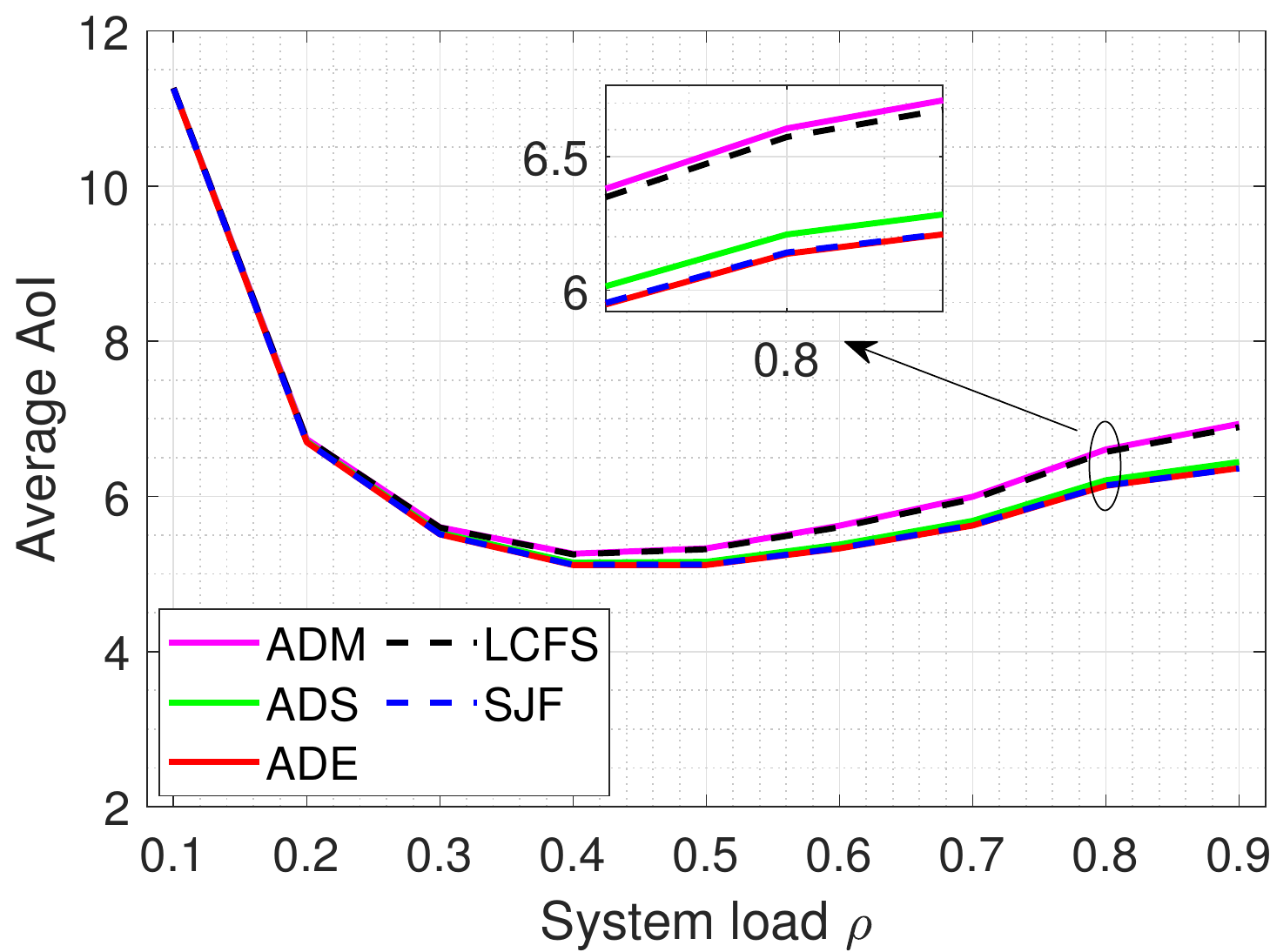}}
    \subfigure[Weibull: $\mu=1$ and $\rho {\rm{ = 0}}{\rm{.7}}$]{
		\label{fig:fsd-area-variance-aoi} 
		\includegraphics[width=0.322\textwidth]{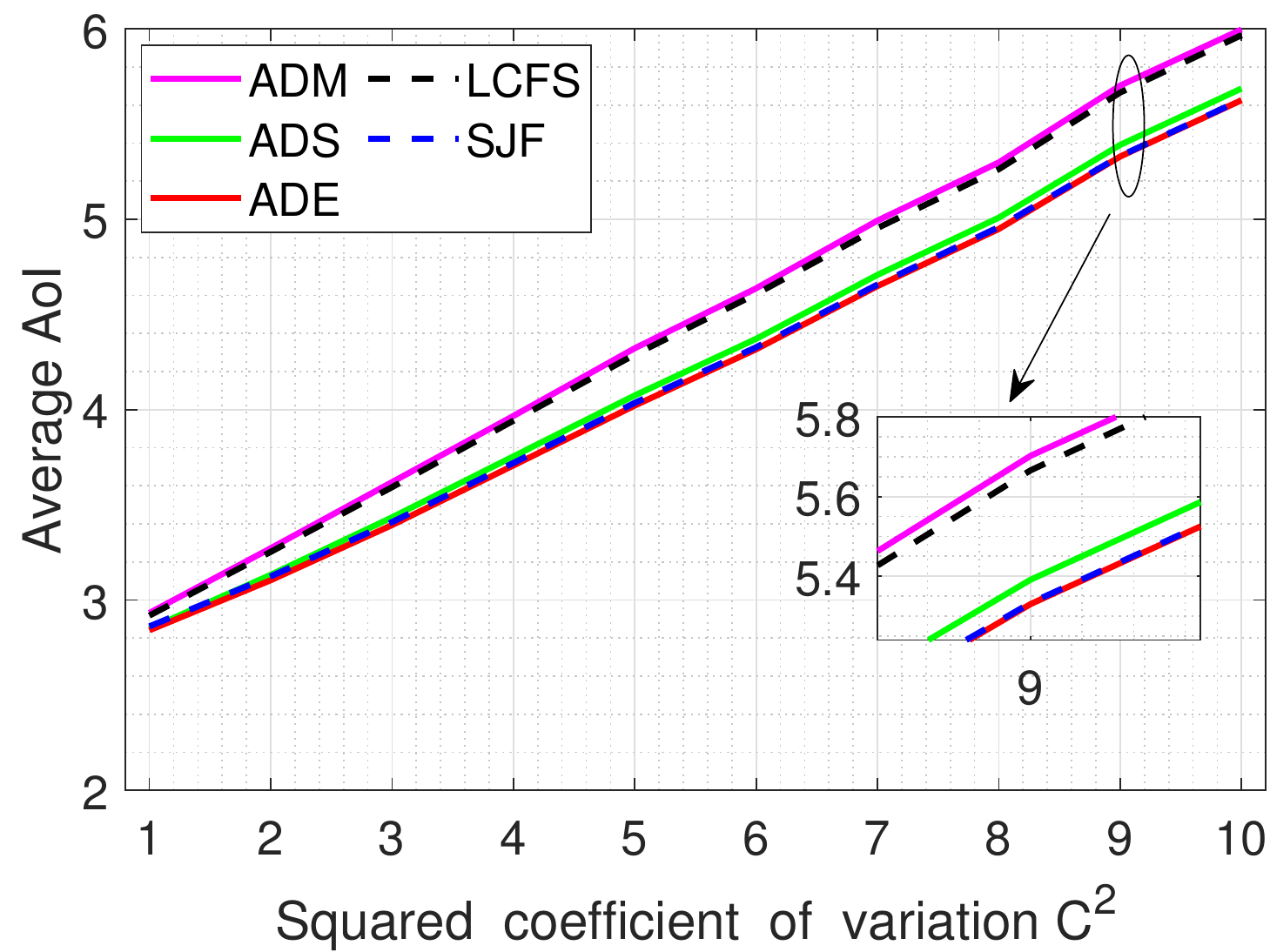}}
	\caption{Comparisons of the average AoI performance: AoI-based policies vs. non-AoI-based policies}
	\label{fig:wei-fsm-area-variance}
\end{figure*}

\begin{figure*}[!t]
    \centering
    \subfigure[Exponential: $\mu=1$]{
		\label{fig:exp-fsm-area-ratio-paoi} 
		\includegraphics[width=0.322\textwidth]{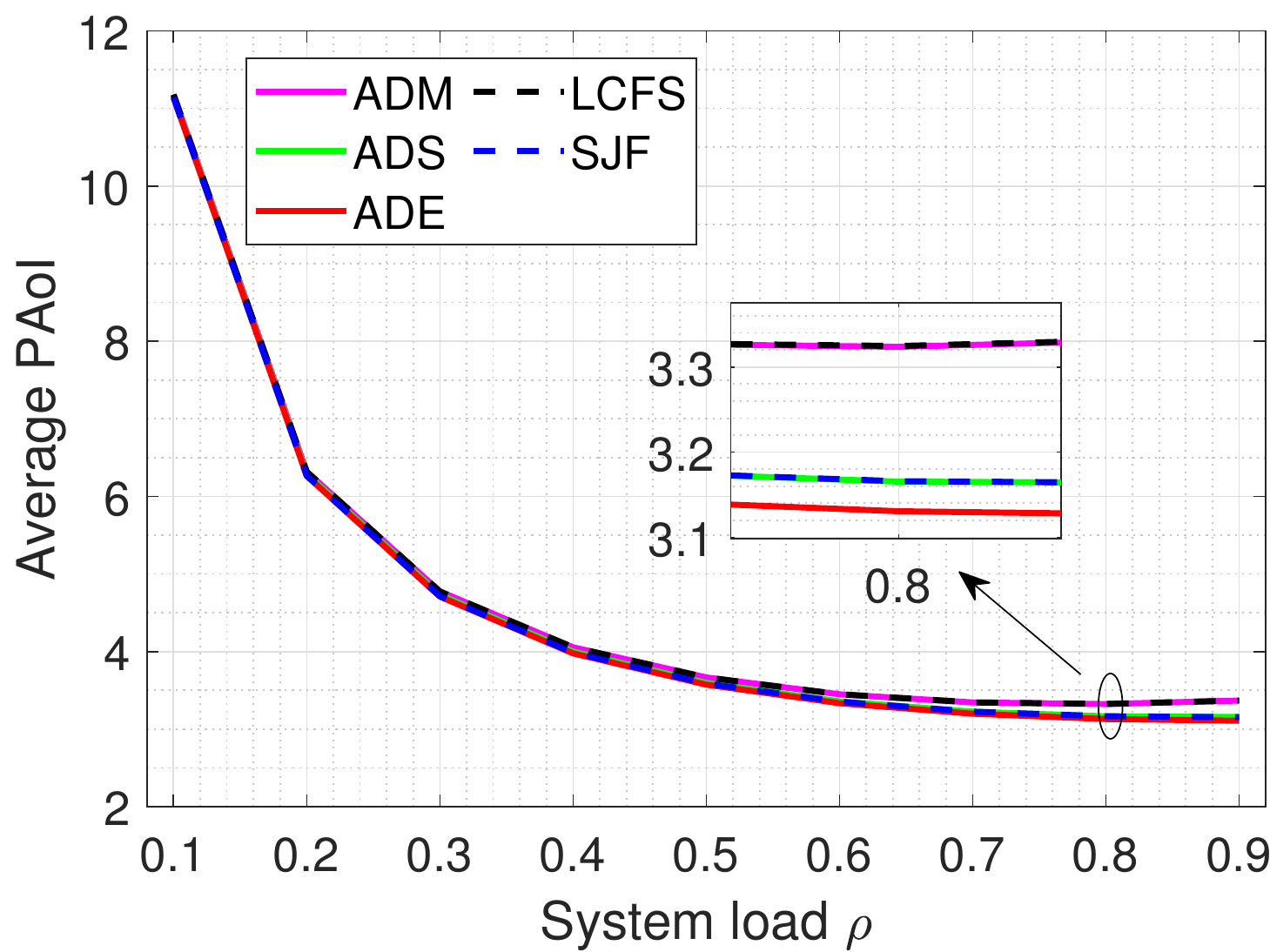}}
	\subfigure[Weibull: $\mu=1$ and ${C^{\rm{2}}}{\rm{ = 10}}$]{
		\label{fig:wei-fsm-area-ratio-paoi} 
		\includegraphics[width=0.322\textwidth]{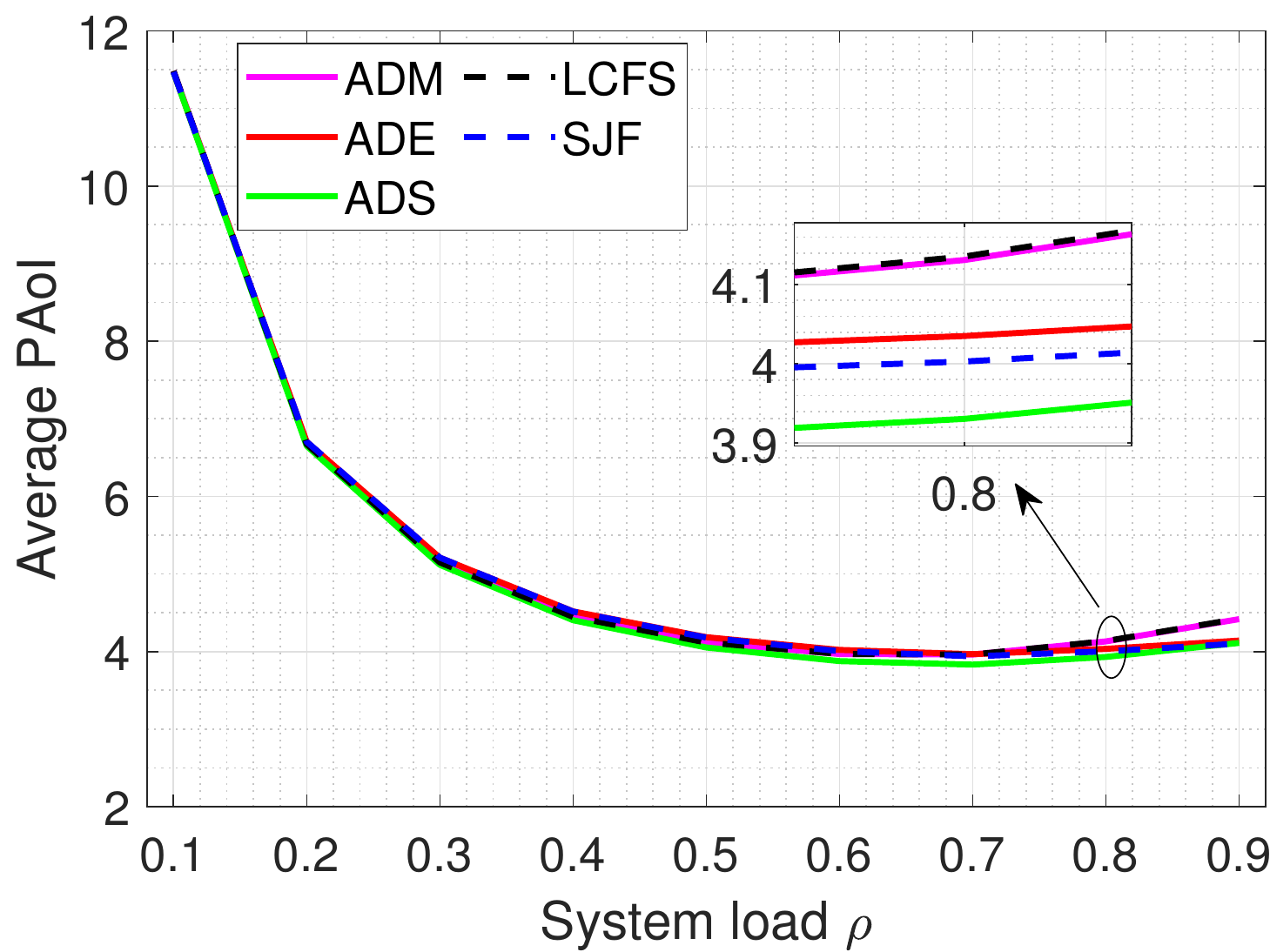}}
    \subfigure[Weibull: $\mu=1$ and $\rho {\rm{ = 0}}{\rm{.7}}$]{
		\label{fig:fsd-area-variance-paoi} 
		\includegraphics[width=0.322\textwidth]{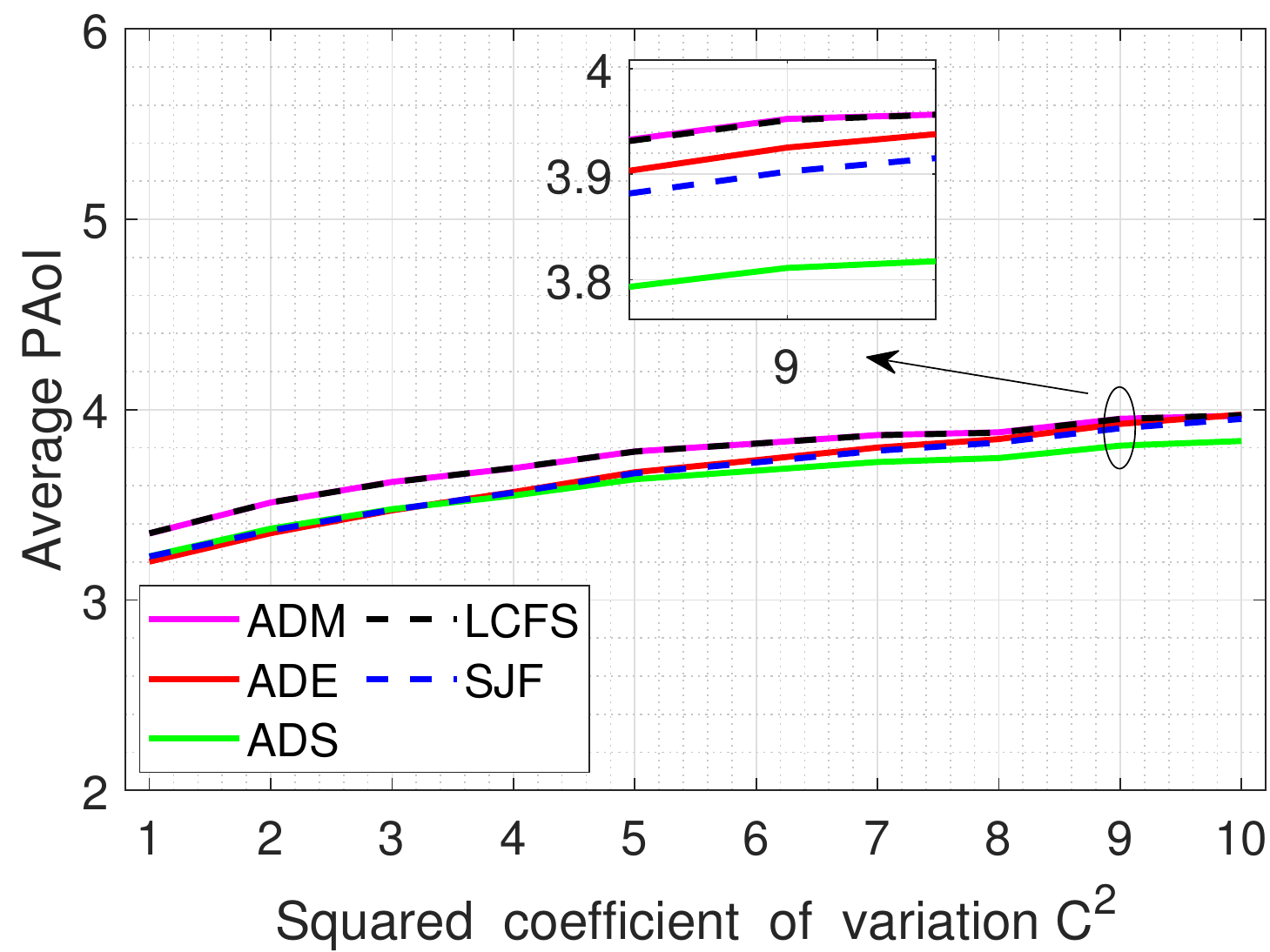}}
	\caption{Comparisons of the average PAoI performance: AoI-based policies vs. non-AoI-based policies}
	\label{fig:wei-fsm-area-variance-paoi}
\end{figure*}

\section{\uppercase{Preemptive, informative, AoI-based Policies}}
\label{sec:pre-infor}

In Section \ref{sec:common-scheduling}, we have observed that preemptive policies have several advantages and perform better than non-preemptive policies. In this section, we first demonstrate that policies that prioritize informative updates (i.e., those that can lead to AoI drops once delivered) perform  better than non-informative policies. Then, by integrating the guidelines we have, we consider preemptive, informative, AoI-based policies and evaluate their performances through simulations.

\subsection{Informative Policies}
\label{subsec:infor-policies}
As far as the AoI is concerned, there are two types of updates: informative updates and non-informative updates \cite{kam2013age}.
\emph{Informative updates lead to AoI drops once delivered while non-informative updates do not. }
In some applications, such as autonomous vehicles and stock quotes,  it is reasonable to discard non-informative updates (which do not help reduce the AoI but may block new updates).
In this subsection, we introduce the ``informative'' versions of various policies, which prioritize informative updates and discards non-informative updates. 
Then, we use simulation results to demonstrate that informative policies generally have a better average AoI/PAoI performance than the original (non-informative) ones. Furthermore, we rigorously prove that in a G/M/$1$ queue,  the informative version of LCFS is stochastically better than the original LCFS policy.

We use $\pi \_{\rm{I}}$ to denote the informative version\footnote{For simplicity, we omit the additional ``\_" in the policy name if policy $\pi$ is a preemptive policy ending with ``\_P". For example, we use LCFS\_PI to denote the informative version of LCFS\_P.} of policy $\pi$.
All the scheduling policies we consider have their informative versions. In some cases, the informative version is simply the same as the original policy (e.g., FCFS and LCFS\_P).

In Fig.~\ref{fig:aoi-ratio-com-info}, we show the simulation results of the average AoI performance of several informative policies compared to their non-informative counterparts. 
In order to evaluate the benefit of informative policies, we plot the informative AoI gain, which is the ratio of the difference between the average AoI of the non-informative version and the informative version to the average AoI of the non-informative version. Hence, a larger informative gain means a larger benefit of the informative version. One important observation from Fig.~\ref{fig:aoi-ratio-com-info} is as follows.

\begin{observation}
	\label{obs:informative}
	Informative policies achieve a better average AoI performance than their non-informative counterparts.
	The informative gain is larger for non-preemptive policies and increases as the system load increases.
\end{observation}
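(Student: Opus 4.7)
The plan is to establish Observation~\ref{obs:informative} via a sample-path coupling argument, supplemented by a load-scaling heuristic for the quantitative part. I would couple the informative policy $\pi\_\mathrm{I}$ and its non-informative counterpart $\pi$ on a common probability space: the two systems share identical arrival epochs $\{t_i\}$ and identical service requirements $\{S_i\}$. Writing $\Delta^\pi(t)$ and $\Delta^{\pi\_\mathrm{I}}(t)$ for the AoI processes under the two rules, the target inequality is $\mathbb{E}[\Delta^{\pi\_\mathrm{I}}] \le \mathbb{E}[\Delta^\pi]$, which I would obtain by showing the pointwise pathwise bound $\Delta^{\pi\_\mathrm{I}}(t) \le \Delta^\pi(t)$ holds in an appropriate ``stopping-time'' sense after aligning the systems at renewal epochs where both servers happen to be idle.

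The key step is a swap lemma of the following flavor: consider any decision epoch at which $\pi$ serves a non-informative (stale) update $j$ of size $S_j$. Under $\pi\_\mathrm{I}$, that update is discarded; the server instead works on a more recent update $k$ (or becomes idle until one arrives). Using the fact that $j$'s delivery contributes nothing to $\Delta^\pi$ anyway (it does not cause an AoI drop), while $k$'s earlier delivery under $\pi\_\mathrm{I}$ can only advance an AoI drop that eventually occurs under $\pi$ as well, one can construct an interchange that weakly decreases the area under the AoI curve on the interval up to the next common regeneration point. Iterating the swap over all such non-informative services yields the pathwise dominance. For the G/M/$1$ LCFS special case promised in the text, the memoryless property gives a particularly clean version of this argument: when $\pi\_\mathrm{I}$ discards a stale update mid-service, the residual service time of the replacement update has the same exponential law, so the coupled workload processes are easy to compare, and a standard stochastic-dominance lemma in the sense of the Peak-AoI sequence $\{A_i\}$ follows.

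For the second half of Observation~\ref{obs:informative} --- that the informative gain is larger for non-preemptive policies and grows with $\rho$ --- I would argue via the volume of ``wasted'' service time. The wasted time equals $\sum_i S_i \,\mathbf{1}\{\text{update } i \text{ is stale when served under } \pi\}$. Under a non-preemptive $\pi$, a single large stale update of size $S$ blocks every arrival during a busy interval of length proportional to $S$; the probability that at least one such arrival renders it stale scales as $1 - e^{-\lambda S}$, which grows with $\rho$. Under a preemptive $\pi$, any later arrival can cut the wasted service short, shrinking this quantity substantially. A back-of-the-envelope heavy-traffic computation of $\mathbb{E}[\text{wasted time}]$ as a function of $\rho$ should confirm the monotonic-in-$\rho$ and non-preemptive-dominates-preemptive parts of the observation.

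The hard part will be making the pathwise swap rigorous for policies whose scheduling decisions are state-dependent in a way that is perturbed when an update is deleted. Removing a stale update from $\pi\_\mathrm{I}$'s queue alters the set of candidates the scheduler sees at every subsequent epoch, so the coupling is not clean: the ``who is served next'' map can diverge across the two systems after the first swap. I expect that reconciling this divergence will require either (i) restricting to policies whose priority order is invariant under deletion of non-informative updates (which covers LCFS, LCFS\_P, SJF, SRPT but not arbitrary $\pi$), or (ii) working distributionally rather than pathwise and invoking monotonicity of the AoI functional in the delivery-epoch sequence. Either route suffices for the claim, but the former yields the stronger G/M/$1$ LCFS statement the authors promise, so I would pursue it first.
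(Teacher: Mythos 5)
There is a genuine mismatch here: Observation~\ref{obs:informative} is an \emph{empirical} statement that the paper supports only by the simulation results of Fig.~\ref{fig:aoi-ratio-com-info} together with an intuitive explanation (serving non-informative updates cannot reduce the AoI but may block fresh ones; non-preemptive policies suffer more blocking; the number of non-informative updates grows with $\rho$). The paper does not prove the observation in general --- indeed the conclusion explicitly lists a rigorous proof that any informative policy beats its non-informative counterpart as an open problem --- and the only rigorous result it offers is Proposition~\ref{pro:age-optimality}, the special case of LCFS versus LCFS\_I in a G/M/$1$ queue, proved by a coupling/contradiction argument on the state process $U_\pi(t)$ that leans crucially on the memoryless service times to couple departure epochs across the two systems. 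Your proposal therefore attempts to prove strictly more than the paper claims to establish, and the step you yourself flag as ``the hard part'' is exactly where the attempt breaks: the swap lemma does not hold pathwise for general policies and general service distributions. After an interchange, the two systems' queue contents, scheduling decisions, and (without the exponential assumption) service completion epochs all diverge, and one can construct sample paths where serving a short stale update first lets the server reach a subsequently arriving fresh update sooner than the informative policy does, so the claimed weak decrease of the AoI area up to the next regeneration point is false as a pathwise statement. Only a distributional comparison can be hoped for, and even that is open beyond the LCFS/G/M/$1$ case; your route (i), restricting to deletion-invariant priority rules with memoryless service, essentially reproduces the paper's Proposition~\ref{pro:age-optimality} rather than the full observation.

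The second half of the observation (informative gain larger for non-preemptive policies and increasing in $\rho$) is likewise an empirical claim; your ``wasted service time'' heavy-traffic heuristic is a plausible explanation in the spirit of the paper's own discussion, but it is not a proof, since the informative gain is defined through the stationary average AoI and does not reduce to the expected volume of wasted work without a further argument linking the two. In short: as support for the observation your intuition matches the paper's, but as a proof the proposal has a concrete gap at the swap lemma, and the quantitative part remains heuristic; the honest statement of what can be rigorously shown with these ideas is the G/M/$1$ LCFS comparison already given in Proposition~\ref{pro:age-optimality}.
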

 
Intuitively, informative policies are expected to outperform their non-informative counterparts because serving non-informative updates cannot reduce the AoI but may block new updates.
The simulation results verify this intuition as the informative AoI gain is always non-negative. 
Second, we can see that most non-preemptive policies (e.g., RANDOM, LCFS, and SJF) benefit more from prioritizing informative updates.  
Third, as the system load $\rho$ increases, the informative AoI gain increases under most considered policies, especially those non-preemptive ones. This is because as the system load increases, the number of non-informative updates also increases, which has a larger negative impact on the AoI performance for non-preemptive, non-informative policies.

Observation~\ref{obs:informative} leads to the following guideline:
\begin{guideline}
	The server should prioritize informative updates and discard non-informative updates when it is allowed.
	\label{guide:informative}
\end{guideline}

Based on Observation~\ref{obs:informative}, we conjecture that an informative policy is as least as good as its non-informative counterpart. As a preliminary result, we prove that this conjecture is indeed true for LCFS in a G/M/1 queue. In the following, we introduce the stochastic ordering notion, which will be used in the statement of Proposition~\ref{pro:age-optimality}.

\begin{definition}
    Stochastic Ordering of Stochastic Processes \cite[Ch.6.B.7]{shaked2007stochastic}:  Let $\{ X(t),t \in [0,\infty )\} $ and $\{ Y(t),t \in [0,\infty )\} $ be two stochastic processes. Then, $\{ X(t),t \in [0,\infty )\} $ is said to be stochastically less than $\{ Y(t),t \in [0,\infty )\}$, denoted by $\{ X(t),t\in[0,\infty )\} {\le _{{\rm{st}}}}\{ Y(t),t \in [0,\infty )\}$, if, for all choices of integer $n$ and $t_{1}<t_{2}<\dots<t_{n}$ in $[0, \infty)$, the following holds for all upper sets\footnote{A set ${S^U} \subseteq {{\mathbb{R}}^n}$ is an upper set if $\vec y \in {S^U}$ whenever $\vec y \ge \vec x$ and $\vec x \in {S^U}$, where $\vec x = ({x_1}, \ldots ,{x_n})$ and $\vec y = ({y_1}, \ldots ,{y_n})$ are two vectors in ${\mathbb{R}^n}$ and $\vec y \ge \vec x$ if ${y_i} \ge {x_i}$ for all $i = 1,2, \ldots ,n$.} ${S^U} \subseteq {{\mathbb R}^n}$:
    \begin{equation}
        {\mathbb P}( {\vec X \in {S^U}} ) \le  {\mathbb P}( {\vec Y \in {S^U}}),
        \label{eq:stoc}
    \end{equation}
    where $\vec{X} \triangleq \left({X( {{t_1}}), X( {{t_2}}), \dots, X( {{t_n}})} \right)$ and $\vec{Y} \triangleq \left( {Y( {{t_1}} ) ,Y( {{t_2}}) , \dots, Y( {{t_n}})} \right)$.
    Stochastic equality can be defined in a similar manner and is denoted by $\{ X(t),t \in [0,\infty )\} {{\rm{ = }}_{{\rm{st}}}}\{ Y(t),t \in [0,\infty )\} $.
\end{definition} 

Roughly speaking, Eq.~(\ref{eq:stoc}) implies that $\vec X$ is less likely than $\vec Y$ to take on large values, where ``large'' means any value in an upper set ${S^U}$. We also use $\Delta_{\pi}(t)$ to denote the AoI process under policy $\pi$. 
Furthermore, we define a set of parameters ${\mathcal{I}} = \{ n,({t_i})_{i = 1}^n\}$, where $n$ is the number of updates and ${t_i}$ is the generation time of update $i$.
Having these definitions and notations, we are now ready to state Proposition~\ref{pro:age-optimality}.

\begin{proposition}
	\label{pro:age-optimality}
	In a G/M/1 queue, for all $\mathcal{I}$, the AoI under LCFS\_I is stochastically smaller than that under LCFS, i.e., 
	\begin{equation}
	[\left\{ {{\Delta _{{\rm{LCFS}}\_\rm{I}}}\left( t \right),t \in \left[ {0,\infty } \right)} \right\}|{\mathcal{I}}]{ \le _{{\rm{st}}}}[\left\{ {{\Delta _{{\rm{LCFS}}}}\left( t \right),t \in \left[ {0,\infty } \right)} \right\}|{\mathcal{I}}].
	\label{eq:aoi-st}
	\end{equation}
\end{proposition}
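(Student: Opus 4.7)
The plan is to lift (\ref{eq:aoi-st}) to an almost-sure pointwise sample-path dominance $U_{\rm{LCFS\_I}}(t) \ge U_{\rm{LCFS}}(t)$ for every $t\ge 0$ on a common probability space, which (because $\Delta_\pi(t)=t-U_\pi(t)$ is decreasing in $U$) yields $\Delta_{\rm{LCFS\_I}}(t)\le \Delta_{\rm{LCFS}}(t)$ pointwise and therefore the finite-dimensional upper-set inequality (\ref{eq:stoc}) on every time grid $t_1<\cdots<t_n$, which is exactly the stochastic ordering of processes claimed in the proposition.

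The coupling I would use is a single Poisson-clock coupling. Condition on $\mathcal{I}$, so the arrival epochs and generation times $(t_i)_{i=1}^n$ are deterministic and common to both systems; then place on the same probability space a single Poisson process $N$ of rate $\mu$, and in each system declare a service completion at every epoch $T_m$ of $N$ at which that system's server is busy just before $T_m$. By the memoryless property of $\mathrm{Exp}(\mu)$ together with the independent-increments structure of $N$, the service times produced in each system form an i.i.d.\ $\mathrm{Exp}(\mu)$ sequence, so each marginal is indeed the G/M/$1$ queue under the prescribed policy.

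I would then prove the sample-path inequality by induction on the ordered sequence of events (arrivals from $\{t_i\}$ and jumps of $N$), carrying the compound invariant: (a)~$U_2(t)\ge U_1(t)$; (b)~$Q_2(t)\subseteq Q_1(t)$, where $Q_\pi$ denotes the set of updates in queue or in service under $\pi$; and (c)~whenever both servers are busy, the in-service updates $j$ (under LCFS) and $k$ (under LCFS\_I) satisfy either $j=k$, or $j\ne k$ with $t_j\le U_2(t)<t_k$. Arrivals preserve (a)--(c) trivially because the new arrival is informative, enters both queues, and no $U_\pi$ moves. At a jump of $N$, the configurations split into four sub-cases by which servers are busy. ``Both idle'' is trivial; ``LCFS idle, LCFS\_I busy'' is forbidden by (b); ``LCFS busy, LCFS\_I idle'' is handled by noting that LCFS\_I being idle forces every arrival so far to have $t_i\le U_2$, so the $j$ completed in LCFS has $t_j\le U_2$ and $U_1$ cannot surpass $U_2$. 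In the decisive ``both busy'' sub-case, (c) immediately gives $U_1^{\rm new}=\max(U_1,t_j)\le t_k=U_2^{\rm new}$; one then re-establishes (c) after the new picks by arguing that if the newly picked LCFS update $j'$ differs from the newly picked LCFS\_I update $k'$, then $j'$ cannot be in $Q_2$ (otherwise LCFS\_I would have picked $j'$ or something even more recent at some earlier free-up, in which case $j'$ would have already been served or discarded by LCFS\_I), and hence $t_{j'}\le U_2^{\rm new}$.

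The main obstacle is establishing and propagating invariant (c). Intuitively it says LCFS never manages to be serving a \emph{more recent} update than LCFS\_I at the same instant, because LCFS\_I never burns service time on stale work and so has had at least as many opportunities to lock in the latest arrival. Making this rigorous hinges on a short contradiction at each service-start time: if at some time both servers were still busy with $j\ne k$ and $t_j>U_2$, then at LCFS\_I's most recent pick the update $j$ would have been a live, informative candidate of generation time $\ge t_j$; LCFS\_I would then either have chosen $j$ itself (so $j=k$) or chosen something strictly fresher, whose completion would have pushed $U_2$ past $t_j$ and caused $j$ to be discarded---either conclusion contradicts the hypothesis. The single-Poisson coupling and the memoryless property of $\mathrm{Exp}(\mu)$ are both essential here, which is precisely why the proposition is confined to the G/M/$1$ setting.
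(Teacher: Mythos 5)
Your proposal is correct and shares the paper's overall architecture: condition on $\mathcal{I}$, couple the two systems' exponential services via memorylessness, prove almost-sure pathwise dominance of $U_{\rm LCFS\_I}(t)\ge U_{\rm LCFS}(t)$, and then pass from the coupled pathwise inequality to the process-level $\le_{\rm st}$ (the paper invokes Theorem~6.B.30 of Shaked--Shanthikumar for exactly this last step; your observation that pointwise dominance makes the upper-set inequality immediate on every finite grid is the same point). Where you genuinely diverge is in the coupling and in the mechanism of the dominance proof. The paper uses an asymmetric, delivery-triggered coupling (whenever LCFS delivers, whatever LCFS\_I happens to be serving is delivered at the same instant) and then argues by contradiction at the \emph{first} violation time $t_0=(t'_n)^+$, splitting into the cases where the LCFS\_I server is idle at $t'_n$, or busy with service start before/after certain epochs (their Cases 1, 2a, 2b). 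You instead use a symmetric single-Poisson-clock coupling and a forward induction over events carrying a compound invariant: dominance (a), queue inclusion $Q_{\rm LCFS\_I}\subseteq Q_{\rm LCFS}$ (b), and the in-service comparison (c). Your route costs more bookkeeping (three invariants must be re-verified for every event type), but it buys an explicit structural explanation the paper leaves implicit -- namely that every update LCFS holds but LCFS\_I does not is already stale relative to $U_{\rm LCFS\_I}$, so LCFS can never be serving a fresher update than LCFS\_I at the same instant -- whereas the paper's first-violation argument is shorter but relies on counterfactual ``would have served'' reasoning about behavior before $t_0$. One suggestion: the parenthetical justification you give for re-establishing (c) after a simultaneous pick is the loosest point; it is cleaner to argue directly from (b) that if the informative queue is nonempty both policies select the \emph{same} most-recent update (since everything in $Q_{\rm LCFS}\setminus Q_{\rm LCFS\_I}$ has generation time at most $U_{\rm LCFS\_I}$), and otherwise LCFS's pick is necessarily stale, which is all (c) requires.
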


\begin{figure*}[!t]
	\centering
	\subfigure[Exponential: $\mu=1$]{
		\label{fig:info-tradition-exp-AoI}
		\includegraphics[width=0.322\textwidth]{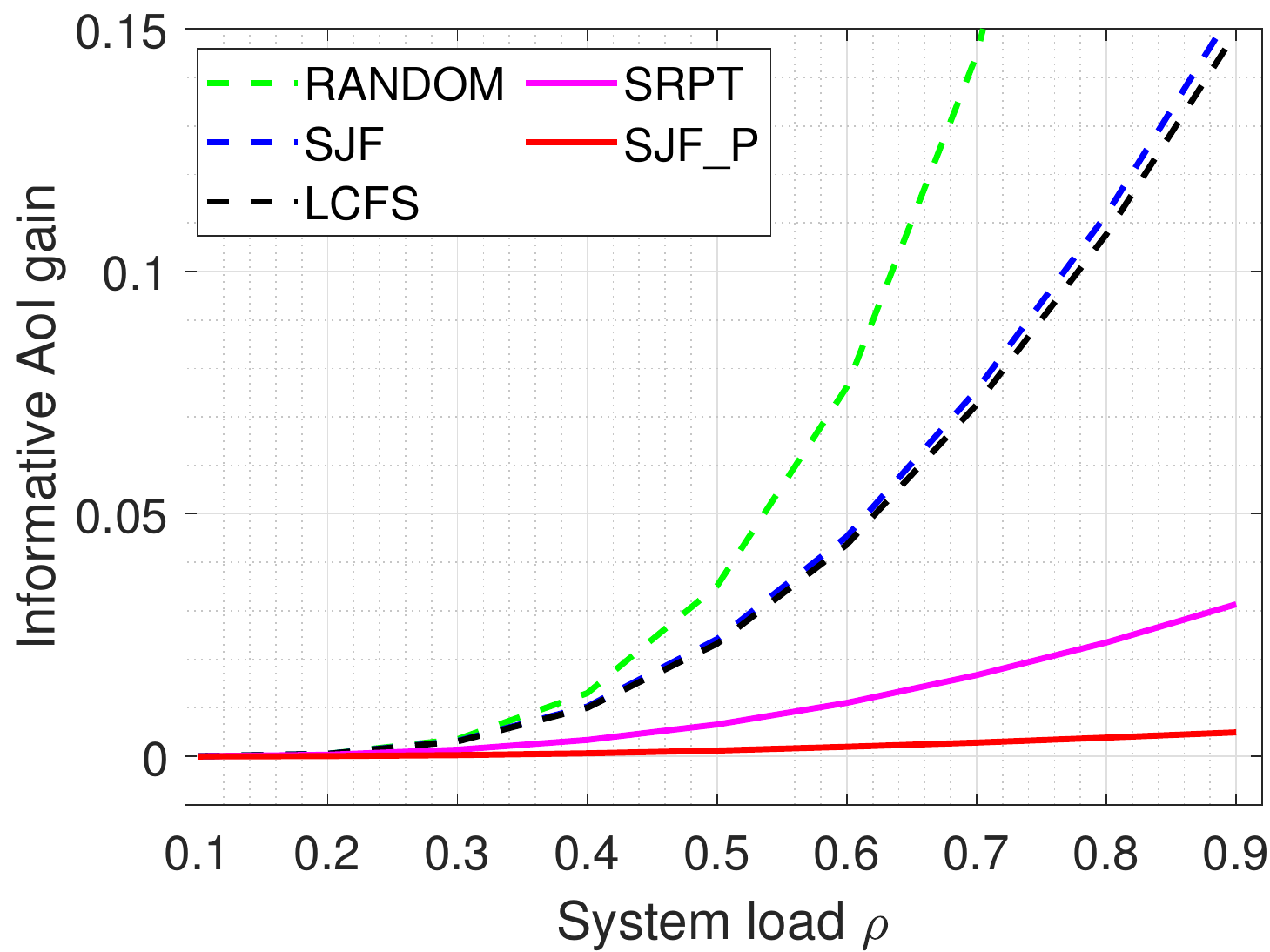}}
	\subfigure[Weibull: $\mu=1$ and ${C^{\rm{2}}}{\rm{ = 10}}$]{
		\label{fig:info-tradition-wei-AoI}
		\includegraphics[width=0.322\textwidth]{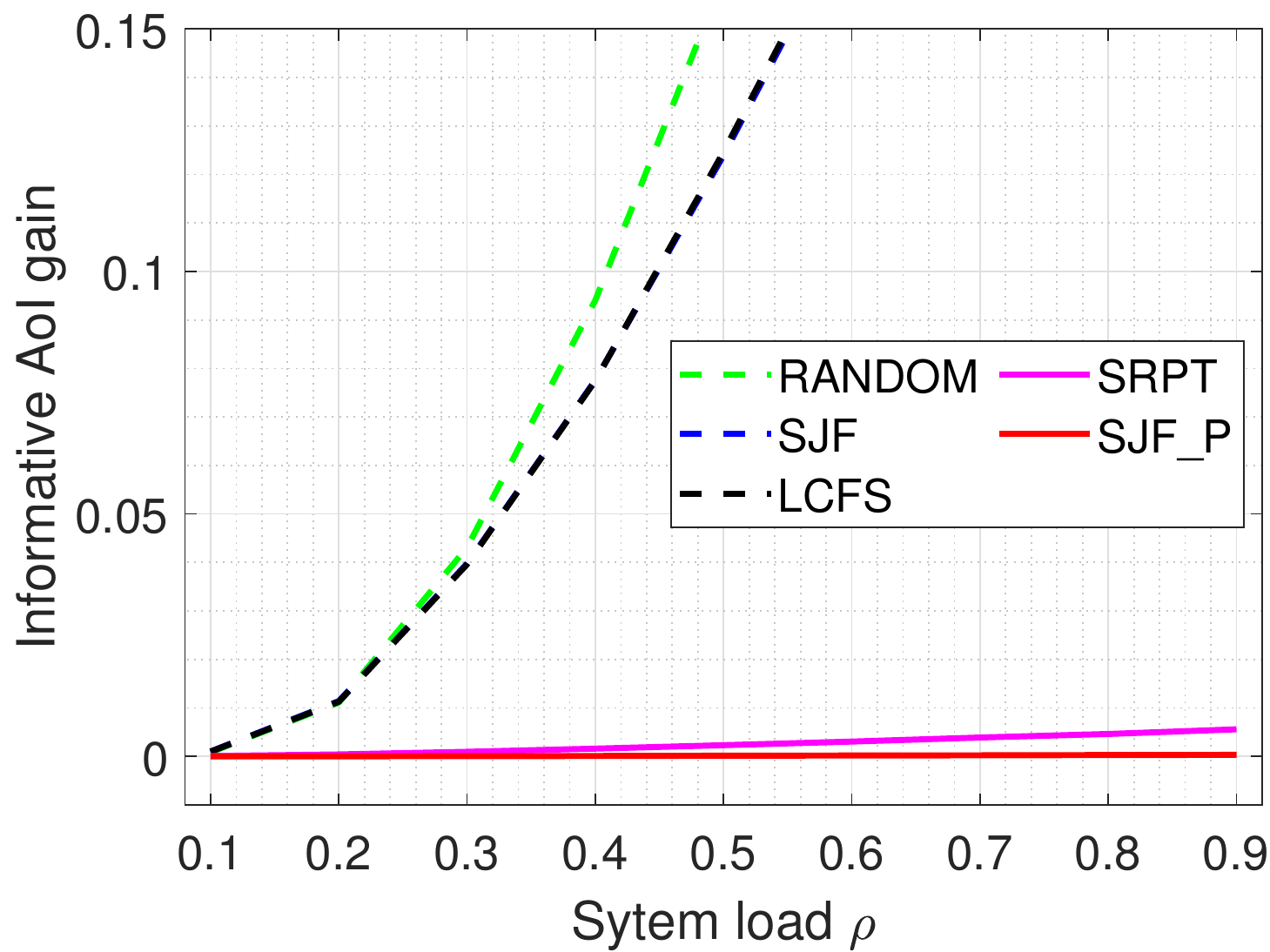}}
	\subfigure[Weibull: $\mu=1$ and $\rho {\rm{ = 0}}{\rm{.7}}$]{
		\label{fig:info-tradition-wei-variance-AoI} 
		\includegraphics[width=0.322\textwidth]{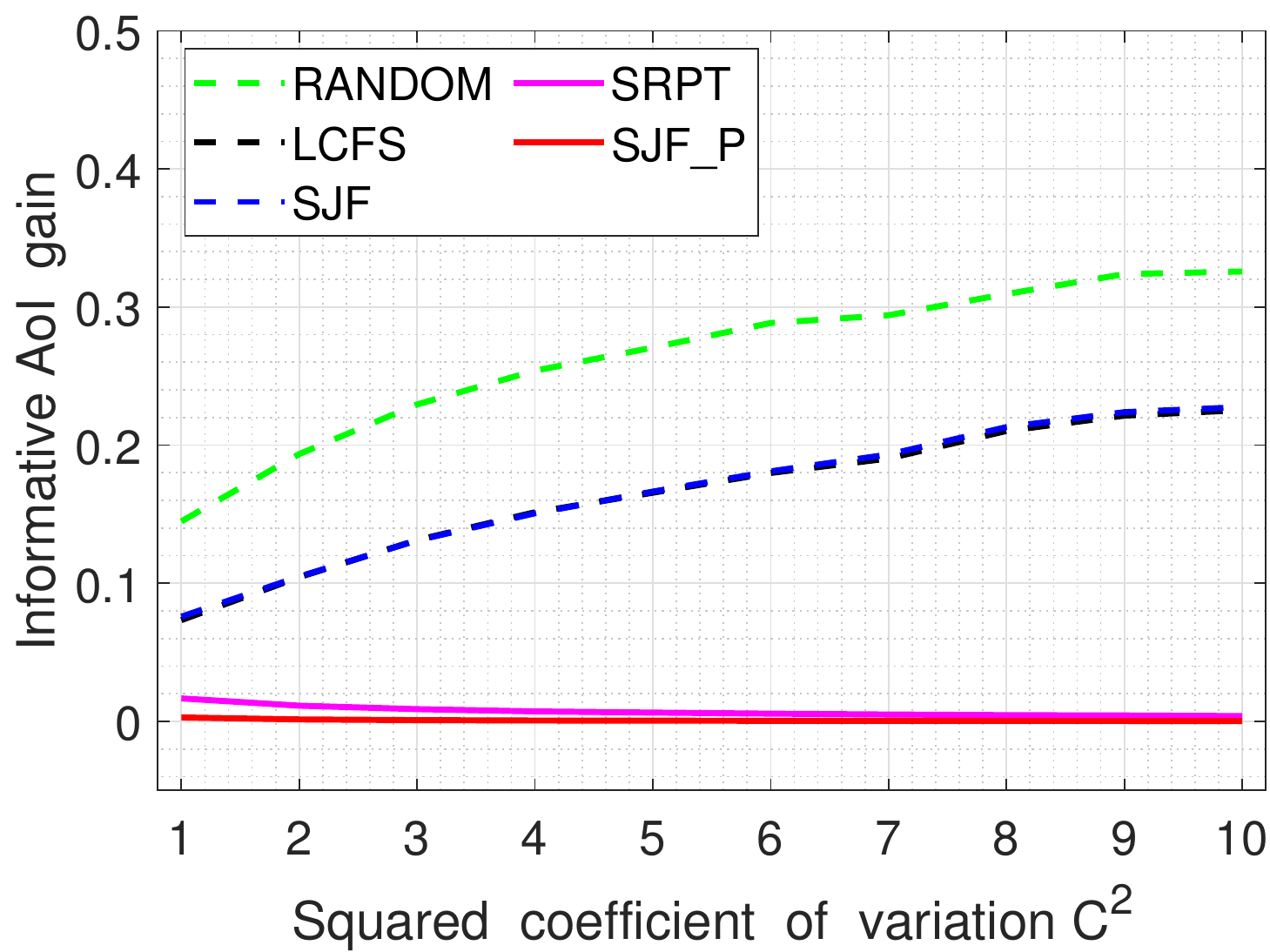}}
	\caption{Comparisons of the average AoI performance: informative policies vs. non-informative policies}	
	\label{fig:aoi-ratio-com-info}
\end{figure*}

\begin{figure*}[!t]
	\centering
	\subfigure[Exponential: $\mu=1$]{
		\label{fig:info-tradition-exp-PAoI}
		\includegraphics[width=0.322\textwidth]{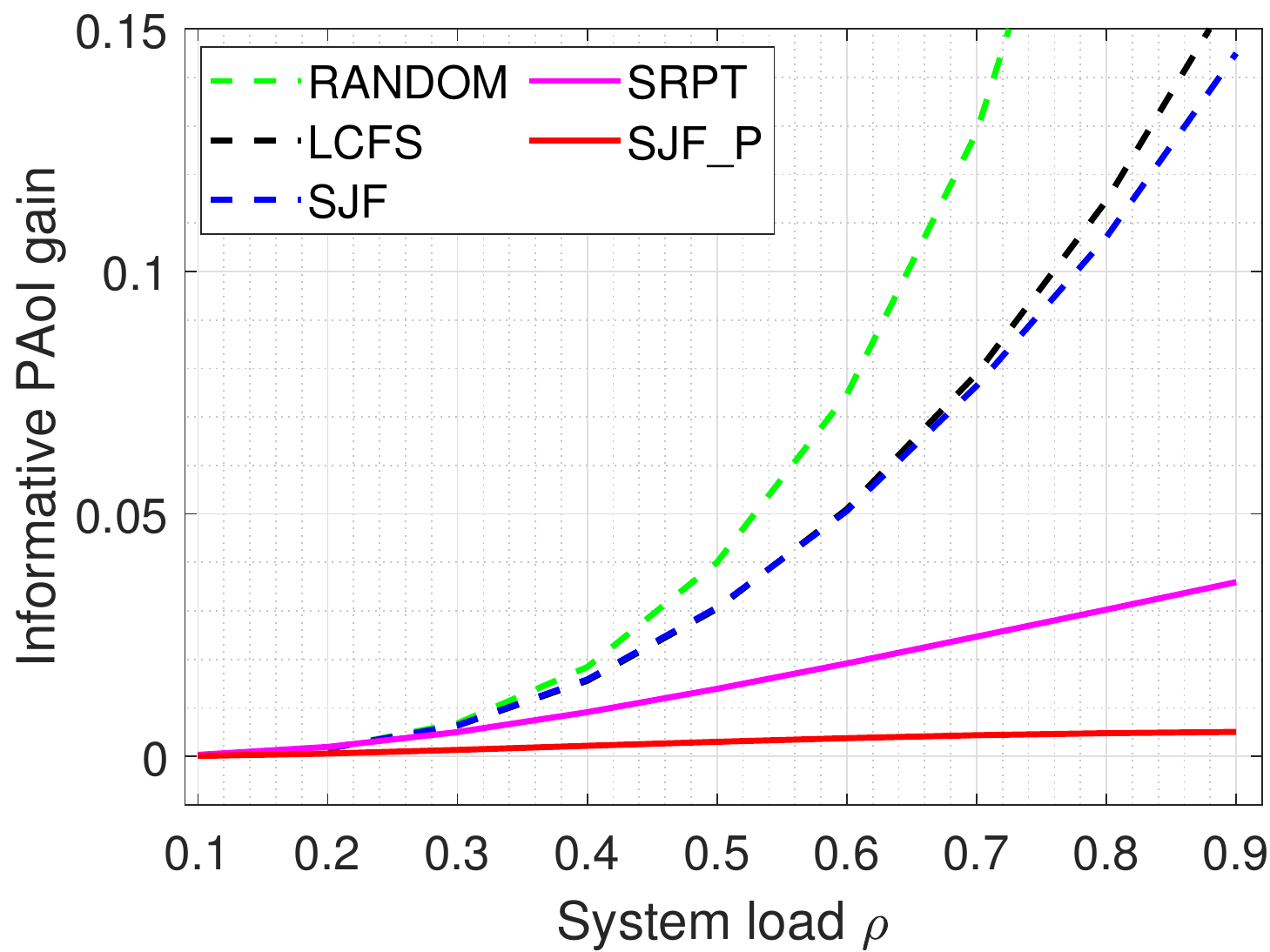}}
	\subfigure[Weibull: $\mu=1$ and ${C^{\rm{2}}}{\rm{ = 10}}$]{
		\label{fig:info-tradition-wei-PAoI}
		\includegraphics[width=0.322\textwidth]{exp-wei-aoi-informative-eps-converted-to.pdf}}
	\subfigure[Weibull: $\mu=1$ and $\rho {\rm{ = 0}}{\rm{.7}}$]{
		\label{fig:info-tradition-wei-variance-PAoI} 
		\includegraphics[width=0.322\textwidth]{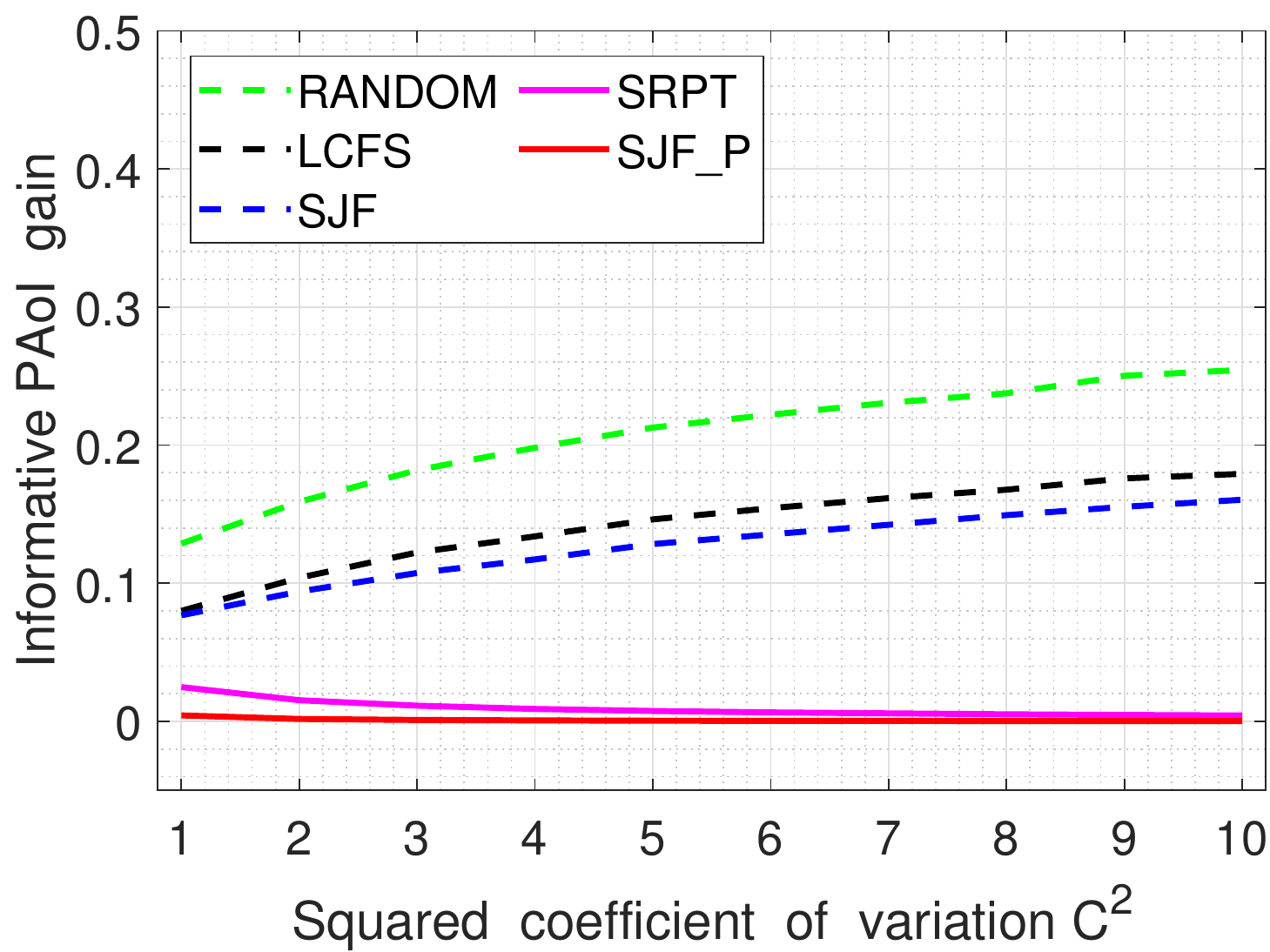}}
	\caption{Comparisons of the average PAoI performance: informative policies vs. non-informative policies}	
	\label{fig:paoi-ratio-com-info}
\end{figure*}

\begin{proof}
Recall that we use ${t_i}$ and ${t'_i}$ to denote the arrival time and the delivery time of the $i$-th update, respectively. 
In addition, we use $s_i$ to denote the service start time of the $i$-th update.

We define the system state at time $t$ under policy $\pi$ as ${S_\pi }(t) \triangleq  {U_\pi }(t)$, where  ${{U_\pi }\left( t \right)}$ is the largest arrival time of the updates that have been served under policy $\pi$ by time $t$. Let $\left\{{{S_\pi}(t),t \in \left[ {0,\infty } \right)} \right\}$ be the state process under policy $\pi$. By the definition of AoI, Eq.~(\ref{eq:aoi-st}) holds if the following holds:
\begin{equation}
        [\left\{ {{S_{{\rm{LCF}}{{\rm{S}}_-}{\rm{I}}}}(t),t \in [0,\infty )} \right\}|{\mathcal{I}}]{ \ge _{{\rm{st}}}}[\left\{ {{S_{{\rm{LCFS}}}}(t),t \in [0,\infty )} \right\}|{\mathcal{I}}].
        \label{eq:bridge}
\end{equation}

Next, we prove Eq.~(\ref{eq:bridge}) by contradiction through a coupling argument. Suppose that stochastic processes ${{\hat S}_{{\rm{LCFS\_I}}}}\left( t \right)$ and ${{\hat S}_{{\rm{LCFS}}}}\left( t \right)$ have the same stochastic laws as ${S_{{\rm{LCFS}}\_{\rm{I}}}}\left( t \right)$ and ${S_{{\rm{LCFS}}}}\left( t \right)$, respectively. We couple ${{\hat S}_{{\rm{LCFS\_I}}}}\left( t \right)$ and ${{\hat S}_{{\rm{LCFS}}}}\left( t \right)$ in the following manner: If an update $i$ is delivered at ${t'_i}$ in ${{\hat S}_{{\rm{LCFS}}}}(t)$, then the update $j$ being served at $t'_i$ (if any) in ${{\hat S}_{{\rm{LCFS\_I}}}}(t)$ is also delivered at the same time. This coupling is reasonable because: (i) the updates served in ${{\hat S}_{{\rm{LCFS\_I}}}}(t)$ are not chosen based on update size; (ii) the service time of an update in both ${{\hat S}_{{\rm{LCFS\_I}}}}\left( t \right)$ and ${{\hat S}_{{\rm{LCFS}}}}\left( t \right)$ is exponentially distributed and has the memoryless property.
By Theorem~6.B.30 in \cite{shaked2007stochastic}, Eq.~(\ref{eq:bridge}) holds if the following holds:
\begin{equation}
	{\mathbb P }({{\hat S}_{{\rm{LCFS}}\_{\rm{I}}}}\left( t \right) \ge {{\hat S}_{{\rm{LCFS}}}}\left( t \right),t \in \left[ {0,\infty } \right)|{\mathcal{I}}) = 1.
	\label{eq:prob-1}
\end{equation}

\begin{figure}
    \centering
    \subfigure[Case 1): The server in ${{\hat S}_{{\rm{LCFS\_I}}}}(t)$ is being idle at $t'_n$]{
    \begin{minipage}[]{\linewidth}
    \centering
    \includegraphics[scale=0.7]{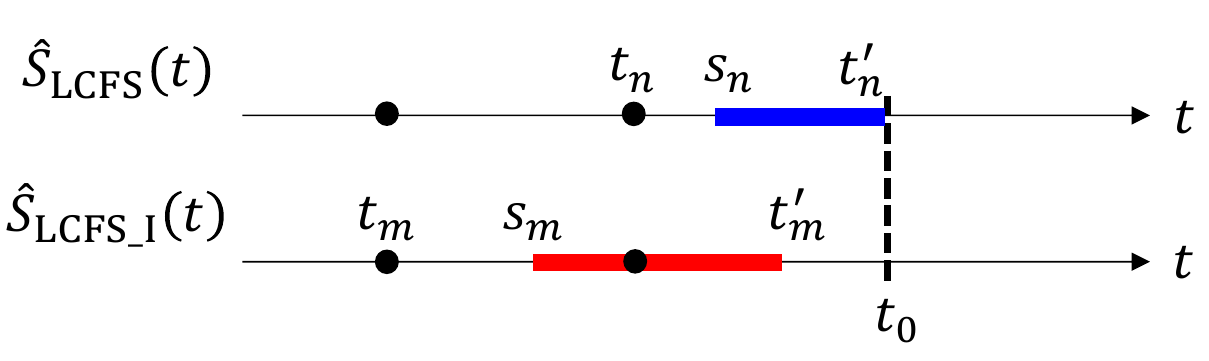}
    \label{fig:dominance-proof-1}
    \end{minipage}
    }
    
    \subfigure[Case 2a): The $m$-th update is delivered at $t'_n$ in ${{\hat S}_{{\rm{LCFS\_I}}}}(t)$, and the server in ${{\hat S}_{{\rm{LCFS}}}}(t)$ is idle at time $s_m$]{
    \begin{minipage}[]{\linewidth}
    \centering
    \includegraphics[scale=0.7]{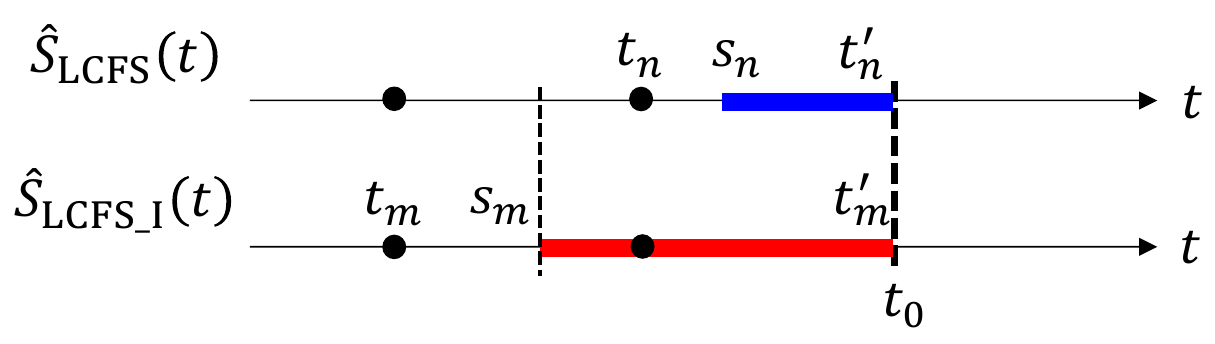}
    \label{fig:dominance-proof-2}
    \end{minipage}
    }
    
    \subfigure[Case 2b): The $m$-th update is delivered at $t'_n$ in ${{\hat S}_{{\rm{LCFS\_I}}}}(t)$, and the server in ${{\hat S}_{{\rm{LCFS}}}}(t)$ is busy at time $s_m$]{
    \begin{minipage}[]{\linewidth}
    \centering
    \includegraphics[scale=0.7]{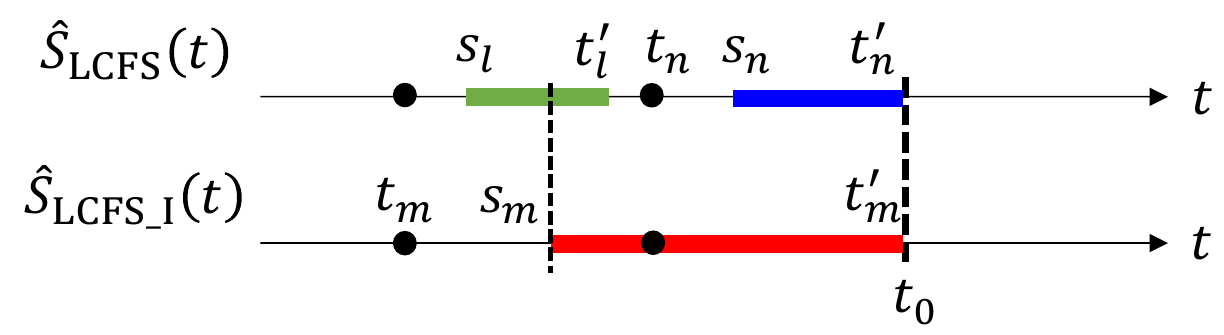}
    \label{fig:dominance-proof-3}
    \end{minipage}
    }
    \caption{Part of sample path of ${{\hat S}_{{\rm{LCFS}}\_{\rm{I}}}}(t)$ and ${{\hat S}_{{\rm{LCFS}}}}(t)$ in different cases}
    \label{fig:dominance-proof}
\end{figure}

In the following, we want to show that ${{{\hat S}_{{\rm{LCFS\_I}}}}\left( t \right) \ge {{\hat S}_{{\rm{LCFS}}}}\left( t \right)}$ holds conditionally on an arbitrary sample path $\mathcal{I}$, which trivially implies Eq.~(\ref{eq:prob-1}). We prove it by contradiction. 
For the sake of contradiction, suppose that ${\hat S_{{\rm{LCFS}}\_{\rm{I}}}}(t) < {\hat S_{{\rm{LCFS}}}}(t)$ does happen and that it happens for the first time at time $t_0$ (see Fig.~\ref{fig:dominance-proof} for illustration).
Let $m$ and $n$ be the index of the served updates with the largest arrival time by $t_0$ in ${{\hat S}_{{\rm{LCFS\_I}}}}(t)$ and ${{\hat S}_{{\rm{LCFS}}}}(t)$, respectively. Then, we have ${U_{{\rm{LCFS}}\_{\rm{I}}}}({t_0}) = {t_m}$ and ${U_{{\rm{LCFS}}}}({t_0}) = {t_n}$. Note that we also have ${t_m} < {t_n}$ due to ${\hat S_{{\rm{LCFS}}\_{\rm{I}}}}(t_0) < {\hat S_{{\rm{LCFS}}}}(t_0)$ (i.e., ${U_{{\rm{LCFS}}\_{\rm{I}}}}(t_0) < {U_{{\rm{LCFS}}}}(t_0)$). Since $t_0$ is the first time when ${\hat S_{{\rm{LCFS}}\_{\rm{I}}}}(t) < {\hat S_{{\rm{LCFS}}}}(t)$ happens, a crucial observation is that $t_0$ must be immediately after an update is delivered in ${{\hat S}_{{\rm{LCFS}}}}(t)$. Hence, we have ${t_0} = {({t'_n})^{+}}$, where ${({t'_n})^{+}}$ denotes the time immediately after $t'_n$.

Due to the coupling between ${{\hat S}_{{\rm{LCFS}}}}(t)$ and ${{\hat S}_{{\rm{LCFS\_I}}}}(t)$, there are two cases in ${{\hat S}_{{\rm{LCFS\_I}}}}(t)$: 1) the server is being idle at $t'_n$; 2) an update is delivered at $t'_n$ too. We discuss these two cases separately and show that there is a contradiction in both cases.

Case 1): The server in ${{\hat S}_{{\rm{LCFS\_I}}}}(t)$ is being idle at $t'_n$ (see Fig.~\ref{fig:dominance-proof-1}). Then, the most recently delivered update in ${{\hat S}_{{\rm{LCFS\_I}}}}(t)$ (i.e., the $m$-th update) must be delivered before $t'_n$. Hence, we have $t'_m < t'_n$ and that the server in ${{\hat S}_{{\rm{LCFS\_I}}}}(t)$ stays in the idle state during $(t'_m, t'_n]$. 
Then, the server in ${{\hat S}_{{\rm{LCFS\_I}}}}(t)$ could have started serving a newer update that arrives later than the $m$-th update immediately after $t'_m$. (Such a newer update must exist as the $n$-th update is a valid candidate due to ${t_m} < {t_n}$.)
This results in a contradiction with the server being idle during $(t'_m, t'_n]$.

Case 2): An update is delivered at $t'_n$ in ${{\hat S}_{{\rm{LCFS\_I}}}}(t)$. This delivered update is the $m$-th update. Note that we must have $s_m < t_n$. 
This is because if $s_m \ge t_n$, then the server in ${{\hat S}_{{\rm{LCFS\_I}}}}(t)$ would have chosen to serve the $n$-th update or a fresher update that arrives later than $t_n$ at time $s_m$ since this selected update is a newer update (due to $t_m < t_n$).
There are two subcases for the server in ${{\hat S}_{{\rm{LCFS}}}}(t)$ at time $s_m$: 2a) idle; 2b) busy. Again, we discuss these two subcases separately and show that there is a contradiction in both cases.

Case 2a): The server in ${{\hat S}_{{\rm{LCFS}}}}(t)$ is idle at time $s_m$ (see Fig.~\ref{fig:dominance-proof-2}). In this case, the $m$-th update must have already been delivered by time $s_m$ in ${{\hat S}_{{\rm{LCFS}}}}(t)$. Otherwise, the server in ${{\hat S}_{{\rm{LCFS}}}}(t)$ would have started serving the $m$-th update (or a newer update) at or before $s_m$. This implies that ${\hat S_{{\rm{LCFS}}\_{\rm{I}}}}(t) < {\hat S_{{\rm{LCFS}}}}(t)$ happens before $s_m$, which results in a contradiction with that $t_0$ is the first time at which ${\hat S_{{\rm{LCFS}}\_{\rm{I}}}}(t) < {\hat S_{{\rm{LCFS}}}}(t)$ happens.

Case 2b): The server in ${{\hat S}_{{\rm{LCFS}}}}(t)$ is busy at time $s_m$ (see Fig.~\ref{fig:dominance-proof-3}). Assume that the $l$-th update is being served at $s_m$ in ${{\hat S}_{{\rm{LCFS}}}}(t)$. In this case, the $l$-th update must be delivered by time $s_n$ in ${{\hat S}_{{\rm{LCFS}}}}(t)$. This is because the $n$-th update starts service at $s_n$ in ${{\hat S}_{{\rm{LCFS}}}}(t)$. Then, the $m$-th update must also be delivered by time $s_n$ in ${{\hat S}_{{\rm{LCFS\_I}}}}(t)$, due to the coupling between ${{\hat S}_{{\rm{LCFS}}}}(t)$ and ${{\hat S}_{{\rm{LCFS\_I}}}}(t)$. This results in a contradiction that the $m$-th update is delivered at $t'_n$.

Combining all the cases, we show that ${{{\hat S}_{{\rm{LCFS\_I}}}}\left( t \right) \ge {{\hat S}_{{\rm{LCFS}}}}\left( t \right)}$ holds conditionally on an arbitrary sample path $\mathcal{I}$. This trivially implies Eq.~(\ref{eq:prob-1}), which further implies Eq.~(\ref{eq:bridge}) by Theorem~6.B.30 in \cite{shaked2007stochastic}. This completes the proof. 
\end{proof}

\subsection{Preemptive, Informative, AoI-based Policies}
\label{sec:piaoi}
So far, we have demonstrated the advantages of preemptive policies, AoI-based policies, and informative policies. In this subsection, we want to integrate all of these three ideas and propose preemptive, informative, AoI-based policies.

We first consider preemptive, informative version of three AoI-based policies: ADE\_PI, ADS\_PI, and ADM\_PI. Interestingly, we can show equivalence between ADE\_PI and SRPT\_I (i.e., the informative version of SRPT) and between ADE\_I and SJF\_I (i.e., the informative version of ADE and SJF, respectively) in the sample-path sense. These results are stated in Propositions~\ref{pro:padf=srpti} and \ref{pro:padf=sjfi}.

\begin{proposition}
	\label{pro:padf=srpti}
    ADE\_PI and SRPT\_I are equivalent in every sample path. 
\end{proposition}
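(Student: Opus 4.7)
The plan is to prove the equivalence by induction on the event epochs (arrivals and service completions) in an arbitrary sample path, showing that at every instant the two policies load the same update into the server.

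First, I would establish the following key observation. Under any work-conserving preemptive policy that only serves informative updates, the AoI can drop only at a service completion of an informative update. If at some time $t$ the server immediately begins serving informative update $k$ with remaining size $r_k$ (and is not preempted thereafter), then this triggers an AoI drop at time $t+r_k$. Consequently, if at decision time $t$ we wish to schedule the informative update whose completion causes the \emph{earliest} AoI drop, we must pick the informative update in the system with the smallest remaining processing time; any later decision to switch can only postpone the completion. This is precisely the choice made by SRPT\_I.

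Next, I would argue that the two policies share the same candidate set at every decision epoch: both ADE\_PI and SRPT\_I discard non-informative updates, so the queue at any instant consists exactly of the informative updates that have arrived and not yet been delivered, together with whichever update (if any) is in service. A decision epoch occurs at (i) an arrival, or (ii) a completion. At a completion, both policies choose, from the informative updates in the queue, the one that minimizes the next AoI-drop time, equivalently the smallest-remaining-size update; this is the same update. At an arrival of a new (necessarily informative, as it is the freshest) update with size $S$, both policies preempt the currently served update $j$ iff $S < r_j$: ADE\_PI preempts iff preempting leads to an earlier AoI drop, i.e., iff $t+S < t+r_j$; SRPT\_I preempts iff the incoming size is strictly smaller than the remaining size of the job in service. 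These conditions coincide. When the informative queue is empty, both policies idle (since non-informative updates are discarded). A straightforward induction on the ordered sequence of events in the sample path then yields that the identity of the update in service agrees under both policies at every instant, hence the two sample paths are identical.

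The main obstacle I expect is making the preemption rules fully consistent at boundary cases: ties in remaining sizes, the arrival of a new update whose size equals $r_j$, and the transition into and out of idle periods. I would fix a consistent tie-breaking convention (e.g., prefer the update already in service on ties), apply it to both policies identically, and verify that the same update is chosen in each of these corner cases. Once this bookkeeping is in place, the equivalence follows directly from the observation that, among informative updates, ``earliest AoI drop'' and ``smallest remaining size'' are the same optimization criterion.
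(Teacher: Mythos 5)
Your proposal is correct and follows essentially the same route as the paper's proof: an induction along the sample path in which, at each arrival the preemption test $S < r_j$ coincides for ADE\_PI and SRPT\_I, and at each completion (or idle-to-busy transition) both policies pick the informative update with the smallest remaining size because ``earliest AoI drop'' and ``shortest remaining processing time'' are the same criterion once non-informative updates are discarded. The only cosmetic differences are that the paper inducts on the index of the served update rather than on event epochs and explicitly notes that waiting informative updates are never partially served (a preempted update loses informativeness), while you instead flag the tie-breaking convention, which the paper leaves implicit.
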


\begin{proof}
We use strong induction to prove that under the same sample path, ADE\_PI  and SRPT\_I  always choose the same update to serve at the same time. In the following, we only consider informative updates since non-informative updates are discarded under both ADE\_PI and SRPT\_I.

Suppose that when ADE\_PI needs to choose the $n$-th update to serve at time ${t_{{\rm{ADE\_PI}}}}\left( n \right)$, it chooses the update with index ${d_{{\rm{ADE\_PI}}}}\left( n \right)$. Similarly, SRPT\_I chooses the update with index ${d_{{\rm{SRPT\_I}}}}\left( n \right)$ as its $n$-th update to serve at ${t_{{\rm{SRPT\_I}}}}\left( n \right)$.

Claim: ADE\_PI  and SRPT\_I always serve the same update at the same time, i.e., $(d_{\mathrm{ADE}_{-} \mathrm{PI}}(n),$ $t_{\mathrm{ADE}_{-} \mathrm{PI}}(n))=(d_{\mathrm{SRPT}_{-} \mathrm{I}}(n), t_{\mathrm{SRPT}_{-} \mathrm{I}}(n))$ for all $n$. 

Base case: When $n=1$, both ADE\_PI and SRPT\_I serve the first update when it arrives. Hence, we have $\left( {{d_{{\rm{ADE\_PI}}}}\left( 1 \right),{t_{{\rm{ADE\_PI}}}}\left( 1 \right)} \right) = \left( {{d_{{\rm{SRPT}}\_{\rm{I}}}}\left( 1 \right),{t_{{\rm{SRPT}}\_{\rm{I}}}}\left( 1 \right)} \right)$.

Induction step: Suppose that for $n=k$ $(k \ge 1)$, we have $\left( {{d_{{\rm{ADE\_PI}}}}\left( m \right),{t_{{\rm{ADE\_PI}}}}\left( m \right)} \right) = \left( {{d_{{\rm{SRPT}}\_{\rm{I}}}}\left( m \right),{t_{{\rm{SRPT}}\_{\rm{I}}}}\left( m \right)} \right)$ for the $m$-th update for all $1 \le m \le k$. We want to show that $\left( {{d_{{\rm{ADE\_PI}}}}\left( n \right),{t_{{\rm{ADE\_PI}}}}\left( n \right)} \right) = \left( {{d_{{\rm{SRPT}}\_{\rm{I}}}}\left( n \right),{t_{{\rm{SRPT}}\_{\rm{I}}}}\left( n \right)} \right)$ still holds for $n=k+1$. 
Note that there are two cases for the $(k+1)$-st  update: 1) the $(k+1)$-st update preempts the $k$-th update; 2) the $(k+1)$-st update does not preempt the $k$-th update, i.e., the $(k+1)$-st update starts service from the idle state or immediately after the $k$-th update is delivered. We discuss these two cases separately and show that $\left( {{d_{{\rm{ADE\_PI}}}}\left( k+1 \right),{t_{{\rm{ADE\_PI}}}}\left( k+1 \right)} \right) = \left( {{d_{{\rm{SRPT}}\_{\rm{I}}}}\left( k+1 \right),{t_{{\rm{SRPT}}\_{\rm{I}}}}\left( k+1 \right)} \right)$ holds in both cases.

Case 1): The $(k+1)$-st update preempts the $k$-th update. During the service of the $k$-th update,  the $(k+1)$-st update arrives. Under ADE\_PI, in order to make AoI drop as early as possible, the server compares the remaining service time of the $k$-th update with the original service time of the $(k+1)$-st update and chooses to serve the update with a smaller remaining service time. This is exactly the same as what SRPT\_I does. Therefore, we have $\left( {{d_{{\rm{ADE\_PI}}}}\left( k+1 \right),{t_{{\rm{ADE\_PI}}}}\left( k+1 \right)} \right) = \left( {{d_{{\rm{SRPT}}\_{\rm{I}}}}\left( k+1 \right),{t_{{\rm{SRPT}}\_{\rm{I}}}}\left( k+1 \right)} \right)$. 

Case 2): The $(k+1)$-st update does not preempt the $k$-th update.
On the one hand, if the  $(k+1)$-st update starts service from the idle state, then by the induction hypothesis, both ADE\_PI and SRPT\_I finish serving the $k$-th update at the same time and then go through a period of being idle. Therefore, ADE\_PI and SRPT\_I will also serve the same $(k+1)$-st update at the same time, i.e., $\left( {{d_{{\rm{ADE\_PI}}}}\left( k+1 \right),{t_{{\rm{ADE\_PI}}}}\left( k+1 \right)} \right) = \left( {{d_{{\rm{SRPT}}\_{\rm{I}}}}\left( k+1 \right),{t_{{\rm{SRPT}}\_{\rm{I}}}}\left( k+1 \right)} \right)$. 
On the other hand, if the $(k+1)$-st update starts service immediately after the service of $k$-th update, then by the induction hypothesis, ADE\_PI and SRPT\_I will start service at the same time, i.e., ${t_{{\rm{ADE}}\_{\rm{PI}}}}\left( {k + 1} \right){\rm{ = }}{t_{{\rm{SRPT}}\_{\rm{I}}}}\left( {k+1} \right)$.  SRPT\_I will select the $(k+1)$-st update with the shortest remaining size. However, this selected $(k+1)$-st update must have  not been served before. Otherwise, this update is no longer informative it was preempted by other update. Thus, SRPT\_I ends up choosing an update with the shortest original size, which  will also  be selected by ADE\_PI. This implies ${d_{{\rm{ADE}}\_{\rm{PI}}}}\left( {k + 1} \right) = {d_{{\rm{SRPT}}\_{\rm{I}}}}\left( {k + 1} \right)$.  Therefore, we have $\left( {{d_{{\rm{ADE\_PI}}}}\left( k+1 \right),{t_{{\rm{ADE\_PI}}}}\left( k+1 \right)} \right) = \left( {{d_{{\rm{SRPT}}\_{\rm{I}}}}\left( k+1 \right),{t_{{\rm{SRPT}}\_{\rm{I}}}}\left( k+1 \right)} \right)$. 
\end{proof}

\begin{proposition}
\label{pro:padf=sjfi}
    ADE\_I and SJF\_I are equivalent in every sample path. 
\end{proposition}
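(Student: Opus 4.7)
The plan is to mirror the strong-induction structure used for Proposition~\ref{pro:padf=srpti}, but the argument is even simpler because both policies here are non-preemptive. Let $(d_\pi(n),t_\pi(n))$ denote the index of the $n$-th update served under policy $\pi$ and the time at which its service starts. I would show inductively on $n$ that, on every sample path,
\[
\bigl(d_{\mathrm{ADE\_I}}(n),t_{\mathrm{ADE\_I}}(n)\bigr) = \bigl(d_{\mathrm{SJF\_I}}(n),t_{\mathrm{SJF\_I}}(n)\bigr).
\]

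For the base case $n=1$, both policies wait until the first update arrives and then serve it, so the equality holds. For the inductive step, assume the equality for all $m \le k$. Since both policies are non-preemptive and service times are a property of the updates (not of the policy), the $k$-th update is delivered at a common instant, the server becomes free at a common instant $t^\star$, the waiting queue is identical, and the most recent served update---hence the function $U_\pi(t^\star)$ used to label an update as informative---coincides. Therefore both policies face the same set $\mathcal{Q}$ of informative waiting updates at $t^\star$. SJF\_I picks $\arg\min_{j\in\mathcal{Q}} S_j$. Because the chosen update starts at $t^\star$ and runs to completion without preemption, its delivery time is $t^\star + S_j$; ADE\_I therefore picks $\arg\min_{j\in\mathcal{Q}}(t^\star + S_j)$. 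Since $t^\star$ is a constant that does not depend on $j$, the two arg-mins coincide (under any consistent tie-breaking rule, or with probability one for continuous size distributions). The corner case $\mathcal{Q}=\emptyset$ is covered by the convention stated in Section~\ref{sec:drop-to-point}: both policies then serve a waiting update of smallest size, so they still agree.

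The only subtle point---hardly an obstacle---is that ``informative'' is policy-dependent only through $U_\pi(t^\star)$; the induction hypothesis that the same set of updates has been served so far immediately forces the two candidate sets $\mathcal{Q}$ to coincide, after which the equivalence reduces to the trivial observation that, for a fixed start time, \emph{earliest completion} and \emph{smallest size} are the same criterion.
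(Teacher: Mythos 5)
Your proof is correct and takes essentially the same route as the paper: strong induction on the service order, using the induction hypothesis to conclude that both policies free up at the same time with the same set of informative waiting updates, and then observing that for a common, fixed start time under non-preemptive service, ``earliest AoI drop'' and ``smallest size'' select the same update. Your explicit handling of the empty candidate set and of tie-breaking is a small but welcome refinement over the paper's write-up.
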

\begin{proof}
Similar to the proof of Proposition \ref{pro:padf=srpti}, we use strong induction to show that under the same sample path, ADE\_I and SJF\_I always choose the same update to serve at the same time. Here, we also only consider the informative updates.

Suppose that when ADE\_I needs to choose the $n$-th update to serve at time ${t_{{\rm{ADE\_I}}}}\left( n \right)$, it chooses the update with index ${d_{{\rm{ADE\_I}}}}\left( n \right)$. Similarly, SJF\_I chooses the update with index ${d_{{\rm{SJF\_I}}}}\left( n \right)$ as its $n$-th update to serve at ${t_{{\rm{SJF\_I}}}}\left( n \right)$.

Claim: ADE\_I  and SJF\_I  always serve the same update at the same time, i.e., $(d_{\mathrm{ADE}_{-} \mathrm{I}}(n),$ $t_{\mathrm{ADE}_{-} \mathrm{I}}(n))=(d_{\mathrm{SJF}_{-} \mathrm{I}}(n), t_{\mathrm{SJF}_{-} \mathrm{I}}(n))$ for all $n$.

Base case: When $n=1$, both ADE\_I and SJF\_I  serve the first update when it arrives. Hence, we have $\left( {{d_{{\rm{ADE\_I}}}}\left( 1 \right),{t_{{\rm{ADE\_I}}}}\left( 1 \right)} \right) = \left( {{d_{{\rm{SJF}}\_{\rm{I}}}}\left( 1 \right),{t_{{\rm{SJF}}\_{\rm{I}}}}\left( 1 \right)} \right)$.

Induction step: Suppose that for $n=k$ $(k \ge 1)$, we have $\left( {{d_{{\rm{ADE\_I}}}}\left( m \right),{t_{{\rm{ADE\_I}}}}\left( m \right)} \right) = \left( {{d_{{\rm{SJF}}\_{\rm{I}}}}\left( m \right),{t_{{\rm{SJF}}\_{\rm{I}}}}\left( m \right)} \right)$ for the $m$-th update for $1 \le m \le k$.
We want to show that $(d_{\mathrm{ADE}_{-} \mathrm{I}}(n),$ $t_{\mathrm{ADE}_{-} \mathrm{I}}(n))=(d_{\mathrm{SJF}_{-} \mathrm{I}}(n), t_{\mathrm{SJF}_{-} \mathrm{I}}(n))$ still holds for $n=k+1$. 
Note that there are two cases for the $(k+1)$-st update: 1) the $(k+1)$-st update starts service from the idle state; 2) the $(k+1)$-st update starts service immediately after the $k$-th update is delivered. We discuss these two cases separately and show that $\left( {{d_{{\rm{ADE\_I}}}}\left( k+1 \right),{t_{{\rm{ADE\_I}}}}\left( k+1 \right)} \right) = \left( {{d_{{\rm{SJF}}\_{\rm{I}}}}\left( k+1 \right),{t_{{\rm{SJF}}\_{\rm{I}}}}\left( k+1 \right)} \right)$ holds in both cases.

\begin{figure*}[!t]
	\centering
	\subfigure[Exponential: $\mu=1$]{
		\label{fig:exp-area-ratio-aoi}
		\includegraphics[width=0.322\textwidth]{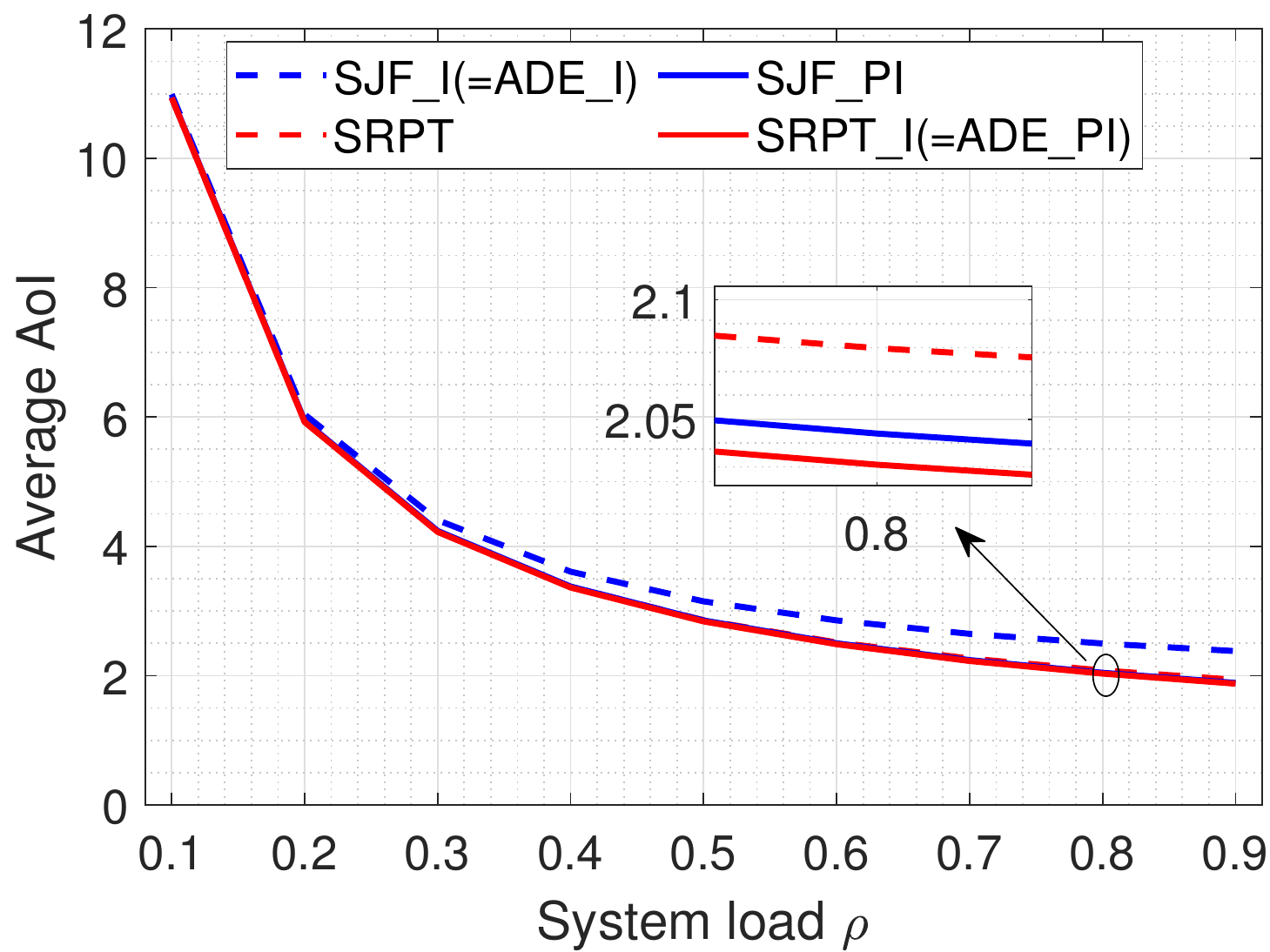}}
	\subfigure[Weibull: $\mu=1$ and ${C^{\rm{2}}}{\rm{ = 10}}$]{
		\label{fig:wei-area-ratio-aoi}
		\includegraphics[width=0.322\textwidth]{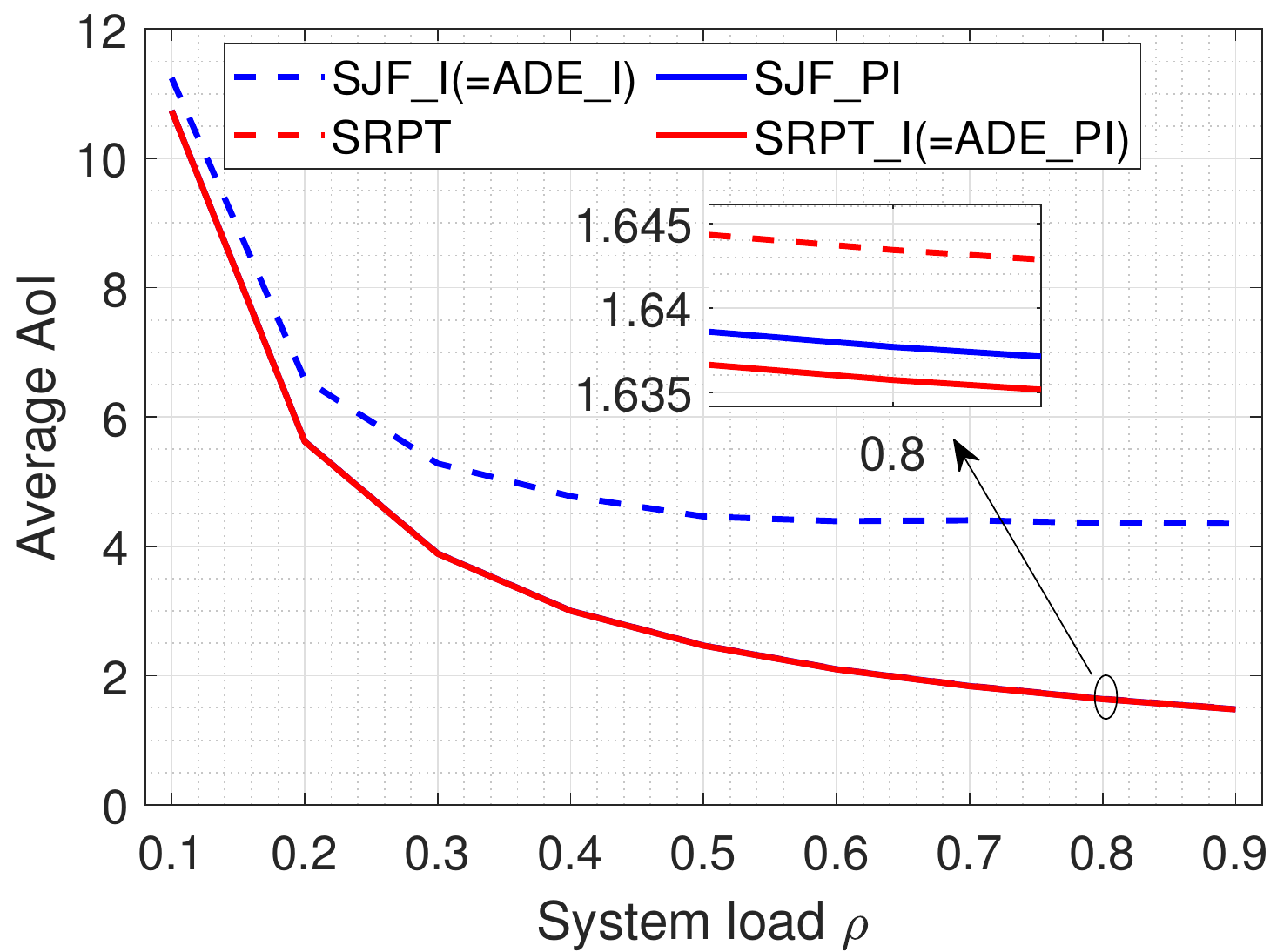}}
	\subfigure[Weibull: $\mu=1$ and $\rho {\rm{ = 0}}{\rm{.7}}$]{
		\label{fig:wei-area-info-aoi} 
		\includegraphics[width=0.322\textwidth]{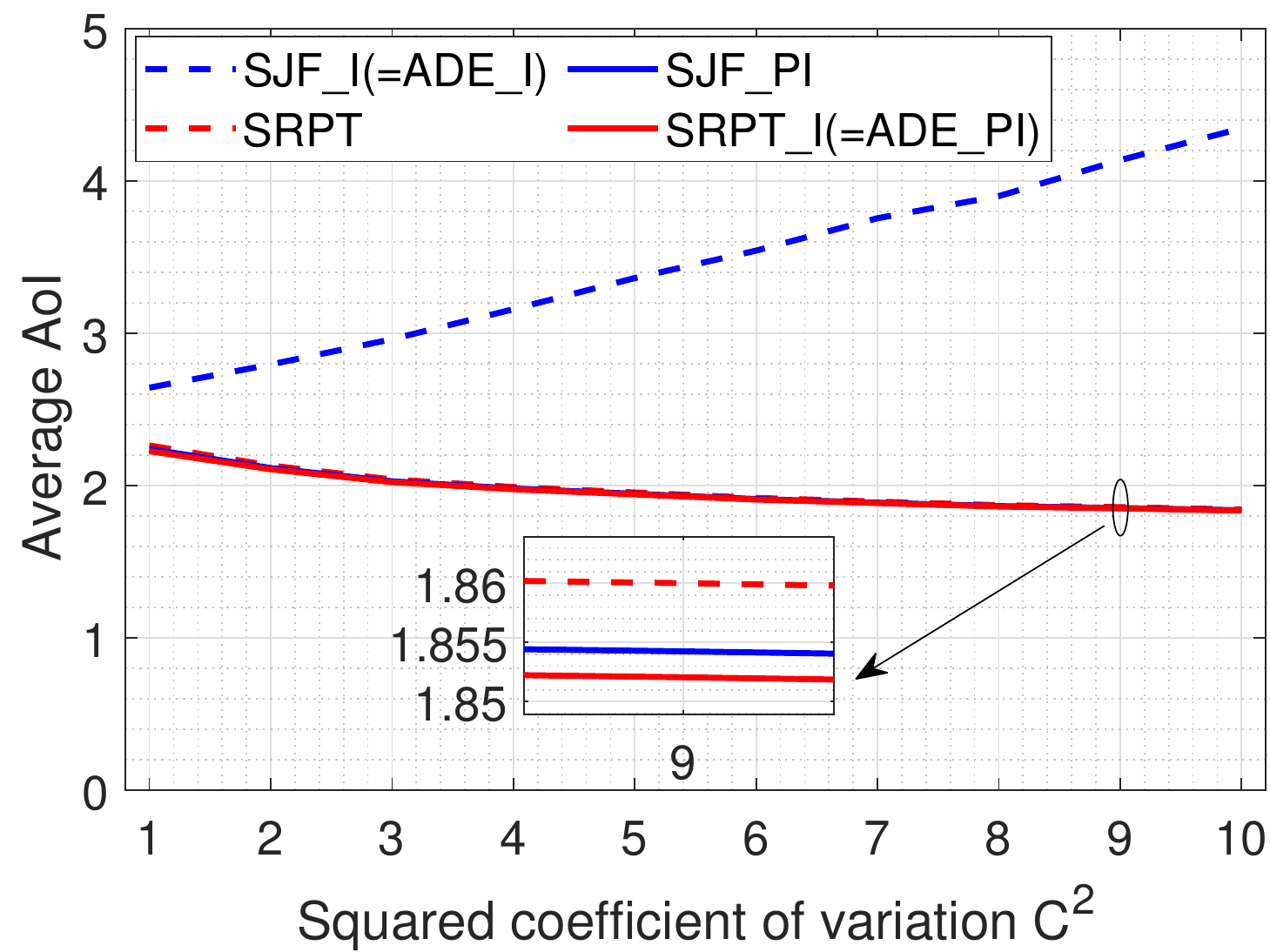}}
	\caption{Comparisons of the average AoI performance: preemptive, informative, AoI-based policies vs. others}
	\label{fig:aoi-ratio-pre-info}
\end{figure*}

\begin{figure*}[!t]
	\centering
	\subfigure[Exponential: $\mu=1$]{
		\label{fig:exp-area-ratio-paoi}
		\includegraphics[width=0.322\textwidth]{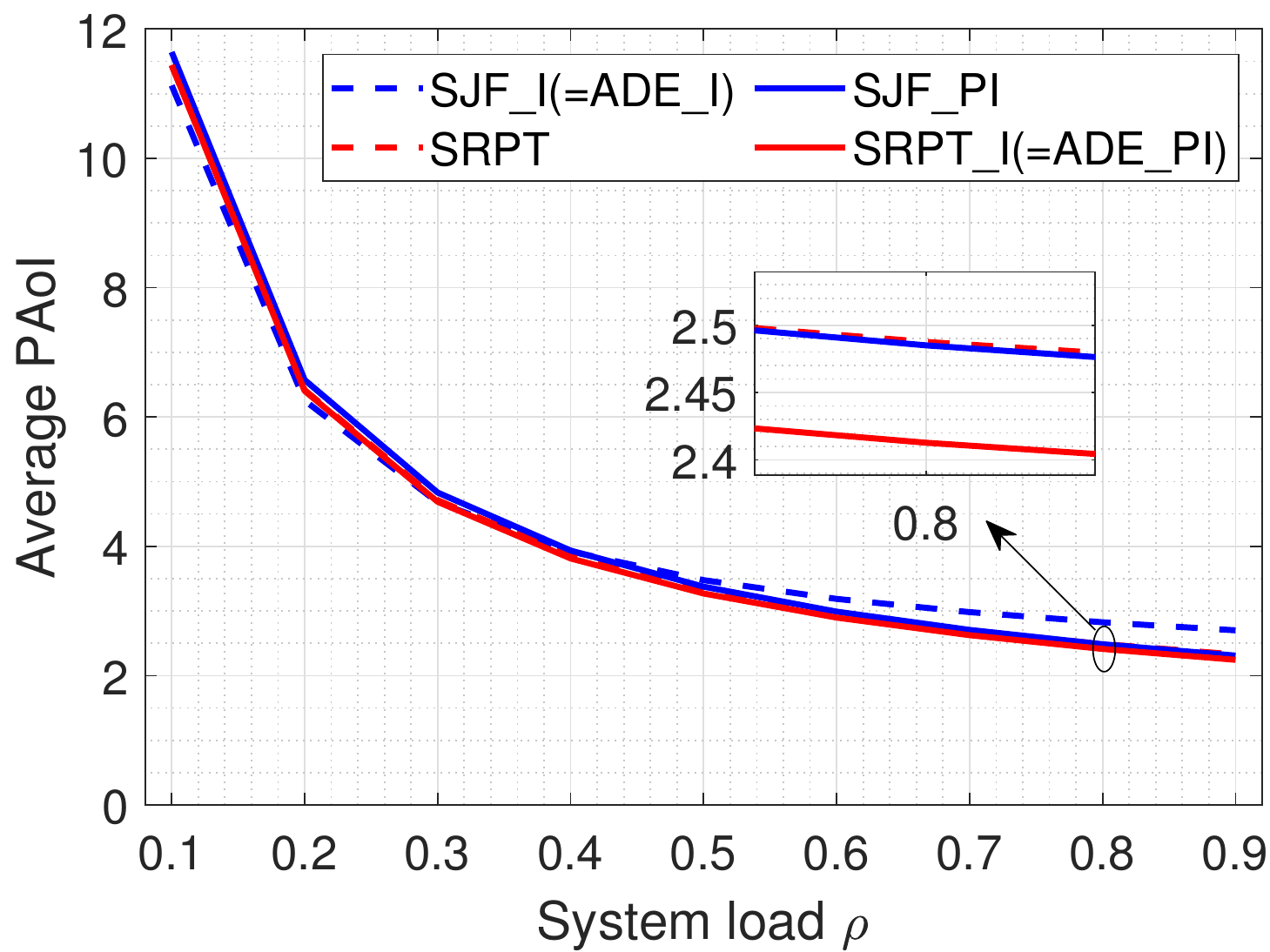}}
	\subfigure[Weibull: $\mu=1$ and ${C^{\rm{2}}}{\rm{ = 10}}$]{
		\label{fig:wei-area-ratio-paoi}
		\includegraphics[width=0.322\textwidth]{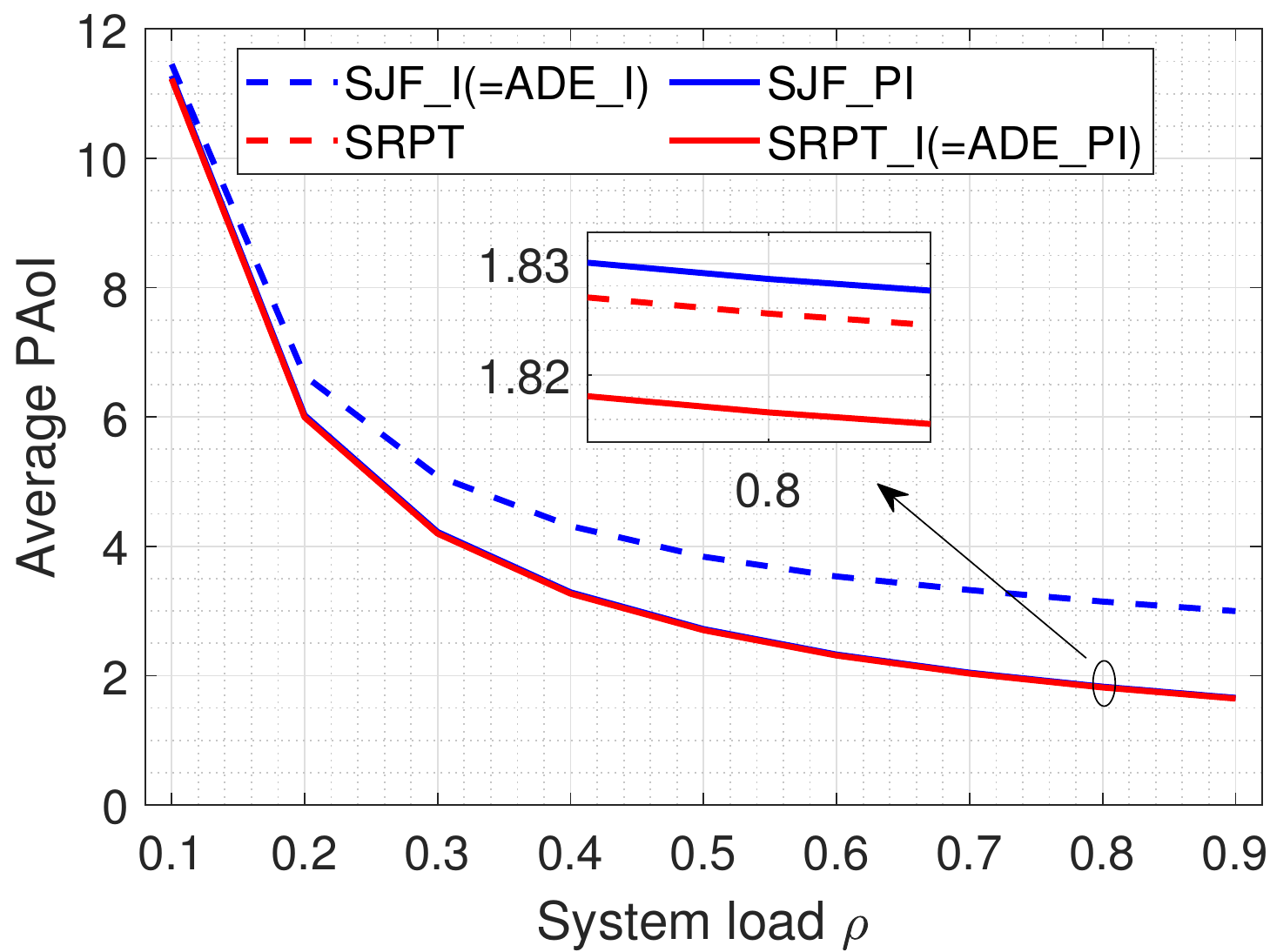}}
	\subfigure[Weibull: $\mu=1$ and $\rho {\rm{ = 0}}{\rm{.7}}$]{
		\label{fig:wei-area-info-paoi} 
		\includegraphics[width=0.322\textwidth]{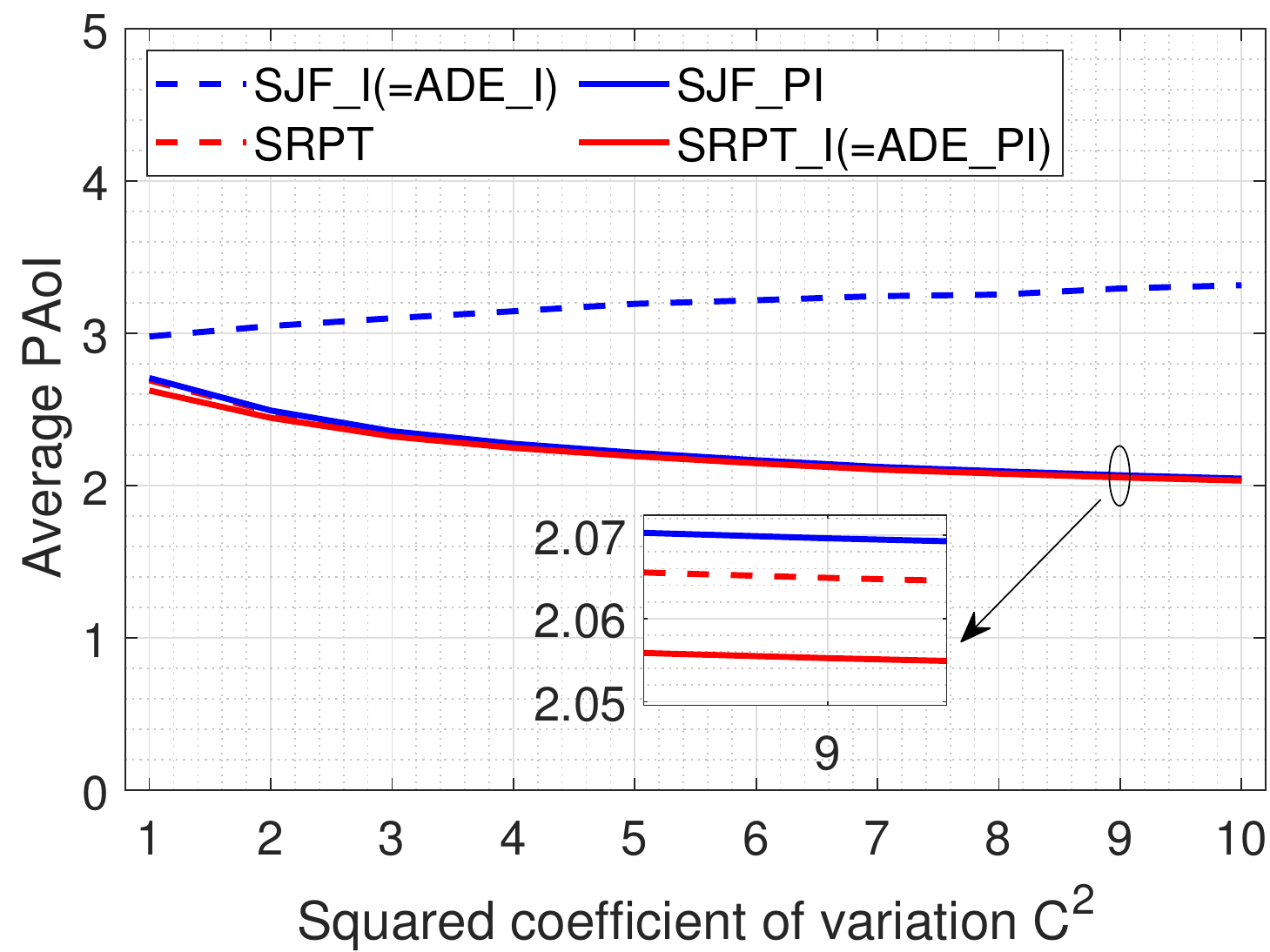}}
	\caption{Comparisons of the average PAoI performance: preemptive, informative, AoI-based policies vs. others}
	\label{fig:paoi-ratio-pre-info}
\end{figure*}

Case 1): The $(k+1)$-st update starts service from the idle state. By the induction hypothesis, both ADE\_I and SJF\_I finish serving the $k$-th update at the same time and then go through a period of being idle. Therefore, ADE\_I and SJF\_I will also serve the same $(k+1)$-st update at the same time, i.e., $\left( {{d_{{\rm{ADE\_I}}}}\left( k+1 \right),{t_{{\rm{ADE\_I}}}}\left( k+1 \right)} \right) = \left( {{d_{{\rm{SJF}}\_{\rm{I}}}}\left( k+1 \right),{t_{{\rm{SJF}}\_{\rm{I}}}}\left( k+1 \right)} \right)$. 

Case 2): The $(k+1)$-st update starts service immediately after the $k$-th update is delivered. By the induction hypothesis, ADE\_I and SJF\_I will start service at the same time, i.e., ${t_{{\rm{ADE}}\_{\rm{I}}}}\left( {k + 1} \right){\rm{ = }}{t_{{\rm{SJF}}\_{\rm{I}}}}\left( {k + 1} \right)$.  SJF\_I will choose the $(k+1)$-st update that has the smallest update size,  which  will also  be selected by ADE\_I since this update can make AoI drop earliest. This implies ${d_{{\rm{ADE}}\_{\rm{PI}}}}\left( {k + 1} \right) = {d_{{\rm{SJF}}\_{\rm{I}}}}\left( {k + 1} \right)$. Therefore, we have $\left( {{d_{{\rm{ADE\_I}}}}\left( k+1 \right),{t_{{\rm{ADE\_I}}}}\left( k+1 \right)} \right) = \left( {{d_{{\rm{SJF}}\_{\rm{I}}}}\left( k+1 \right),{t_{{\rm{SJF}}\_{\rm{I}}}}\left( k+1 \right)} \right)$. 
\end{proof}

Propositions~\ref{pro:padf=srpti} and \ref{pro:padf=sjfi} imply that although SRPT\_I and SJF\_I do not explicitly follow an AoI-based design, they are essentially AoI-based policies.
This provides an intuitive explanation for why size-based policies, such as variants of SRPT and SJF, have a good empirical AoI performance.

In Fig.~\ref{fig:aoi-ratio-pre-info}, we present the simulation results for the average AoI performance of the preemptive, informative, AoI-based policies (ADE\_PI) compared to several other policies. 
We observe that in various settings we consider, ADE\_PI  achieves the best AoI performance.
However, compared to the best delay-efficient policies (such as SRPT), the AoI improvement of the preemptive, informative, and AoI-based policies is rather marginal in the settings with exogenous arrivals. 

\section{\uppercase{Conclusion}}
\label{sec:conclusion}
In this paper, we systematically studied the impact of various aspects of scheduling policies on the AoI performance and provided several useful guidelines for the design of AoI-efficient scheduling policies.
Our study reveals that among various aspects of scheduling policies we investigated, prioritizing small updates, allowing service preemption, and prioritizing informative updates play the most important role in the design of AoI-efficient scheduling policies.
It turns out that common scheduling policies like SRPT and SJF\_P and their informative variants can achieve a very good AoI performance, although they do not explicitly make scheduling decisions based on the AoI. This can be partially explained by the equivalence between such size-based policies and some AoI-based policies.
Moreover, when the AoI requirement is not stringent or the update-size information is not available, some simple delay-efficient policies (such as LCFS\_P) are also good candidates for AoI-efficient policies.

Our findings also raise several interesting questions that are worth investigating as future work. One important direction is to pursue more theoretical results beyond the simulation results we provided in this paper. For example, it would be interesting to see whether one can rigorously prove that any informative policy always outperforms its non-informative counterpart, which is consistently observed in the simulation results.

\bibliographystyle{IEEEtran}
\bibliography{reference}

\appendix
\subsection{The $95\%$ confidence interval of Figs.~\ref{fig:tradition-AoI} and \ref{fig:tradition-PAoI}}
\label{appendix:confidence-interval}
The $95\% $ confidence intervals of Figs.~\ref{fig:tradition-AoI} and \ref{fig:tradition-PAoI} are provided in Table~\ref{table:tradition-exp-AoI}-\ref{table:tradition-wei-variance-PAoI}, in which we observe that the margin of error is only a very small portion of the average (about $1\%$). Note that the $95\% $ confidence intervals of the rest of figures are omitted, since the portion of margin of error is small in all figures.

\begin{table*}[b]
\centering
\begin{tabular}{|c|c|c|c|c|c|c|c|c|}
\hline
 & FCFS & RANDOM & LCFS & SJF & PS & LCFS\_P & SRPT & SJF\_P \\ \hline
0.1 & $10.99 \pm 0.03$ & $10.99 \pm 0.03$ & $10.99 \pm 0.03$ & $10.99 \pm 0.03$ & $10.99 \pm 0.03$ & $10.99 \pm 0.03$ & $10.94 \pm 0.03$ & $10.95 \pm 0.03$ \\ \hline
0.2 & $6.05 \pm 0.01$ & $6.04 \pm 0.01$ & $6.04 \pm 0.01$ & $6.04 \pm 0.01$ & $6.02 \pm 0.01$ & $5.99 \pm 0.01$ & $ 5.92\pm 0.01$ & $5.93 \pm 0.01$ \\ \hline
0.3 & $4.46 \pm 0.01$ & $4.44 \pm 0.01$ & $4.43 \pm 0.01$ & $4.42 \pm 0.01$ & $4.39 \pm 0.01$ & $4.34 \pm 0.01$ & $4.23 \pm 0.01$ & $4.24 \pm 0.01$ \\ \hline
0.4 & $3.77 \pm 0.01$ & $3.71 \pm 0.01$ & $3.66 \pm 0.01$ & $3.65 \pm 0.01$ & $3.62 \pm 0.01$ & $3.50 \pm 0.01$ & $3.38 \pm 0.01$ & $3.39 \pm 0.01$ \\ \hline
0.5 & $3.50 \pm 0.01$ & $3.35 \pm 0.01$ & $3.25 \pm 0.01$ & $3.23 \pm 0.01$ & $3.20 \pm 0.01$ & $3.00 \pm 0.01$ & $2.86 \pm 0.01$ & $2.86 \pm 0.01$ \\ \hline
0.6 & $3.56 \pm 0.01$ & $3.23 \pm 0.01$ & $3.02 \pm 0.01$ & $2.99 \pm 0.01$ & $3.00 \pm 0.01$ & $2.67 \pm 0.01$ & $2.51 \pm 0.01$ & $2.50 \pm 0.01$ \\ \hline
0.7 & $4.04 \pm 0.02$ & $3.31 \pm 0.01$ & $2.92 \pm 0.01$ & $2.86 \pm 0.01$ & $2.99 \pm 0.01$ & $2.43 \pm 0.01$ & $2.27 \pm 0.01$ & $2.25 \pm 0.01$ \\ \hline
0.8 & $5.44 \pm 0.05$ & $3.67 \pm 0.01$ & $2.89 \pm 0.01$ & $2.81 \pm 0.01$ & $3.22 \pm 0.01$ & $2.25 \pm 0.01$ & $2.08 \pm 0.01$ & $2.05 \pm 0.01$ \\ \hline
0.9 & $10.12 \pm 0.24$ & $4.65 \pm 0.04$ & $2.91 \pm 0.01$ & $2.80 \pm 0.01$ & $4.05 \pm 0.04$ & $2.11 \pm 0.01$ & $1.93 \pm 0.01$ & $1.90 \pm 0.01$ \\ \hline
\end{tabular}
\caption{The $95\% $ confidence interval of Fig.~\ref{fig:tradition-exp-AoI}}
\label{table:tradition-exp-AoI}
\vspace{0pt}
\end{table*}
\begin{table*}[b]
\centering
\begin{tabular}{|c|c|c|c|c|c|c|c|c|}
\hline
 & FCFS & RANDOM & LCFS & SJF & PS & LCFS\_P & SRPT & SJF\_P \\ \hline
0.1 & $11.29 \pm 0.02$ & $11.27 \pm 0.02$ & $11.26 \pm 0.02$ & $11.25 \pm 0.02$ & $10.77 \pm 0.02$ & $10.76 \pm 0.02$ & $10.74 \pm 0.02$ &  $10.74 \pm 0.02$ \\ \hline
0.2 & $6.88 \pm 0.03$ & $6.76 \pm 0.03$ & $6.70 \pm 0.02$ & $6.66 \pm 0.02$ & $5.69 \pm 0.01$ & $5.66 \pm 0.01$ & $5.63 \pm 0.01$ &  $5.63 \pm 0.01$   \\ \hline
0.3 & $6.14 \pm 0.07$ & $5.74 \pm 0.05$ & $5.59 \pm 0.05$ & $5.50 \pm 0.04$ & $3.98 \pm 0.01$ & $3.92 \pm 0.01$ & $3.89 \pm 0.01$ &  $3.89 \pm 0.01$   \\ \hline
0.4 & $6.5 \pm 0.09$ & $5.59 \pm 0.06$ & $5.29 \pm 0.05$ & $5.16 \pm 0.04$ & $3.13 \pm 0.01$ & $3.04 \pm 0.01$ & $3.00 \pm 0.01$ &   $3.00 \pm 0.01$  \\ \hline
0.5 & $7.83 \pm 0.17$ & $5.93 \pm 0.08$ & $5.38 \pm 0.05$ & $5.19 \pm 0.06$ & $2.63 \pm 0.01$ & $2.50 \pm 0.01$ & $2.46 \pm 0.01$ &   $2.46 \pm 0.01$  \\ \hline
0.6 & $9.97 \pm 0.23$ & $6.50 \pm 0.09$ & $5.60 \pm 0.07$ & $5.33 \pm 0.07$ & $2.31 \pm 0.01$ & $2.14 \pm 0.01$ & $2.11 \pm 0.01$ &   $2.10 \pm 0.01$  \\ \hline
0.7 & $14.58 \pm 0.44$ & $7.54 \pm 0.13$ & $6.04 \pm 0.10$ & $5.70 \pm 0.10$ & $2.12 \pm 0.01$ & $1.88 \pm 0.01$ & $1.84 \pm 0.01$ &   $1.84 \pm 0.01$  \\ \hline
0.8 & $23.92 \pm 0.98$ & $9.13 \pm 0.21$ & $6.52 \pm 0.12$ & $6.09 \pm 0.12$ & $2.02 \pm 0.01$ & $1.68 \pm 0.01$ & $1.64 \pm 0.01$ &   $1.63 \pm 0.01$  \\ \hline
0.9 & $46.96 \pm 2.62$ & $11.68 \pm 0.22$ & $6.92 \pm 0.08$ & $6.40 \pm 0.08$ & $2.06 \pm 0.02$ & $1.54 \pm 0.04$ & $1.49 \pm 0.01$ &  $1.48 \pm 0.01$   \\ \hline
\end{tabular}
\label{table:tradition-wei-AoI}
\caption{The $95\% $ confidence interval of Fig.~\ref{fig:tradition-wei-AoI}}
\vspace{0pt}
\end{table*}

\begin{table*}[b]
\centering
\begin{tabular}{|c|c|c|c|c|c|c|c|c|}
\hline
 & FCFS & RANDOM & LCFS & SJF & PS & LCFS\_P & SRPT & SJF\_P \\ \hline
1 & $4.06 \pm 0.02$ & $3.32 \pm 0.01$ & $2.92 \pm 0.01$ & $2.86 \pm 0.01$ & $2.99 \pm 0.01$ & $2.43 \pm 0.01$ & $2.27 \pm 0.01$ &  $2.25 \pm 0.01$  \\ \hline
2 & $5.17 \pm 0.04$ & $3.83 \pm 0.01$ & $3.25 \pm 0.01$ & $3.12 \pm 0.01$ & $2.69 \pm 0.01$ & $2.24 \pm 0.01$ & $2.13 \pm 0.01$ &  $2.11 \pm 0.01$  \\ \hline
3 & $6.26 \pm 0.06$ & $4.32 \pm 0.03$ & $3.58 \pm 0.01$ & $3.40 \pm 0.01$ & $2.52 \pm 0.01$ & $2.13 \pm 0.01$ & $2.05 \pm 0.01$ &  $2.04 \pm 0.01$  \\ \hline
4 & $7.45 \pm 0.08$ & $4.82 \pm 0.03$ & $3.93 \pm 0.02$ & $3.71 \pm 0.01$ & $2.40 \pm 0.01$ & $2.06 \pm 0.01$ & $1.99 \pm 0.01$ & $1.98 \pm 0.01$   \\ \hline
5 & $8.59 \pm 0.17$ & $5.28 \pm 0.05$ & $4.27 \pm 0.03$ & $4.02 \pm 0.03$ & $2.33 \pm 0.01$ & $2.01 \pm 0.01$ & $1.96 \pm 0.01$ & $1.95 \pm 0.01$   \\ \hline
6 & $9.67 \pm 0.26$ & $5.72 \pm 0.08$ & $4.62 \pm 0.05$ & $4.34 \pm 0.04$ & $2.26 \pm 0.01$ & $1.97 \pm 0.01$ & $1.92 \pm 0.01$ &  $1.91 \pm 0.01$  \\ \hline
7 & $10.72 \pm 0.27$& $6.14 \pm 0.08$ & $4.94 \pm 0.05$ & $4.65 \pm 0.05$ & $2.21 \pm 0.01$ & $1.94 \pm 0.01$ & $1.90 \pm 0.01$ &  $1.89 \pm 0.01$  \\ \hline
8 & $12.12 \pm 0.30$ & $6.68 \pm 0.09$ & $5.35 \pm 0.06$ & $5.04 \pm 0.06$ & $2.17 \pm 0.01$ & $1.92 \pm 0.01$ & $1.87 \pm 0.01$ &  $1.87 \pm 0.01$  \\ \hline
9 & $13.78 \pm 0.57$& $7.21 \pm 0.13$ & $5.74 \pm 0.09$ & $5.41 \pm 0.08$ & $2.15 \pm 0.01$ & $1.89 \pm 0.01$ & $1.85 \pm 0.01$ &  $1.85 \pm 0.01$  \\ \hline
10 & $14.30 \pm 0.35$& $7.50 \pm 0.09$ & $6.02 \pm 0.07$ & $5.68 \pm 0.06$ & $2.11 \pm 0.01$ & $1.88 \pm 0.01$ & $1.84 \pm 0.01$ & $1.84 \pm 0.01$   \\ \hline
\end{tabular}
\label{table:tradition-wei-variance-AoI}
\caption{The $95\% $ confidence interval of Fig.~\ref{fig:tradition-wei-variance-AoI}}
\vspace{0pt}
\end{table*}
\begin{table*}[b]
\centering
\begin{tabular}{|c|c|c|c|c|c|c|c|c|}
\hline
 & FCFS & RANDOM & LCFS & SJF & PS & LCFS\_P & SRPT & SJF\_P \\ \hline
0.1 & $11.10\pm 0.01$ & $11.14 \pm 0.01$ & $11.17 \pm 0.01$  & $11.13 \pm 0.01$ & $11.51 \pm 0.01$ & $11.90 \pm 0.01$ & $11.45 \pm 0.01$ & $11.65 \pm 0.01$   \\ \hline
0.2 & $6.25 \pm 0.01$ & $6.29 \pm 0.01$ & $6.32 \pm 0.01$  & $6.27 \pm 0.01$ & $6.56 \pm 0.01$ & $6.83 \pm 0.01$ & $6.42 \pm 0.01$ &  $6.58 \pm 0.01$  \\ \hline
0.3 & $4.76 \pm 0.01$ & $4.77 \pm 0.01$ & $4.78 \pm 0.01$  & $4.71 \pm 0.01$ & $4.94 \pm 0.01$ & $5.11 \pm 0.01$ & $4.71 \pm 0.01$ &  $4.84 \pm 0.01$  \\ \hline
0.4 & $4.17 \pm 0.01$ & $4.09 \pm 0.01$ & $4.06 \pm 0.01$  & $3.98 \pm 0.01$ & $4.18 \pm 0.01$ & $4.22 \pm 0.01$ & $3.85 \pm 0.01$ &  $3.95 \pm 0.01$  \\ \hline
0.5 & $4.00 \pm 0.01$ & $3.76 \pm 0.01$ & $3.67 \pm 0.01$ & $3.58 \pm 0.01$ & $3.76 \pm 0.01$ & $3.67 \pm 0.01$ & $3.32 \pm 0.01$ &  $3.39 \pm 0.01$  \\ \hline
0.6 & $4.16 \pm 0.01$ & $3.64 \pm 0.01$ & $3.45 \pm 0.01$  & $3.37 \pm 0.01$ & $3.57 \pm 0.01$ & $3.29 \pm 0.01$ & $2.96 \pm 0.01$ &  $3.00 \pm 0.01$  \\ \hline
0.7 & $4.74 \pm 0.01$ & $3.68 \pm 0.01$ & $3.35 \pm 0.01$  & $3.23 \pm 0.01$ & $3.54 \pm 0.01$ & $3.02 \pm 0.01$ & $2.69 \pm 0.01$ &  $2.72 \pm 0.01$  \\ \hline
0.8 & $6.24 \pm 0.01$ & $3.95 \pm 0.01$ & $3.32 \pm 0.01$  & $3.17 \pm 0.01$ & $3.73 \pm 0.01$ & $2.81 \pm 0.01$ & $2.49 \pm 0.01$ & $2.50 \pm 0.01$   \\ \hline
0.9 & $11.02 \pm 0.01$ & $4.74 \pm 0.01$ & $3.37 \pm 0.01$  & $3.15 \pm 0.01$ & $4.42 \pm 0.01$ & $2.64 \pm 0.01$ & $2.33 \pm 0.01$ &  $2.32 \pm 0.01$  \\ \hline
\end{tabular}
\label{table:tradition-exp-PAoI}
\caption{The $95\% $ confidence interval of Fig.~\ref{fig:tradition-exp-PAoI}}
\vspace{0pt}
\end{table*}

\begin{table*}[]
\centering
\begin{tabular}{|c|c|c|c|c|c|c|c|c|}
\hline
 & FCFS & RANDOM & LCFS & SJF & PS & LCFS\_P & SRPT & SJF\_P \\ \hline
0.1 & $11.59 \pm 0.01$ & $11.48 \pm 0.01$ & $11.47 \pm 0.01$ & $11.45 \pm 0.01$ & $11.28 \pm 0.01$ & $11.34 \pm 0.01$ & $11.22 \pm 0.01$ & $11.26 \pm 0.01$ \\ \hline
0.2 & $7.35 \pm 0.01$ & $6.77 \pm 0.01$ & $6.66 \pm 0.01$ & $6.69 \pm 0.01$ & $6.09 \pm 0.01$ & $6.09 \pm 0.01$ & $5.99 \pm 0.01$ & $6.02 \pm 0.01$ \\ \hline
0.3 & $6.73 \pm 0.02$ & $5.35 \pm 0.01$ & $5.15 \pm 0.01$ & $5.21 \pm 0.01$ & $4.33 \pm 0.01$ & $4.28 \pm 0.01$ & $4.20 \pm 0.01$ & $4.22 \pm 0.01$ \\ \hline
0.4 & $7.18 \pm 0.02$ & $4.77 \pm 0.01$ & $4.46 \pm 0.01$ & $4.53 \pm 0.01$ & $3.44 \pm 0.01$ & $3.35 \pm 0.01$ & $3.27 \pm 0.01$ & $3.29 \pm 0.01$ \\ \hline
0.5 & $8.60 \pm 0.04$ & $4.57 \pm 0.01$ & $4.12 \pm 0.01$ & $4.19 \pm 0.01$ & $2.91 \pm 0.01$ & $2.77 \pm 0.01$ & $2.71 \pm 0.01$ & $2.72 \pm 0.01$ \\ \hline
0.6 & $10.79 \pm 0.05$ & $4.60 \pm 0.01$ & $3.98 \pm 0.01$ & $4.01 \pm 0.01$ & $2.58 \pm 0.01$ & $2.38 \pm 0.01$ & $2.32 \pm 0.01$ & $2.33 \pm 0.01$ \\ \hline
0.7 & $15.42 \pm 0.10$ & $4.87 \pm 0.01$ & $3.97 \pm 0.01$ & $3.95 \pm 0.01$ & $2.37 \pm 0.01$ & $2.10 \pm 0.01$ & $2.04 \pm 0.01$ & $2.04 \pm 0.01$ \\ \hline
0.8 & $24.81 \pm 0.22$ & $5.63 \pm 0.02$ & $4.13 \pm 0.01$ & $4.00 \pm 0.01$ & $2.27 \pm 0.01$ & $1.88 \pm 0.01$ & $1.82 \pm 0.01$ & $1.83 \pm 0.01$ \\ \hline
0.9 & $47.94 \pm 0.59$ & $7.29 \pm 0.03$ & $4.41 \pm 0.01$ & $4.10 \pm 0.01$ & $2.31 \pm 0.01$ & $1.71 \pm 0.01$ & $1.65 \pm 0.01$ & $1.66 \pm 0.01$ \\ \hline
\end{tabular}
\label{table:tradition-wei-PAoI}
\caption{The $95\% $ confidence interval of Fig.~\ref{fig:tradition-wei-PAoI}}
\vspace{0pt}
\end{table*}
\begin{table*}[]
\centering
\begin{tabular}{|c|c|c|c|c|c|c|c|c|}
\hline
 & FCFS & RANDOM & LCFS & SJF & PS & LCFS\_P & SRPT & SJF\_P \\ \hline
1 & $4.76 \pm 0.01$ & $3.69 \pm 0.01$ & $3.35 \pm 0.01$ & $3.23 \pm 0.01$ & $3.54 \pm 0.01$ & $3.02 \pm 0.01$ & $2.69 \pm 0.01$ & $2.72 \pm 0.01$ \\ \hline
2 & $5.92 \pm 0.01$ & $3.96 \pm 0.01$ & $3.51 \pm 0.01$ & $3.36 \pm 0.01$ & $3.14 \pm 0.01$ & $2.68 \pm 0.01$ & $2.48 \pm 0.01$ & $2.49 \pm 0.01$ \\ \hline
3 & $7.04 \pm 0.02$ & $4.15 \pm 0.01$ & $3.61 \pm 0.01$ & $3.47 \pm 0.01$ & $2.91 \pm 0.01$ & $2.51 \pm 0.01$ & $2.36 \pm 0.01$ & $2.37 \pm 0.01$ \\ \hline
4 & $8.26 \pm 0.02$ & $4.31 \pm 0.01$ & $3.69 \pm 0.01$ & $3.56 \pm 0.01$ & $2.76 \pm 0.01$ & $2.39 \pm 0.01$ & $2.27 \pm 0.01$ & $2.28 \pm 0.01$ \\ \hline
5 & $9.43 \pm 0.04$ & $4.44 \pm 0.01$ & $3.76 \pm 0.01$ & $3.65 \pm 0.01$ & $2.66 \pm 0.01$ & $2.31 \pm 0.01$ & $2.21 \pm 0.01$ & $2.22 \pm 0.01$ \\ \hline
6 & $10.53 \pm 0.06$ & $4.54 \pm 0.01$ & $3.81 \pm 0.01$ & $3.72 \pm 0.01$ & $2.57 \pm 0.01$ & $2.25 \pm 0.01$ & $2.16 \pm 0.01$ & $2.17 \pm 0.01$ \\ \hline
7 & $11.56 \pm 0.06$ & $4.63 \pm 0.01$ & $3.85 \pm 0.01$ & $3.78 \pm 0.01$ & $2.51 \pm 0.01$ & $2.20 \pm 0.01$ & $2.12 \pm 0.01$ & $2.13 \pm 0.01$ \\ \hline
8 & $12.99 \pm 0.07$ & $4.73 \pm 0.01$ & $3.90 \pm 0.01$ & $3.85 \pm 0.01$ & $2.45 \pm 0.01$ & $2.16 \pm 0.01$ & $2.09 \pm 0.01$ & $2.09 \pm 0.01$ \\ \hline
9 & $14.63 \pm 0.13$ & $4.87 \pm 0.01$ & $3.96 \pm 0.01$ & $3.93 \pm 0.01$ & $2.41 \pm 0.01$ & $2.12 \pm 0.01$ & $2.06 \pm 0.01$ & $2.07 \pm 0.01$ \\ \hline
10 & $15.15 \pm 0.08$ & $4.89 \pm 0.01$ & $3.99 \pm 0.01$ & $3.96 \pm 0.01$ & $2.37 \pm 0.01$ & $2.10 \pm 0.01$ & $2.04 \pm 0.01$ & $2.05 \pm 0.01$ \\ \hline
\end{tabular}
\caption{The $95\% $ confidence interval of Fig.~\ref{fig:tradition-wei-variance-PAoI}}
\label{table:tradition-wei-variance-PAoI}
\end{table*}

\newpage
\subsection{Additional Simulation Results for the G/G/$1$ Queue}
\label{appendix:more_simu}
We present additional simulation results for the G/G/$1$ queue in Figs.~\ref{fig:tradition-AoI-2}-\ref{fig:all-combined-paoi}. For the simulations in subfigures (a)-(c), we assume that the interarrival time follows a Weibull distribution with ${C^2} = 10$. 
In subfigure (a), we assume that the update size follows an Exponential distribution with mean $1/\mu  = 1$; in subfigures (b) and (c), we assume that the update size  follows a Weibull distribution with mean $1/\mu  = 1$. Note that in subfigures (a) and (b), we change the value of the system load $\rho$; in subfigure (c), we change the value of ${C^2}$ for the update size while fixing the system load at $\rho=0.7$.  
For the simulations in subfigures (d)-(f), we assume that both interarrival time and service time follow the same distribution (Gamma, Log-normal, and Pareto, respectively.). We fix the mean update size as $1/\mu  = 1$ and change the value of the system load $\rho$. 
Observations \ref{obs:size_better}-\ref{obs:informative} can also be made for the setting of G/G/$1$ queue.

\newpage
\begin{figure*}[!t]
    \centering
    \captionsetup{justification=centering}
    \subfigure[Interarrival time: Weibull (${C^{\rm{2}}}{\rm{ = 10}}$);
    Update size: Exponential ($\mu=1$)]{
        \begin{minipage}[t]{0.312\linewidth}
        \label{fig:tradition-wei-exp-AoI} 
		\includegraphics[width=1\textwidth]{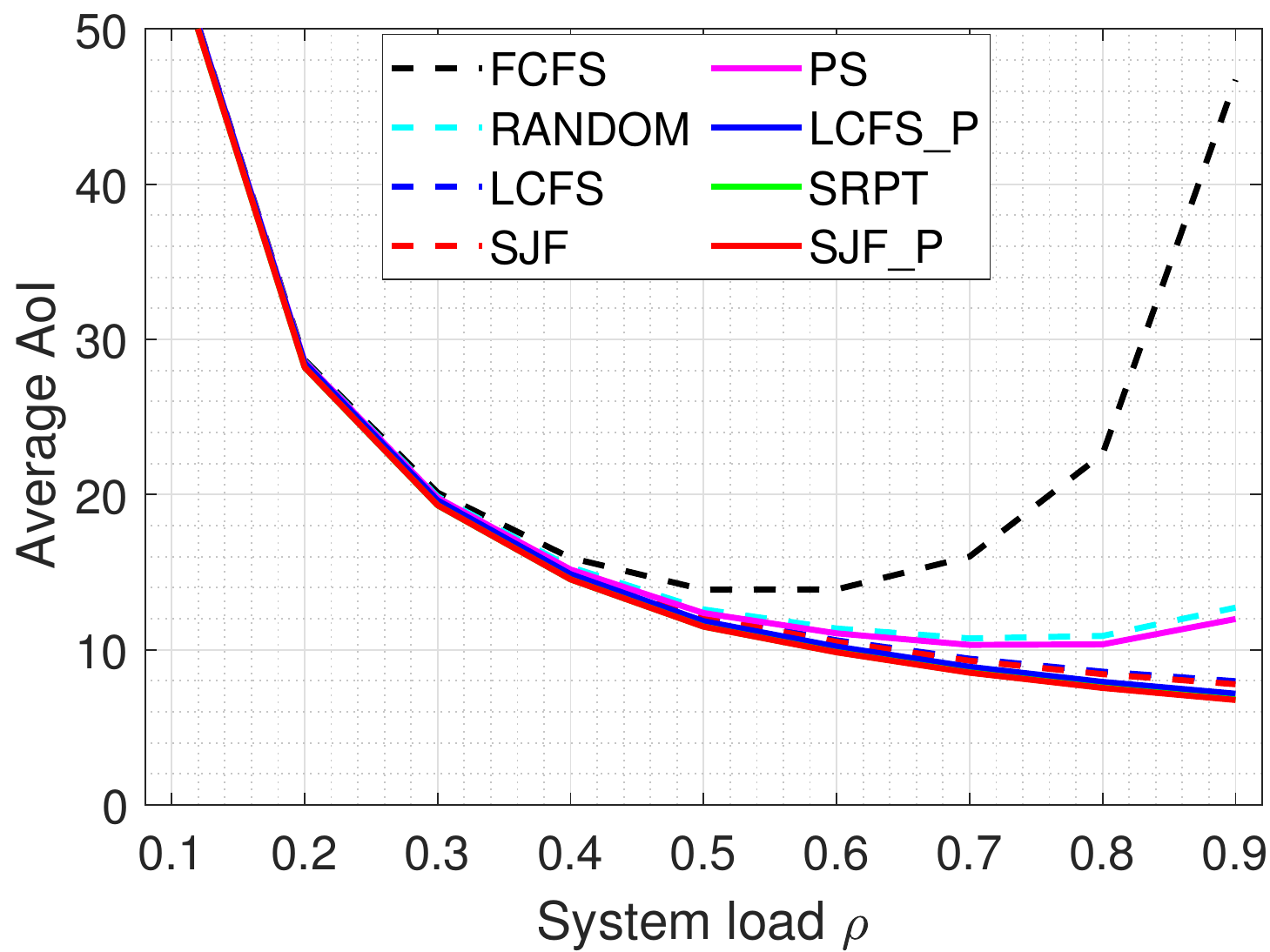}
 	    \end{minipage}}
 	\hspace{0.5em}    
	\subfigure[Interarrival time: Weibull (${C^{\rm{2}}}{\rm{ = 10}}$);   
	Update size: Weibull ($\mu=1$ and ${C^{\rm{2}}}{\rm{ = 10}}$)]{
	    \begin{minipage}[t]{0.312\linewidth}
	    \label{fig:tradition-wei-wei-AoI} 
		\includegraphics[width=1\textwidth]{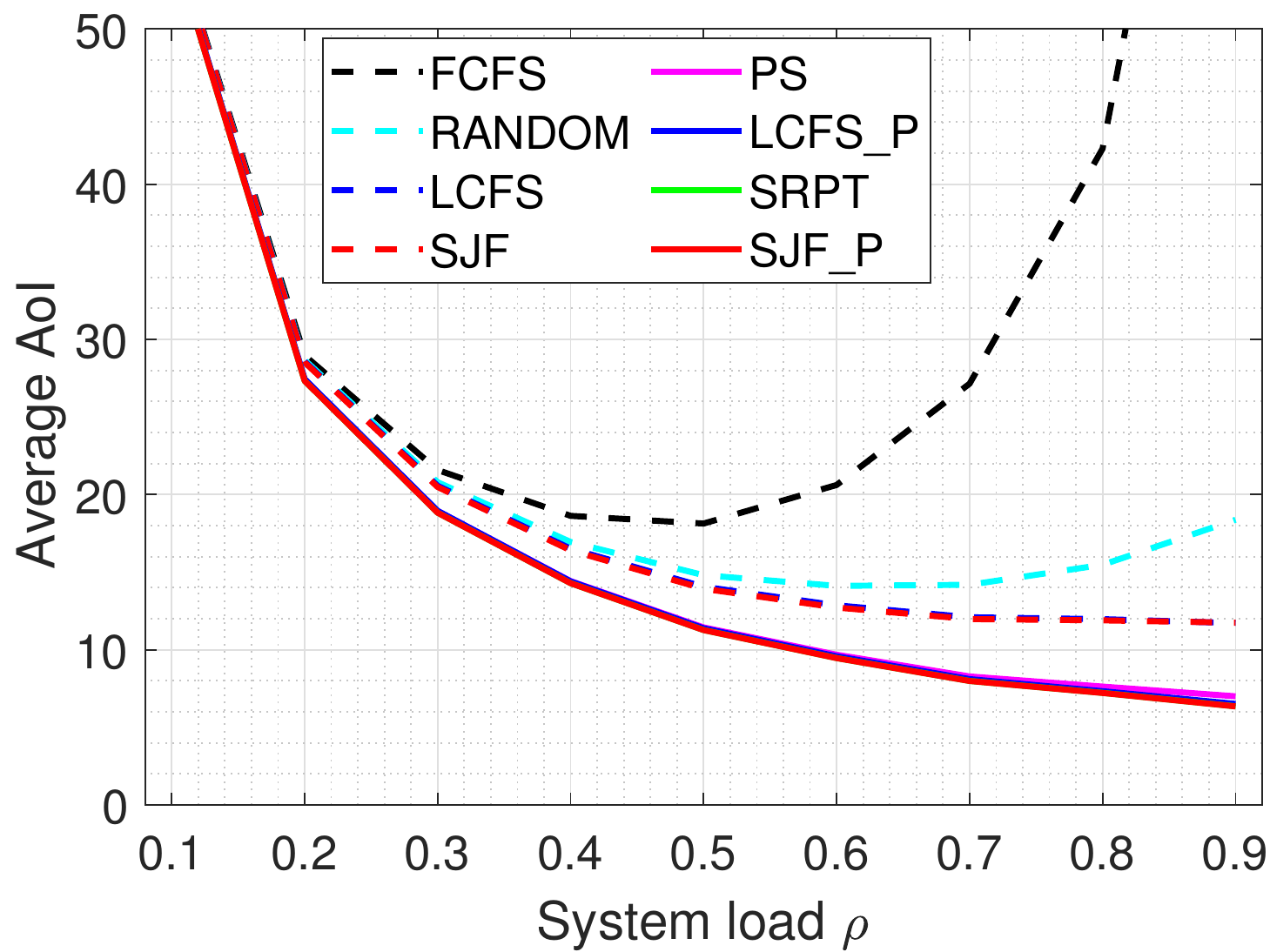}
 	\end{minipage}}
 	\hspace{0.5em}
	\subfigure[Interarrival time: Weibull (${C^{\rm{2}}}{\rm{ = 10}}$);  
	Update size: Weibull ($\mu=1$ and $\rho=0.7$)]{
	\begin{minipage}[t]{0.312\linewidth}
	  \label{fig:tradition-wei-wei-variance-AoI} 
		\includegraphics[width=1\textwidth]{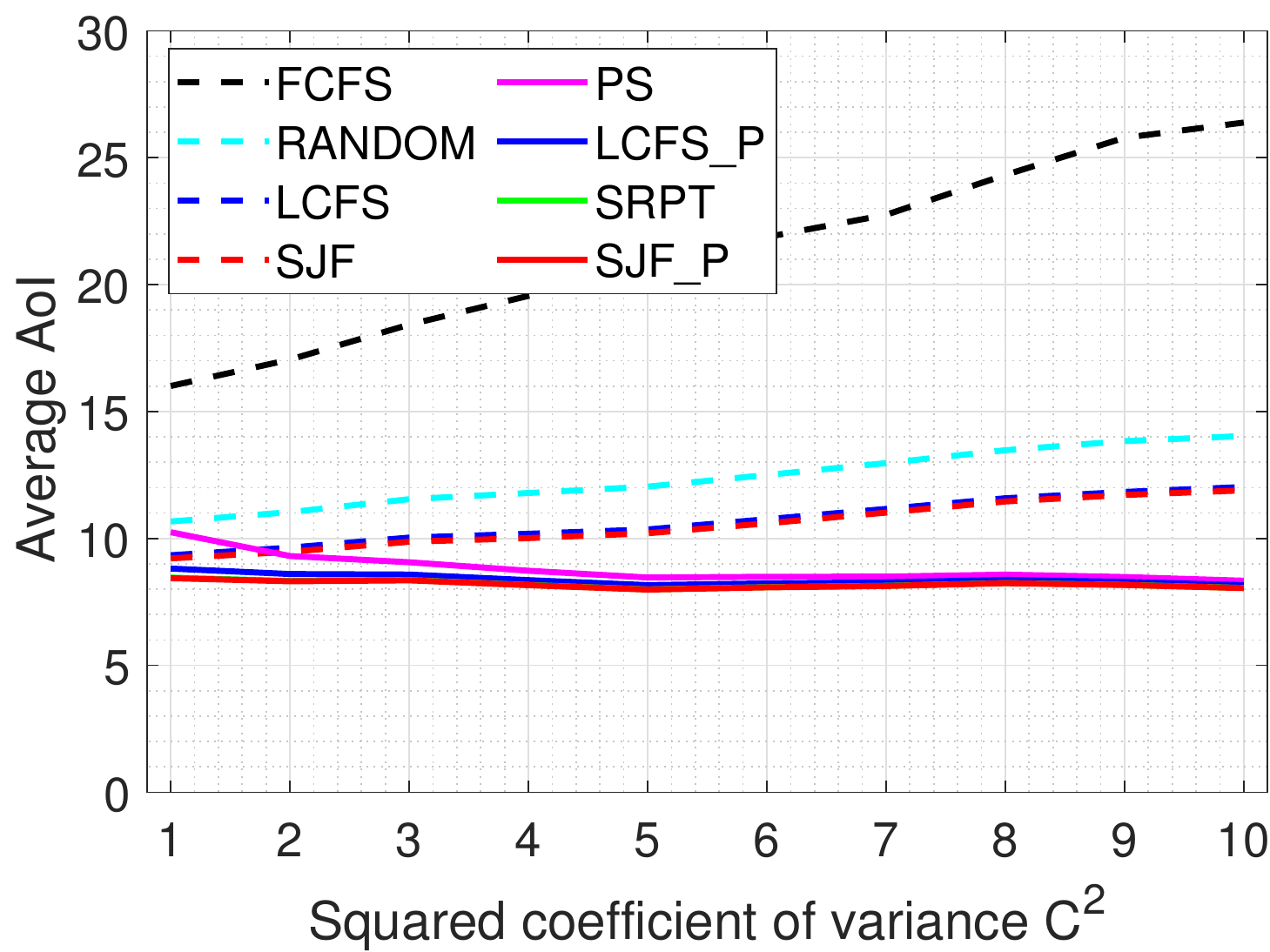}
 	\end{minipage}}
 	
	\subfigure[Interarrival time: Gamma;  
     Update size: Gamma ($\mu=1$)]{
		\label{fig:tradition-gam-gam-AoI} 
		\includegraphics[width=0.312\textwidth]{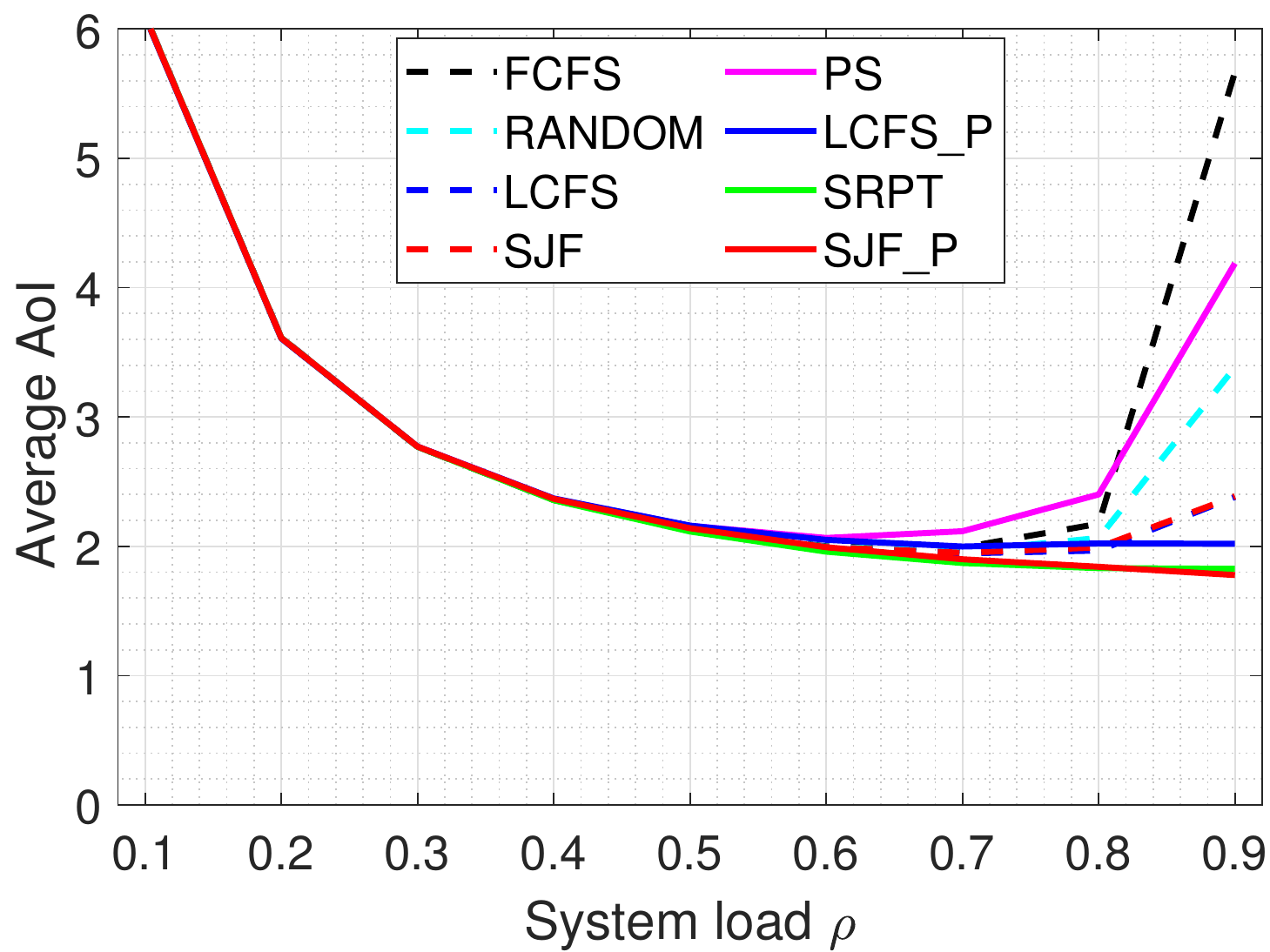}}	
	\hspace{0.5em}	
	\subfigure[Interarrival time: Log-normal;  
     Update size: Log-normal ($\mu=1$)]{
		\label{fig:tradition-log-log-AoI} 
		\includegraphics[width=0.312\textwidth]{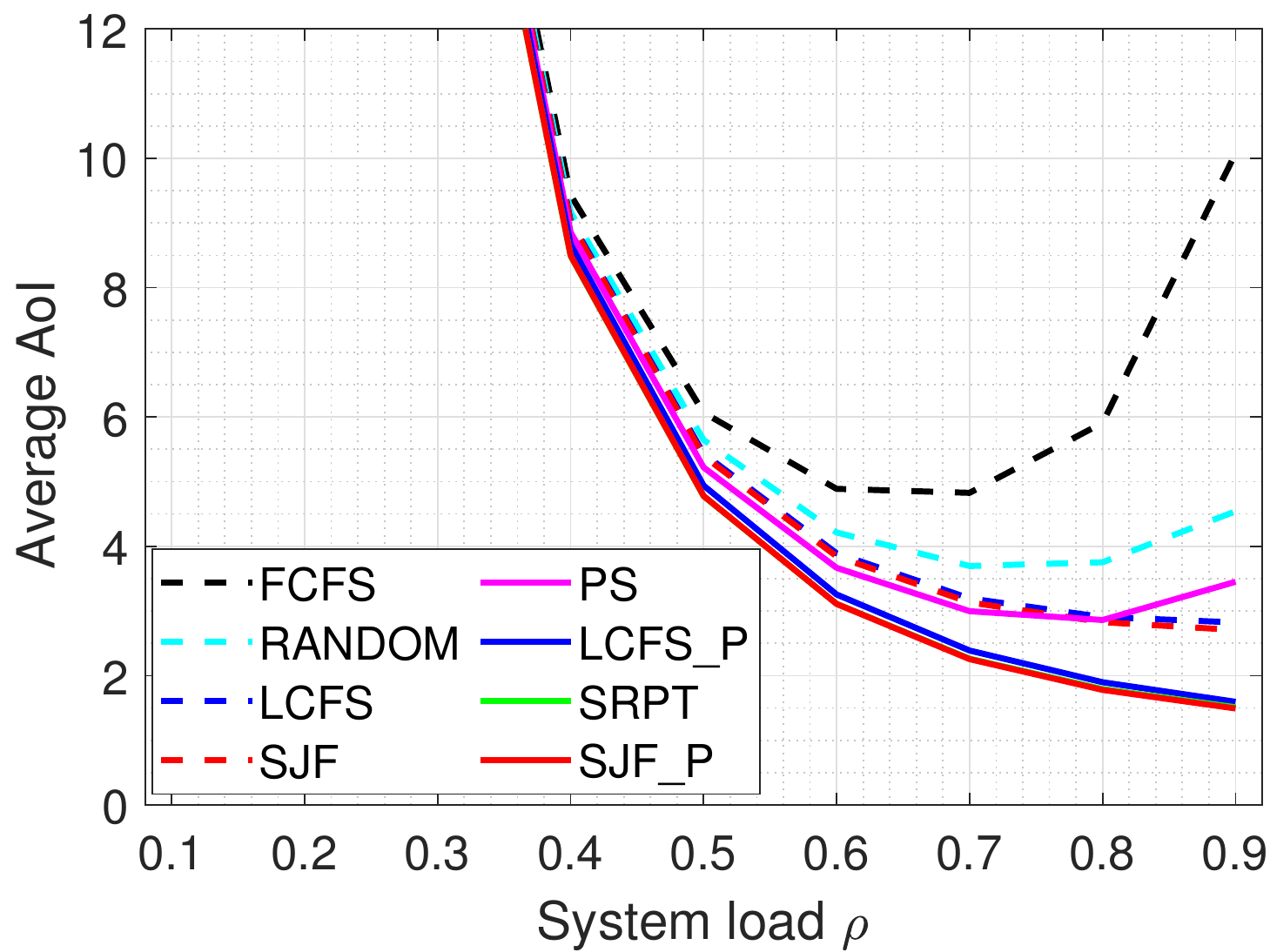}}	
	\hspace{0.5em}
	\subfigure[Interarrival time: Pareto;  
     Update size: Pareto ($\mu=1$)]{
		\label{fig:tradition-par-par-AoI} 
		\includegraphics[width=0.312\textwidth]{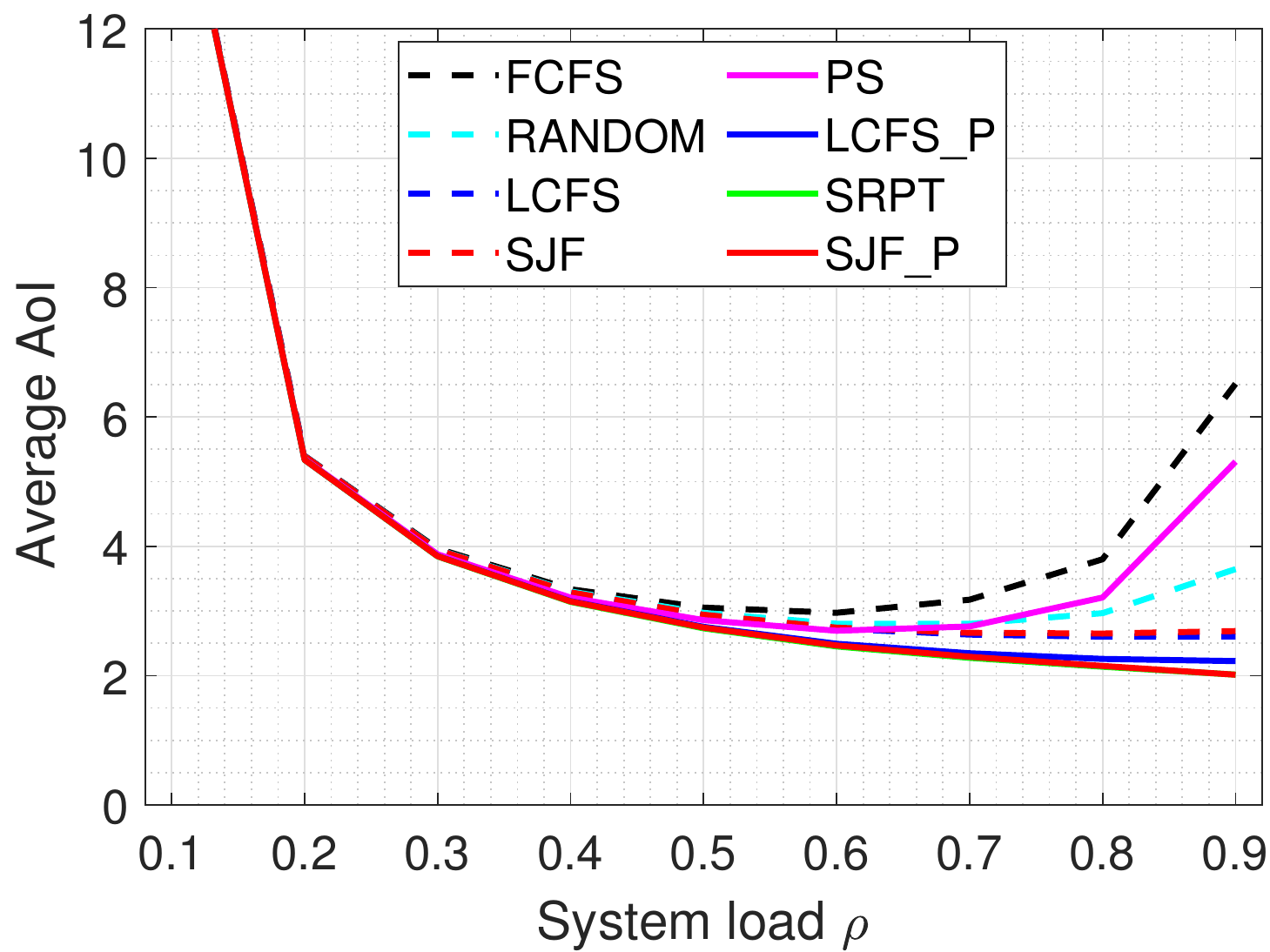}}	
	\caption{Comparisons of the average AoI performances of several common scheduling policies under different distributions}
	\label{fig:tradition-AoI-2}
	\vspace{-5pt}
\end{figure*}

\begin{figure*}[!t]
    \centering
    \setlength{\abovecaptionskip}{-1pt}
    \subfigcapskip=-1pt
     \subfigure[Interarrival time: Weibull (${C^{\rm{2}}}{\rm{ = 10}}$); 
     Update size: Exponential ($\mu=1$)]{
		\label{fig:tradition-wei-exp-PAoI} 
		\includegraphics[width=0.312\textwidth]{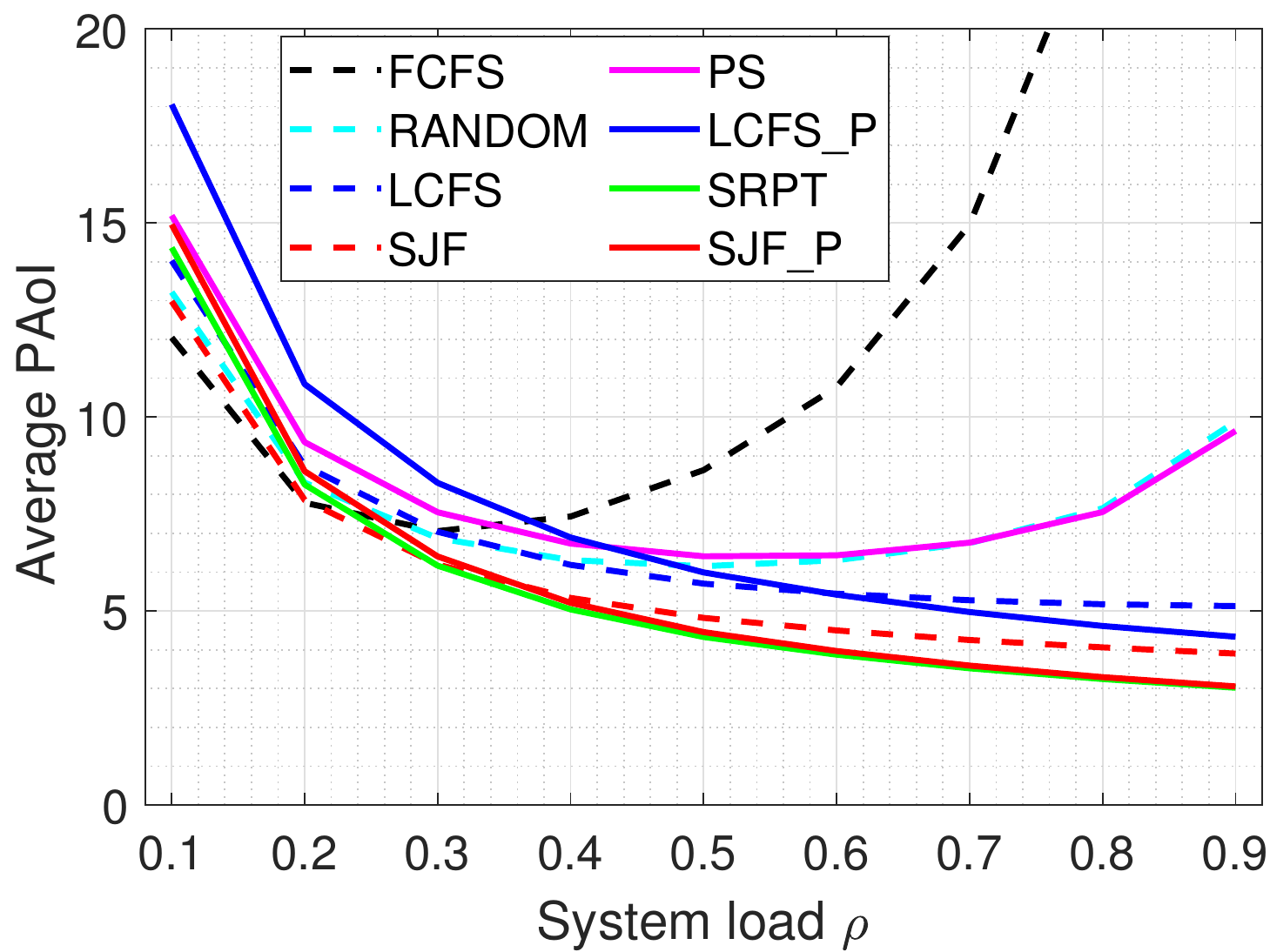}}
	\hspace{0.45em}	
	\subfigure[Interarrival time: Weibull (${C^{\rm{2}}}{\rm{ = 10}}$);
	Update size: Weibull ($\mu=1$ and ${C^{\rm{2}}}{\rm{ = 10}}$)]{
		\label{fig:tradition-wei-wei-PAoI} 
		\includegraphics[width=0.312\textwidth]{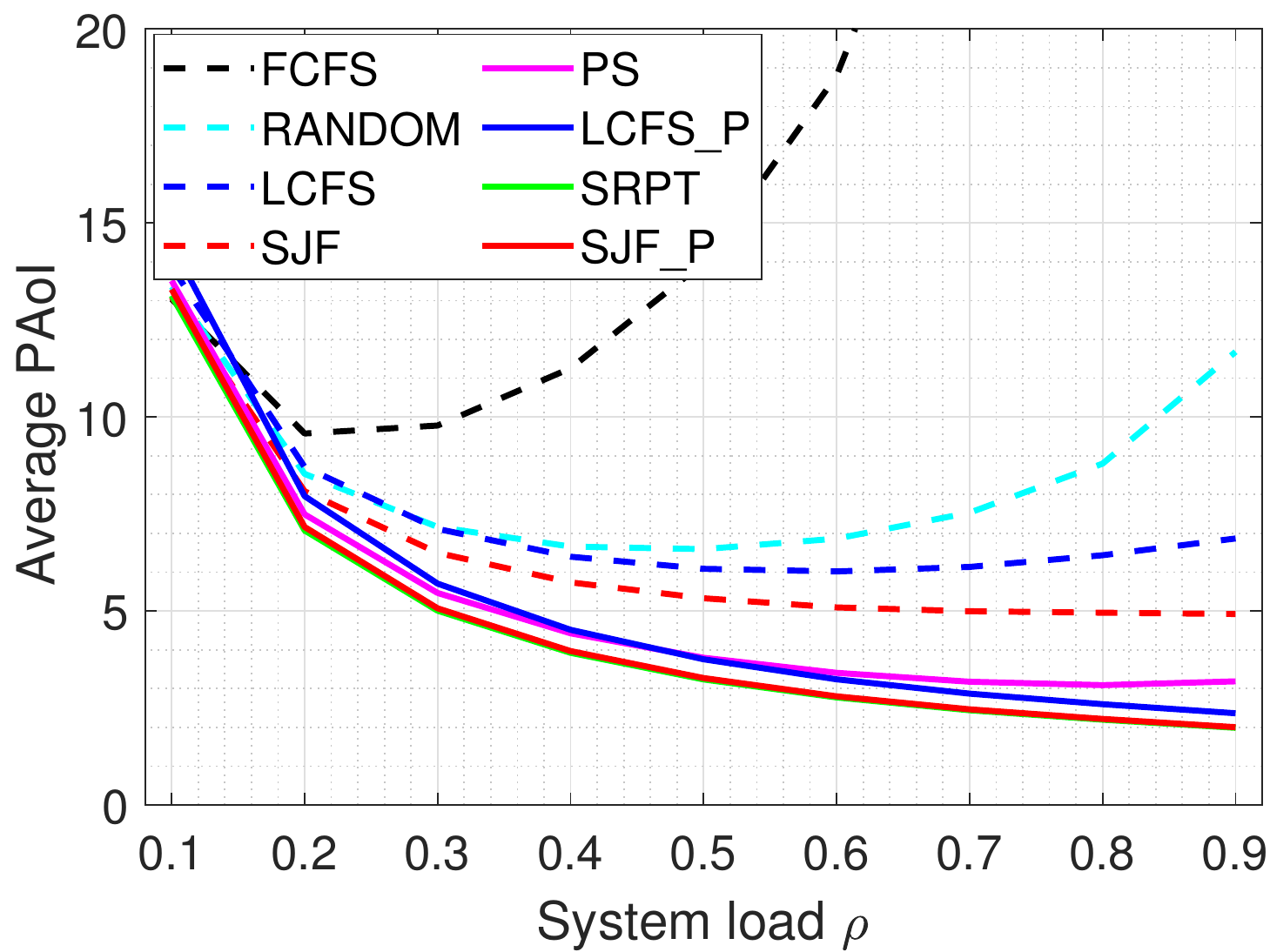}}
	\hspace{0.5em}	
	\subfigure[Interarrival time: Weibull (${C^{\rm{2}}}{\rm{ = 10}}$);  
	Update size: Weibull ($\mu=1$ and $\rho=0.7$)]{
		\label{fig:tradition-wei-wei-variance-PAoI} 
		\includegraphics[width=0.312\textwidth]{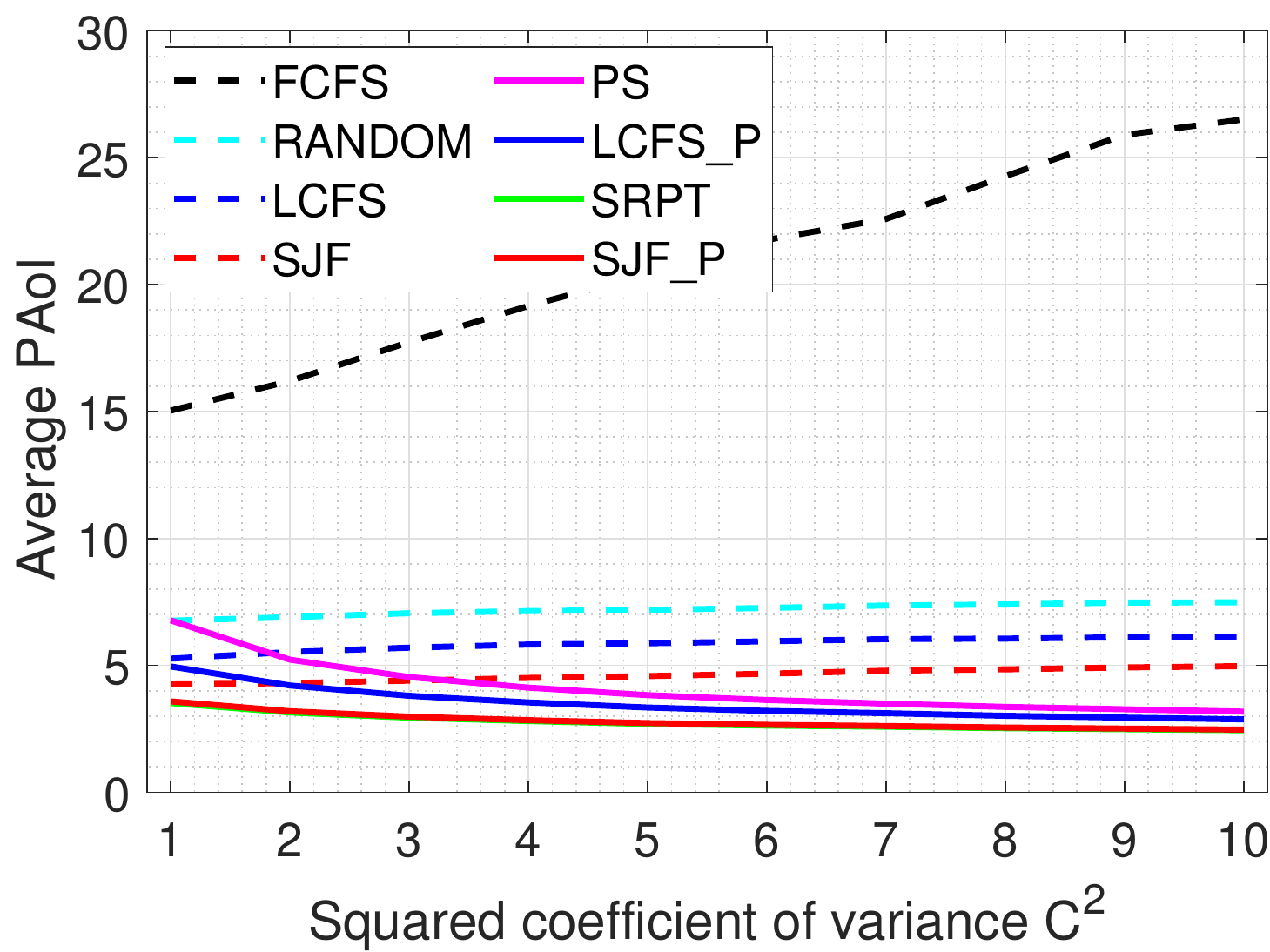}}
	\subfigure[Interarrival time: Gamma; 
     Update size: Gamma ($\mu=1$)]{
		\label{fig:tradition-gam-gam-PAoI} 
		\includegraphics[width=0.312\textwidth]{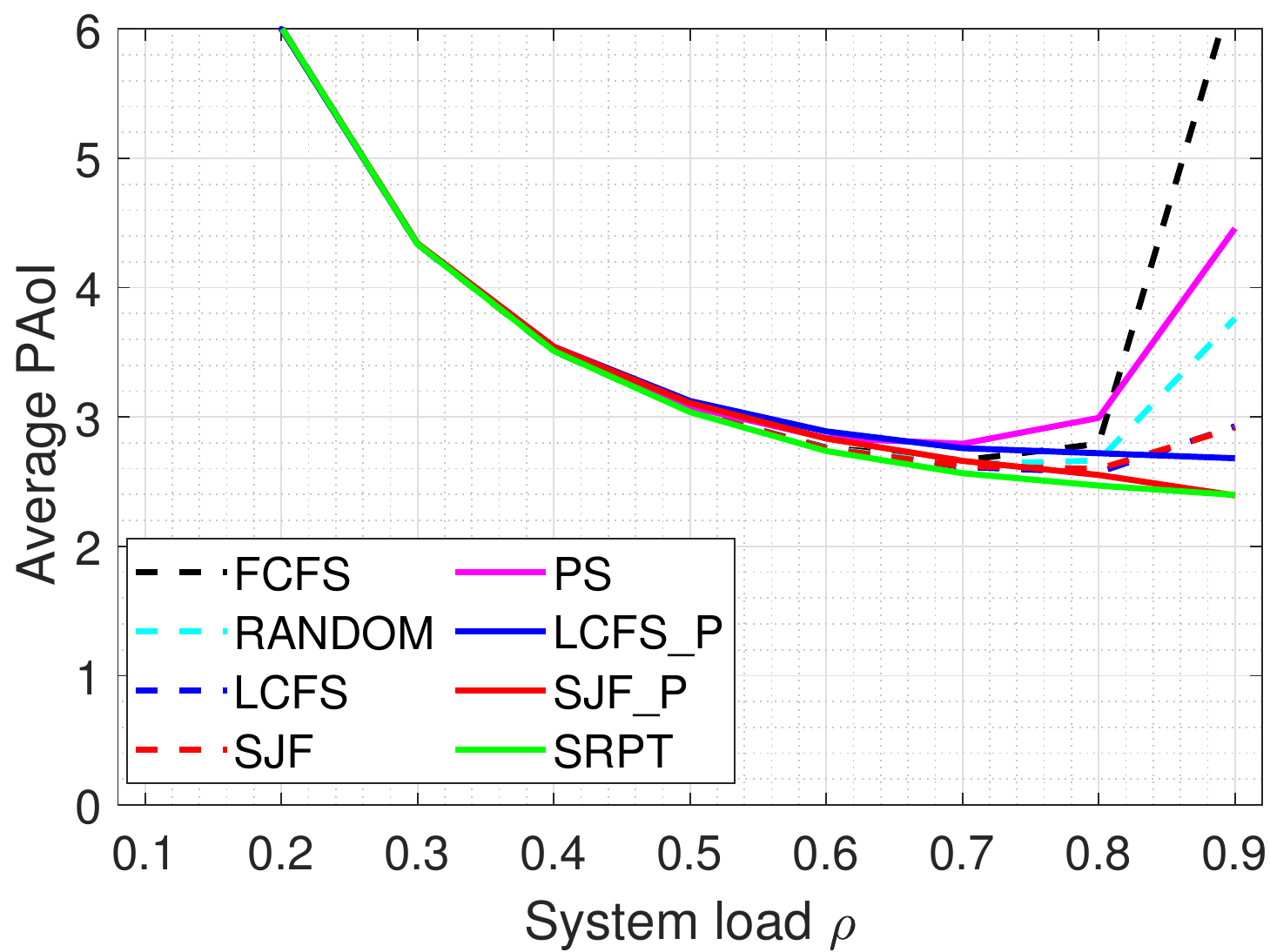}}
	\hspace{0.5em}	
	\subfigure[Interarrival time: Log-normal;  
     Update size: Log-normal ($\mu=1$)]{
		\label{fig:tradition-log-log-PAoI} 
		\includegraphics[width=0.312\textwidth]{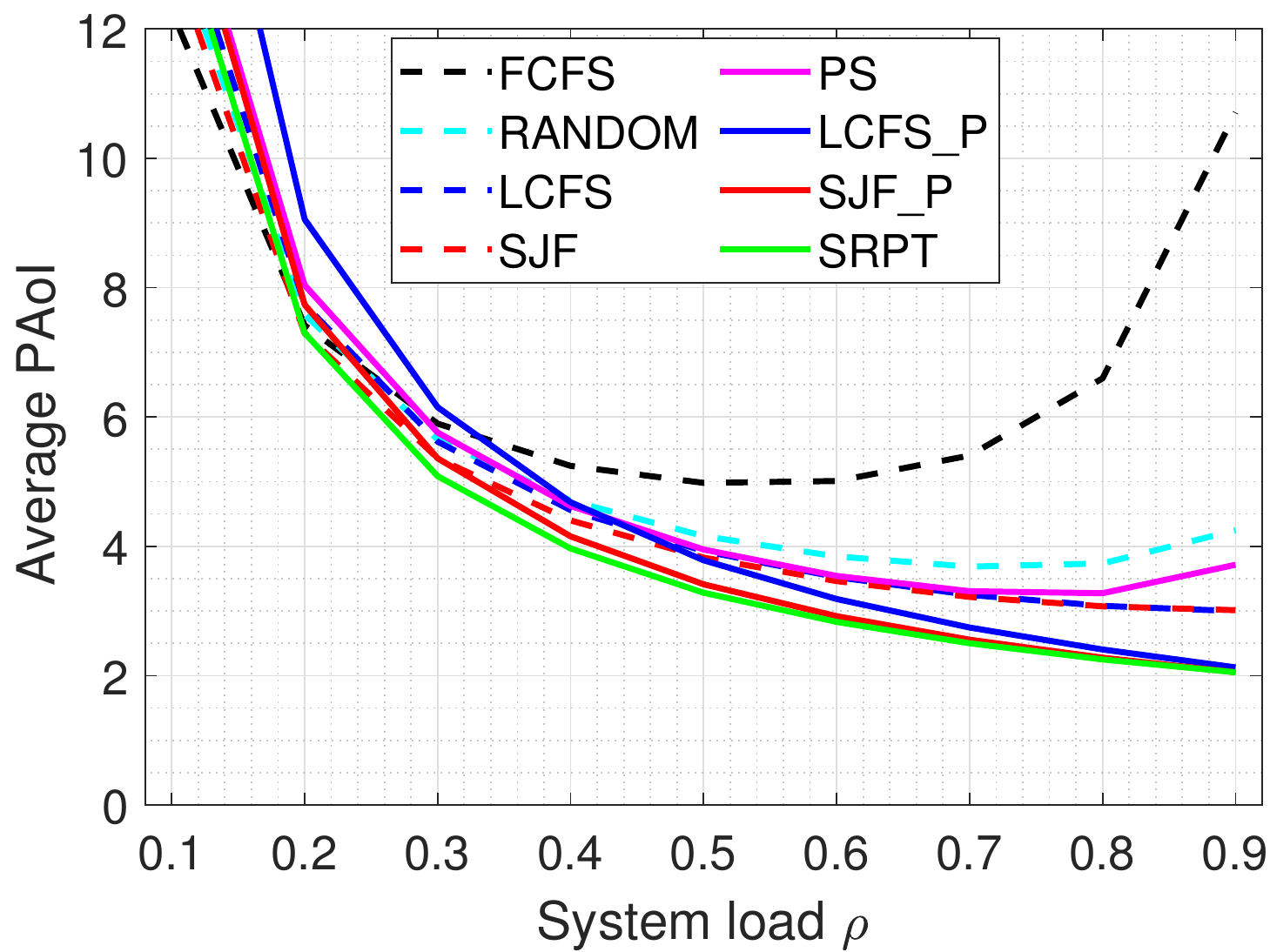}}	
	\hspace{0.5em}	
	\subfigure[Interarrival time: Pareto; 
     Update size: Pareto ($\mu=1$)]{
		\label{fig:tradition-par-par-PAoI} 
		\includegraphics[width=0.312\textwidth]{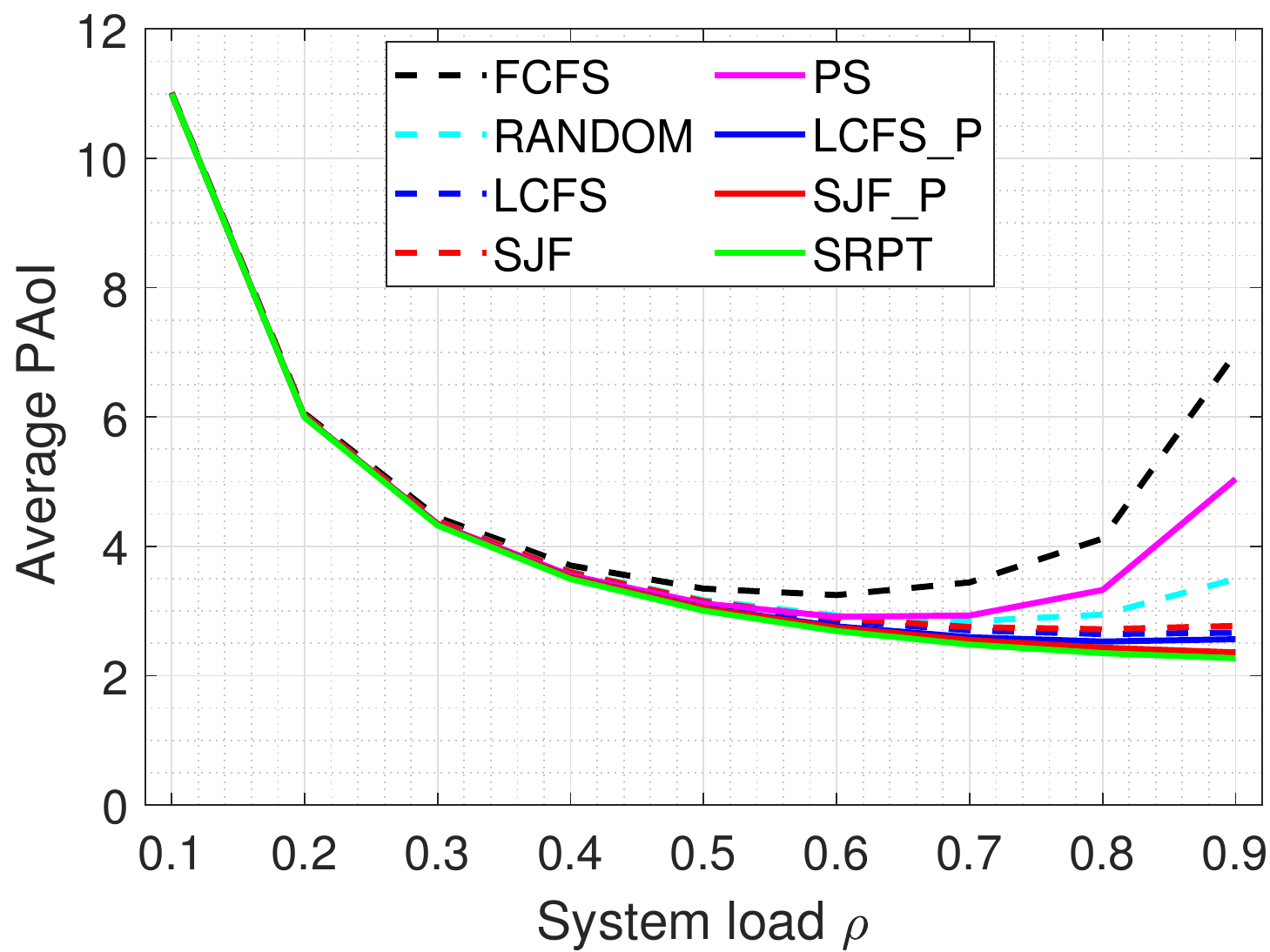}}	
	\caption{Comparisons of the average PAoI performances of several common scheduling policies under different distributions}
	\label{fig:tradition-PAoI-2}
	\vspace{-5pt}
\end{figure*}

\begin{figure*}[!t]
    \centering
    \setlength{\abovecaptionskip}{-1pt}
    \subfigcapskip=-1pt
    \subfigure[Interarrival time: Weibull (${C^{\rm{2}}}{\rm{ = 10}}$);  
    Update size: Exponential ($\mu=1$)]{
		\label{fig:aoi-based-wei-exp-AoI} 
		\includegraphics[width=0.31\textwidth]{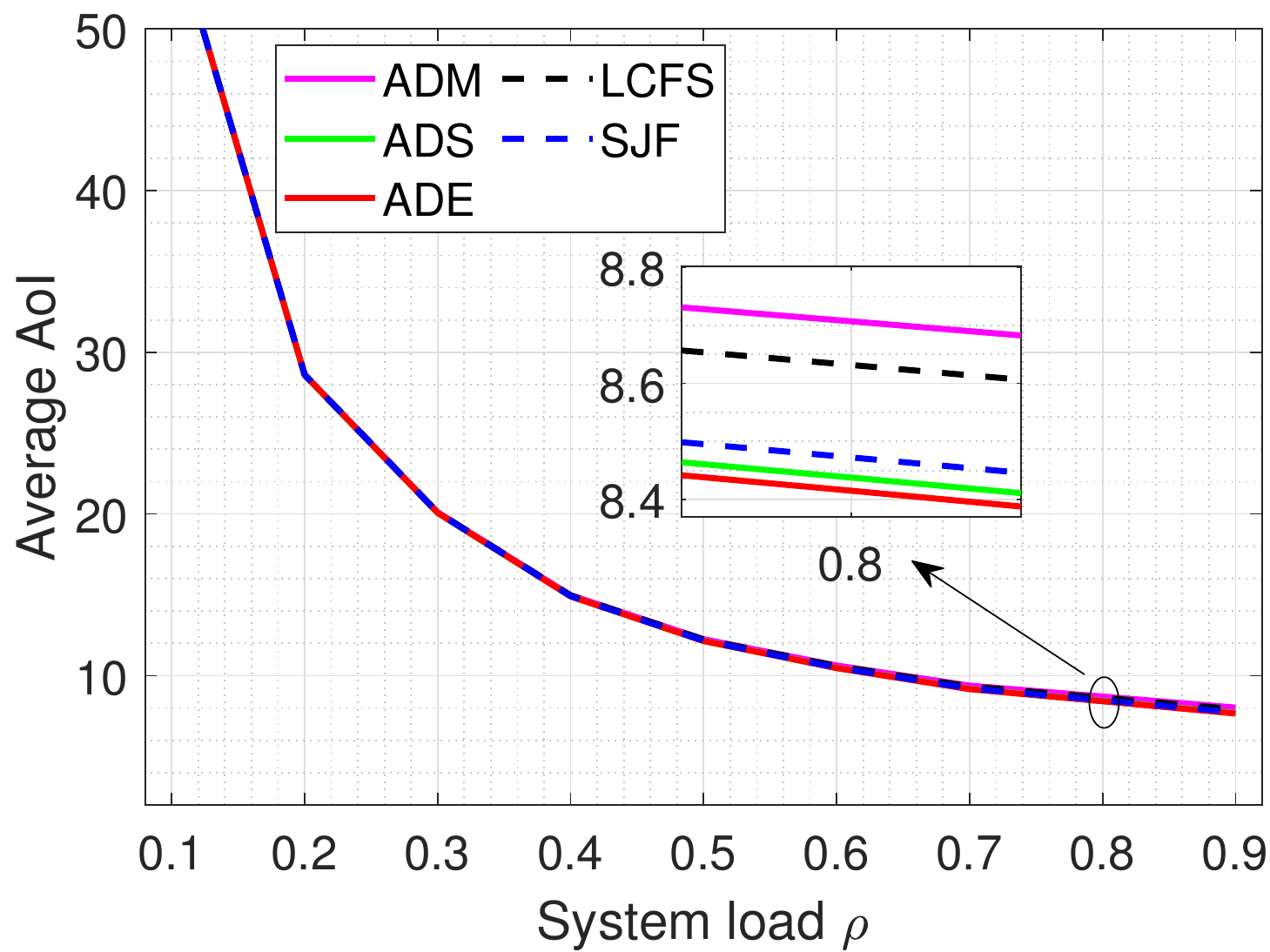}}
	\hspace{0.5em}	
	\subfigure[Interarrival time: Weibull (${C^{\rm{2}}}{\rm{ = 10}}$);  
	Update size: Weibull ($\mu=1$ and ${C^{\rm{2}}}{\rm{ = 10}}$)]{
		\label{fig:aoi-based-wei-wei-AoI} 
		\includegraphics[width=0.31\textwidth]{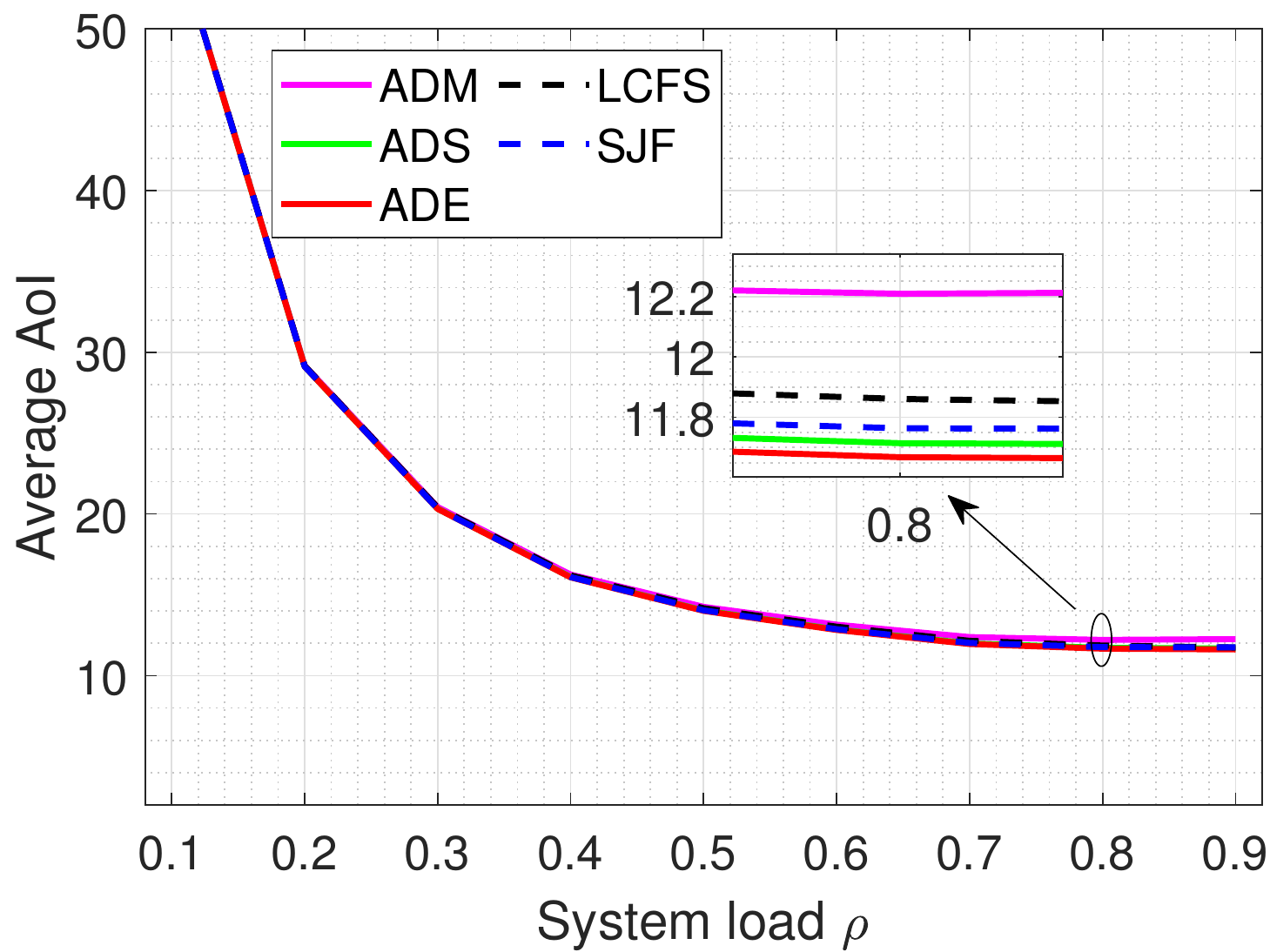}}
	\hspace{0.5em}	
	\subfigure[Interarrival time: Weibull (${C^{\rm{2}}}{\rm{ = 10}}$);  
	Update size: Weibull ($\mu=1$ and $\rho=0.7$)]{
		\label{fig:area-wei-wei-variance-aoi} 
		\includegraphics[width=0.31\textwidth]{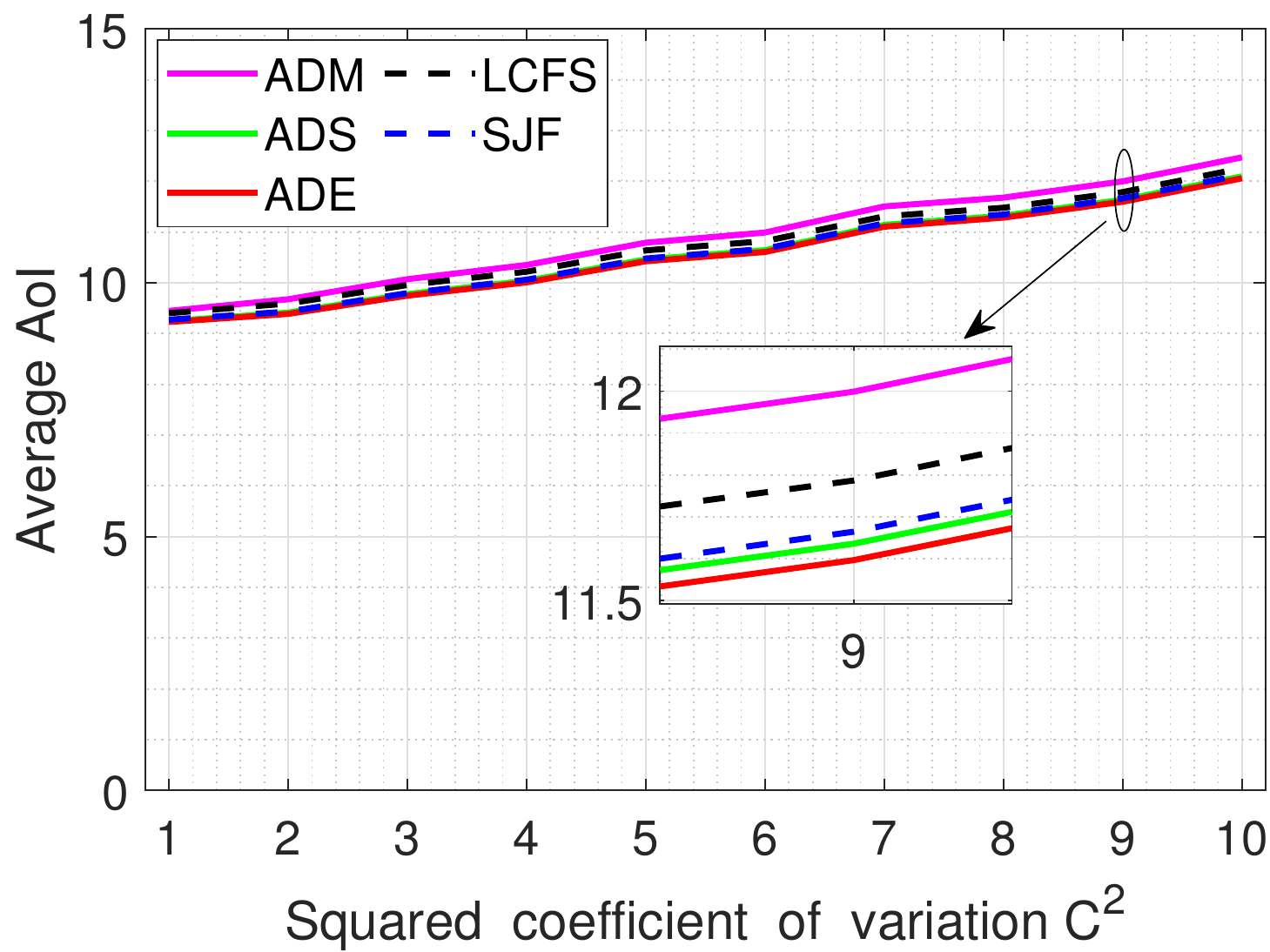}}
	\subfigure[Interarrival time: Gamma;  
     Update size: Gamma ($\mu=1$)]{
		\label{fig:gam-gam-AoI-age} 
		\includegraphics[width=0.312\textwidth]{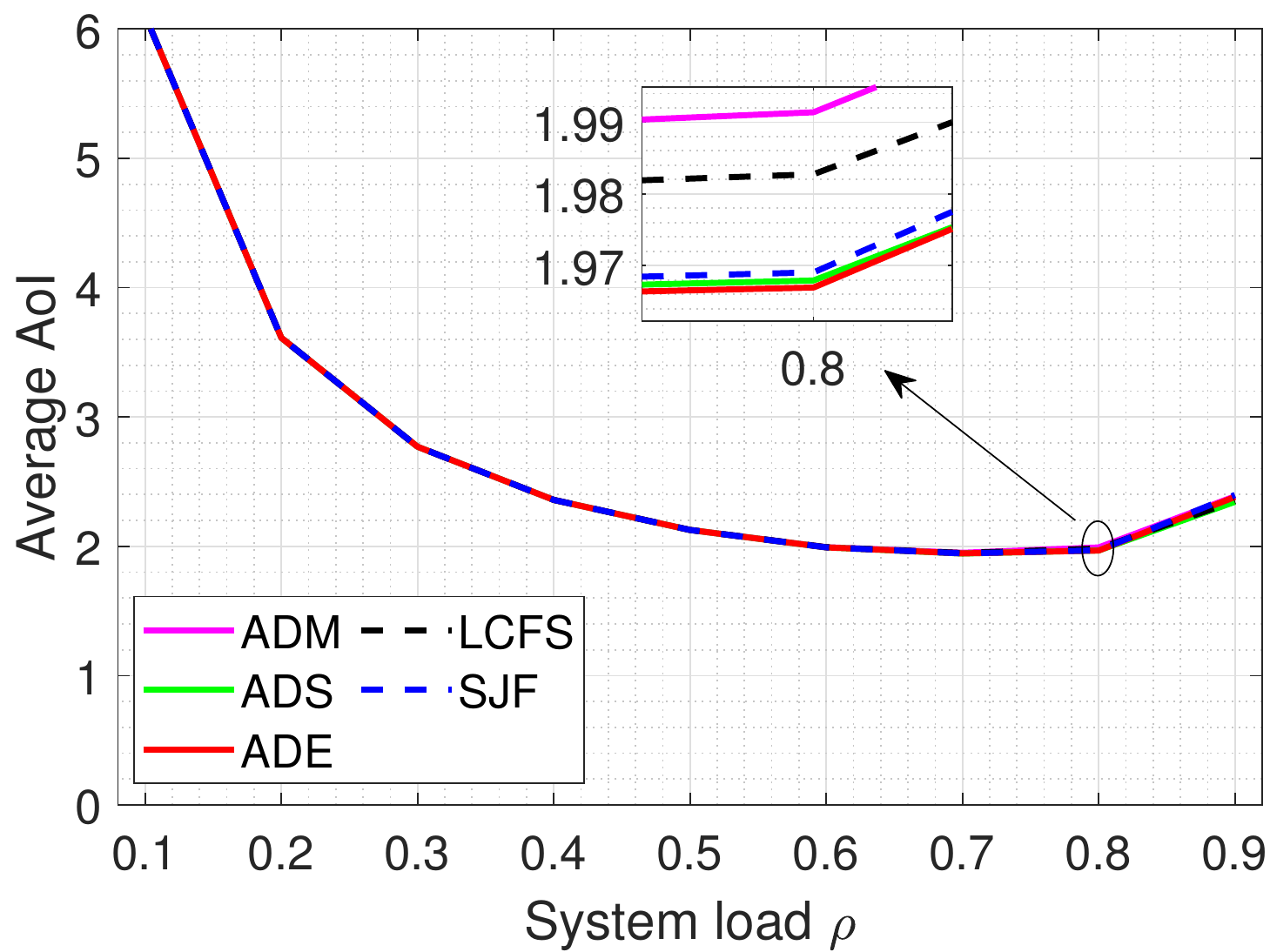}}	
	\hspace{0.5em}	
	\subfigure[Interarrival time: Log-normal;  
     Update size: Log-normal ($\mu=1$)]{
		\label{fig:log-log-AoI-age} 
		\includegraphics[width=0.312\textwidth]{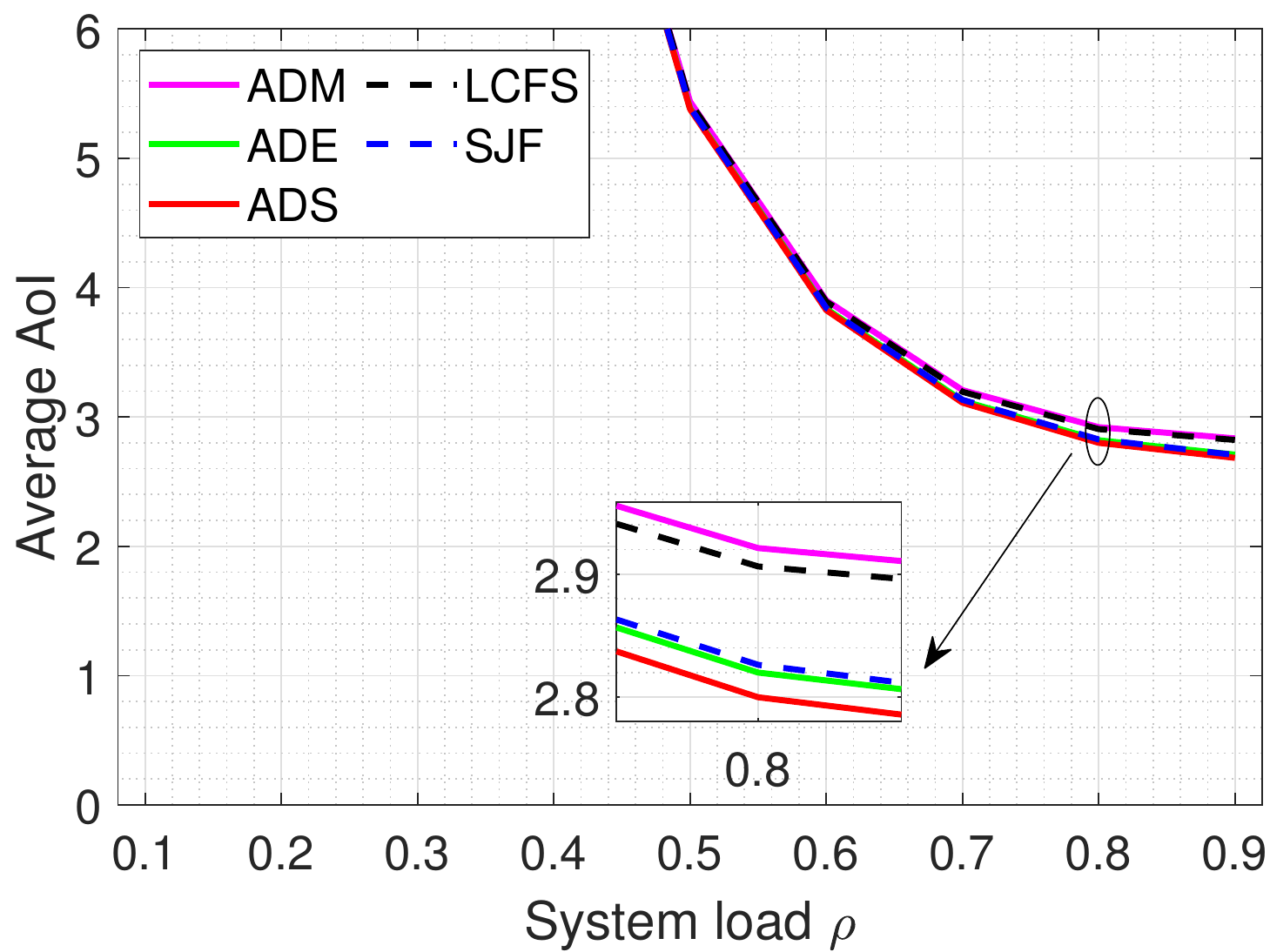}}	
	\hspace{0.5em}	
	\subfigure[Interarrival time: Pareto;  
     Update size: Pareto ($\mu=1$)]{
		\label{fig:par-par-aoi-agebased} 
		\includegraphics[width=0.312\textwidth]{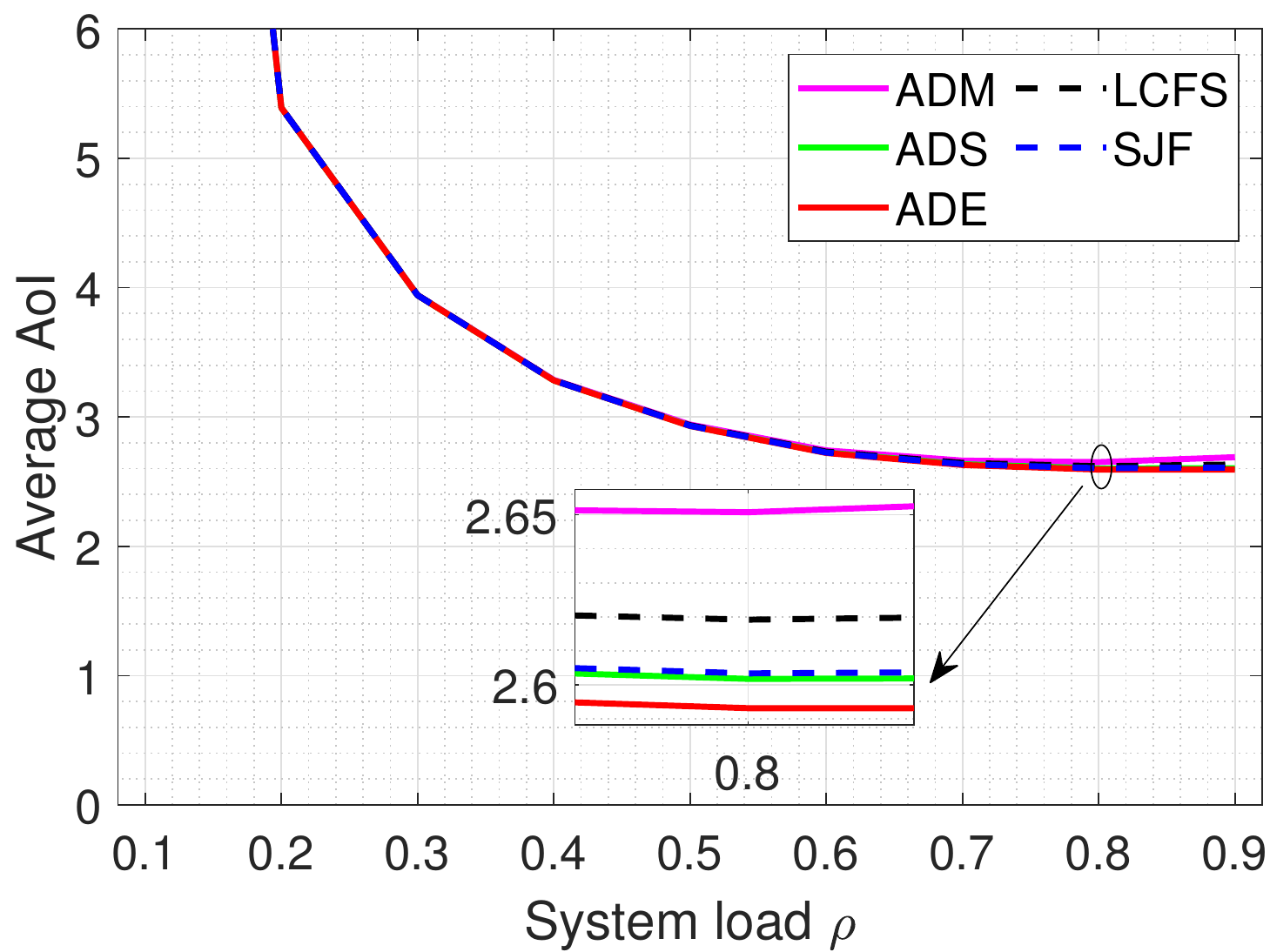}}		
	\caption{Comparisons of the average AoI performance under different distributions: AoI-based policies vs. non-AoI-based policies}
	\label{fig:aoi-based-AoI}
	\vspace{-5pt}
\end{figure*}

\begin{figure*}[!t]
    \centering
    \setlength{\abovecaptionskip}{-1pt}
    \subfigcapskip=-1pt
    \subfigure[Interarrival time: Weibull (${C^{\rm{2}}}{\rm{ = 10}}$);  
    Update size: Exponential ($\mu=1$) ]{
		\label{fig:aoi-based-wei-exp-PAoI} 
		\includegraphics[width=0.312\textwidth]{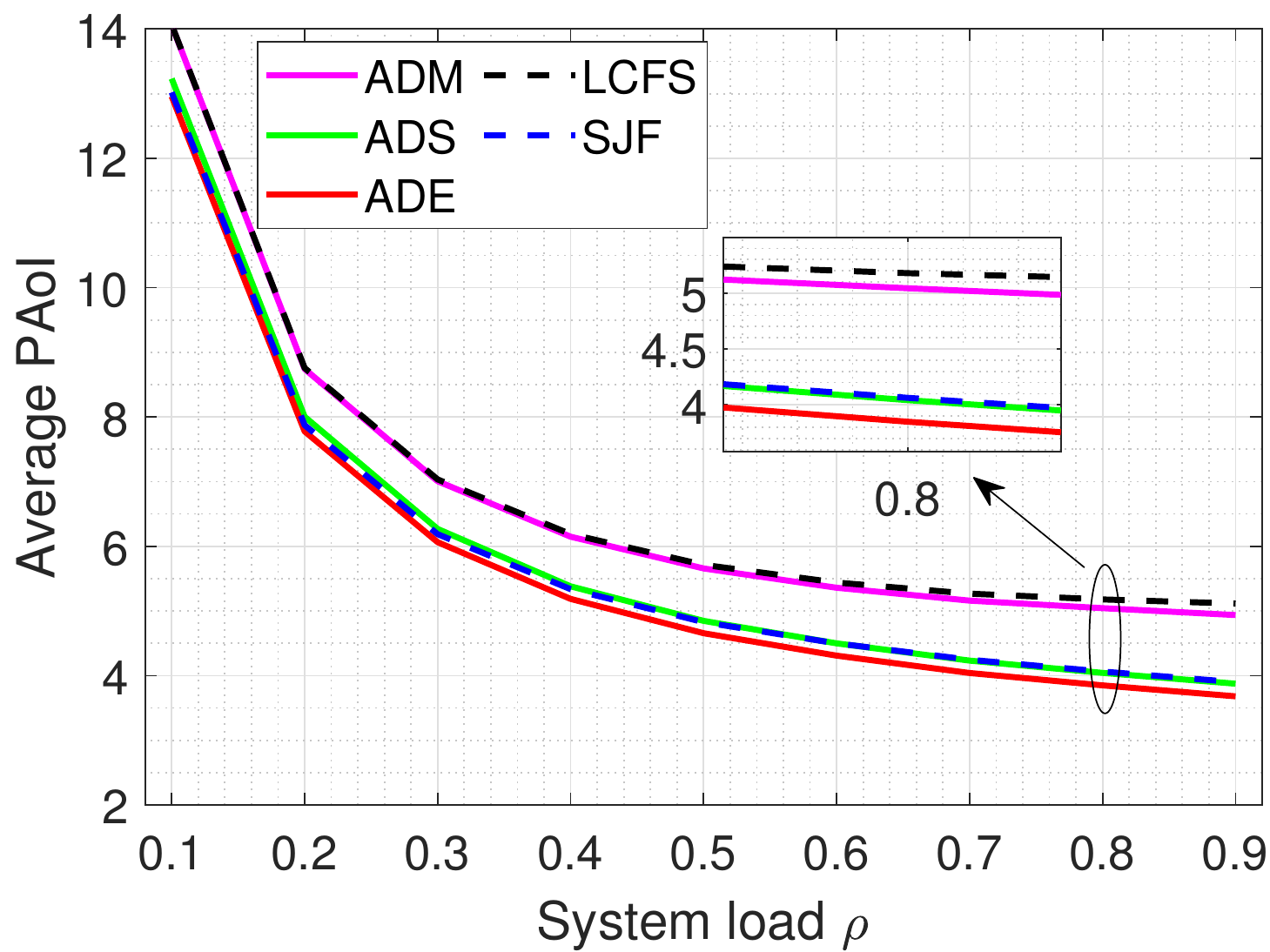}}
	\hspace{0.45em}	
	\subfigure[Interarrival time: Weibull (${C^{\rm{2}}}{\rm{ = 10}}$);  
	Update size: Weibull ($\mu=1$ and ${C^{\rm{2}}}{\rm{ = 10}}$)]{
		\label{fig:aoi-based-wei-wei-PAoI} 
		\includegraphics[width=0.312\textwidth]{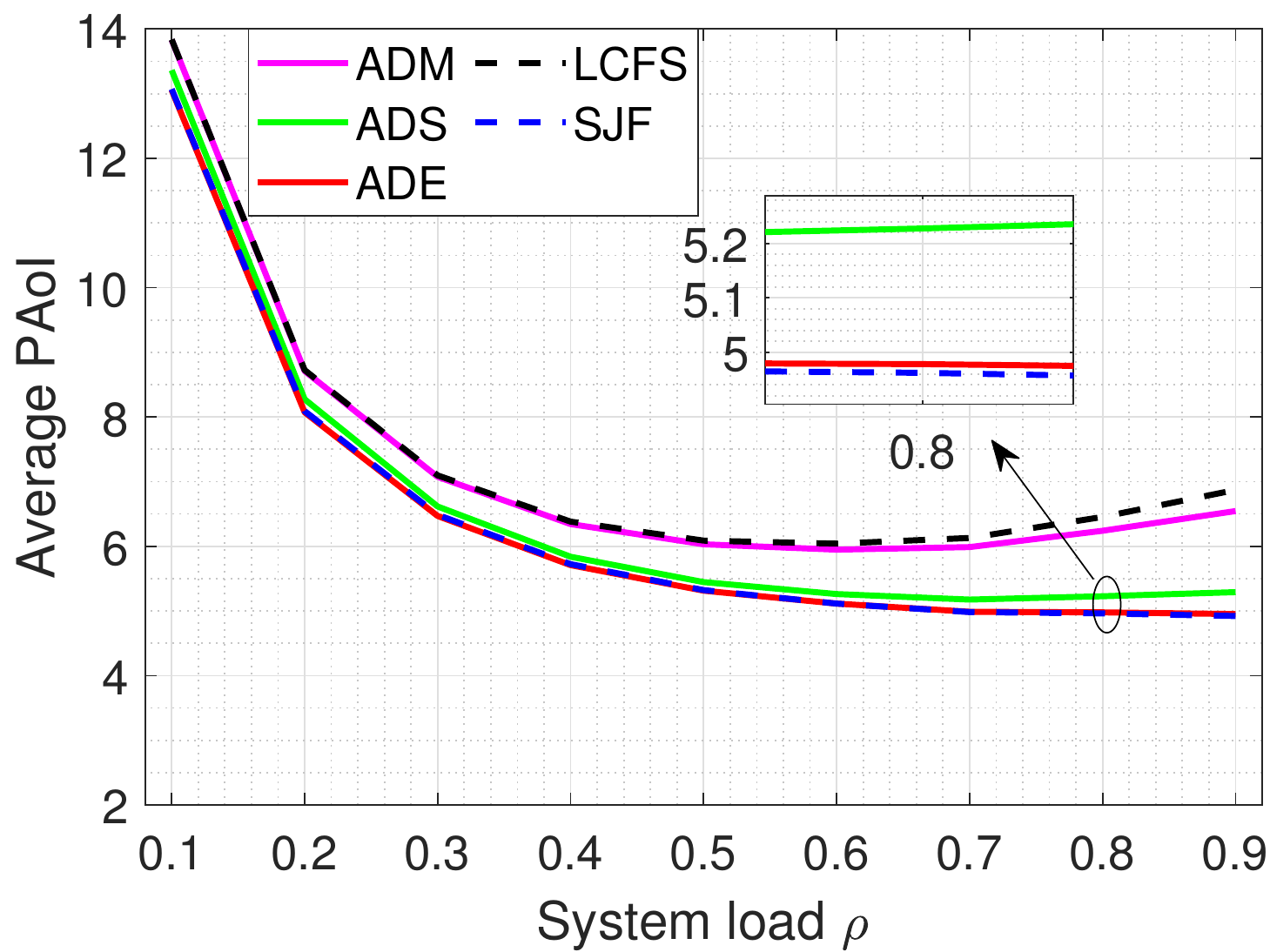}}
	\hspace{0.5em}	
	\subfigure[Interarrival time: Weibull (${C^{\rm{2}}}{\rm{ = 10}}$); 
	Update size: Weibull ($\mu=1$ and $\rho=0.7$)]{
		\label{fig:area-wei-wei-variance-paoi} 
		\includegraphics[width=0.312\textwidth]{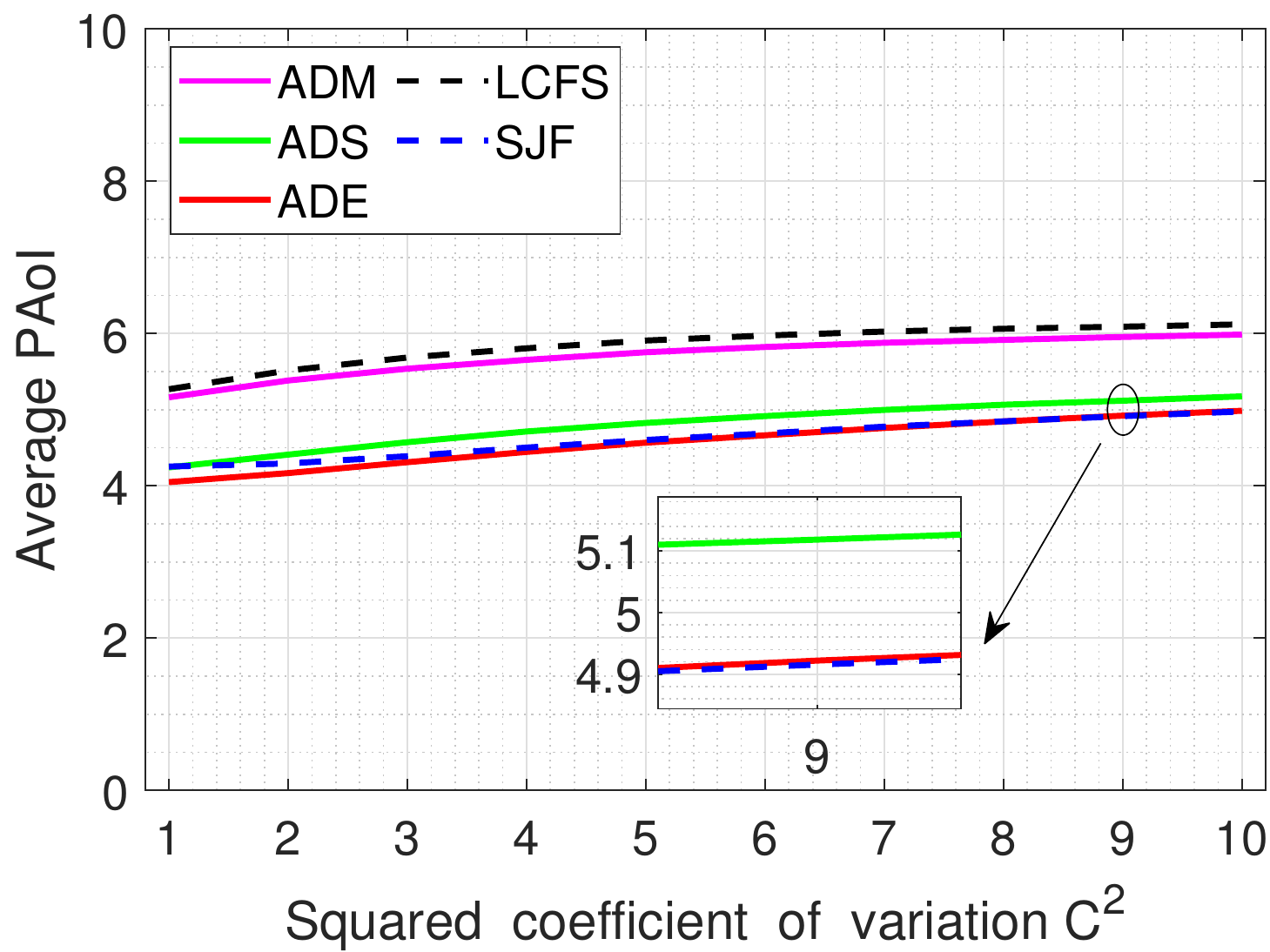}}	
	\subfigure[Interarrival time: Gamma;  
     Update size: Gamma ($\mu=1$)]{
		\label{fig:gam-gam-PAoI-age} 
		\includegraphics[width=0.312\textwidth]{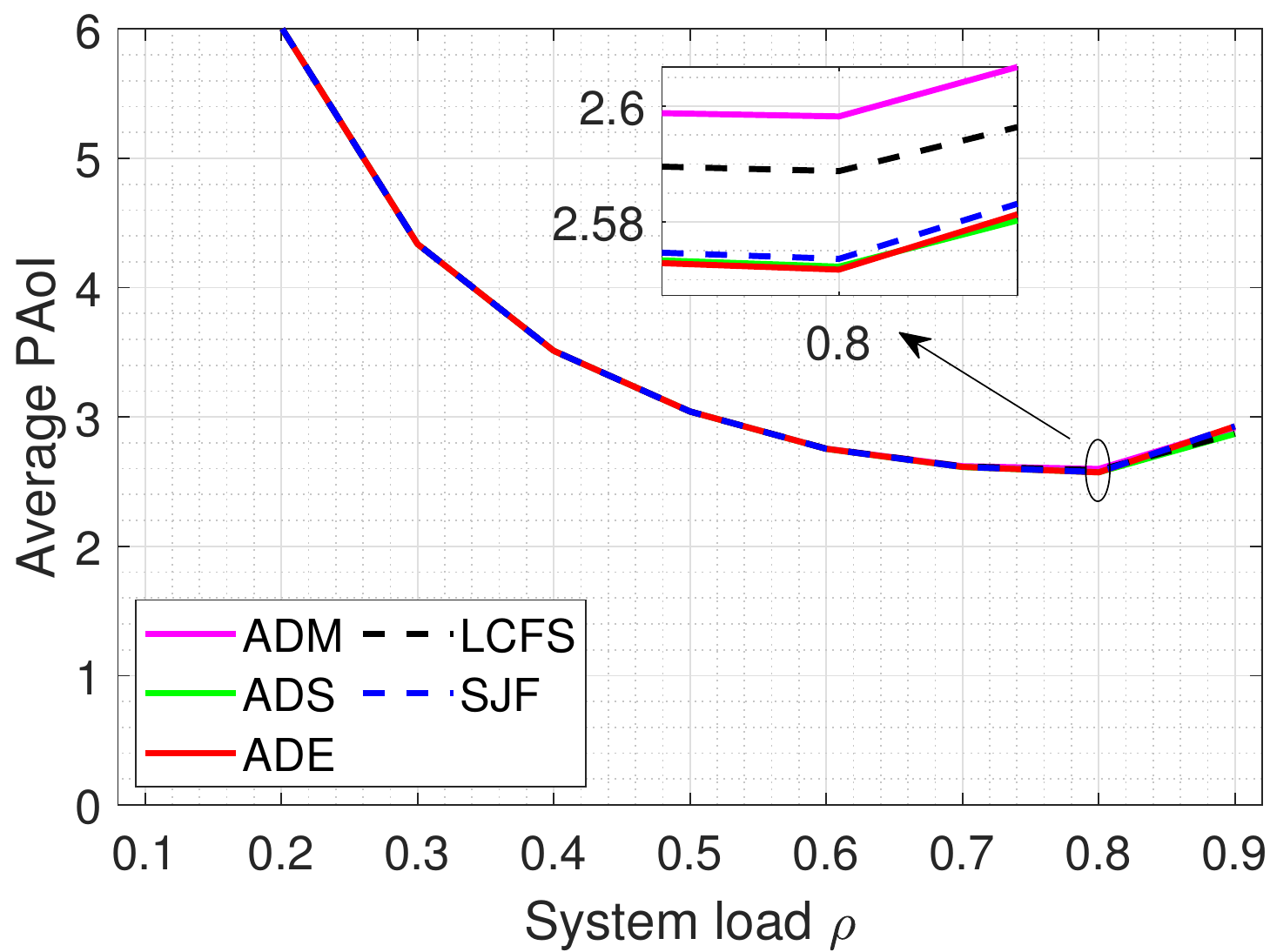}}	
	\hspace{0.5em}	
	\subfigure[Interarrival time: Log-normal;  
     Update size: Log-normal ($\mu=1$)]{
		\label{fig:log-log-PAoI-age} 
		\includegraphics[width=0.312\textwidth]{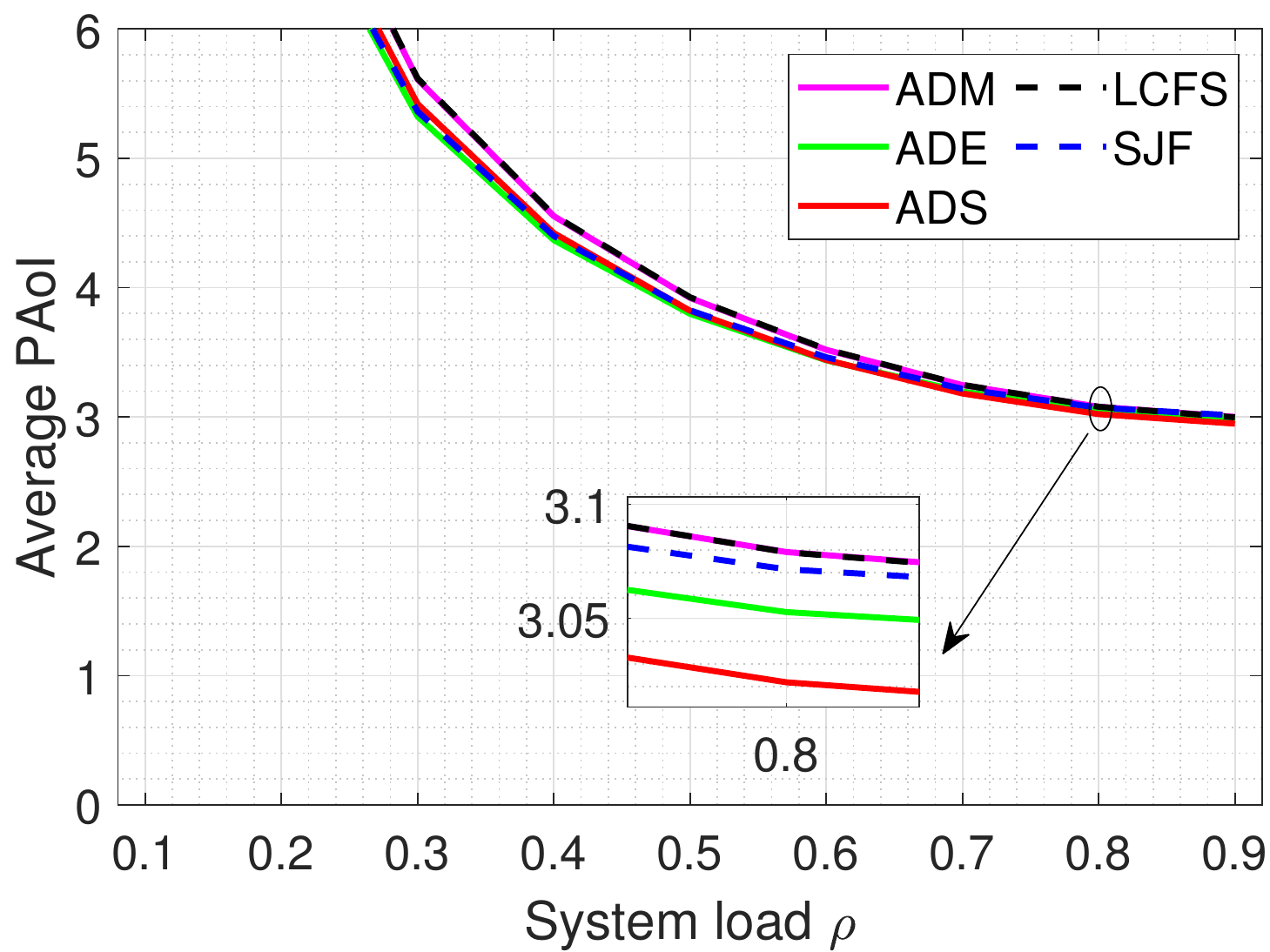}}
	\hspace{0.5em}	
	\subfigure[Interarrival time: Pareto;  
     Update size: Pareto ($\mu=1$)]{
		\label{fig:par-par-paoi-agebased} 
		\includegraphics[width=0.312\textwidth]{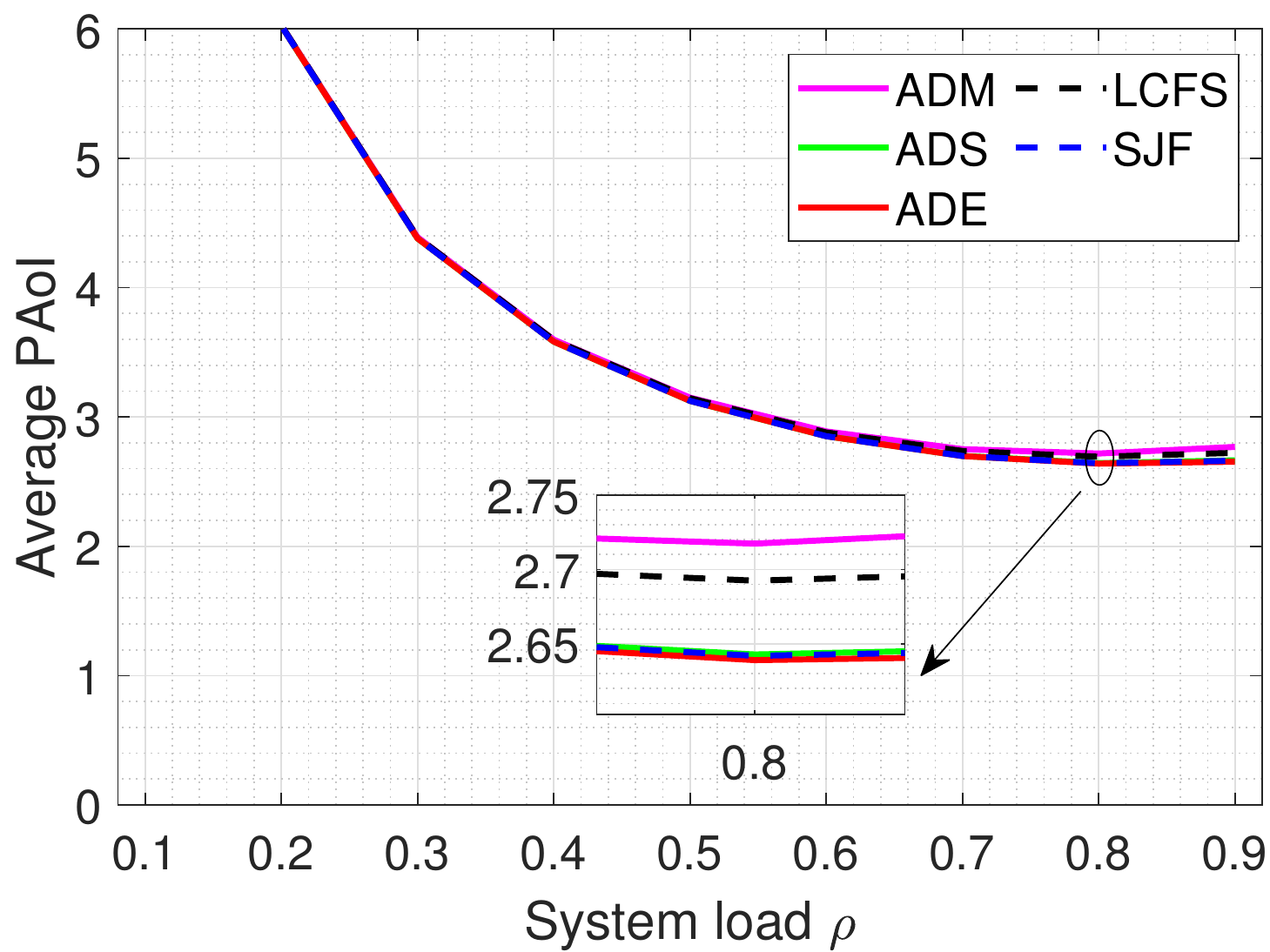}}		
	\caption{Comparisons of the average PAoI performance under different distributions: AoI-based policies vs. non-AoI-based policies}
	\label{fig:aoi-based-PAoI}
	\vspace{-5pt}
\end{figure*}
\begin{figure*}[!t]
    \centering
    \subfigure[Interarrival time: Weibull (${C^{\rm{2}}}{\rm{ = 10}}$); 
    Update size: Exponential ($\mu=1$) ]{
		\label{fig:for-wei-exp-AoI} 
		\includegraphics[width=0.312\textwidth]{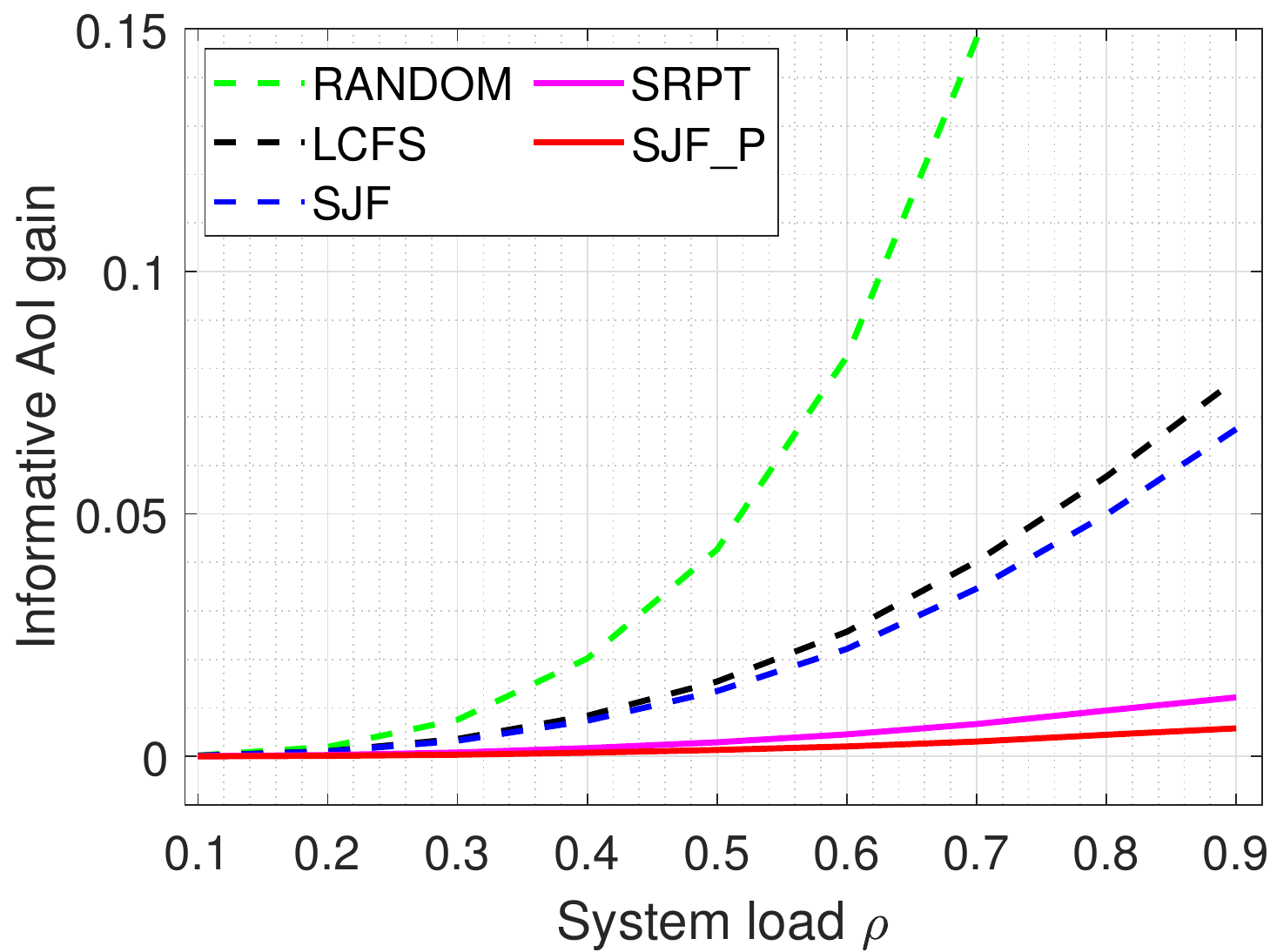}}
	\hspace{0.45em}	
	\subfigure[Interarrival time: Weibull (${C^{\rm{2}}}{\rm{ = 10}}$);  
	Update size: Weibull ($\mu=1$ and ${C^{\rm{2}}}{\rm{ = 10}}$)]{
		\label{fig:infor-wei-wei-AoI} 
		\includegraphics[width=0.312\textwidth]{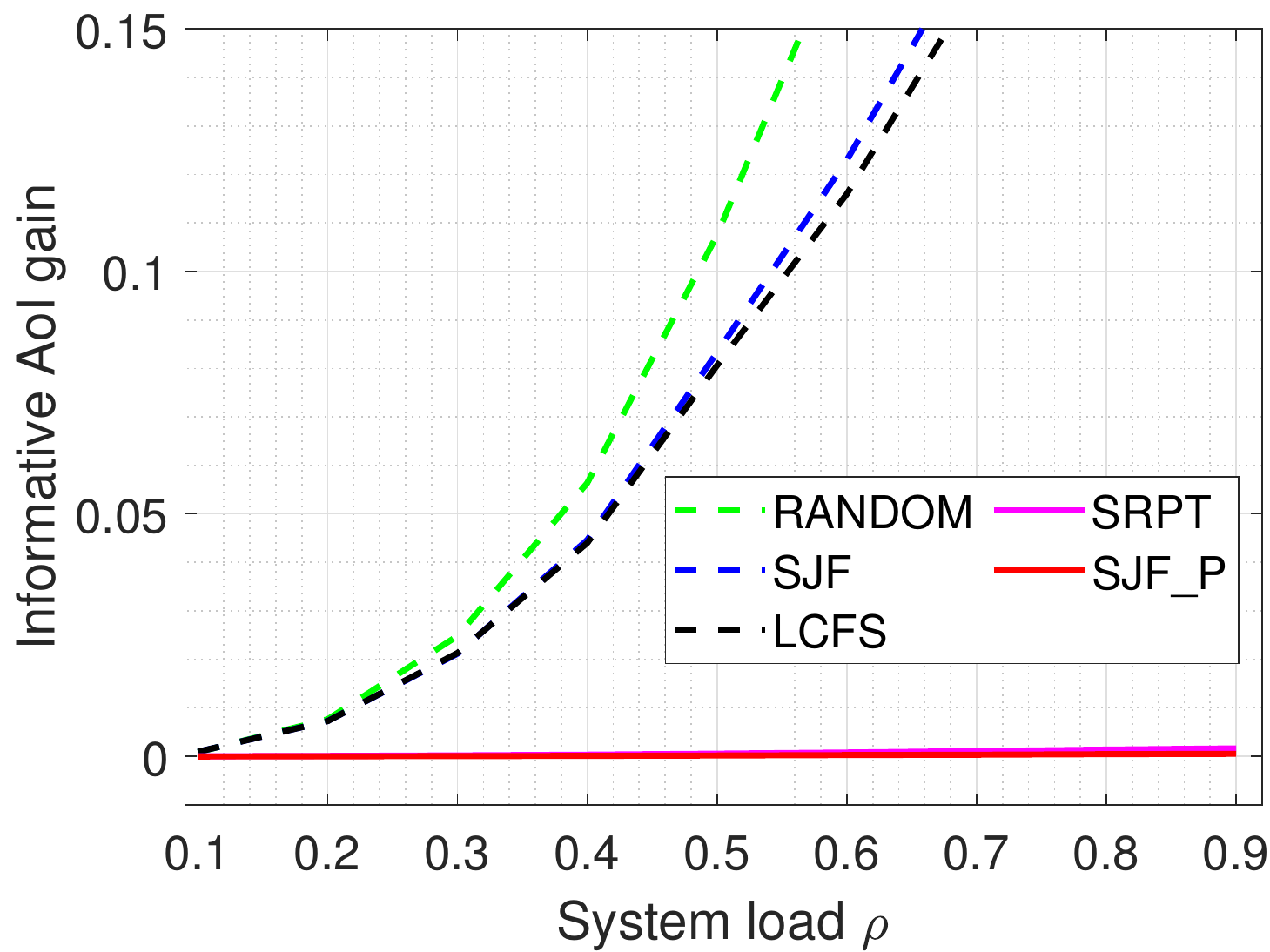}}
	\hspace{0.5em}	
	\subfigure[Interarrival time: Weibull (${C^{\rm{2}}}{\rm{ = 10}}$);  
	Update size: Weibull ($\mu=1$ and $\rho=0.7$)]{
		\label{fig:infor-wei-wei-variance-aoi} 
		\includegraphics[width=0.312\textwidth]{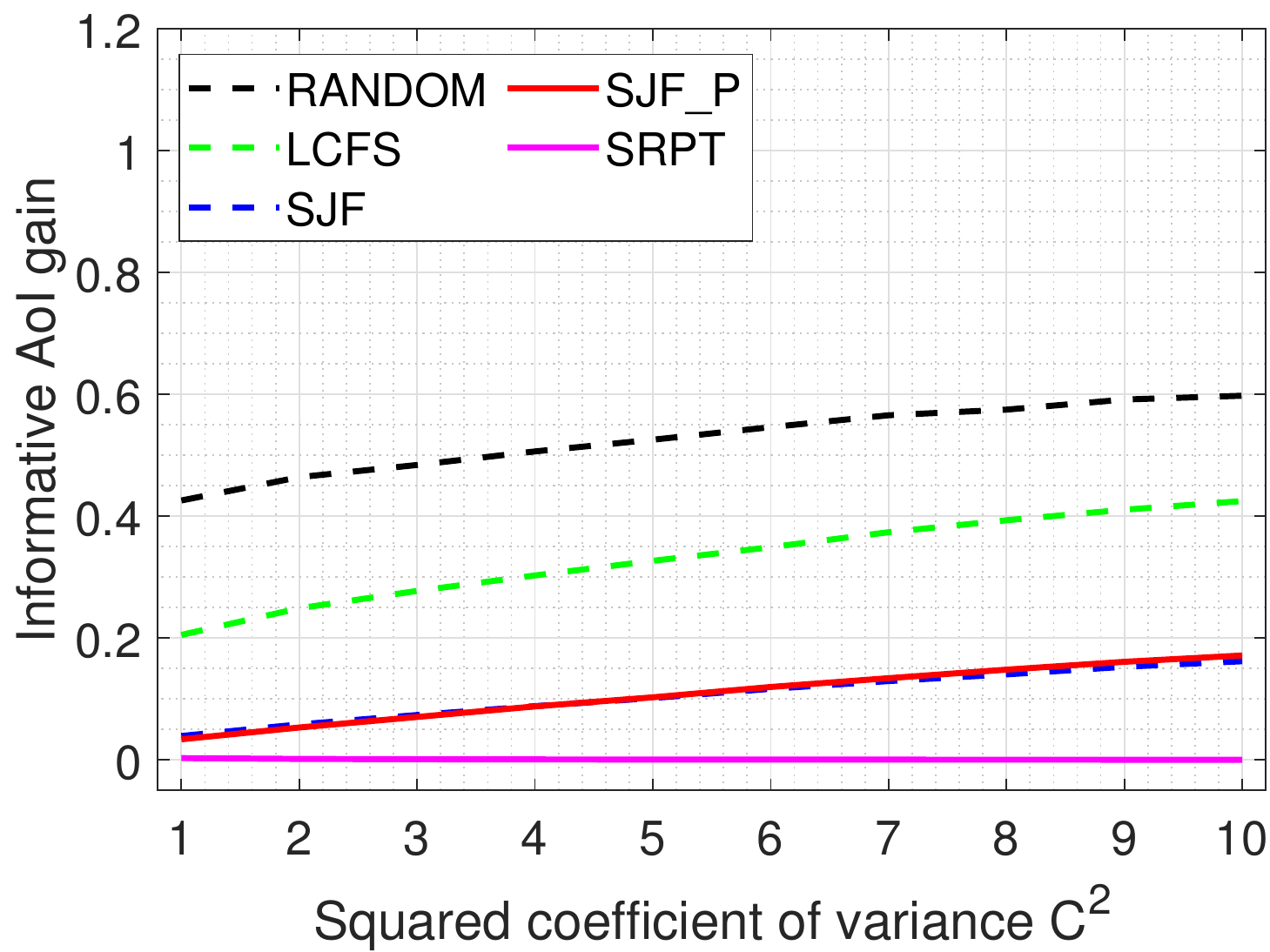}}
	\subfigure[Interarrival time: Gamma;  
     Update size: Gamma ($\mu=1$)]{
		\label{fig:gam-gam-AoI-informative} 
		\includegraphics[width=0.312\textwidth]{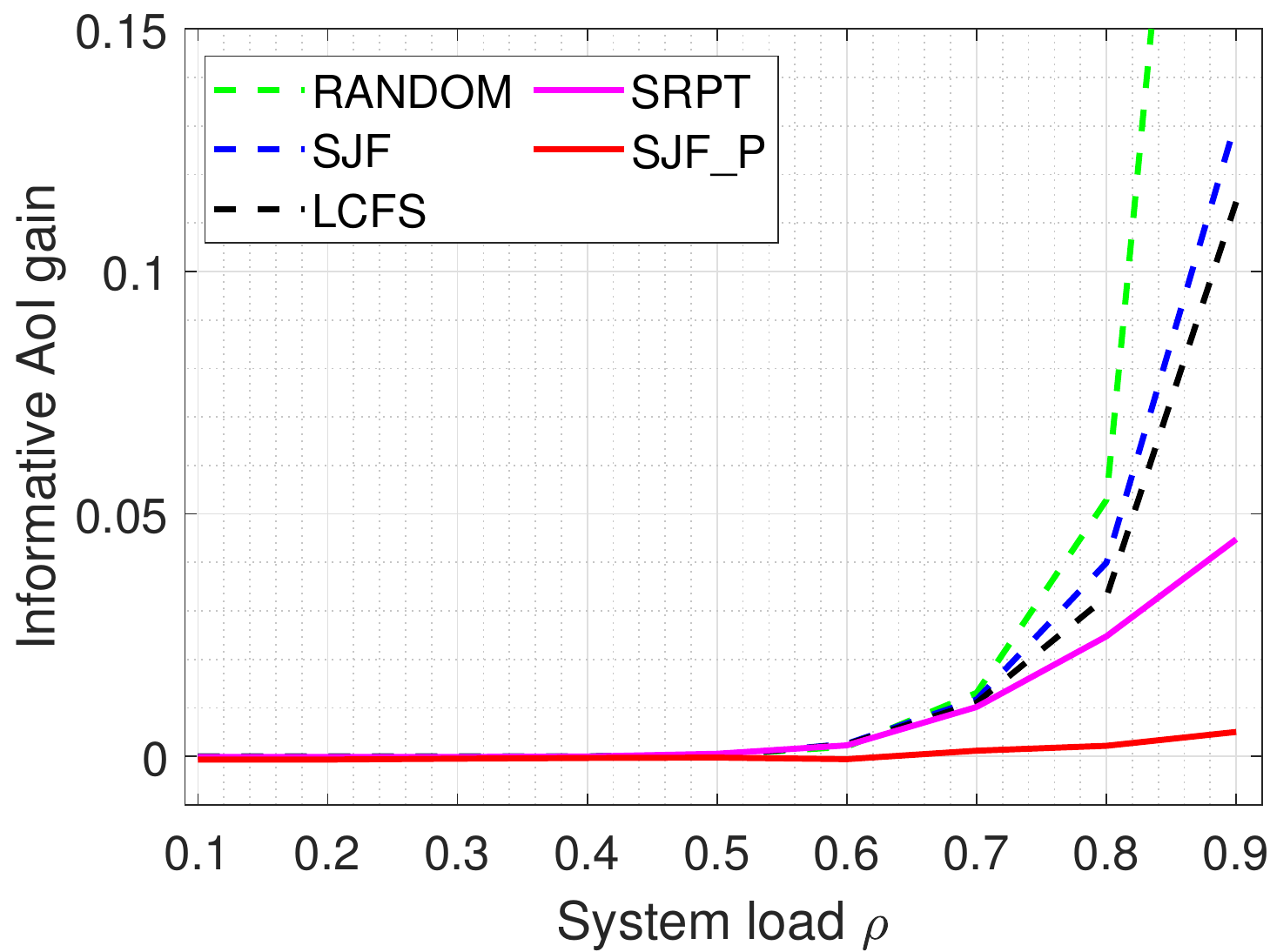}}
	\hspace{0.5em}	
	\subfigure[Interarrival time: Log-normal;  
     Update size: Log-normal ($\mu=1$)]{
		\label{fig:log-log-AoI-informative} 
		\includegraphics[width=0.312\textwidth]{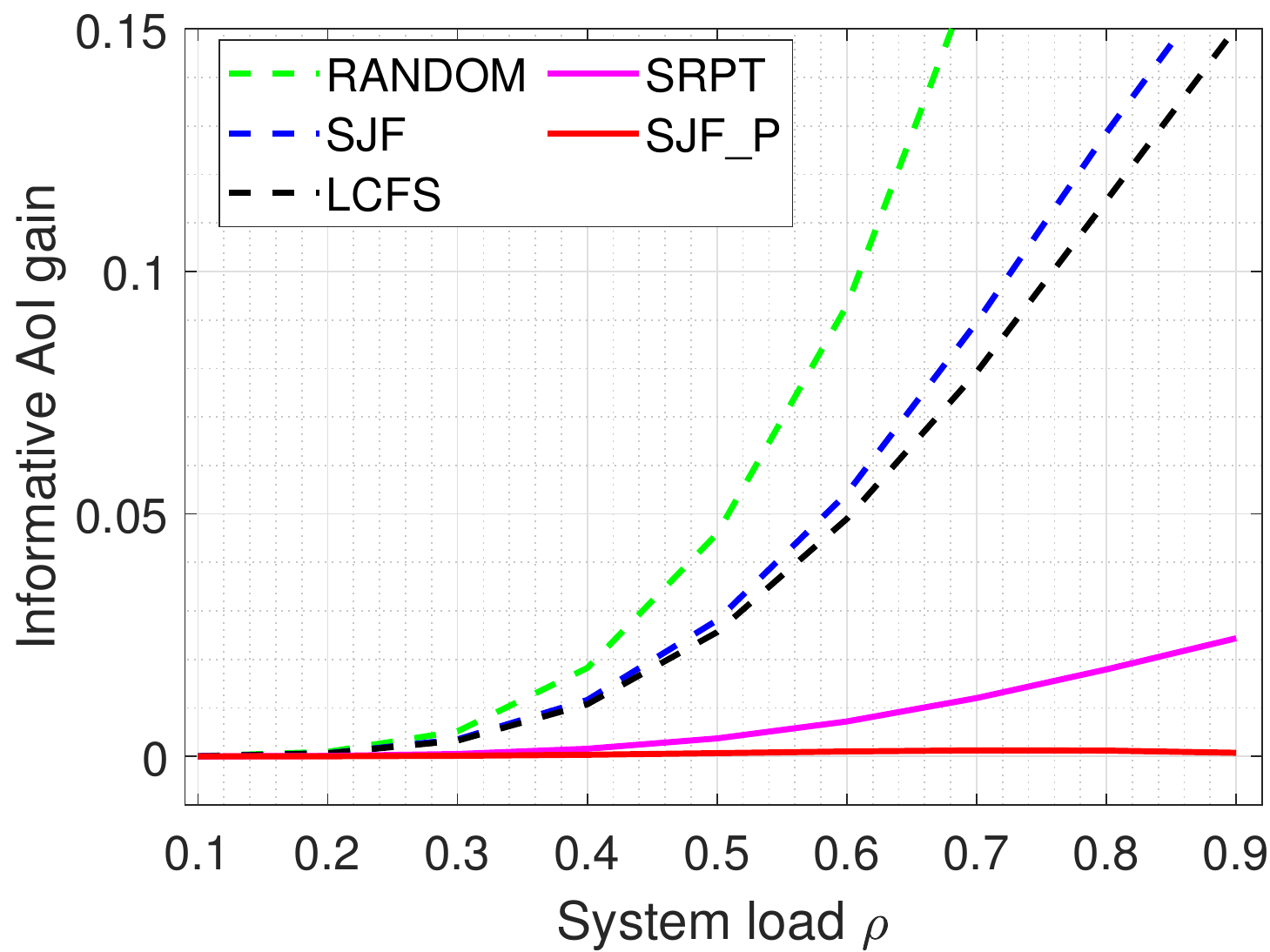}}
	\hspace{0.5em}	
	\subfigure[Interarrival time: Pareto;  
     Update size: Pareto ($\mu=1$)]{
		\label{fig:par-par-aoi-informative} 
		\includegraphics[width=0.312\textwidth]{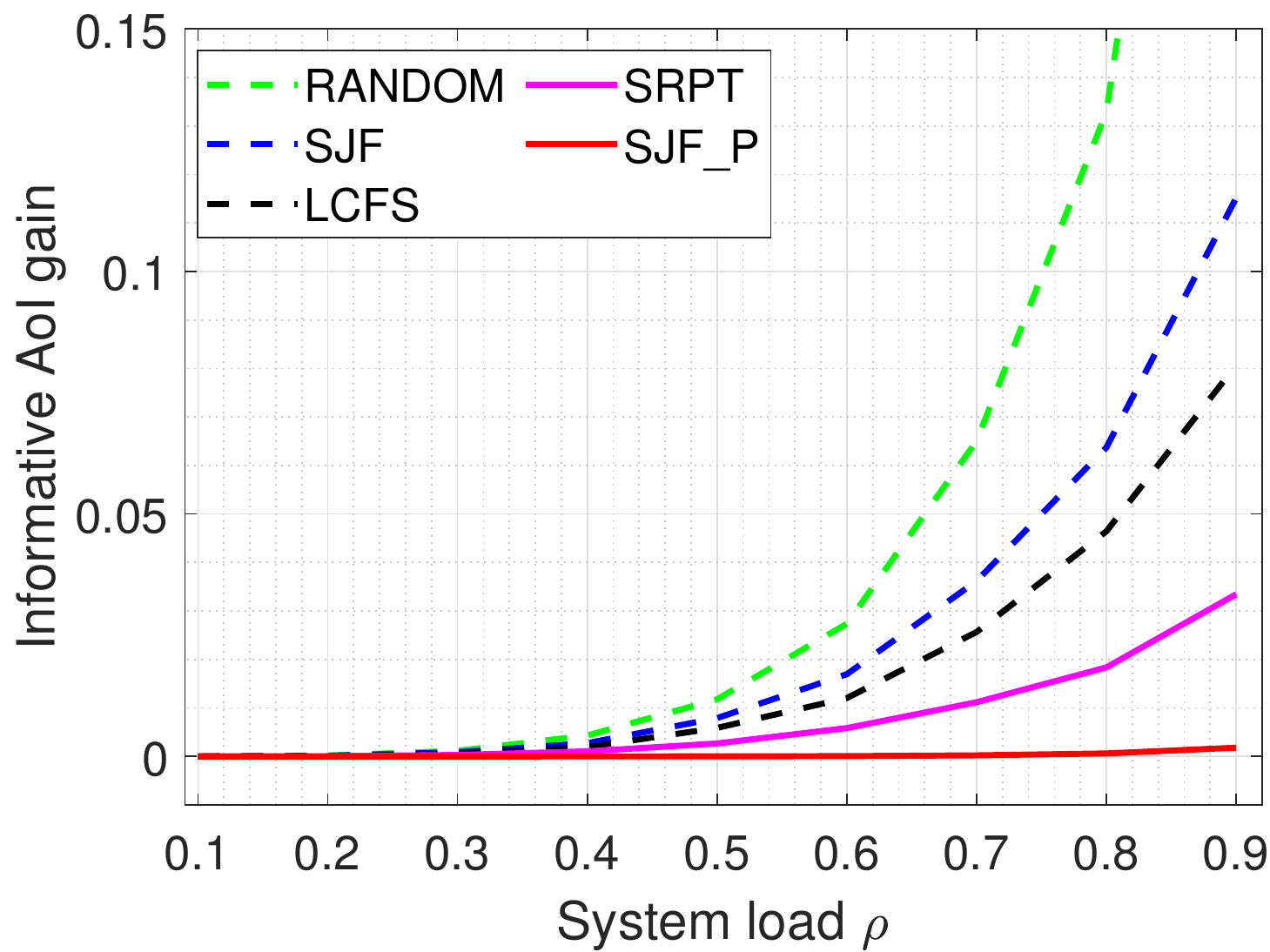}}		
	\caption{Comparisons of the average AoI performance under different distributions: informative policies vs. non-informative policies}
	\label{fig:for-AoI}
	\vspace{-5pt}
\end{figure*}
\begin{figure*}[!t]
    \centering
    \setlength{\abovecaptionskip}{-1pt}
    \subfigcapskip=-1pt
    \subfigure[Interarrival time: Weibull (${C^{\rm{2}}}{\rm{ = 10}}$);  
    Update size: Exponential ($\mu=1$) ]{
		\label{fig:for-wei-exp-PAoI} 
		\includegraphics[width=0.312\textwidth]{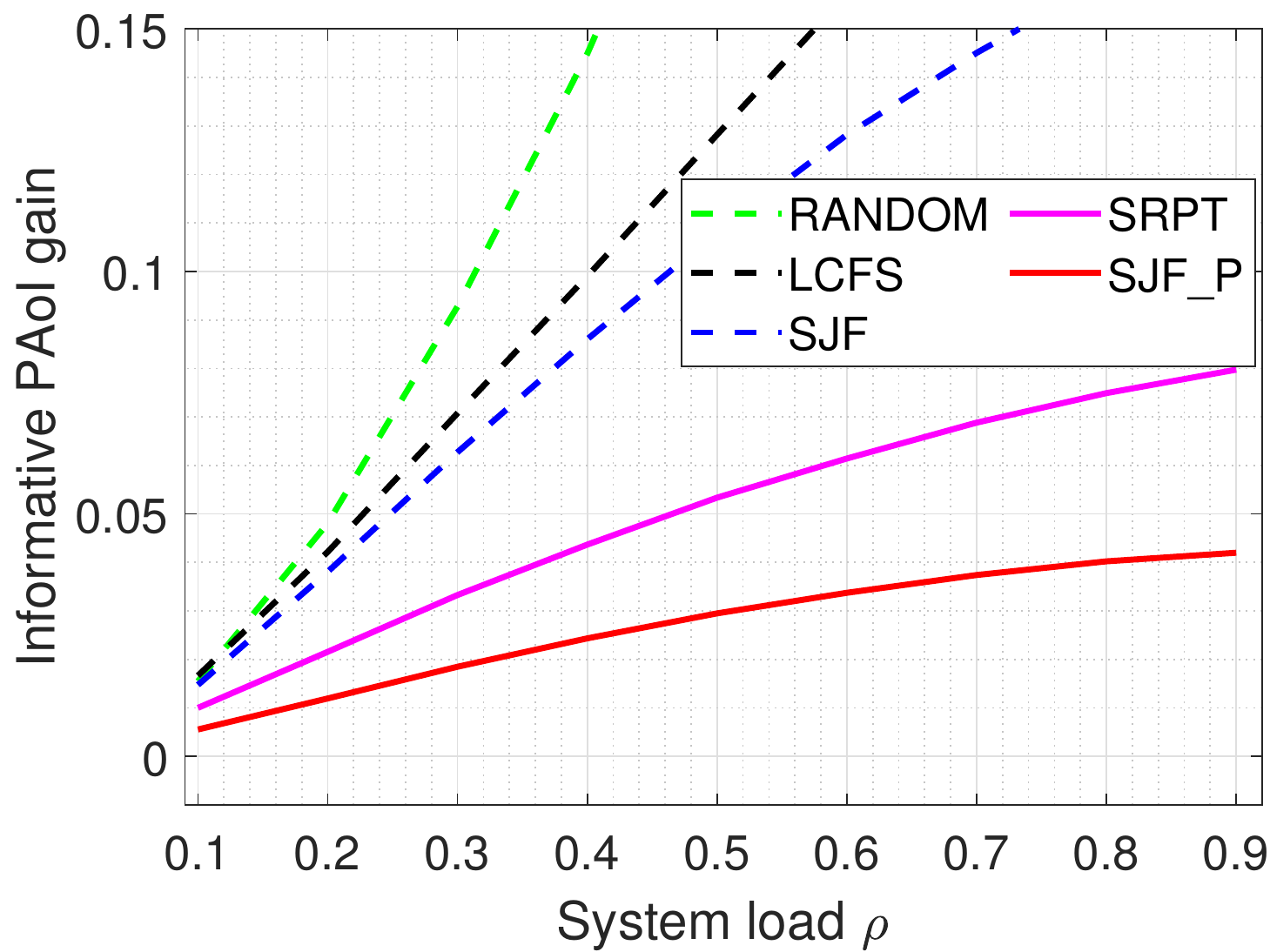}}
	\hspace{0.45em}	
	\subfigure[Interarrival time: Weibull (${C^{\rm{2}}}{\rm{ = 10}}$);  
	Update size: Weibull ($\mu=1$ and ${C^{\rm{2}}}{\rm{ = 10}}$)]{
		\label{fig:infor-wei-wei-PAoI} 
		\includegraphics[width=0.312\textwidth]{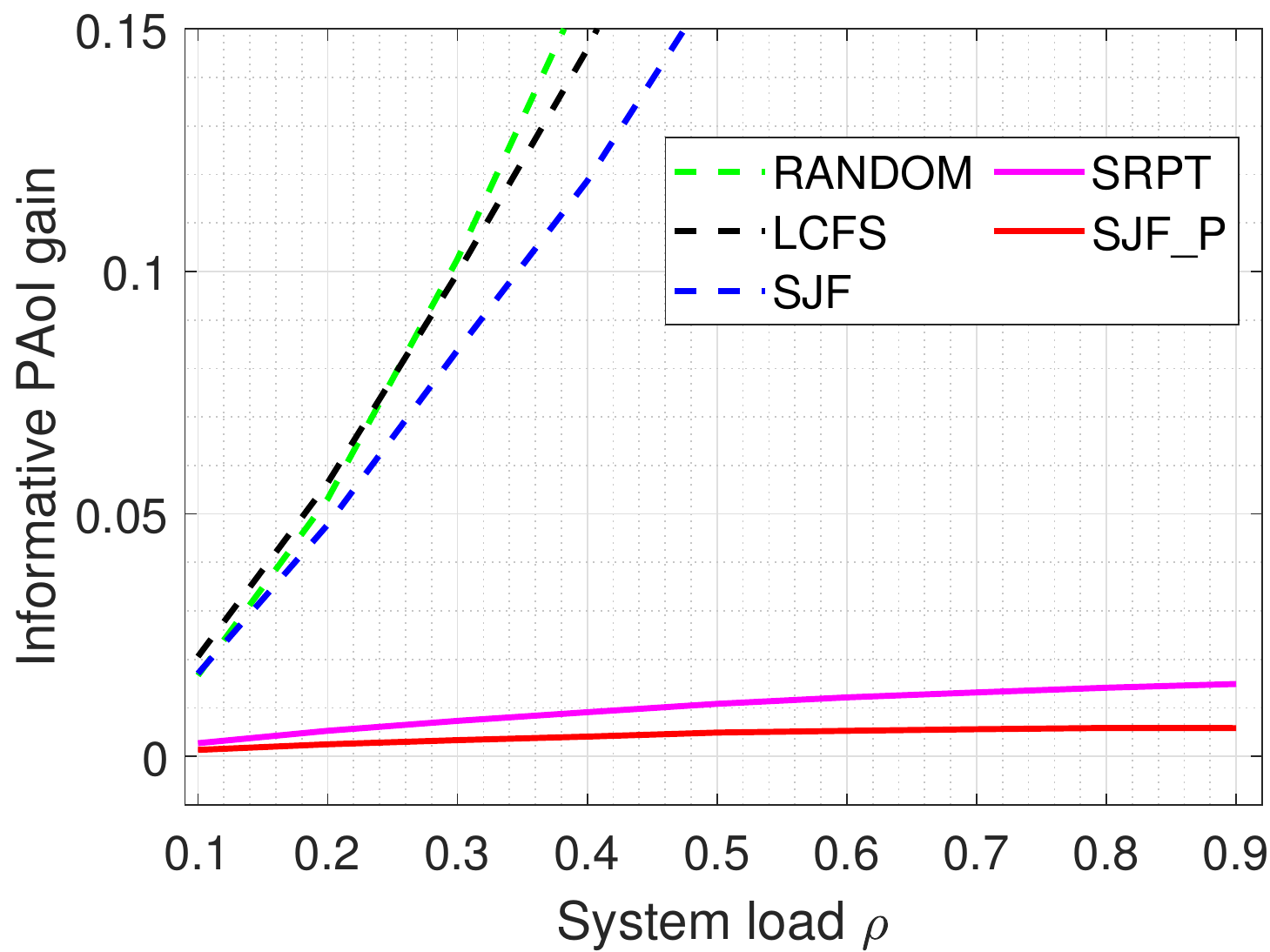}}
	\hspace{0.5em}	
	\subfigure[Interarrival time: Weibull (${C^{\rm{2}}}{\rm{ = 10}}$);  
	Update size: Weibull ($\mu=1$ and $\rho=0.7$)]{
		\label{fig:infor-wei-wei-variance-paoi} 
		\includegraphics[width=0.312\textwidth]{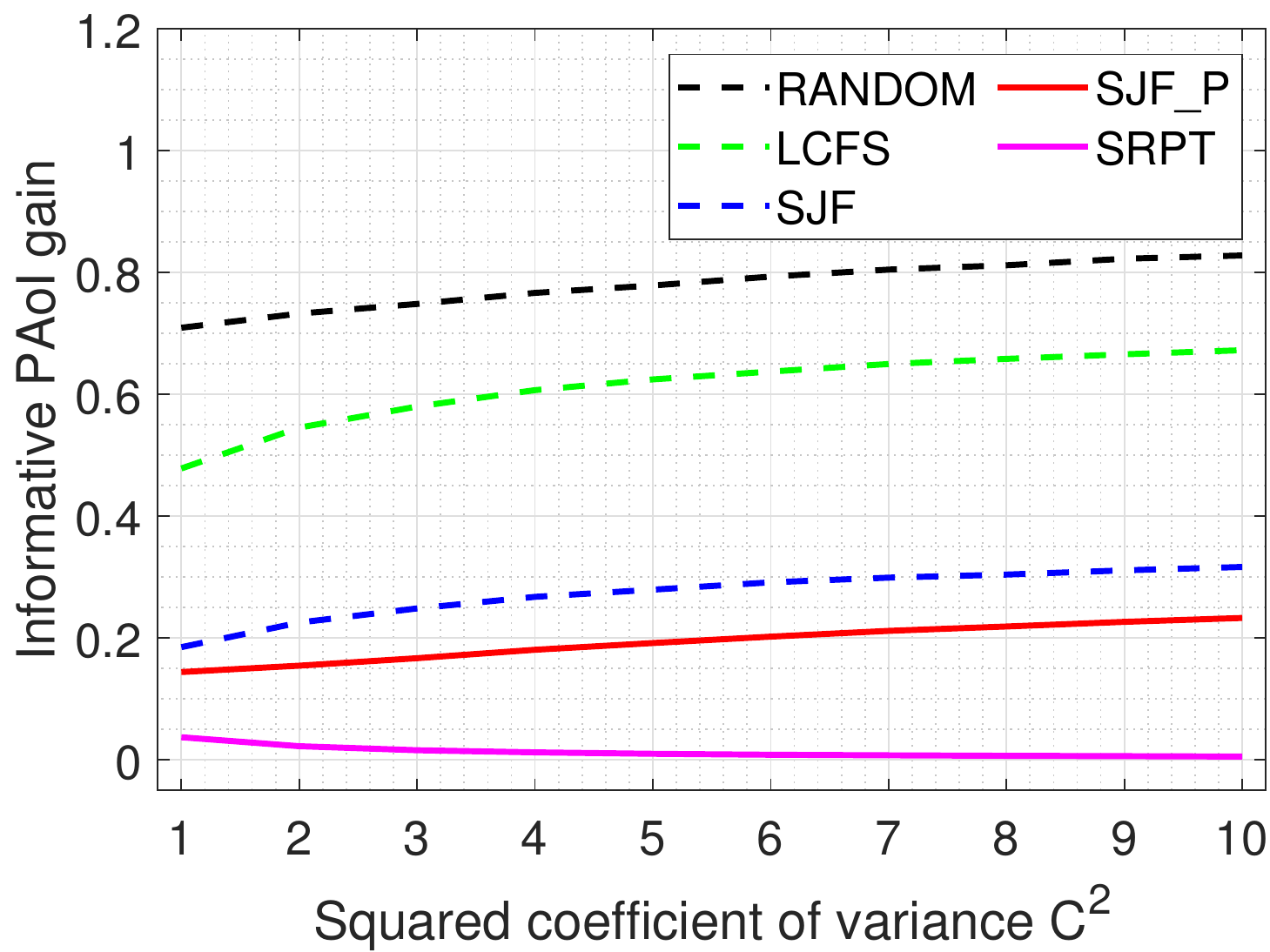}}	
	\subfigure[Interarrival time: Gamma;  
     Update size: Gamma ($\mu=1$)]{
		\label{fig:gam-gam-PAoI-informative} 
		\includegraphics[width=0.312\textwidth]{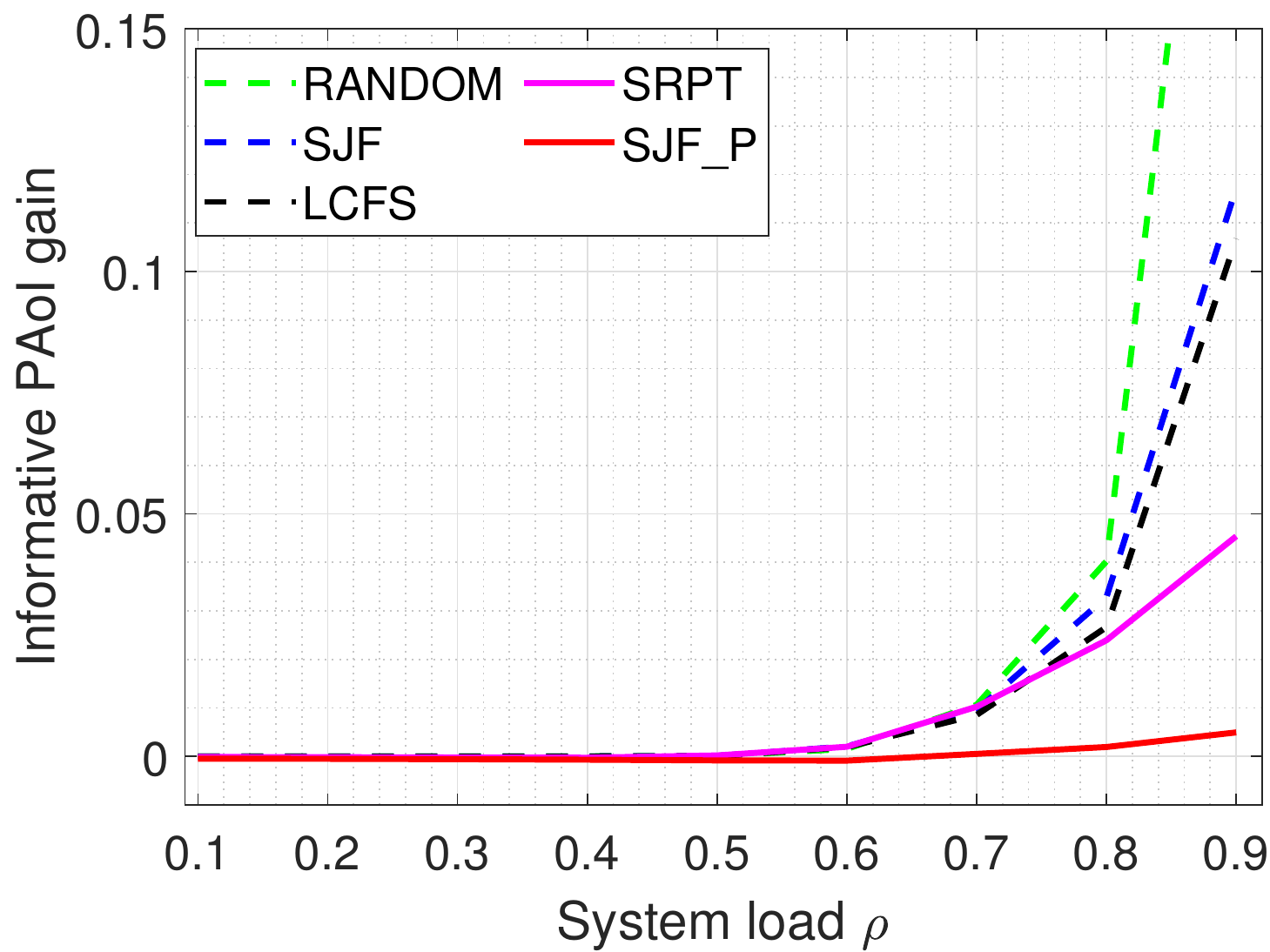}}	
	\hspace{0.5em}	
	\subfigure[Interarrival time: Log-normal;  
     Update size: Log-normal ($\mu=1$)]{
		\label{fig:log-log-PAoI-informative} 
		\includegraphics[width=0.312\textwidth]{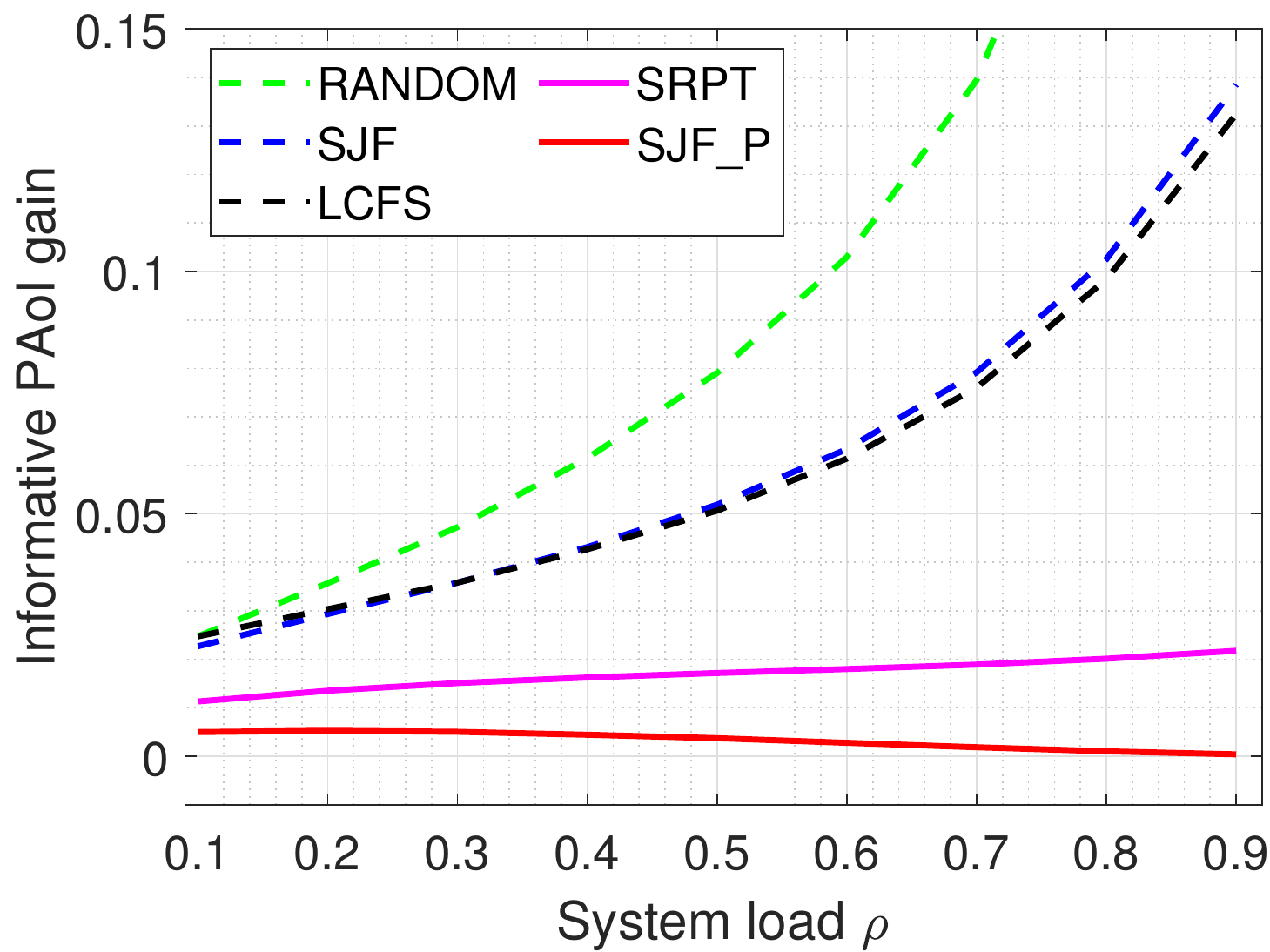}}	
	\hspace{0.5em}	
	\subfigure[Interarrival time: Pareto;  
     Update size: Pareto ($\mu=1$)]{
		\label{fig:par-par-paoi-informative} 
		\includegraphics[width=0.312\textwidth]{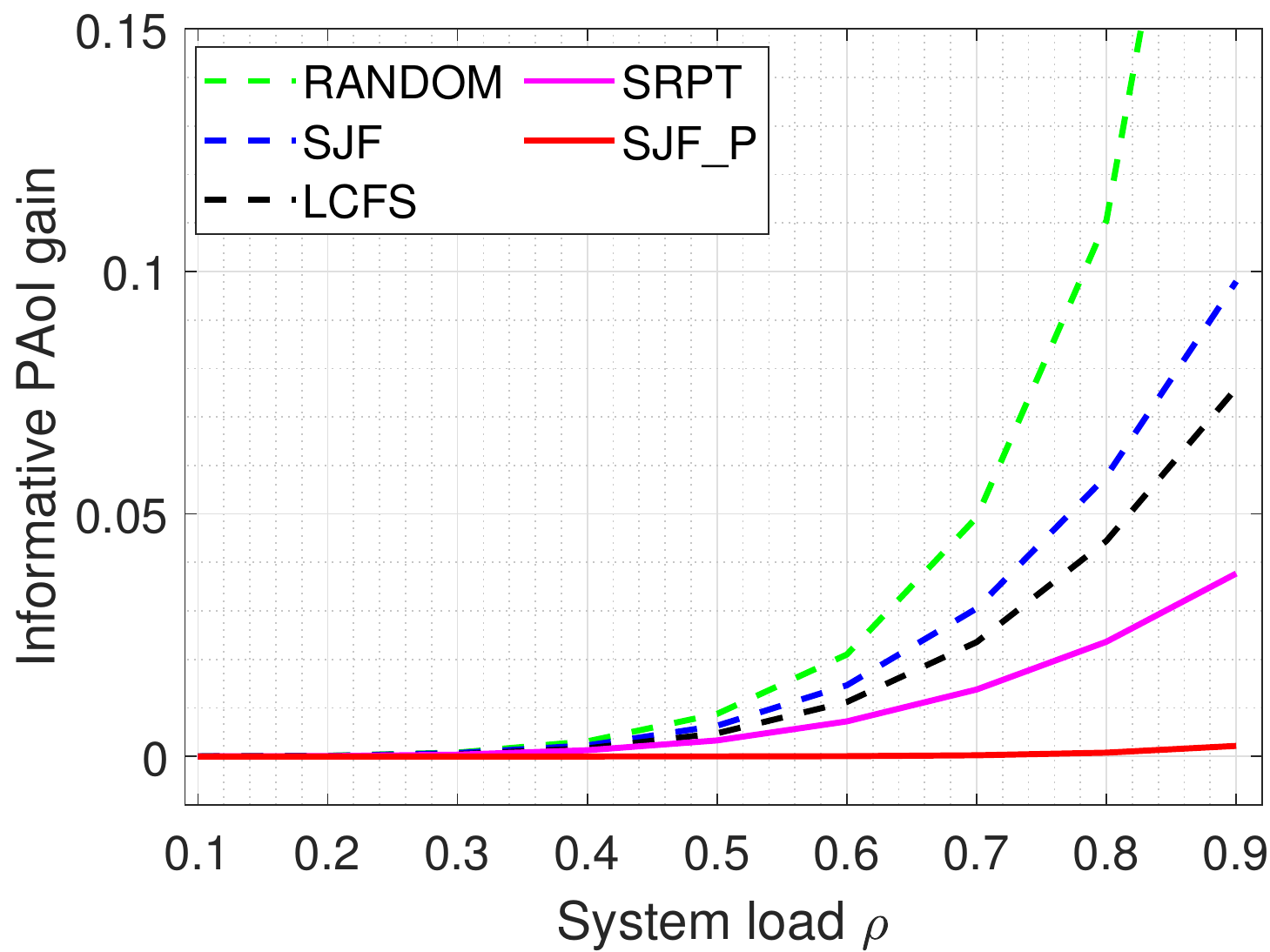}}	
	\caption{Comparisons of the average PAoI performance under different distributions: informative policies vs. non-informative policies}
	\label{fig:for-PAoI}
	\vspace{-5pt}
\end{figure*}

\newpage  

\begin{figure*}[!t]
    \centering
    \setlength{\abovecaptionskip}{-1pt}
    \subfigcapskip=-1pt
    \subfigure[Interarrival time: Weibull (${C^{\rm{2}}}{\rm{ = 10}}$);  
    Update size: Exponential ($\mu=1$)]{
		\label{fig:all-combined-wei-exp-aoi}
		\includegraphics[width=0.312\textwidth]{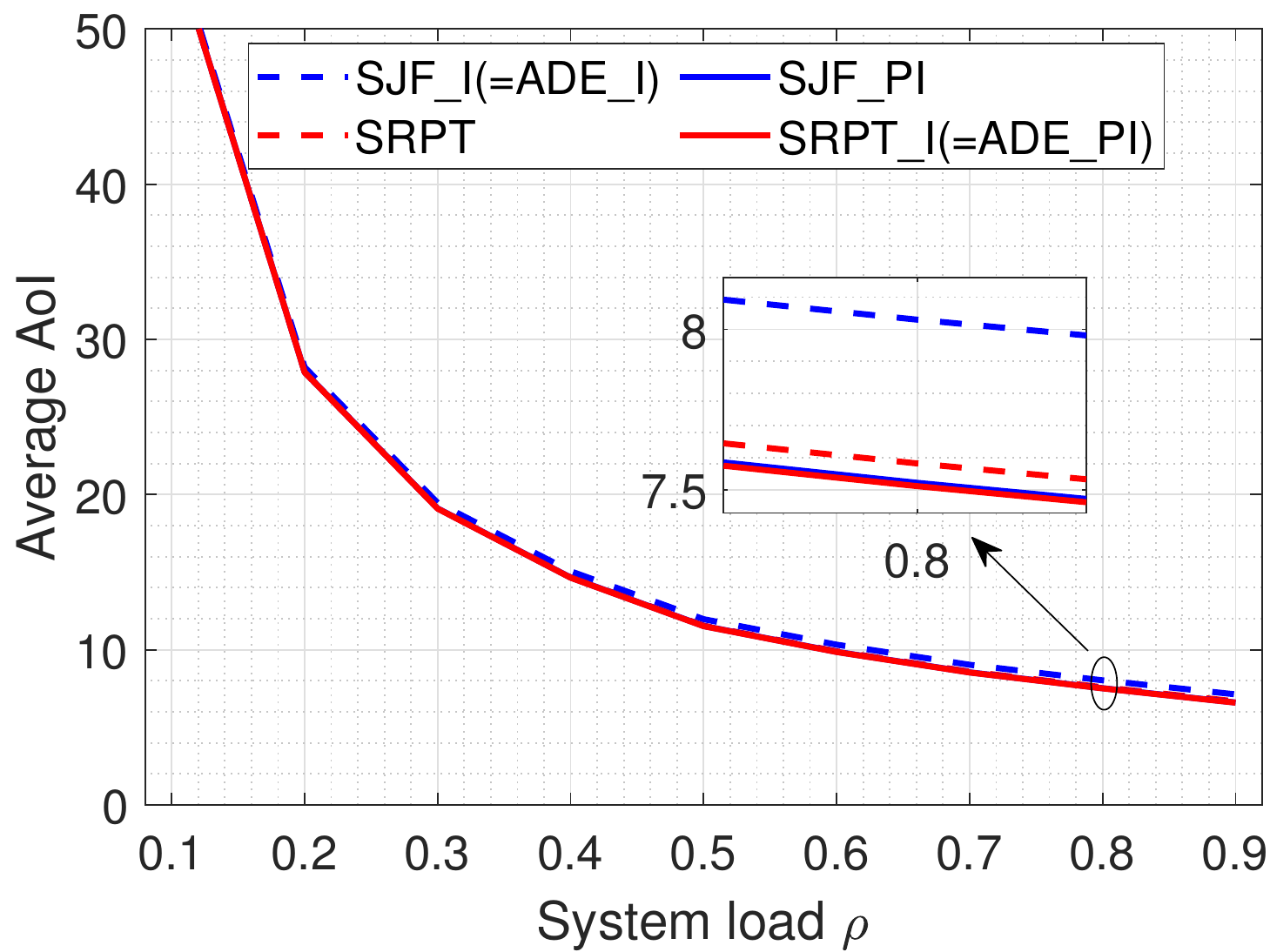}}
	\hspace{0.47em}	
    \subfigure[Interarrival time: Weibull (${C^{\rm{2}}}{\rm{ = 10}}$);  
    Update size: Weibull ($\mu=1$ and ${C^{\rm{2}}}{\rm{ = 10}}$)]{
    \label{fig:all-combined-wei-wei-aoi}
		\includegraphics[width=0.312\textwidth]{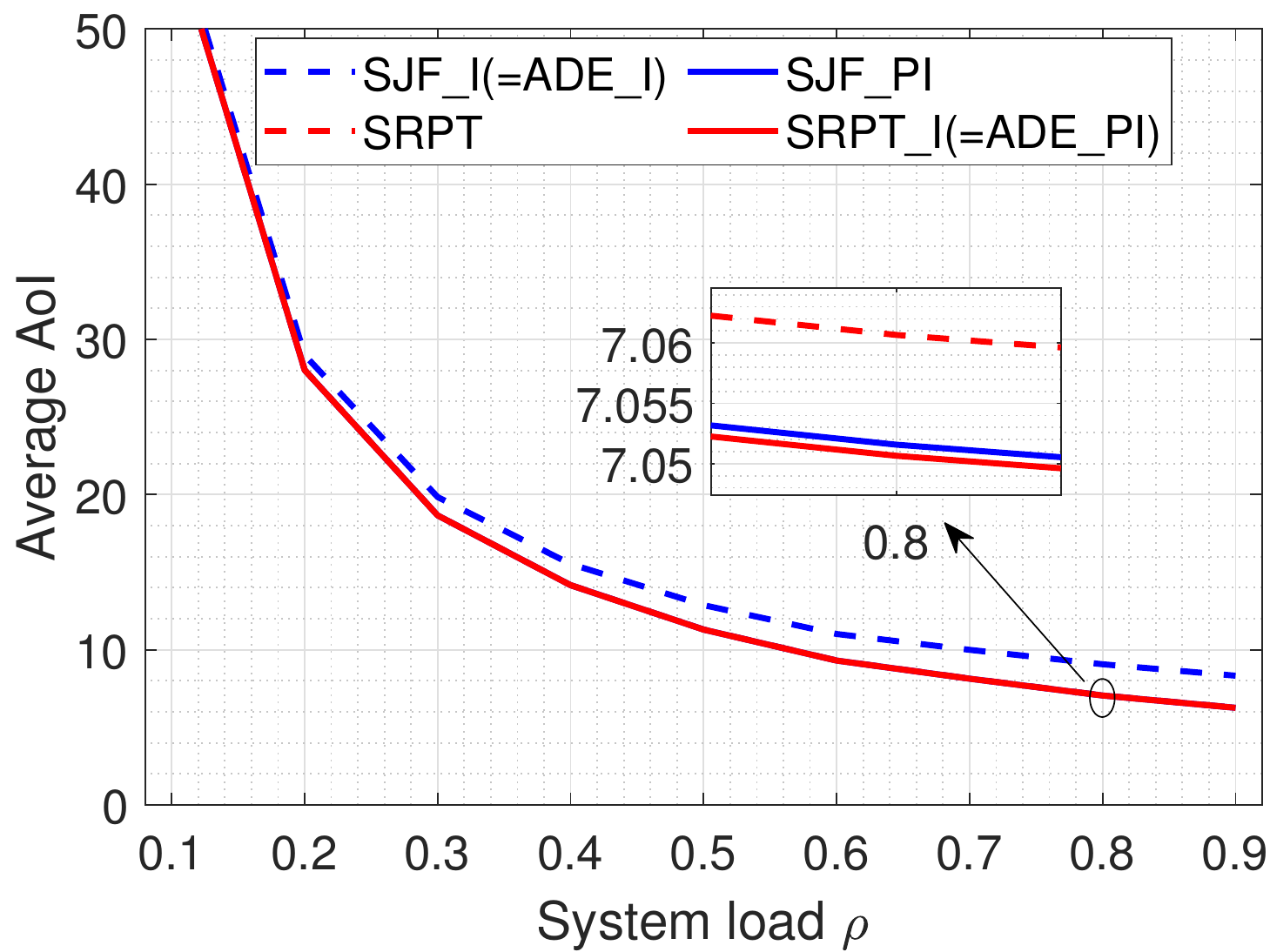}}
	\hspace{0.5em}	
	\subfigure[Interarrival time: Weibull (${C^{\rm{2}}}{\rm{ = 10}}$);  
	Update size: Weibull ($\mu=1$ and $\rho=0.7$)]{
		\label{fig:all-combined-wei-variance-aoi} 
		\includegraphics[width=0.312\textwidth]{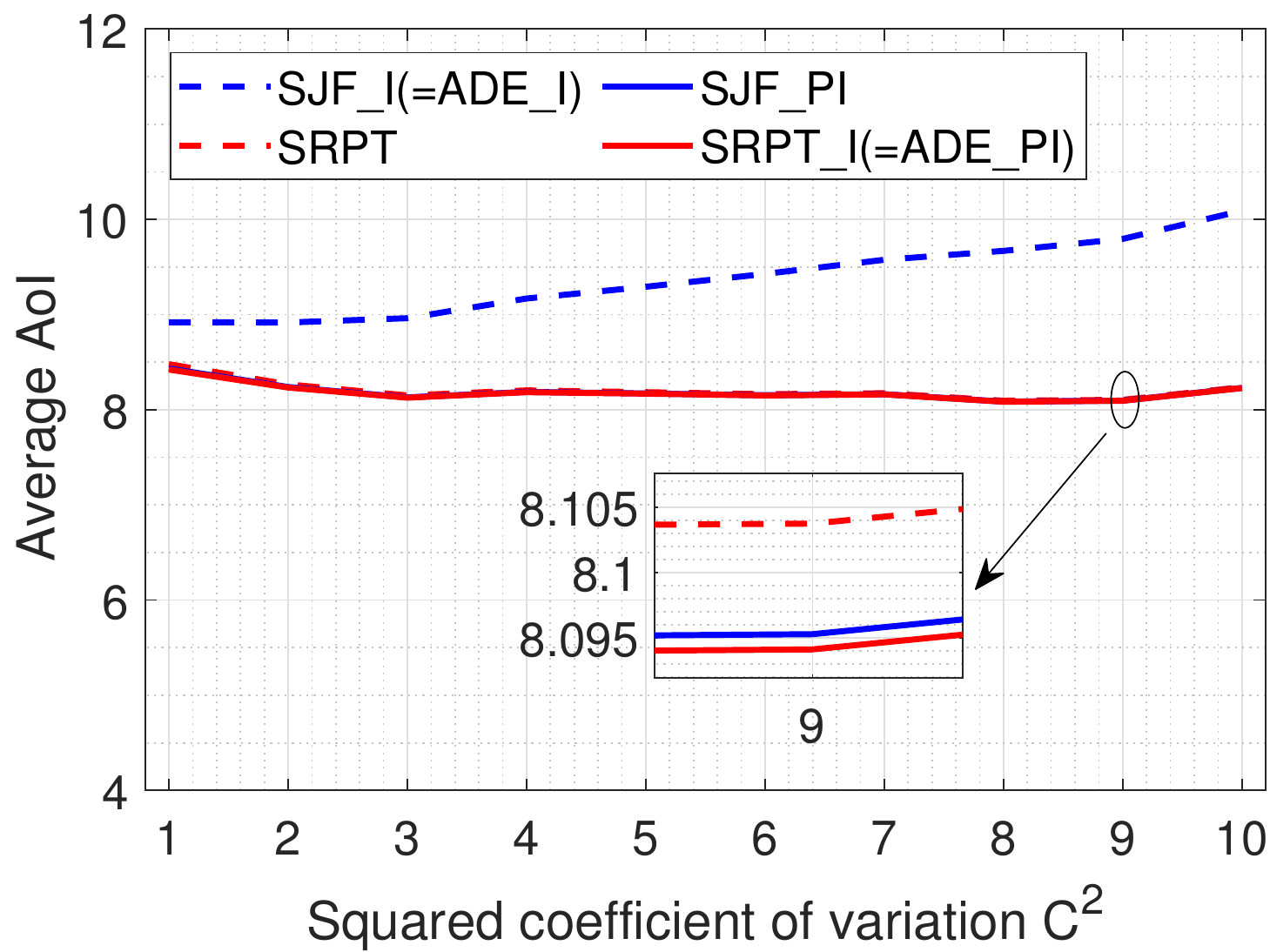}}	
	\subfigure[Interarrival time: Gamma;  
     Update size: Gamma ($\mu=1$)]{
		\label{fig:gam-gam-AoI-allAspects} 
		\includegraphics[width=0.312\textwidth]{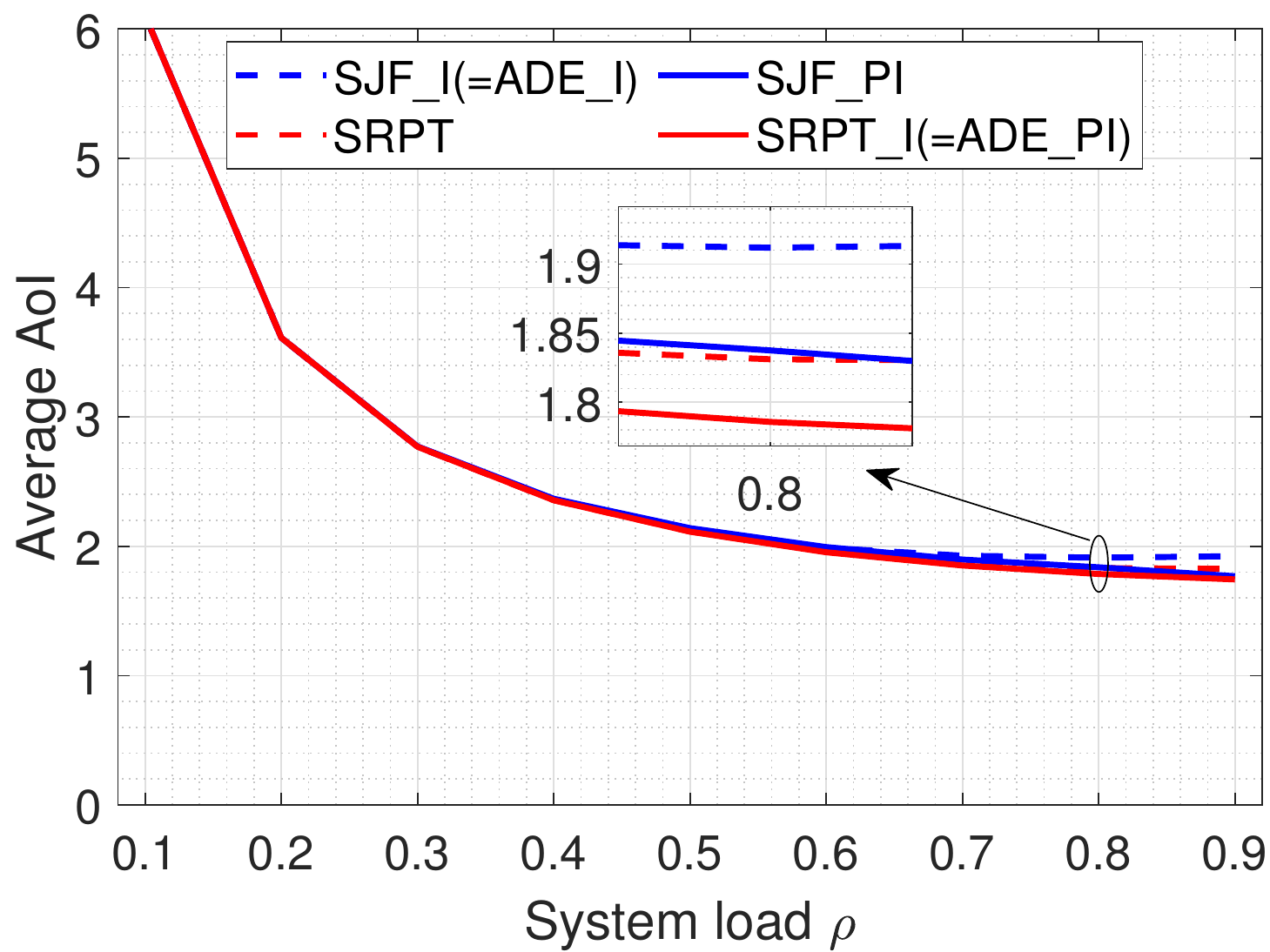}}	
	\hspace{0.5em}	
	\subfigure[Interarrival time: Log-normal;  
     Update size: Log-normal ($\mu=1$)]{
		\label{fig:log-log-AoI-allAspects} 
		\includegraphics[width=0.312\textwidth]{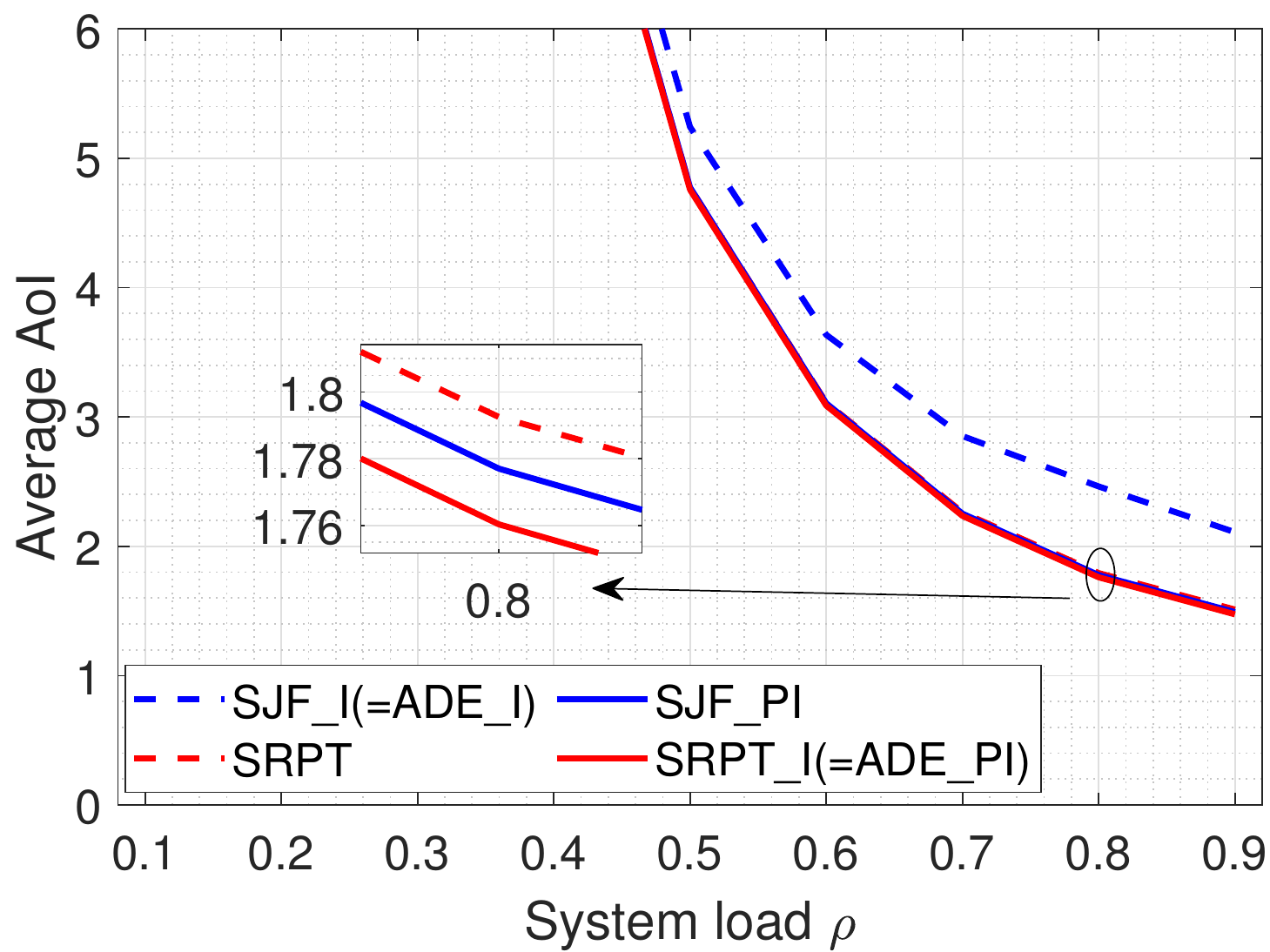}}
	\hspace{0.5em}	
	\subfigure[Interarrival time: Pareto;  
     Update size: Pareto ($\mu=1$)]{
		\label{fig:par-par-aoi-allAspects} 
		\includegraphics[width=0.312\textwidth]{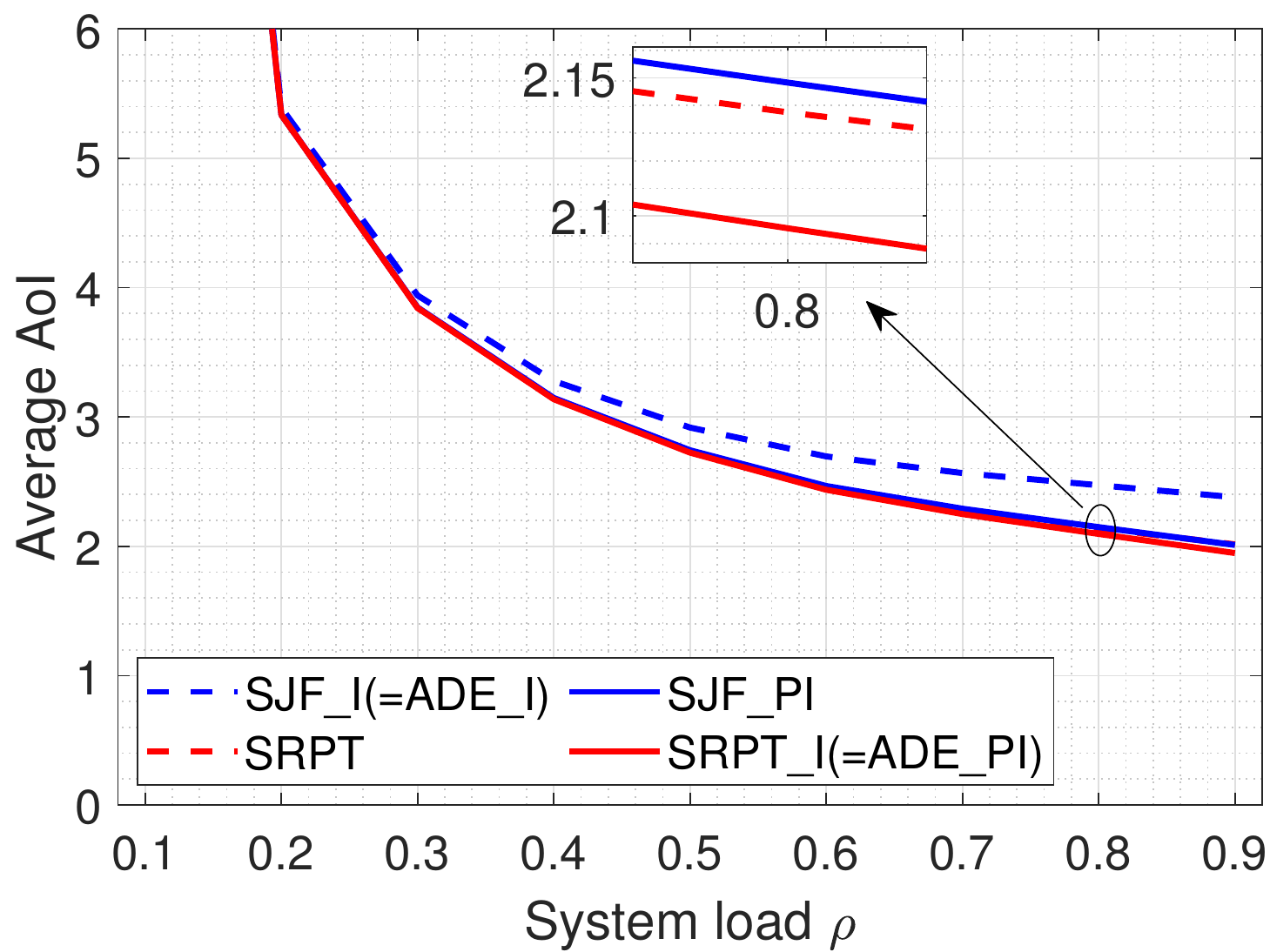}}		
	\caption{Comparisons of the average AoI performance under different distributions: preemptive, informative, AoI-based policies vs. others}
	\label{fig:all-combined-aoi}
\end{figure*}
\begin{figure*}[!t]
    \centering
    \setlength{\abovecaptionskip}{-1pt}
    \subfigcapskip=-1pt
    \subfigure[Interarrival time: Weibull (${C^{\rm{2}}}{\rm{ = 10}}$);  
    Update size: Exponential ($\mu=1$)]{
		\label{fig:all-combined-wei-exp-paoi}
		\includegraphics[width=0.312\textwidth]{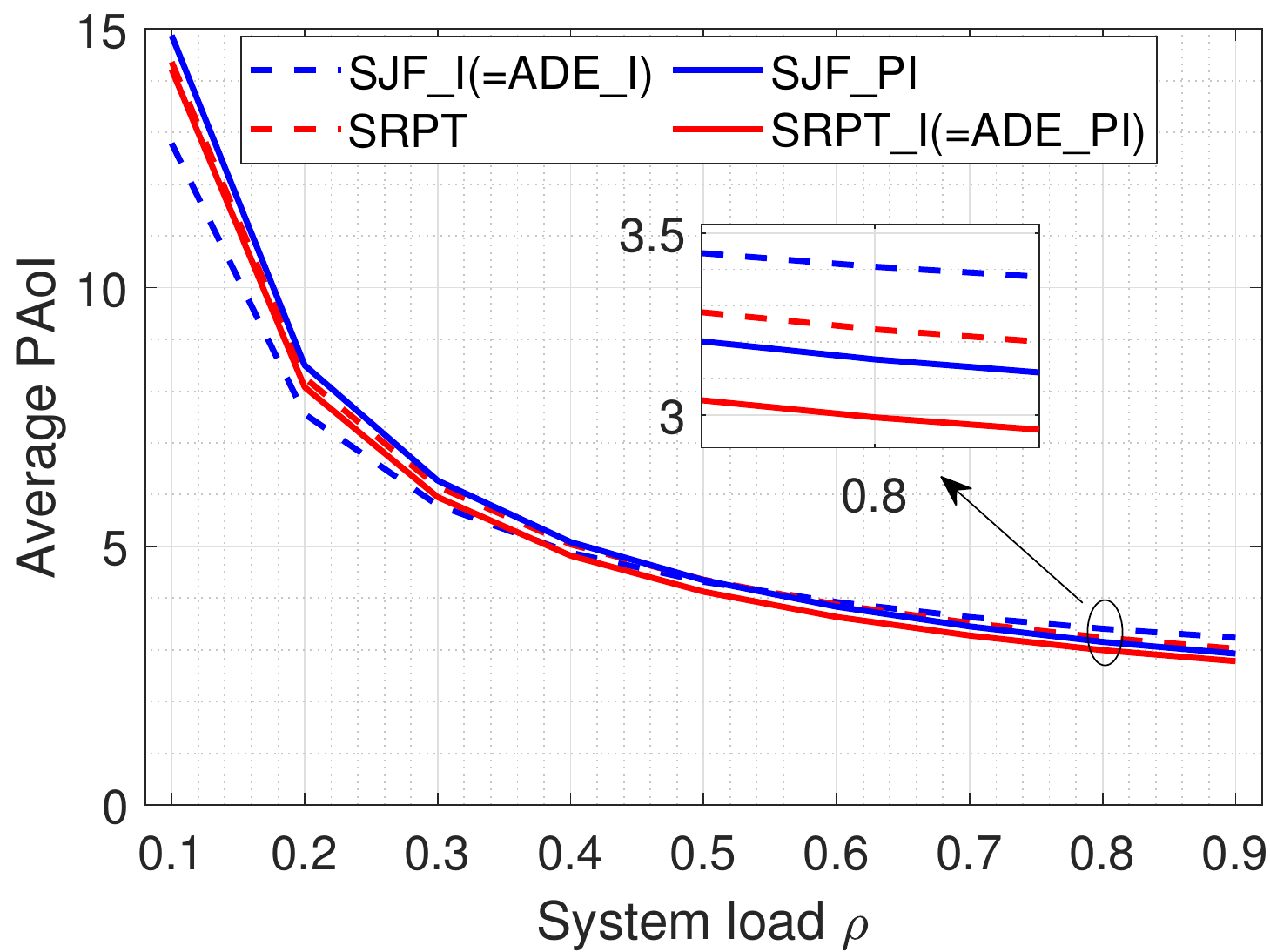}}
	\hspace{0.47em}	
    \subfigure[Interarrival time: Weibull (${C^{\rm{2}}}{\rm{ = 10}}$);  
    Update size: Weibull ($\mu=1$ and ${C^{\rm{2}}}{\rm{ = 10}}$)]{
    \label{fig:all-combined-wei-wei-paoi}
		\includegraphics[width=0.312\textwidth]{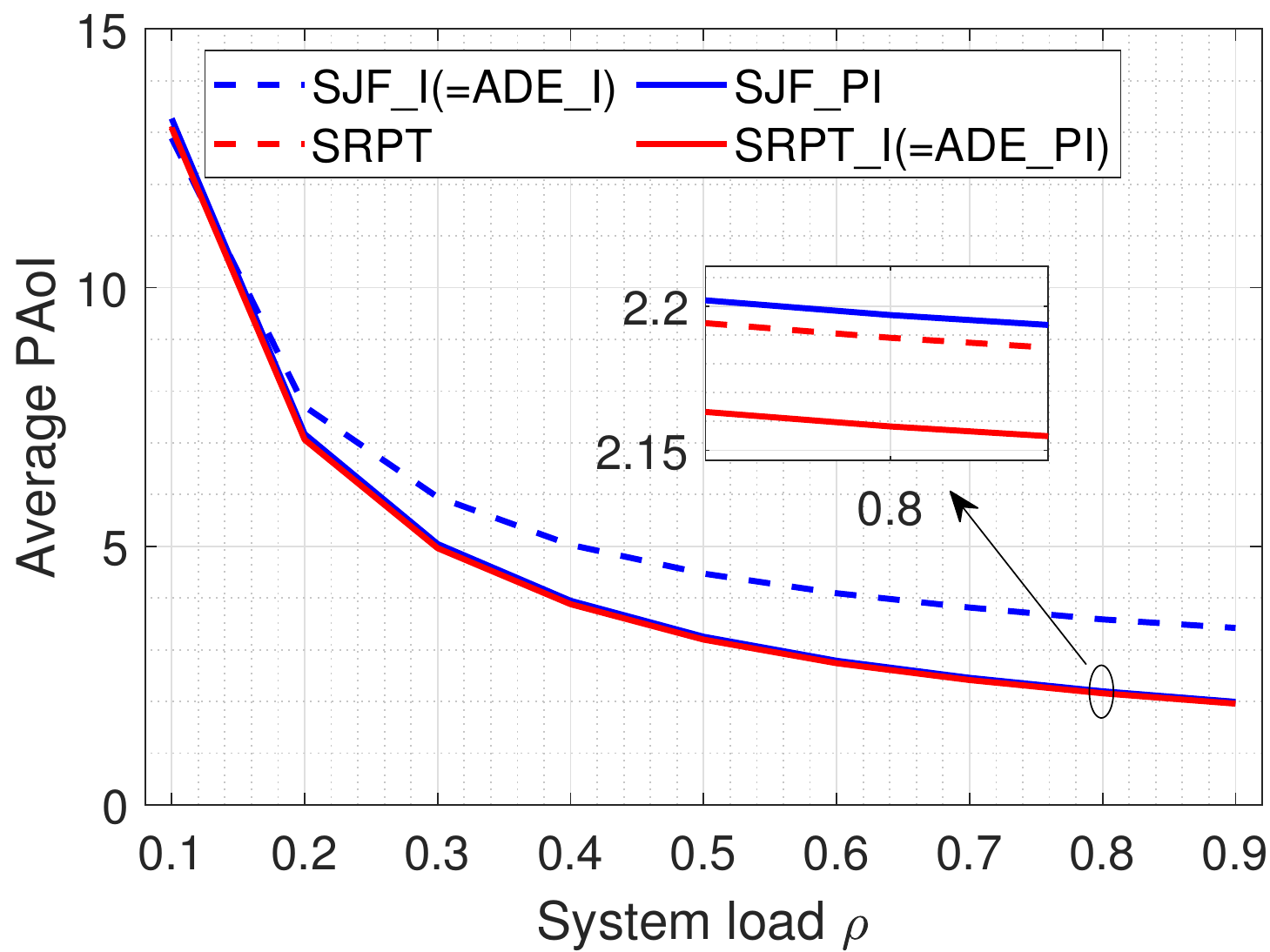}}
	\hspace{0.5em}	
	\subfigure[Interarrival time: Weibull (${C^{\rm{2}}}{\rm{ = 10}}$);  
	Update size: Weibull ($\mu=1$ and $\rho=0.7$)]{
		\label{fig:all-combined-wei-variance-paoi} 
		\includegraphics[width=0.312\textwidth]{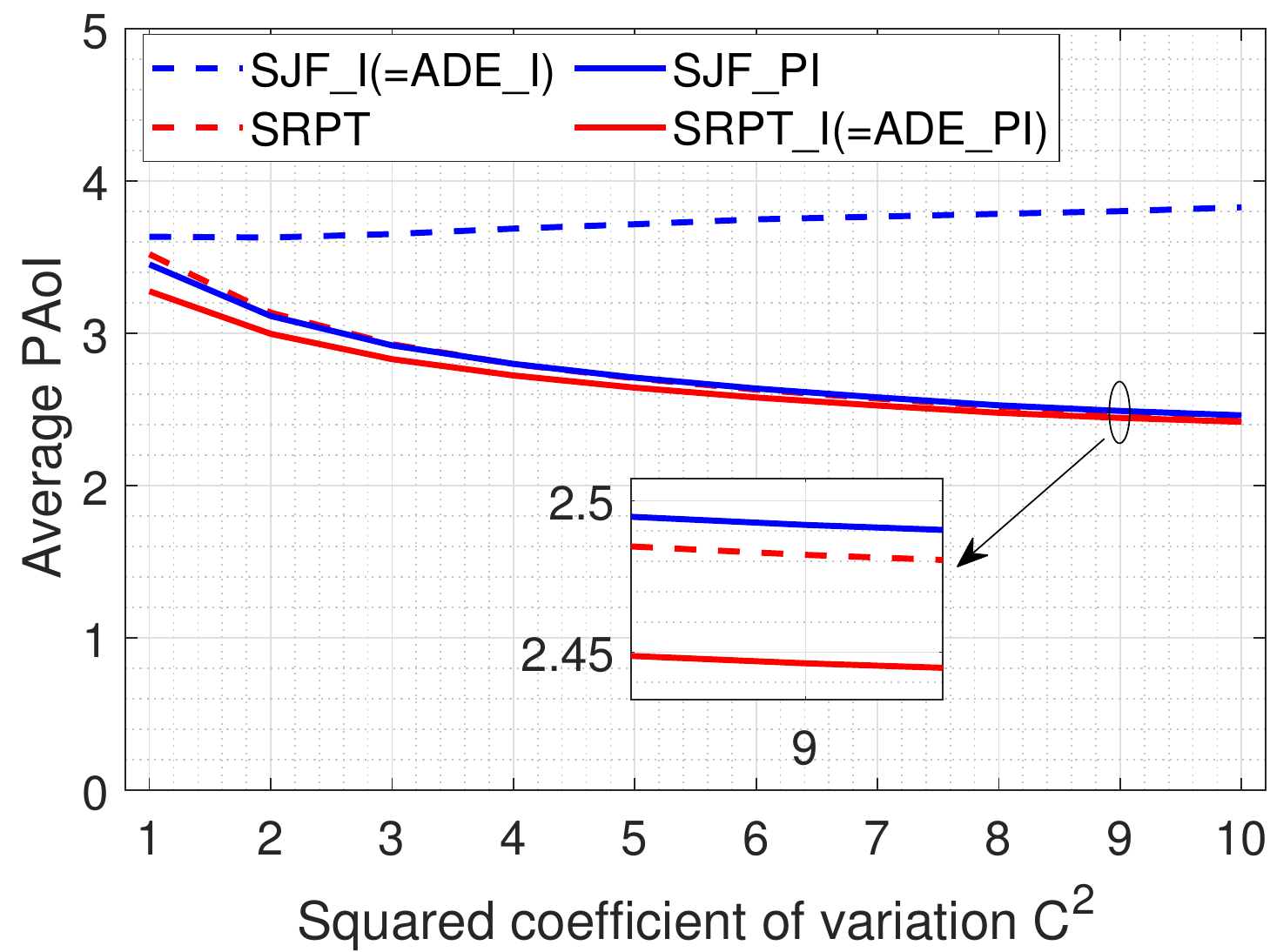}}
	\subfigure[Interarrival time: Gamma;  
     Update size: Gamma ($\mu=1$)]{
		\label{fig:gam-gam-PAoI-allAspects} 
		\includegraphics[width=0.312\textwidth]{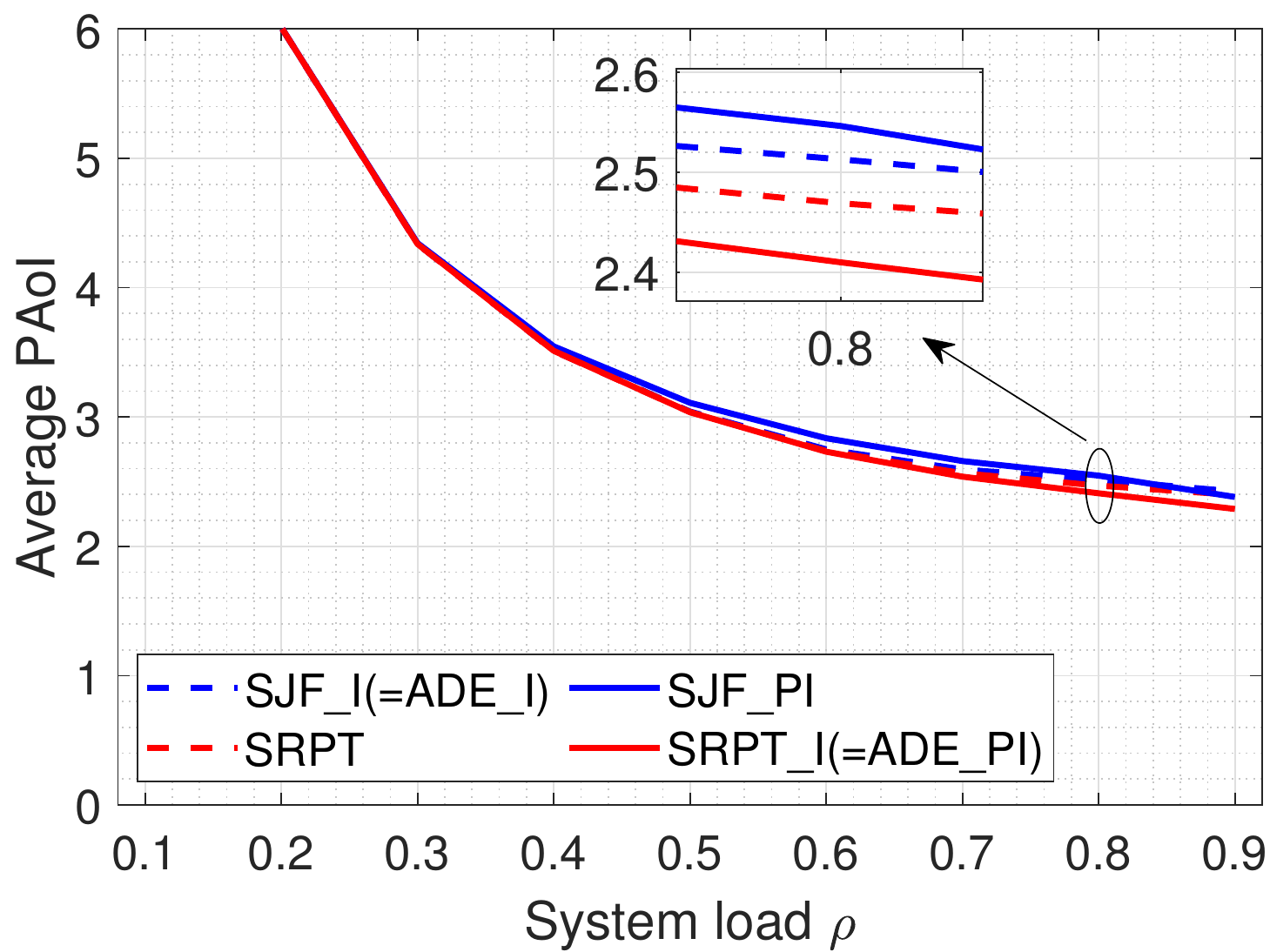}}
	\hspace{0.5em}	
	\subfigure[Interarrival time: Log-normal;  
     Update size: Log-normal ($\mu=1$)]{
		\label{fig:log-log-PAoI-allAspects} 
		\includegraphics[width=0.312\textwidth]{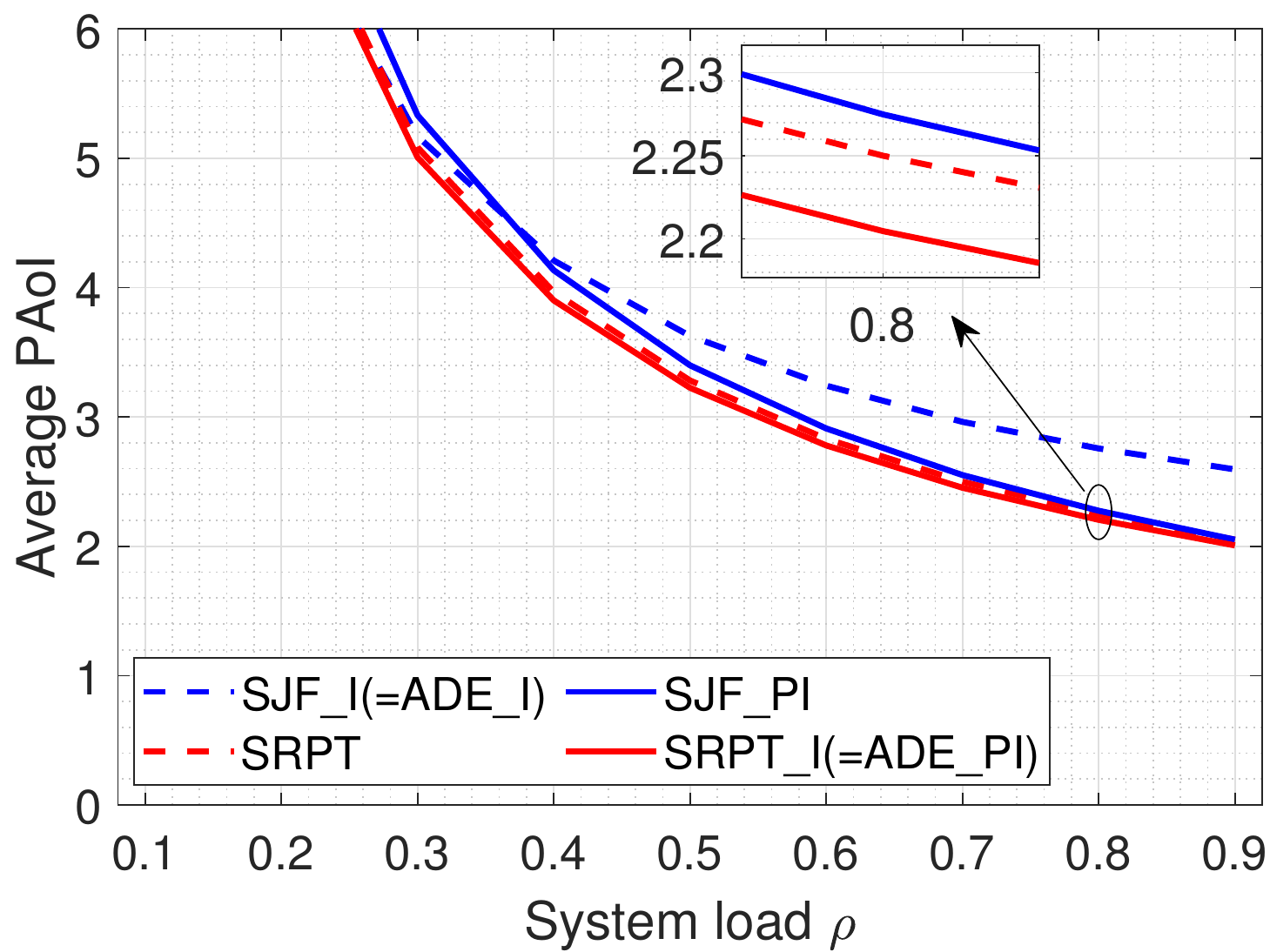}}
	\hspace{0.5em}	
	\subfigure[Interarrival time: Pareto;  
     Update size: Pareto ($\mu=1$)]{
		\label{fig:par-par-paoi-allAspects} 
		\includegraphics[width=0.312\textwidth]{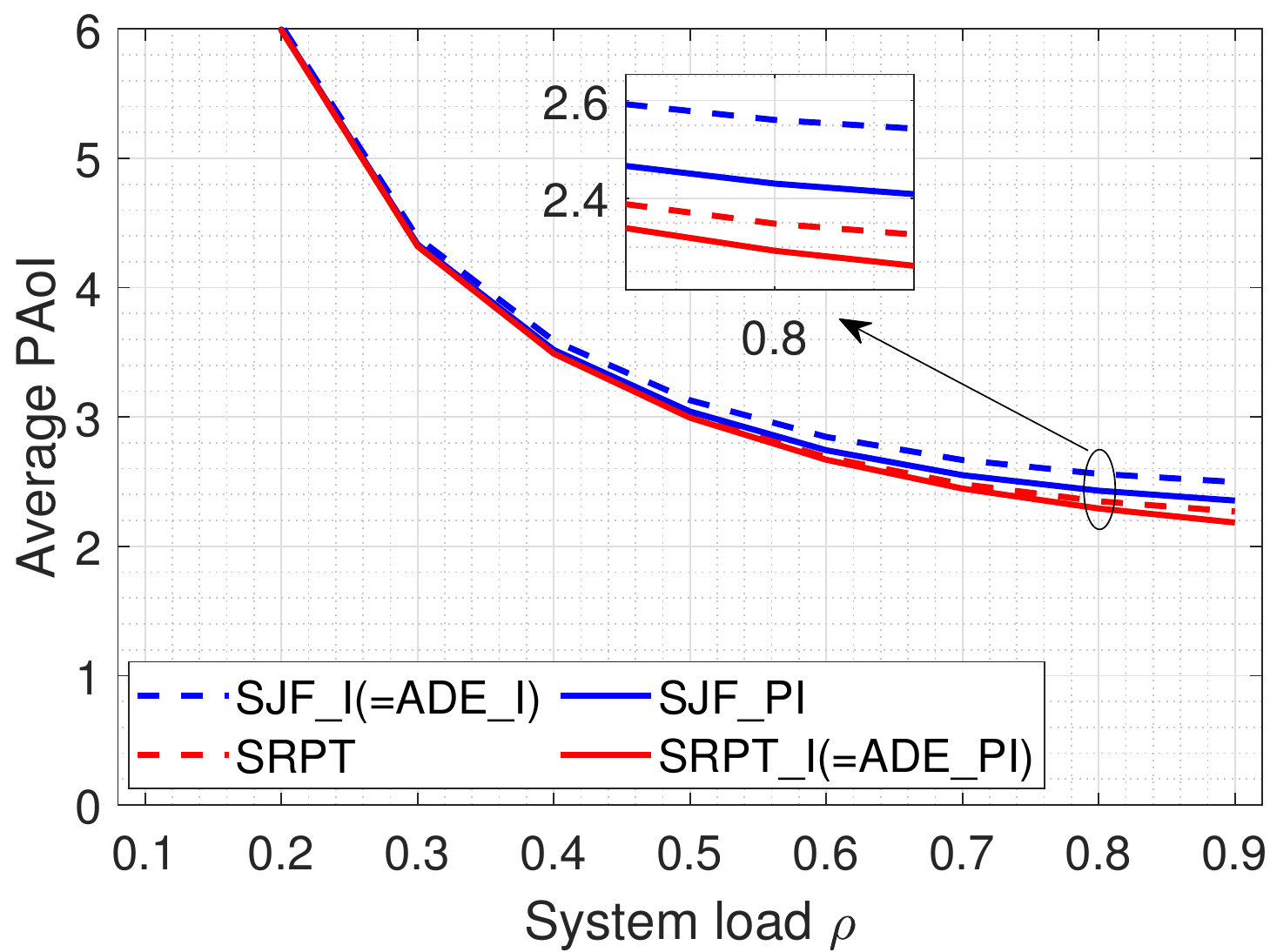}}			
	\caption{Comparisons of the average PAoI performance under different distributions: preemptive, informative, AoI-based policies vs. others}
	\label{fig:all-combined-paoi}
\end{figure*}

\end{document}